\newtheorem{theorem}{Theorem}[section]
\newtheorem{assumption}{Assumption}
\newtheorem{lemma}{Lemma}[section]
\newtheorem{proposition}{Proposition}[section]
\newtheoremstyle{boldremark}
    {\dimexpr\topsep/2\relax} 
    {\dimexpr\topsep/2\relax} 
    {}          
    {}          
    {\bfseries} 
    {.}         
    {.5em}      
    {}          
\theoremstyle{boldremark}
\newtheorem{example}{Example}
\newtheorem{remark}{Remark}[section]
\newcommand{\diag}{\text{diag}}
\newcommand{\indep}{\perp\!\!\!\perp}
\newcommand{\convP}{\stackrel{p}{\longrightarrow}}
\newcommand{\convD}{\rightsquigarrow}
\newcommand{\N}{\mathcal{N}}
\newcommand{\eps}{\varepsilon}
\renewcommand{\epsilon}{\varepsilon}
\DeclareMathOperator*{\argmin}{arg\,min}
\newlength{\bibitemsep}\setlength{\bibitemsep}{.2\baselineskip plus .05\baselineskip minus .05\baselineskip}
\newlength{\bibparskip}\setlength{\bibparskip}{0pt}
\let\oldthebibliography\thebibliography
\renewcommand\thebibliography[1]{%
  \oldthebibliography{#1}%
  \setlength{\parskip}{\bibitemsep}%
  \setlength{\itemsep}{\bibparskip}%
}
\renewcommand\footnotesize{%
   \@setfontsize\footnotesize\@ixpt{11}%
   \abovedisplayskip 8\p@ \@plus2\p@ \@minus4\p@
   \abovedisplayshortskip \z@ \@plus\p@
   \belowdisplayshortskip 4\p@ \@plus2\p@ \@minus2\p@
   \def\@listi{\leftmargin\leftmargini
               \topsep 4\p@ \@plus2\p@ \@minus2\p@
               \parsep 2\p@ \@plus\p@ \@minus\p@
               \itemsep \parsep}%
   \belowdisplayskip \abovedisplayskip
}
\begin{document}
\title{\huge{Gradient Wild Bootstrap for Instrumental Variable Quantile Regressions with Weak and Few Clusters}\thanks{We would like to thank Aureo de Paula, Firmin Doko Tchatoka, Kirill Evdokimov, S\'ilvia Gon\c calves,  Christian Hansen, Jungbin Hwang, Qingfeng Liu, Morten Ørregaard Nielsen, Ryo Okui, Kyungchul (Kevin) Song, Naoya Sueishi, Yoshimasa Uematsu, and participants at the 2021 Annual Conference of the International Association for Applied Econometrics, the 2021 Asian Meeting of the Econometric Society, the 2021 China Meeting of the Econometric Society, the 2021 Australasian Meeting of the Econometric Society, the 2021 Econometric Society European Meeting, the 37th Canadian Econometric Study Group Meetings, the 16th International Symposium on Econometric Theory and Applications, UCONN Econometrics Seminar, NUS Econometrics Seminar, the 2023 North American Winter Meeting of the Econometric Society for their valuable comments. Special thanks to James MacKinnon for very insightful discussions and advice. Wang acknowledges the financial support from the Singapore Ministry of Education Tier 1 grants RG104/21, RG51/24, and NTU CoHASS Research Support Grant. 
Zhang acknowledges the financial support from the NSFC under grant No. 72133002. Any possible errors are our own.}}

\author{
Wenjie Wang\footnote{Division of Economics, School of Social Sciences, Nanyang Technological University.
HSS-04-65, 14 Nanyang Drive, Singapore 637332. 
E-mail address: wang.wj@ntu.edu.sg.}  
\ and Yichong Zhang\footnote{School of Economics, Singapore Management University. E-mail address: yczhang@smu.edu.sg. Corresponding author.}
}

\date{\today}

\maketitle

\begin{abstract}

We study the gradient wild bootstrap-based inference for instrumental variable quantile regressions in the framework of a small number of large clusters in which the number of clusters is viewed as fixed, and the number of observations for each cluster diverges to infinity. For the Wald inference, we show that our wild bootstrap Wald test, with or without studentization using the cluster-robust covariance estimator (CRVE), controls size asymptotically up to a small error as long as the parameter of endogenous variable is strongly identified in at least one of the clusters. We further show that the wild bootstrap Wald test with CRVE studentization is more powerful for distant local alternatives than that without. Last,  we develop a wild bootstrap Anderson-Rubin (AR) test for the weak-identification-robust inference. We show it controls size asymptotically up to a small error, even under weak or partial identification for all clusters. We illustrate the good finite-sample performance of the new inference methods using simulations and provide an empirical application to a well-known dataset about US local labor markets.\\

   \noindent \textbf{Keywords:} Gradient Wild Bootstrap, Weak Instruments, Clustered Data, Randomization Test, Instrumental Variable Quantile Regression.  \bigskip
   
   \noindent \textbf{JEL codes:} C12, C26, C31
\end{abstract}


\setlength{\baselineskip}{18pt}
\setlength{\abovedisplayskip}{10pt}
\belowdisplayskip\abovedisplayskip
\setlength{\abovedisplayshortskip }{5pt}
\abovedisplayshortskip \belowdisplayshortskip%
\setlength{\abovedisplayskip}{8pt} \belowdisplayskip\abovedisplayskip%
\setlength{\abovedisplayshortskip }{4pt} %
\linespread{1.3}
\large

 \section{Introduction}

The instrumental variable (IV) regression is one of the five most widely used methods for causal inference, as highlighted by \cite{Angrist-Pischke(2008)}, and it is often employed in analyses involving clustered data. For instance, \cite{young2022consistency} examines 1,359 IV regressions across 31 papers published by the American Economic Association (AEA), with 24 of these papers accounting for the clustering of observations. In the context of quantile regression models, where endogeneity may be present, \cite{Chernozhukov-Hansen(2004), Chernozhukov-Hansen(2005), Chernozhukov-Hansen(2006), Chernozhukov-Hansen(2008a)} (hereafter referred to as CH) developed an instrumental variables quantile regression (IVQR) method. This approach offers a general IV procedure to address the endogeneity of regressors in quantile regressions, and it has been widely adopted by empirical researchers to capture the distributional effects of endogenous variables.


However, three difficulties arise when running IVQR with clustered data. 
First, the number of clusters is small in many empirical applications with IVs.
For instance, \cite{Acemoglu2011} cluster the standard errors at the country/polity level, resulting in 12-19 clusters, \cite{Glitz2020} cluster at the sectoral level with 16 sectors, and \cite{Rogall2021} clusters at the province (district) level with 11 provinces (30 districts), respectively.  \cite{ADH2013} study the effects of Chinese imports on local labor markets in the US by clustering at the state level. However, if we focus on the estimation and inference for the effects on specific regions such as the South region designated by the US Census Bureau, there are only 16 states. Furthermore, \cite{Bester-Conley-Hansen(2011)} and \cite{L22} partition spatial and network data into clusters, respectively. Both papers consider the asymptotic setting in which the number of clusters is small and, thus, treated as fixed. When the number of clusters is small, conventional cluster-robust inference procedures may be unreliable for IVQR. 


Second, in many applications, the strength of IVs may be relatively heterogeneous across clusters, with a few clusters providing the main identification power. For instance, Figure \ref{fig:first_stage} reports the estimated first-stage coefficients for each cluster (state) from the South region in \citeauthor{ADH2013}'s (\citeyear{ADH2013}) dataset, which suggests that there exists substantial variation in the IV strength among states. Specifically, the first-stage coefficients of some states are relatively large compared with the rest in the region. In contrast, some other states have coefficients that are rather close to zero and, thus, potentially subject to weak identification. Some states even have opposite signs for their first-stage coefficients. However, there is no existing proven-valid inference method for IVQR with few clusters, where identification may be weak in some clusters. 

Third, it is also possible that IVs are weak in all clusters, in which case researchers need to use weak-identification-robust inference methods for IVQR \citep{Chernozhukov-Hansen(2008a), chernozhukov2009finite,Andrews-Stock-Sun(2019)}. 

\begin{figure}
          \makebox[\textwidth]{\includegraphics[width=0.5\paperwidth,height=0.3\textheight, trim={0 0 0 1cm}]{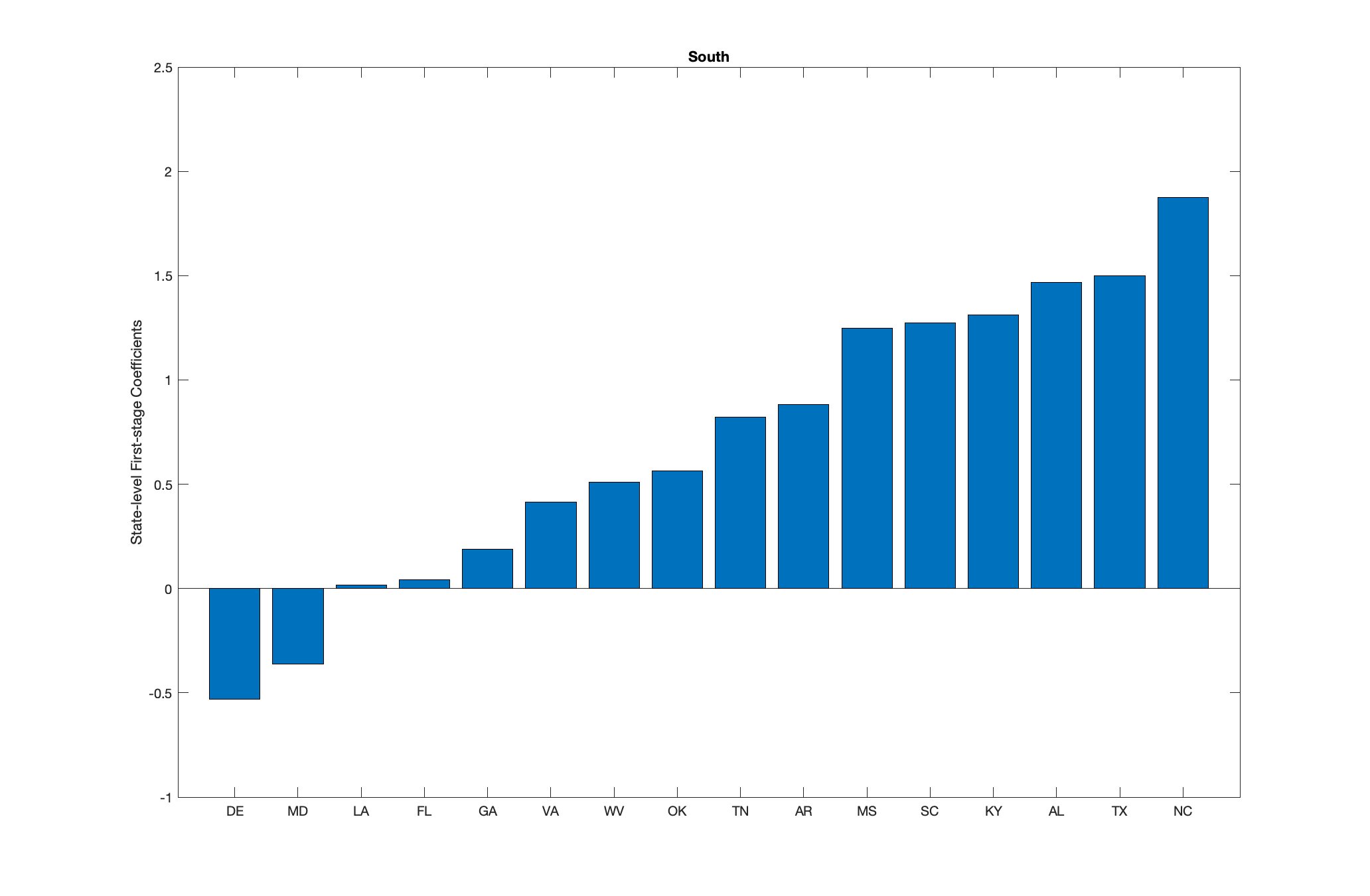}}
     \vspace*{-10mm}
    \caption{State-level First-stage Coefficients for the South Region in \cite{ADH2013}}
    \label{fig:first_stage}
    {\footnotesize{Note: The coefficients are obtained by running first-stage least squares regression at the state level and controlling for the same exogenous variables as those specified in \cite{ADH2013}. 
    }}   
\end{figure}


Motivated by these challenges, this paper investigates inference for IVQR with a small, fixed number of clusters and weak within-cluster error dependence, accommodating significant cluster-level heterogeneity in IV strength. We define clusters where the structural parameter (i.e., for the endogenous variable) is strongly identified as ``strong IV clusters.'' We propose a gradient wild bootstrap procedure for IVQR with clustered data. Our findings show that a bootstrap Wald test, whether studentized by the cluster-robust variance estimator (CRVE) or not, asymptotically controls size, given at least one strong IV cluster. The gradient wild bootstrap tests demonstrate power against local alternatives at the 10\% and 5\% significance levels with at least five and six strong IV clusters, respectively. Additionally, the CRVE-based bootstrap Wald test proves more powerful for distant local alternatives. We also develop a gradient wild bootstrap \citet[AR]{Anderson-Rubin(1949)} test for IVQR that controls size regardless of instrument strength. Compared to analytical methods using HAC estimators, all our bootstrap inference methods are agnostic to within-cluster dependence structure, avoiding complex covariance matrix estimation and making it applicable to datasets with various weak dependence structures, such as cross-sectional, serial, network, and spatial dependence.

The contributions in the present paper relate to several strands of literature. 
First, it is related to the literature on the cluster-robust inference.\footnote{See \cite{Cameron(2008)}, \cite{Conley2011inference}, \cite{Imbens-Kolesar(2016)}, \cite{abadie2022should}, \cite{Hagemann(2017), Hagemann2019placebo, Hagemann2020inference, Hagemann(2019)}, \cite{Mackinnon-Webb(2017)}, \cite{Djogbenou-Mackinnon-Nielsen(2019)}, \cite{Mackinnon-Nielsen-Webb(2019)}, \cite{Ferman2019inference}, \cite{Hansen-Lee(2019)}, \cite{Menzel2021bootstrap}, \cite{Mackinnon2021}, among others, and \cite{mackinnon2022cluster} for a recent survey.}
\cite{Hagemann(2017)}, \cite{Djogbenou-Mackinnon-Nielsen(2019)}, \cite{Mackinnon-Nielsen-Webb(2019)}, and \cite{Menzel2021bootstrap} show bootstrap validity under the asymptotic framework in which the number of clusters diverges to infinity.\footnote{We refer interested readers to \citet[Sections 4.1 and 4.2]{mackinnon2022cluster} for detailed discussions on this asymptotic framework and the alternative asymptotic framework that treats the number of clusters as fixed.} 
\citet[IM]{Ibragimov-Muller(2010), Ibragimov-Muller(2016)},
\citet[BCH]{Bester-Conley-Hansen(2011)},
\citet[CRS]{Canay-Romano-Shaikh(2017)}, \cite{Hagemann2019placebo, Hagemann2020inference, Hagemann(2019), hagemann2024}, and \cite{Hwang(2020)} consider an alternative asymptotic framework in which the number of clusters is treated as fixed, while the number of observations in each cluster is relatively large and the within-cluster dependence is sufficiently weak.
However, the inference methods proposed by BCH and \cite{Hwang(2020)} require an (asymptotically) equal cluster-level sample size,\footnote{See \citet[Assumptions 3 and 4]{Bester-Conley-Hansen(2011)} and \citet[Assumptions 4 and 5]{Hwang(2020)} for details.}
while those proposed by IM, CRS, and \cite{hagemann2024}
would require strong identification for all clusters in the IVQR context. In contrast, our gradient bootstrap Wald tests are more flexible as they do not require an equal cluster size. In addition, they only need one strong IV cluster for size control and five to six for local power, thus allowing for substantial cluster heterogeneity in identification strength for the IVQR model.  
To our knowledge, no alternative method proposed in the literature remains valid in such a context. Furthermore, we provide gradient bootstrap AR tests, which are fully robust to weak identification. 

Second, \cite{Canay-Santos-Shaikh(2020)} and \cite{Wang-Zhang(2024)} study wild bootstrap procedures with a few large clusters. In particular, \cite{Canay-Santos-Shaikh(2020)} first investigates the validity of wild bootstrap by innovatively connecting it with a randomization test with sign changes. 
Our results for IVQR generalize and complement those in \cite{Canay-Santos-Shaikh(2020)} and \cite{Wang-Zhang(2024)} in the following aspects. 
First, \cite{Canay-Santos-Shaikh(2020)} focus on the linear regression with exogenous regressors and then extend their analysis to 
a score bootstrap for the GMM estimator. 
\cite{Wang-Zhang(2024)} focus on the linear IV regression and show the validity of a modified version of the cluster wild restricted efficient (WRE) bootstrap procedure (e.g., \cite{Davidson-Mackinnon(2010)}, \cite{Finlay-Magnusson(2019)}, and \cite{Roodman-Nielsen-MacKinnon-Webb(2019)}, among others)  
in the case with few clusters.  
Instead, we propose gradient wild bootstrap procedures for IVQR inspired by \cite{Hagemann(2017)} and \cite{Jiang-Liu-Phillips-Zhang(2020)}, which avoid the estimation of the Hessian matrix that involves a nonparametric density component. In addition, we obtain the bootstrap estimator from the profiled optimization procedure for IVQR developed by \cite{Chernozhukov-Hansen(2004)}. These set us apart from the score bootstrap in the GMM setting. Second, we study the local power for our bootstrap Wald tests both with and without studentized by CRVE. 
In particular, the power analysis of the Wald test with CRVE is unconventional because, under a fixed number of clusters, the CRVE itself has a random limit. Specifically, we carefully design a gradient bootstrap counterpart for CRVE, which mimics well the original CRVE's randomness under the null and further diverges with the local alternative. The first property leads to the size control while the second leads to an interesting fact that the bootstrap Wald test with CRVE studentization is more powerful than that without in detecting sufficiently distant local alternatives. Such a power advantage is confirmed by our simulation experiments and empirical application. 
In addition, also different from its unstudentized counterpart, the local power of the Wald test with CRVE is established without the assumption that the IVQR first-stage coefficients have the same sign across all clusters, which may not hold in some empirical studies (e.g., see Figure \ref{fig:first_stage}).

Third, our paper is related to the literature on QR and IVQR. See, for example, \cite{Chernozhukov-Hansen(2004), Chernozhukov-Hansen(2005), Chernozhukov-Hansen(2006), Chernozhukov-Hansen(2008a)}, \cite{Hagemann(2017)}, and \cite{KW18}. Furthermore,  \cite{chernozhukov2020} provides a comprehensive overview of IVQR. We differ from them by considering an alternative asymptotic setting with a fixed number of clusters. \cite{hagemann2024} recently proposed a randomization test procedure in the spirit of CRS for inference on entire quantile and (exogenous) regression quantile processes under a small number of large clusters. 
However, as discussed above, a similar randomization test under the current IVQR setting would require strong identification for all clusters to establish validity. In contrast, our gradient wild bootstrap procedures for the Wald inference only need strong identification for at least one of the clusters.


Fourth, our paper is related to the literature on weak-identification-robust inference, in which various normal approximation-based inference approaches are available, among them \cite{Stock-Wright(2000)},
\cite{Kleibergen(2005)}, \cite{Andrews-Cheng(2012)},
\cite{Andrews(2016)}, \cite{Andrews-Mikusheva(2016)}, 
\cite{Moreira-Moreira(2019)}, and \cite{Andrews-Guggenberger(2019)}.
However, these robust inference methods cannot be directly applied in the current context with few clusters. 
On the other hand, it is found in the literature that when implemented appropriately, bootstrap approaches may substantially improve the inference for linear IV models, including the cases where IVs may be rather weak,\footnote{See, for example, 
\cite{Davidson-Mackinnon(2008), Davidson-Mackinnon(2010)}, \cite{Moreira-Porter-Suarez(2009)}, \cite{Wang-Kaffo(2016)},
\cite{Kaffo-Wang(2017)}, \cite{Wang-Doko(2018)}, \cite{Finlay-Magnusson(2019)}, \cite{young2022consistency}, and \cite{Wang-Zhang(2024)}, among others.
In addition, \cite{tuvaandorj2021robust} develops permutation versions of weak-IV-robust tests with (non-clustered) heteroskedastic errors.}
although the related literature for IVQR inference remains sparse. 
The bootstrap AR test developed in this paper is a bootstrap counterpart of the analytical AR test proposed by \cite{Chernozhukov-Hansen(2008a)}. We show that it controls asymptotic size for IVQR under both weak/non-identification and a small number of large clusters.

The remainder of this paper is organized as follows. 
Section \ref{sec:IVQR} presents the setup, the IVQR estimation, and our gradient wild bootstrap procedures. 
Section \ref{sec:main} presents assumptions and asymptotic results: Section \ref{subsce:assumptions} gives the main assumptions, Section \ref{subsce:examples} provides several specific examples related to our assumptions, Section \ref{sec:construction_IV} provides three different methods to construct the IVs, Section \ref{subsec:subvector-result} presents the asymptotic results for the Wald inference, while Section \ref{subsec:fullvector-result} presents those for the weak-identification-robust inference.
Simulations in Section \ref{sec: simu} suggest that our procedures have outstanding finite sample size control and, in line with our theoretical analysis, the bootstrap Wald test studentized by CRVE has power advantages compared with the other bootstrap tests. 
The empirical application with \cite{ADH2013}'s dataset is presented in Section \ref{sec: emp}. 

\textbf{Notation.} Throughout the paper, we write $0_{d_1 \times d_2}$, $\iota_d$, and $\mathbb{I}_{d}$ as  a $d_1 \times d_2$ matrix of zeros, $d$-dimensional vector of ones,  and a $d \times d$ identity matrix, respectively. For any positive integer $d$, we denote $[d] = (1,\cdots,d)$. We further denote $||\cdot||_2$ and $||\cdot||_F$ as the $\ell_2$ norm for a vector and the Frobenius norm for a matrix, respectively. 

\section{Setup, Estimation, and Inference Procedure}\label{sec:IVQR}
\subsection{Setup}
Throughout the paper, we observe clustered data where the clusters are indexed by $j \in [J]$ and units in the $j$-th cluster are indexed by $i \in I_{n,j} 
= \{1, ..., n_j \}$. For the $i$-th unit in the $j$-th cluster, we observe $y_{i,j} \in \textbf{R}$, $X_{i,j} \in \textbf{R}$, and $W_{i,j} \in \textbf{R}^{d_w}$ as an outcome of interest, a scalar endogenous regressor, and exogenous regressors, respectively. 
Furthermore, we let $Z_{i,j} \in \textbf{R}^{d_z}$  be the exogenous variables that are excluded from the outcome equation defined through conditional CDF: 
\begin{align}\label{eq:IVQR}
\mathbb{P}(y_{i,j} \leq X_{i,j} \beta_n(\tau) + W_{i,j}^\top\gamma_n(\tau)|W_{i,j},Z_{i,j})=\tau, \quad \tau \in \Upsilon,
\end{align}
where $\Upsilon$ is a compact subset of $(0,1)$. 

Throughout the paper, we focus on the setting with a single endogenous variable, as it is the most common case in empirical applications involving IVs. For instance, 101 out of 230 specifications in \citeauthor{Andrews-Stock-Sun(2019)}'s (\citeyear{Andrews-Stock-Sun(2019)}) sample and 1,087 out of 1,359 in \citeauthor{young2022consistency}'s (\citeyear{young2022consistency}) sample feature one endogenous regressor and one IV. Similarly, \cite{lee2021} find that 61 out of 123 IV papers published in \textit{AER} between 2013 and 2019 use single-IV regressions. While our setting also accommodates multiple IVs, most IVQR applications involve only one endogenous variable and one IV, as seen in studies by \cite{Chernozhukov-Hansen(2004), Chernozhukov-Hansen(2006), Chernozhukov-Hansen(2008a)}, and \cite{chernozhukov2013quantile}. In our empirical application, we revisit the influential study by \cite{ADH2013}, which also employs a single endogenous variable and one IV.

We allow the parameter of interest $\beta_n(\tau)$ (and the coefficient $\gamma_n(\tau)$ for the exogenous controls) to shift with respect to (w.r.t.) the sample size, which incorporates the analyses of size and local power in a concise manner: 
$\beta_n(\tau)  = \beta_0(\tau) + \mu_\beta(\tau)/r_n$, and $\gamma_n(\tau) =\gamma_0(\tau) + \mu_\gamma(\tau)/r_n$, where $\mu_{\beta}(\tau) \in \textbf{R}$ and $\mu_{\gamma}(\tau) \in \textbf{R}^{d_w}$ are the local parameters and $r_n$ is the convergence rate of the score defined later. Throughout the paper, for a generic function $g$ of data $D_{i,j} = (y_{i,j},X_{i,j},W_{i,j},Z_{i,j})$, we let $\mathbb{P}_ng(D_{i,j})=\frac{1}{n}\sum_{j \in [J]} \sum_{i \in I_{n,j}} g(D_{i,j})$, $\overline{\mathbb{P}}_ng(D_{i,j})=\frac{1}{n}\sum_{j \in [J]} \sum_{i \in I_{n,j}} \mathbb{E}g(D_{i,j})$, $\mathbb{P}_{n,j}g(D_{i,j})=\frac{1}{n_j}\sum_{i \in I_{n,j}} g(D_{i,j})$, and $\overline{\mathbb{P}}_{n,j}g(D_{i,j})=\frac{1}{n_j}\sum_{i \in I_{n,j}} \mathbb{E}g(D_{i,j})$. 

\subsection{Estimation}
Following \cite{Chernozhukov-Hansen(2006)}, we construct instrumental variables $\Phi_{i,j}(\tau) \in \textbf{R}^{d_\phi}$ from $(W_{i,j},Z_{i,j})$, that is, $\Phi_{i,j}(\tau) = \Phi(W_{i,j},Z_{i,j},\tau)$. For the validity of our bootstrap inference with a fixed number of clusters, we further require the instruments to be orthogonal to the control variables in the quantile regression context. Therefore, the function $\Phi(\cdot)$ may be unknown but can be estimated as $\hat{\Phi}(\cdot)$. The corresponding feasible IVs are defined as $\hat{\Phi}_{i,j}(\tau) = \hat{\Phi}(W_{i,j},Z_{i,j},\tau)$. We will provide more details about the construction of $\hat{\Phi}(\cdot)$ in Section \ref{sec:construction_IV}. Additionally, a scalar nonnegative weight is defined as $V_{i,j}(\tau)$, which also may be unknown, and its estimator is defined as $\hat{V}_{i,j}(\tau)$. Then, the estimation of $\beta_n(\tau)$ can be implemented via a profiled method described below. For a given value $b$ of $\beta_n(\tau)$, we first compute
\begin{align}
(\hat{\gamma}(b,\tau),\hat{\theta}(b,\tau)) = \arg \inf_{r,t}\sum_{j \in [J]}\sum_{i \in I_{n,j}}\rho_\tau(y_{i,j} - X_{i,j} b - W_{i,j}^\top r - \hat{\Phi}_{i,j}^\top(\tau) t) \hat{V}_{i,j}(\tau),
\label{eq:br1}
\end{align}
where $\rho_\tau(u) = u(\tau - 1\{u \leq 0\})$. 
Under appropriate conditions for strong identification, which will be made clear later, we can estimate $\beta_n(\tau)$ by $\hat{\beta}(\tau)$ defined as 
\begin{align}
\hat{\beta}(\tau) = \arg \inf_{b \in \mathcal{B}}||\hat{\theta}(b,\tau)||_{\hat{A}_1(\tau)},
\label{eq:br2}
\end{align}
where $ \mathcal{B}$ is a compact subset of $\textbf{R}$, $\hat{A}_1(\tau)$ is some $d_{\phi} \times d_{\phi}$ weighting matrix, and the notation $||u||_{A}$ for a compatible vector $u$ and matrix $A$ means $(u^\top A u)^{1/2}$. Last, we define $\hat{\gamma}(\tau) = \hat{\gamma}(\hat{\beta}(\tau),\tau)$ and $\hat{\theta}(\tau) = \hat{\theta}(\hat{\beta}(\tau),\tau)$. Following the lead of \cite{Chernozhukov-Hansen(2006)}, in practice, we suggest setting $\hat V_{i,j}(\tau) = 1$  and $\hat A_1(\tau) = \mathbb I_{d_\phi}$ or $\hat A_1(\tau) = \mathbb P_n \hat \Phi_{i,j}(\tau)\hat \Phi_{i,j}^\top(\tau)$. When the instrument is a scalar so that the model is just identified, the choice of $\hat A_1(\tau)$ becomes irrelevant. 
In addition, we note that the case of just identification is always achievable because even if the original IV $Z_{i,j}$ is multi-dimensional, we can construct $\hat \Phi_{i,j}$ as the prediction of $X_{i,j}$ using $Z_{i,j}$ and $W_{i,j}$ from a first-stage linear regression, which is again a scalar (see Section \ref{sec:construction_IV} for further details on the construction of $\hat \Phi_{i,j}(\tau)$). 

\subsection{Gradient Wild Bootstrap Inference}
\subsubsection{Inference Procedure for Wald Statistics}
In this section, we consider the null and local alternative hypotheses defined as
\begin{align*}
\mathcal{H}_0:  \beta_n(\tau)= \beta_0(\tau),~\forall~\tau \in \Upsilon \quad v.s. \quad \mathcal{H}_{1,n}: \beta_n(\tau) \neq  \beta_0(\tau),~\exists~\tau \in \Upsilon,
\end{align*}
which is equivalent to 
\begin{align}\label{eq: hypothesis}
\mathcal{H}_0:  \mu_\beta(\tau)= 0,~\forall~\tau \in \Upsilon \quad v.s. \quad \mathcal{H}_{1,n}: \mu_\beta(\tau) \neq 0,~\exists~\tau \in \Upsilon,
\end{align}
where $\Upsilon$ is a compact subset of $(0,1)$. 


Consider the test statistic with a normalization factor $\hat{A}_2(\tau)$, and let 
\begin{align}
T_{n} = \sup_{\tau \in \Upsilon}|| ( \hat{\beta}(\tau) - \beta_0(\tau))||_{\hat{A}_2(\tau)}
\label{eq:test_sta}
\end{align}
be the test statistic. In the following, we describe the gradient wild bootstrap procedure.
\begin{enumerate}
	\item[\textbf{Step 1}:] We define the null-restricted estimator 
	$\hat{\gamma}^r(\tau) = \hat{\gamma}(\beta_0(\tau),\tau)$. 
	\item[\textbf{Step 2}:] Let $\textbf{G} = \{-1,1 \}^J$ and for any $g = (g_1,\cdots,g_J) \in \textbf{G}$,
	\begin{align}
	(\hat{\gamma}_g^*(b,\tau),\hat{\theta}_g^*(b,\tau)) = & \arg \inf_{r,t} \biggl[\sum_{j \in [J]}\sum_{i \in I_{n,j}}\rho_\tau(y_{i,j} - X_{i,j} b - W_{i,j}^\top r - \hat{\Phi}_{i,j}^\top(\tau) t) \hat{V}_{i,j}(\tau) \notag \\
	- & \sum_{j \in [J]} g_j\sum_{i \in I_{n,j}} \hat{f}^\top_\tau(D_{i,j},\beta_0(\tau),\hat{\gamma}^r(\tau),0) \begin{pmatrix}
	r \\
	t
	\end{pmatrix}\biggr], \notag \\
	\hat{\beta}_g^*(\tau) = & \arg \inf_{b \in \mathcal{B}}\left[||\hat{\theta}_g^*(b,\tau)||_{\hat{A}_1(\tau)}\right], \quad \text{and} \quad  \hat{\gamma}_g^*(\tau) = \hat{\gamma}_g^*(\hat{\beta}_g^*(\tau),\tau),
	\label{eq:boot_beta}
	\end{align} 
	where the null-restricted estimator $\hat{\gamma}^r(\tau)$ is defined in the previous step,
	\begin{align}\label{eq:fhat}
	\hat{f}_{\tau}(D_{i,j},b,r,t) = (\tau - 1\{y_{i,j} - X_{i,j} b - W_{i,j}^\top r - \hat{\Phi}_{i,j}^\top(\tau) t \leq 0\})\hat{\Psi}_{i,j}(\tau) \hat V_{i,j}(\tau), 
	\end{align}
	and 	$\hat{\Psi}_{i,j}(\tau) = [W_{i,j}^\top,\hat{\Phi}_{i,j}^\top(\tau)]^\top$.

	\item[\textbf{Step 3}:]
	Let 
	$T_{n}^{*}(g) =  \sup_{\tau \in \Upsilon}||  (\hat{\beta}_g^*(\tau) - \hat{\beta}(\tau))||_{\hat{A}^*_{2,g}(\tau)},$ where $\hat{A}^*_{2,g}(\tau)$ is the bootstrap counterpart of the normalization factor $\hat{A}_2(\tau)$. Then, let $\hat{c}_{n}(1-\alpha)$ denote the $1-\alpha$ quantile of $\{ T_{n}^{*}(g) \}_{g \in \textbf{G}}$, 
	and we reject the null hypothesis if $T_{n} > \hat{c}_{n}(1-\alpha).$
	
\end{enumerate}

\begin{remark}
Several remarks regarding the choice of estimators in the above algorithm are in order. Specifically, in \textbf{Step 2}, we impose the null when implementing sign changes on the cluster-level scores by using $\beta_0(\tau)$ and $\hat{\gamma}^r(\tau)$ in $\hat{f}_{\tau}(D_{i,j},b,r,t)$.
    In contrast,  we use $\hat{\beta}(\tau)$, instead of $\beta_0(\tau)$, in \textbf{Step 3} to center the bootstrap IVQR estimator when constructing $T_n^*(g)$. Both choices are essential for the validity of our gradient bootstrap procedure under a small number of large clusters. In particular, we note that \cite{Canay-Santos-Shaikh(2020)} and \cite{Wang-Zhang(2024)} use null-restricted estimators to center their bootstrap estimators for linear (IV) regressions. Instead, we use $\hat{\beta}(\tau)$ in our \textbf{Step 3} because they use residual-based bootstrap procedures while we use the gradient bootstrap procedure.
\end{remark}

Below, we discuss two cases for the normalization factor $\hat{A}_2(\tau)$: (1) it has a deterministic limit, and (2) it involves the cluster-robust variance estimator (CRVE) for score from IVQR. For case (1), given that $\hat{A}_2(\tau)$ has a deterministic limit, we do not need to bootstrap it and just let $\hat{A}_{2,g}^*(\tau) = \hat{A}_2(\tau)$
in \textbf{Step 3}. By an abuse of notation, the corresponding test statistic, bootstrap statistics, and critical value are still denoted as $T_n$, $T_n^{*}(g)$, and $\hat{c}_n(1-\alpha)$, respectively. 

For case (2), we need some extra notation to define the normalization factor formally. Let $\hat G(\tau) \in \Re^{d_\phi}$ have a deterministic limit and 
\begin{align}
    & \hat \Omega(\tau,\tau') = \frac{1}{n}\sum_{j \in [J]} \omega \left[\sum_{i \in I_{n,j}}\hat f_\tau(D_{i,j}, \hat \beta(\tau), \hat \gamma(\tau), 0)\right] \left[\sum_{i \in I_{n,j}}\hat f_{\tau'}(D_{i,j}, \hat \beta(\tau'), \hat \gamma(\tau'), 0)\right]^\top \omega^\top,
\label{eq:Vhat}
\end{align}
where $\omega = (0_{d_\phi \times d_w}, \mathbb{I}_{d_\phi})$. 
Then, we define the normalization factor $\hat{A}_{CR}(\tau)$  in case (2) as 
\begin{align}
\hat{A}_{CR}(\tau) = \left[\hat G^\top(\tau)\hat \Omega(\tau,\tau)\hat G(\tau)\right]^{-1},
\label{eq:A1random}
\end{align}
and the corresponding CRVE-weighted Wald test statistic is defined as 
$$T_{CR,n} = \sup_{\tau \in \Upsilon}|| \hat{\beta}(\tau) - \beta_0(\tau)||_{\hat{A}_{CR}(\tau)}.$$

The normalization factor $\hat A_{CR}(\tau)$ and the test $T_{CR,n}$ take a cluster-robust form because the form of $\hat \Omega(\tau,\tau)$ preserves all within-cluster dependence. Naturally, for the choice of $\hat G(\tau)$, we would like to use a consistent estimator of the Jacobian $\Gamma(\tau)$ defined in \eqref{eq:Gamma} below. However, this would involve a nonparametric conditional density estimation and parameter tuning. Instead, we suggest using 
\begin{align*}
    \hat G^\top(\tau) & = \left[\hat {\mathcal E}_{X,\Phi}(\tau) \hat {\mathcal E}_{\Phi,\Phi}^{-1}(\tau)  \hat A_1(\tau) \hat {\mathcal E}_{\Phi,\Phi}^{-1}(\tau)  \hat {\mathcal E}_{X,\Phi}^\top(\tau)\right]^{-1} \left[\hat {\mathcal E}_{X,\Phi}(\tau) \hat {\mathcal E}_{\Phi,\Phi}^{-1}(\tau)  \hat A_1(\tau) \hat {\mathcal E}_{\Phi,\Phi}^{-1}(\tau)\right],
\end{align*}
where $\hat {\mathcal E}_{X,\Phi}(\tau) = \mathbb P_n X_{i,j}\hat \Phi_{i,j}^\top (\tau) \hat V_{i,j}(\tau)$ and $\hat {\mathcal E}_{\Phi,\Phi}(\tau) = \mathbb P_n \hat \Phi_{i,j}(\tau) \hat \Phi_{i,j}^\top (\tau) \hat V_{i,j}(\tau)$. When $\hat A_1(\tau)$ is set as $\hat {\mathcal E}_{\Phi,\Phi}(\tau)$, we can further simplify $\hat G(\tau)$ as 
\begin{align*}
 \hat G^\top(\tau)  = \left[\hat {\mathcal E}_{X,\Phi}(\tau)  \hat {\mathcal E}_{\Phi,\Phi}^{-1}(\tau)  \hat {\mathcal E}_{X,\Phi}^\top(\tau)\right]^{-1} \left[\hat {\mathcal E}_{X,\Phi}(\tau) \hat {\mathcal E}_{\Phi,\Phi}^{-1}(\tau)\right]. 
\end{align*}

In addition, even if we use a consistent estimator of the Jacobian $\Gamma(\tau)$ as $\hat G(\tau)$, it will not guarantee the consistency of the CRVE because the number of clusters is fixed in our setting. In fact,  $\hat \Omega(\tau,\tau)$, and thus, $\hat A_{CR}(\tau)$ have random limits after a proper normalization. Therefore, in case (2), to construct a valid critical value for $T_{CR,n}$, we also need to bootstrap $\hat A_{CR}(\tau)$ properly to mimic well this randomness in our bootstrap samples.

To define an appropriate gradient bootstrap analogue of $\hat A_{CR}(\tau)$, we let 
\begin{align}
& \hat f_{\tau,g}^*(D_{i,j}) = g_j\hat{f}_\tau(D_{i,j},\beta_0(\tau),\hat{\gamma}^r(\tau),0) + \hat{f}_\tau(D_{i,j},\hat{\beta}_g^*(\tau),\hat{\gamma}_g^*(\tau),0) - \hat{f}_\tau(D_{i,j},\hat{\beta}(\tau),\hat{\gamma}(\tau),0), \notag \\
& \hat{\Omega}_g^*(\tau,\tau') = \frac{1}{n}\sum_{j \in [J]} \omega \left\{\sum_{i \in I_{n,j}}\hat f_{\tau,g}^*(D_{i,j})\right\} \left\{\sum_{i \in I_{n,j}}\hat f_{\tau',g}^*(D_{i,j})\right\} ^\top \omega^\top.
\label{eq:Vhatg}
\end{align} 

Then, we let $\hat{A}_{2,g}^*(\tau)$ in \textbf{Step 3} of the previous bootstrap algorithm equal $\hat{A}_{CR,g}^*(\tau)$, which is defined as 
\begin{align*}
\hat{A}_{CR,g}^*(\tau) = \left[\hat G^\top (\tau)\hat{\Omega}_g^*(\tau,\tau)\hat G(\tau)\right]^{-1},
\end{align*}
where $\hat{\beta}_g^*(\tau) $ and $\hat{\gamma}_g^*(\tau) $ are defined in \eqref{eq:boot_beta}. 
The bootstrap counterpart of $T_{CR,n}$ and the critical value are defined as 
$$T_{CR,n}^{*}(g) = \sup_{\tau \in \Upsilon}||\hat{\beta}_g^*(\tau) -\hat{\beta}(\tau)||_{\hat{A}_{CR,g}^*(\tau)} \quad \text{and} \quad \hat{c}_{CR,n}(1-\alpha), \quad \text{respectively}.$$ 

\begin{remark}\label{rem: boot-CRVE}
Several remarks are in order regarding our design of $\hat{\Omega}_g^*(\tau,\tau')$ in (\ref{eq:Vhatg}). First, 
    when $\mathcal{H}_0$ is true, $\hat{\Omega}_g^*(\tau,\tau')$ has the same limit distribution as $\hat{\Omega}(\tau,\tau')$, which is needed for the asymptotic validity of the bootstrap Wald test with CRVE. Second, we design $\hat{\Omega}_g^*(\tau,\tau')$ in such a way so that under $\mathcal{H}_{1,n}$, the local parameter $\mu_{\beta}(\tau)$ defined in (\ref{eq: hypothesis}) will enter the limit distribution of $\hat{\Omega}_g^*(\tau,\tau')$ in sufficiently many randomization draws.
By contrast, $\mu_{\beta}(\tau)$ does not enter the limit distribution of $\hat{\Omega}(\tau,\tau')$ in the original Wald statistic $T_{CR,n}$.  
This leads to a further power improvement for our bootstrap test studentized by CRVE (more details are given in Theorem \ref{thm:power_comparison} and Remark \ref{rem: power_comparison} below).
\end{remark}

\begin{remark}
    We note that if we are in case (1) and $\Upsilon$ defined in (\ref{eq: hypothesis}) is a singleton, then $\hat A_2(\tau)$ (which has a deterministic limit) shows up in both the test statistic $T_n$ and its bootstrap counterpart $T_n^*(g)$, and thus, the critical value $\hat c_n(1-\alpha)$ so that $\hat A_2(\tau)$ gets canceled out. In this case, our bootstrap test is numerically invariant to the choice of $\hat A_2(\tau)$.
In addition, if we use CRVE to studentize the test statistic (i.e., in case (2)), $\Upsilon$ is a singleton, and the instrument $\hat \Phi_{i,j}(\tau)$ is a scalar, then $\hat G(\tau)$ is also a scalar, which shows up in both the test statistic $T_{CR,n}$ and the critical value $\hat c_{CR,n}(1-\alpha)$, and thus, gets canceled out. Therefore, in this scenario, our bootstrap test with the CRVE-weighted Wald statistic $T_{CR,n}$
is numerically invariant to the choice of $\hat G(\tau)$. 
Such an invariance property is one of the advantages of using the bootstrap tests.
\end{remark}


\subsubsection{Inference Procedure for Weak-instrument-robust Statistics}
This section considers the weak-instrument-robust inference for $\beta_n(\tau)$ when it may be weakly or partially identified. Recall $\beta_n(\tau) = \beta_0(\tau) + \mu_\beta(\tau)/r_n$. Under the null, we have $\mu_\beta(\tau) = 0$, or equivalently, $\beta_n(\tau) = \beta_0(\tau)$. Our test statistic follows the construction by \cite{Chernozhukov-Hansen(2008a)}. Specifically, we let 
\begin{align*}
AR_{n} = \sup_{\tau \in \Upsilon}||\hat{\theta}(\beta_0(\tau),\tau)||_{\hat{A}_3(\tau)},
\end{align*} 
where $\hat{\theta}(b,\tau)$ is defined in \eqref{eq:br1} and $\hat{A}_3(\tau)$ is a $d_\phi \times d_\phi$ weighting matrix, which will be specified later. We differentiate the weighting matrix used here  (denoted as $\hat A_3(\tau)$) with that used for the estimation of $\beta_n(\tau)$ in \eqref{eq:br2} (denoted as $\hat A_1(\tau)$) because of their different usages: the former is for the construction of the weak-instrument-robust test statistic while the later is for the estimation under strong identification. Theoretically, we require $\hat A_3(\tau)$ to have a deterministic limit, same as $\hat A_1(\tau)$.

Next, the bootstrap procedure for the weak-instrument-robust inference is defined as follows. 
\begin{enumerate}
\item[\textbf{Step 1}:] Recall $\textbf{G} = \{-1,1 \}^J$ and for any $g = (g_1,\cdots,g_J) \in \textbf{G}$, the null-imposed bootstrap estimators $(\hat \gamma_g^{*r}(\tau), \hat \theta_g^{*r}(\tau))=
(\hat \gamma_g^*(\beta_0(\tau), \tau),\hat \theta_g^*(\beta_0(\tau), \tau))$ for $(\gamma, \theta)$ are defined in \eqref{eq:boot_beta}.
\item[\textbf{Step 2}:] 
The bootstrap test statistic is then defined as $AR_{n}^{*}(g) = \sup_{\tau \in \Upsilon}||\hat \theta_g^{*r}(\tau) - \hat{\theta}(\beta_0(\tau),\tau)||_{\hat{A}_3(\tau)}$. 
\item[\textbf{Step 3}:] 
Let $\hat{c}_{AR,n}(1-\alpha)$ denote the $1-\alpha$ quantile of $\{ AR_{n}^{*}(g) \}_{g \in \textbf{G}}$,
and we reject the null hypothesis when $AR_{n} > \hat{c}_{AR,n}(1-\alpha)$. 
\end{enumerate}

It is also possible to studentize $\hat{\theta}(\beta_0(\tau),\tau)$ by a (null-imposed) CRVE, i.e., replace $\hat{A}_3(\tau)$ with $\tilde {A}_{CR}(\tau)$, where 
\begin{align*}
    \tilde {A}_{CR}(\tau) =  \left[\hat H (\tau)\tilde \Omega(\tau,\tau) \hat H(\tau)\right]^{-1},
\end{align*}
$\hat H(\tau) \in \Re^{d_\phi \times d_\phi}$ is some symmetric matrix, and 
\begin{align*}
    & \tilde \Omega(\tau,\tau') = \frac{1}{n}\sum_{j \in [J]} \omega \left[\sum_{i \in I_{n,j}}\hat f_\tau(D_{i,j}, \beta_0(\tau), \hat \gamma^r(\tau), 0)\right] \left[\sum_{i \in I_{n,j}}\hat f_{\tau'}(D_{i,j}, \beta_0(\tau'), \hat \gamma^r(\tau'), 0) \right]^\top \omega^\top.
\end{align*}
Note that different from $\hat{\Omega}(\tau, \tau')$ of the Wald test statistic in the previous section, we impose the null hypothesis in the form of $\tilde \Omega(\tau,\tau')$.
This is essential for the validity of the test under weak/non-identification for all clusters. Additionally, note that $ \tilde {A}_{CR}(\tau)$ admits a random limit in our setting, which distinguishes it from $\hat A_3(\tau)$ used for $AR_n$. 

We define the corresponding test statistic, its bootstrap counterparts, and critical value as 
\begin{align*}
AR_{CR,n} = \sup_{\tau \in \Upsilon}||\hat{\theta}(\beta_0(\tau),\tau)||_{\tilde {A}_{CR}(\tau)},
\quad
AR_{CR,n}^{*}(g) = \sup_{\tau \in \Upsilon}||\hat{\theta}_g^{*r}(\tau) - \hat{\theta}(\beta_0(\tau),\tau)||_{\tilde {A}_{CR}(\tau)}, 
\end{align*}
and $\hat{c}_{AR,CR,n}(1-\alpha)$, respectively. Naturally, for $\hat H(\tau)$, we want to use a consistent estimator for the Jacobian of $\hat \theta(\beta_0(\tau),\tau)$, which would again involve a kernel density estimation and parameter tuning. Therefore, similar to the bootstrap Wald inference, we instead suggest setting
\begin{align*}
    \hat H(\tau) =\mathbb P_n \hat \Phi_{i,j}(\tau)\hat \Phi_{i,j}^\top(\tau) \hat V_{i,j}(\tau). 
\end{align*}

Then, we reject the null hypothesis when $AR_{CR,n} > \hat{c}_{AR,CR,n}(1-\alpha)$. 
Note that different from the Wald statistic, we do not need to bootstrap $\tilde{A}_{CR}(\tau)$ here because $\tilde \Omega(\tau,\tau)$ is invariant to sign changes. Also, when the IV $\hat \Phi_{i,j}(\tau)$ is a scalar and $\Upsilon$ is a singleton, the choice of $\hat A_3(\tau)$ becomes irrelevant as it gets canceled in both the test statistic and the bootstrap critical value. Therefore, the bootstrap tests based on $AR_n$ and $AR_{CR,n}$ are numerically equivalent in this case. 

\subsection{Computation}
Given a value of $b$, we can compute $(\hat \gamma(b,\tau),\hat \theta(b,\tau))$ in \eqref{eq:br1} by the standard quantile regression algorithm. Then, we follow the lead of \cite{Chernozhukov-Hansen(2006)} and implement a one-dimensional grid search to compute $\hat \beta(\tau)$ in \eqref{eq:br2}. 

Furthermore, we note that given a value of $b$, the gradient bootstrap estimator $(\hat{\gamma}_g^*(b,\tau),\hat{\theta}_g^*(b,\tau))$ in \eqref{eq:boot_beta} can be formulated as linear programming and solved by well-developed linear optimization solvers. Specifically, we can stack up $y_{i,j} - X_{i,j}b$ and $\hat V_{i,j}$ first within each cluster and then across clusters $j=1,...,J$. Denote them as $\mathcal Y \in \Re^n$ and $\mathcal V \in \Re^n$, respectively. Similarly, we stack up $(W_{i,j}^\top, \hat \Phi_{i,j}^\top(\tau))$ together and construct a $n \times (d_w+d_\phi)$ matrix denoted as $\mathcal X$. Last, denote $\eta^\top = (r^\top,t^\top)$, and
\begin{align*}
    S = \sum_{j \in [J]} g_j\sum_{i \in I_{n,j}} \hat{f}_\tau (D_{i,j},\beta_0(\tau),\hat{\gamma}^r(\tau),0). 
\end{align*}
By letting $u_i = \max(0,\mathcal Y_i - \mathcal X_i^\top \eta)$ and $v_i = \max(0,-\mathcal Y_i + \mathcal X_i^\top \eta)$, we can rewrite \eqref{eq:boot_beta} as 
\begin{align*}
    \sum_{i \in [n]}\rho_\tau(\mathcal Y_i - \mathcal X_i^\top \eta)\mathcal V_i  - S^\top \eta  = \tau \mathcal V^\top u + (1-\tau) \mathcal V^\top v - S^\top \eta, 
\end{align*}
subject to 
\begin{align*}
    \mathcal Y - \mathcal X^\top \eta = u - v,
\end{align*}
where $u = (u_1,\cdots,u_n)^\top$ and $v = (v_1,\cdots,v_n)^\top$. Therefore, the gradient-based wild bootstrap estimator $(\hat{\gamma}_g^*(b,\tau),\hat{\theta}_g^*(b,\tau)) = \hat \eta$, where 
\begin{align*}
    (\hat \eta,\hat u,\hat v) & = \argmin_{\eta, u,v} \tau \mathcal V^\top u + (1-\tau) \mathcal V^\top v - S^\top \eta \\
    & s.t. \quad \mathcal Y - \mathcal X^\top \eta = u - v, \quad u \in \Re^n_+, \quad v\in \Re^n_+, \quad \text{and} \quad \eta \in \Re^{d_w + d_\phi}.
\end{align*}

\section{Assumptions and Asymptotic Results}\label{sec:main}

\subsection{Main Assumptions}\label{subsce:assumptions}

We make the following assumptions to establish the statistical properties of our bootstrap procedures formally. 

\begin{assumption}
	\begin{enumerate}[label=(\roman*)]
		\item Suppose $\mathbb{P}(y_{i,j} \leq X_{i,j} \beta_n(\tau) + W_{i,j}^\top\gamma_n(\tau)|W_{i,j},Z_{i,j})=\tau$ for $\tau \in \Upsilon$, $\beta_n(\tau)  = \beta_0(\tau) + \mu_{\beta}(\tau)/r_n$, and $\gamma_n(\tau) =\gamma_0(\tau) + \mu_\gamma(\tau)/r_n$.
		\item Suppose $\sup_{\tau \in \Upsilon}\left(||\mu_\gamma(\tau)||_2 + ||\mu_\beta(\tau)||_2 + ||\beta_0(\tau)||_2 + ||\gamma_0(\tau)||_2\right)\leq C<\infty$. 
		\item For all $\tau \in \Upsilon$, $\beta_n(\tau) \in \text{int}(\mathcal{B})$, where $\mathcal{B}$ is compact and convex. 
		\item Suppose $\max_{ i \in [n_j],j \in [J]}\sup_{y \in \textbf{R}}f_{y_{i,j}|W_{i,j},X_{i,j},Z_{i,j}}(y) <C$ for some constant $C \in (0,\infty)$, where $f_{y_{i,j}|W_{i,j},X_{i,j},Z_{i,j}}(\cdot)$ denotes the conditional density of $y_{i,j}$ given $W_{i,j},X_{i,j}$, and $Z_{i,j}$. 
		\item Denote the population counterpart of $\hat f_\tau(\cdot)$ as $f_\tau(\cdot)$, which is defined as 
 \begin{align}\label{eq:f}
     	f_{\tau}(D_{i,j},b,r,t) = (\tau - 1\{y_{i,j} - X_{i,j} b - W_{i,j}^\top r - \Phi_{i,j}^\top(\tau) t \leq 0\})\Psi_{i,j}(\tau) V_{i,j}(\tau),
 \end{align} 
 where $\Psi_{i,j}(\tau) = [W_{i,j}^\top,\Phi_{i,j}^\top(\tau)]^\top$.   Further define $\Pi(b,r,t,\tau) = \overline{\mathbb{P}}_n f_{\tau}(D_{i,j},b,r,t)$.  Then, there are compact subsets $\mathcal{R}$ and $\Theta$ of $\textbf{R}^{d_w}$ and $\textbf{R}^{d_\phi}$, respectively, such that Jacobian matrix  $\frac{\partial}{\partial(r^\top,t^\top)}\Pi(b,r,t,\tau)$ is continuous and has full column rank, uniformly in $n$ and over $\mathcal{B} \times \mathcal{R} \times \Theta \times \Upsilon$. 
		\item $\sup_{i \in [n_j], j \in [J],\tau \in \Upsilon}  \mathbb{E}||\Psi_{i,j}(\tau)||^{2+a}<\infty$ for some $a>0$. 
	\end{enumerate}
	\label{ass:id} 
\end{assumption}

\begin{remark}
Several remarks are in order.
First, Assumption \ref{ass:id} allows for the case in which $\beta_n(\tau)$ is partially or weakly identified as we do not require the Jacobian matrix w.r.t. $\beta,\gamma$ (i.e., $ \frac{\partial}{\partial(b^\top,r^\top)}\Pi(b,r,0,\tau)$) to be of full rank. Such a condition is assumed later in Assumption \ref{ass:id2} when we do need strong identification for the Wald inference, but is not required for the weak-identification-robust inference based on $AR_n$ and $AR_{CR,n}$. Second, under Assumption \ref{ass:id}, \cite{Chernozhukov-Hansen(2006)} show that $(\gamma_n^\top(\tau),0_{d_\phi \times 1}^\top)^\top$ is the unique solution to the weighted quantile regression of $y_{i,j} - X_{i,j}\beta_n(\tau)$ on $W_{i,j}$ and $\Phi_{i,j}(\tau)$ at the population level. Again, this condition does not impose strong identification of $\beta_n(\tau)$. 
\end{remark}

\begin{assumption}
	\begin{enumerate}[label=(\roman*)]
		\item Let 
		\begin{align*}
		\hat{\mathcal Q}_n(b,r,t,\tau) = & \mathbb{P}_n \rho_\tau(y_{i,j} - X_{i,j} b - W_{i,j}^\top r - \hat{\Phi}_{i,j}^\top(\tau) t) \hat{V}_{i,j}(\tau), \\
		\mathcal Q_n(b,r,t,\tau) = & \overline{\mathbb{P}}_n\rho_\tau(y_{i,j} - X_{i,j} b - W_{i,j}^\top r - \Phi_{i,j}^\top(\tau) t) V_{i,j}(\tau),
		\end{align*}
		and $\mathcal Q_\infty(b,r,t,\tau) = \lim_{n \rightarrow \infty}\mathcal Q_n(b,r,t,\tau)$. 	Suppose $(\gamma_n(b,\tau),\theta_n(b,\tau))$ and $(\gamma_\infty(b,\tau),\theta_\infty(b,\tau))$ are the unique minimizers of $\mathcal Q_n(b,r,t,\tau)$ and $\mathcal Q_\infty(b,r,t,\tau)$ w.r.t. $(r,t)$, respectively. In addition, suppose $(\gamma_n(b,\tau),\theta_n(b,\tau),\gamma_\infty(b,\tau),\theta_\infty(b,\tau))$ are continuous in $b \in \mathcal{B}$ uniformly over $\tau \in \Upsilon$,  $(\gamma_n(b,\tau),\theta_n(b,\tau)) \in \text{int}(\mathcal{R} \times \Theta)$ for all $(b,\tau) \in \mathcal{B} \times \Upsilon$, where $\mathcal{R}$ and $\Theta$  are defined in Assumption \ref{ass:id}. Also, suppose
		\begin{align*}
&		\sup_{(b,\tau) \in \mathcal{B} \times \Upsilon}|\mathcal Q_\infty(b,r,t,\tau) -\mathcal Q_n(b,r,t,\tau)| = o(1) \quad \text{and}\\ 
&		\sup_{(b,\tau) \in \mathcal{B} \times \Upsilon}|\hat{\mathcal Q}_n(b,r,t,\tau) -\mathcal Q_n(b,r,t,\tau)| = o_p(1). 
		\end{align*}
		
		\item  For any $\eps>0$,
		\begin{align*}
		\lim_{\delta \rightarrow 0}\limsup_{n \rightarrow \infty} \mathbb{P}\begin{pmatrix}
		\sup \biggl\Vert r_n(\mathbb{P}_{n,j}-\overline{\mathbb{P}}_{n,j})\biggl(\hat{f}_{\tau}(D_{i,j},\beta_n(\tau)+v_b,\gamma_n(\tau)+v_r,v_t) \\
		- f_{\tau}(D_{i,j},\beta_n(\tau),\gamma_n(\tau),0)  \biggr)\biggr\Vert_2\geq \eps
		\end{pmatrix} = 0,
		\end{align*}
		where the supremum inside the probability is taken over $\{j \in [J], ||v||_2 \leq \delta,\tau \in \Upsilon\}$ and
		$v = (v_b^\top,v_r^\top,v_t^\top)^\top$.\footnote{ For any function $g(\cdot)$ and its estimator $\hat g(\cdot)$,  $\mathbb{E}\hat{g}(W_{i,j})$ is interpreted as $\mathbb{E}g(W_{i,j})|_{g= \hat{g}}$ following the convention in the empirical processes literature. } 
		\item Denote $\eps_{i,j}(\tau) = y_{i,j} - X_{i,j} \beta_n(\tau) - W_{i,j}^\top\gamma_n(\tau)$, $\pi = (\gamma^\top,\theta^\top)^\top$, and $\delta_{i,j}(v,\tau) = X_{i,j} v_b + W_{i,j}^\top v_r + \hat{\Phi}_{i,j}^\top(\tau) v_t$. 		Then, for any $\eps>0$, we have
		\begin{align*}
		& \lim_{\delta \rightarrow 0}\limsup_{n \rightarrow \infty}\mathbb{P}\left[\sup\left\Vert \overline{\mathbb{P}}_{n,j} f_{\eps_{i,j}(\tau)}(\delta_{i,j}(v,\tau)|W_{i,j},Z_{i,j})\hat{\Psi}_{i,j}(\tau)\hat{\Psi}_{i,j}^\top(\tau)\hat V_{i,j}(\tau) - Q_{\Psi,\Psi,j}(\tau)  \right\Vert_{op} \geq \eps \right] = 0, \\
		& 	\lim_{\delta \rightarrow 0}\limsup_{n \rightarrow \infty}\mathbb{P}\left[\sup\left\Vert \overline{\mathbb{P}}_{n,j}f_{\eps_{i,j}(\tau)}(\delta_{i,j}(v,\tau)|W_{i,j},Z_{i,j})\hat{\Psi}_{i,j}(\tau)X_{i,j}\hat V_{i,j}(\tau) - Q_{\Psi,X,j}(\tau)  \right\Vert_{op} \geq \eps \right] = 0,
		\end{align*}
		where the suprema inside the probability are taken over $\{j \in [J],||v||_2 \leq \delta,\tau \in \Upsilon\}$, $v = (v_b^\top,v_r^\top,v_t^\top)^\top$, 
		\begin{align*}
			& Q_{\Psi,X,j}(\tau) = \lim_{n\rightarrow \infty}\overline{\mathbb{P}}_{n,j} f_{\eps_{i,j}(\tau)}(0|W_{i,j},Z_{i,j})\Psi_{i,j}(\tau)X_{i,j}V_{i,j}(\tau), \quad \text{and}\\	
   & Q_{\Psi,\Psi,j}(\tau) = \lim_{n\rightarrow \infty}\overline{\mathbb{P}}_{n,j} f_{\eps_{i,j}(\tau)}(0|W_{i,j},Z_{i,j})\Psi_{i,j}(\tau)\Psi_{i,j}^\top(\tau)V_{i,j}(\tau).
		\end{align*}
		\item $\sup_{\tau \in \Upsilon} r_n ||\mathbb{P}_nf_{\tau}(D_{i,j},\beta_n(\tau),\gamma_n(\tau),0)||_2 = O_p(1)$ for some $r_n = O(\sqrt n)$. 
			\item Let $n_j$ be the sample size of the $j$-th cluster. Then, we treat the number of clusters $J$ as fixed and $ n_j/n\rightarrow \xi_j$ for $\xi_j >0$ and $j \in [J]$.
		\item We further write $Q_{\Psi,\Psi,j}(\tau)$ as $ \begin{pmatrix}
	Q_{W,W,j}(\tau) & Q_{W,\Phi,j}(\tau) \\
	Q_{W,\Phi,j}^\top(\tau) & Q_{\Phi,\Phi,j}(\tau)
	\end{pmatrix}$,
	where $	Q_{W,W,j}(\tau)$, $	Q_{W,\Phi,j}(\tau)$, and $	Q_{\Phi,\Phi,j}(\tau)$ are $d_w \times d_w$, $d_w \times d_\phi$, and $d_\phi \times d_\phi$ matrices. Then, there exist constants $(c,C)$ such that $$0<c<\inf_{\tau \in \Upsilon}\lambda_{\min}\left(\sum_{j \in [J]} \xi_j Q_{\Psi,\Psi,j}(\tau)\right)\leq \sup_{\tau \in \Upsilon}\lambda_{\max}\left(\sum_{j \in [J]} \xi_j Q_{\Psi,\Psi,j}(\tau)\right)<C<\infty.$$ 
	\end{enumerate}
	\label{ass:asym}
\end{assumption}
\begin{remark}
First, Assumption \ref{ass:asym}(i) ensures $\gamma_n(b,\tau)$ and $\theta_n(b,\tau)$ are uniquely defined in the drifting-parameter setting. Second, Assumption \ref{ass:asym}(ii) is the stochastic equicontinuity of the empirical process 
\begin{align*}
r_n(\mathbb{P}_{n,j}-\overline{\mathbb{P}}_{n,j})\biggl(\hat{f}_{\tau}(D_{i,j},\beta_n(\tau)+v_b,\gamma_n(\tau)+v_r,v_t)- f_{\tau}(D_{i,j},\beta_n(\tau),\gamma_n(\tau),0)  \biggr)
\end{align*}
with respect to $v$. Such a condition is verified by \cite{Chernozhukov-Hansen(2006)} when the data are independent and $\hat{V}_{i,j}(\tau)$ and $\hat{\Phi}_{i,j}(\tau)$ uniformly converge to their population counterparts in probability. Their argument can be extended to data with various forms of weak dependence.  Third, Assumption \ref{ass:asym}(iii) requires the uniform consistency of the Jacobian matrices subject to infinitesimal perturbation of parameters, which holds even when observations are dependent. Fourth, Assumption \ref{ass:asym}(iv) requires the convergence rate of the sample mean of the score function to be $r_n$. We provide more details about $r_n$ in three examples in Section \ref{subsce:examples} below. Fifth, Assumption \ref{ass:asym}(v) implies that we focus on the case with a small number of large clusters.
\end{remark}

\begin{assumption}\label{ass:boot}
\begin{enumerate}[label=(\roman*)]
	\item For $j \in [J]$ and $\tau \in \Upsilon$, $Q_{W,\Phi,j}(\tau) = 0$.
	\item There exist versions of tight Gaussian processes  $\{\mathcal{Z}_j(\tau): \tau \in \Upsilon\}_{j \in [J]}$  such that $ \mathcal{Z}_j(\tau) \in \textbf{R}^{d_\phi}$, $\mathcal{Z}_j(\cdot)$ are independent across $j \in [J]$, $\mathbb{E}\mathcal{Z}_j(\tau)\mathcal{Z}^\top_j(\tau') = \Sigma_j(\tau,\tau')$, 
	$$0<c<\inf_{\tau \in \Upsilon, j \in [J]}\lambda_{\min}(\Sigma_j(\tau,\tau))\leq \sup_{\tau \in \Upsilon, j \in [J]}\lambda_{\max}(\Sigma_j(\tau,\tau)) \leq C<\infty$$ 
	for some constants $(c,C)$ independent of $n$, and 
	\begin{align}\label{eq:normal_approx}
	\sup_{j \in [J],\tau \in \Upsilon}||r_n\mathbb{P}_{n,j}\tilde{f}_{\tau}(D_{i,j},\beta_n(\tau),\gamma_n(\tau),0) - \mathcal{Z}_j(\tau)||_2 \convP 0, 
	\end{align}
	where $
	\tilde{f}_{\tau}(D_{i,j},\beta_n(\tau),\gamma_n(\tau),0) = (\tau - 1\{\eps_{i,j}(\tau)\leq 0\})\Phi_{i,j}(\tau)V_{i,j}(\tau).
	$
\end{enumerate}
\end{assumption}

\begin{remark}\label{rem:10}
Assumption \ref{ass:boot}(i) introduces Neyman orthogonality between the estimators of the coefficients of the endogenous and control variables, which is the key to connecting the gradient bootstrap with the randomization test with sign changes. This is in line with the results in \cite{Canay-Santos-Shaikh(2020)} and \cite{Wang-Zhang(2024)} for their residual-based bootstrap procedures. In Section \ref{sec:construction_IV} below, we propose both parametric and nonparametric approaches to construct IVs that satisfy Assumption \ref{ass:boot}(i) (under some regulatory conditions). 
\end{remark}

\begin{remark}
Consistently estimating $\Sigma_j(\cdot)$ in Assumption \ref{ass:boot}(ii) requires further assumptions on the within-cluster dependence structure and potential tuning parameters; see, for example, \cite{YG20} and \cite{GY23}. Instead, the key benefit of our bootstrap inference approach
is that it is fully agnostic about the expression of the covariance matrices.
\end{remark}

\begin{remark}
We notice that \eqref{eq:normal_approx} holds if the within-cluster dependence is sufficiently weak for some type of CLT to hold. Below, we provide three examples of data structures (Examples \ref{ex:panel}-\ref{ex:network}) that satisfy our requirements. 
\end{remark}

\subsection{Examples}\label{subsce:examples}
This section considers several examples and discusses why our assumptions are satisfied or violated in different scenarios. 

\begin{example} [Serial Dependence] \label{ex:panel}
We use $i$ and $j$ to index time period and clusters, respectively, so that observations have serial dependence over time and are asymptotically independent across clusters. Such settings were considered in BCH (Lemma 1 and Section 4.1), IM (Section 3.1), and CRS (Section S.1) for time series data and IM (Section 3.2) for panel data,\footnote{Specifically, for time series data, they propose to divide the full sample into $J$ (approximately) equal sized consecutive blocks (clusters). 
For panel data, assuming independence across individuals, one may treat the observations for each individual as a cluster.} among others. In this setup, we can verify \eqref{eq:normal_approx} under different levels of serial dependence.
\begin{enumerate}
   \item ($L_q$-Mixingale) Let $\tilde{f}_{\tau}^{(k)}(D_{i,j},\beta_n(\tau),\gamma_n(\tau),0)$ denote the $k$-th element of $\tilde{f}_{\tau}(D_{i,j},\beta_n(\tau),\gamma_n(\tau),0)$. Suppose there exists a filtration $\mathcal{F}_{i,j}$ that satisfies the following conditions: for some $q\geq 3$ and any $l\geq 0$ and $j \in [J]$,
    \begin{align*}
& \left\Vert       \mathbb E (\tilde{f}_{\tau}^{(k)}(D_{i,j},\beta_n(\tau),\gamma_n(\tau),0) \mid \mathcal{F}_{i-l,j}) \right\Vert_q \leq c_{n_j,i}\psi_l, \\
& \left\Vert   \tilde{f}_{\tau}^{(k)}(D_{i,j},\beta_n(\tau),\gamma_n(\tau),0) -    \mathbb E (\tilde{f}_{\tau}^{(k)}(D_{i,j},\beta_n(\tau),\gamma_n(\tau),0) \mid \mathcal{F}_{i+l,j}) \right\Vert_q \leq c_{n_j,i}\psi_{l+1},
    \end{align*}
and $\max_{j \in [J]}\max_{i \in I_{n,j}}c_{n_j,i} = o(n^{1/2})$.     Then, \citet[Theorem 4]{LL20} implies \eqref{eq:normal_approx} holds with $r_n = \sqrt{n}$. In fact, they show that the partial sum process of 
$$\{\tilde{f}_{\tau}(D_{i,j},\beta_n(\tau),\gamma_n(\tau),0)\}_{i \in I_{n,j}}$$ 
can be approximated by a martingale, and thus, is called a mixingale. It forms a very general class of models, including martingale differences, linear processes, and various types of mixing and near-epoch dependence processes as special cases. 
\item (Long Memory) Suppose $\tilde{f}_{\tau}(D_{i,j},\beta_n(\tau),\gamma_n(\tau),0) = \sum_{l=0}^{\infty}\Theta_l a_{i-l,j}$, where the innovations $a_{i,j} = (a_{i,j}^{(1)},\cdots,a_{i,j}^{(d_w + d_\phi)})^\top$ are $(d_w + d_\phi)$-dimensional martingale difference with respect to a filtration $\mathcal{F}_{i,j}$ such that for $k \in [d_w + d_\phi]$, 
\begin{align*}
\max_{i,j,k}\mathbb E (|a_{i,j}^{(k)}|^{2+d}\mid \mathcal{F}_{i-1,j}) < \infty,~a.s. \quad \text{and} \quad \mathbb E (a_{i,j}a_{i,j}^\top \mid \mathcal{F}_{i-1,j}) = \Sigma_a,~ a.s.   
\end{align*}
The $(d_w + d_\phi) \times (d_w + d_\phi)$ matrix coefficient $\Psi_l$ can be approximated by 
\begin{align*}
    \Theta_l \sim \frac{l^{d-1}}{\Gamma(d)} \Pi, \quad \text{as} \quad l \rightarrow \infty,
\end{align*}
where $\Gamma(\cdot)$ is the gamma function, $\Pi$ is a non-singular $(d_w + d_\phi) \times (d_w + d_\phi)$ matrix of constants that are independent of $l$, and $d \in (0,0.5)$ is the memory parameter. Then, \citet[Theorem 1]{C02} implies (\ref{eq:normal_approx}) holds with $r_n = n^{1/2-d}$. 
\end{enumerate}
 
\end{example}

\begin{example}[Spatial Dependence] \label{ex:spatial} This example is proposed by BCH. Suppose we have $n$ individuals indexed by $l$. The location of the $l$-th individual is denoted as $s_l$, an $m$-dimensional integer. The distance between individual $l_1$ and $l_2$ is measured by the maximum coordinatewise metric $\text{dist}(l_1,l_2)  = ||s_{l_1}-s_{l_2}||_\infty$. Observation $D$ is indexed by the location so that $D_l = D_{s_l}$ for $l \in [n]$. The clusters $I_{n,j}$ for $j \in [J]$ are defined as disjoint regions ($\Lambda_1,\cdots,\Lambda_J$). Let $\mathcal{F}_{\Lambda}$ be the $\sigma$-field generated by a given random field $D_s$, $s \in \Lambda$ with $\Lambda$ compact and let $|\Lambda|$ be the number of $s \in \Lambda$. Let $\Upsilon_{\Lambda_1,\Lambda_2}$ denote the minimum distance from an element of $\Lambda_1$ to an element of $\Lambda_2$ where the distance is measured by the maximum coordinatewise metric. The mixing coefficient is then 
\begin{align*}
    \alpha_{k_1,k_2}(l) & = \sup\{ \mathbb P(A \cap B) - \mathbb P(A)\mathbb P( B) \}, \\
    & s.t. \quad A \in \mathcal{F}_{\Lambda_1},~ B \in \mathcal{F}_{\Lambda_2},~ |\Lambda_1| \leq k_1,~ \quad |\Lambda_2| \leq k_2,~ \Upsilon(\Lambda_1,\Lambda_2) \geq l. 
\end{align*}
\cite{Bester-Conley-Hansen(2011)} assume the mixing coefficients satisfy (1) $\sum_{l = 1}^\infty l^{m-1} \alpha_{1,1}(l)^{\delta_1/(2+\delta_1)}<\infty$, (2) $\sum_{l = 1}^\infty l^{m-1} \alpha_{k_1,k_2}(l)<\infty$ for $k_1+k_2\leq 4$, and (3) $\alpha_{1,\infty}(l) = O(l^{-m-\delta_2})$ for some $\delta_1>0$ and $\delta_2>0$. 
Under this assumption and other regularity conditions in their Assumptions 1 and 2, \citet[Lemma 1]{Bester-Conley-Hansen(2011)} verifies \eqref{eq:normal_approx} with a finite number of clusters ($J$ fixed) and $r_n = \sqrt{n}$.
\end{example}


\begin{example}[Network Dependence]\label{ex:network} Suppose we observe $n$ units indexed by $\ell \in [n]$ and an adjacency matrix $\mathcal{A} = \{A_{\ell,\ell'}\}$, where $A_{\ell,\ell'} = 1$ means units $\ell$ and $\ell'$ are linked and $A_{\ell,\ell'} = 0$ means otherwise. We extend the linear-in-means social interaction model studied by \cite{BDF09} to the IVQR model. Specifically, we have 
\begin{align} \label{eq:linear-in-mean}
    y_\ell = \delta_0(U_\ell) + \beta(U_\ell) \frac{\sum_{\ell': A_{\ell,\ell'}=1}y_{\ell'}}{n_\ell} + \delta_1(U_\ell) B_\ell + \delta_2(U_\ell) \frac{\sum_{\ell': A_{\ell,\ell'}=1}B_{\ell'}}{n_\ell},
\end{align}
where $n_\ell = |\ell': A_{\ell,\ell'}=1|$ denotes the $\ell$-th node's number of friends and $B_\ell$ represents the $\ell$-th node's background characteristics, and we assume that $U_\ell$ is independent of $\{B_{\ell'}\}_{\ell' \in [n]}$ and follow the uniform distribution on $(0,1)$. In this setup, we have the endogenous variable $X_\ell = \frac{\sum_{\ell': A_{\ell,\ell'}=1}y_{\ell'}}{n_\ell}$, the control variables $W_\ell = (1, B_\ell, \frac{\sum_{\ell': A_{\ell,\ell'}=1}B_{\ell'}}{n_\ell})^\top$, and $\gamma(U_\ell) = (\delta_0(U_\ell), \delta_1(U_\ell), \delta_2(U_\ell))^\top$. Following the literature, we assume the adjacency matrix is independent of $\{B_\ell,\eps_\ell\}_{\ell \in [n]}$.
Further suppose $X_\ell \beta(u) + W_\ell^\top \gamma(u)$ is monotonically increasing in $u$, then we have
\begin{align*}
\mathbb P\left(y_\ell \leq  X_\ell \beta(\tau) + W_\ell^\top \gamma(\tau)| \{B_{\ell}\}_{\ell \in [n]}   \right) = \mathbb P\left(U_\ell \leq  \tau| \{B_{\ell}\}_{\ell \in [n]}   \right)    = \tau. 
\end{align*}
\cite{BDF09} showed that one can use $Z = \tilde A^2 B$ as the IV, where $\tilde A$ is the $n \times n$ normalized adjacency matrix with a typical entry $\tilde A_{\ell,\ell'} = A_{\ell,\ell'}/n_{\ell}$ and $B$ is a $n \times 1$ vector of $\{B_\ell\}_{\ell \in [n]}$.  Then, by the law of iterated expectation, we have
\begin{align*}
\mathbb P\left(y_\ell \leq  X_\ell \beta(\tau) + W_\ell^\top \gamma(\tau)| W_{\ell},Z_\ell \right)    = \tau 
\end{align*}
so that \eqref{eq:IVQR} holds. 

For inference, we can then follow \cite{L22} to partition the nodes (i.e., $\{I_{n,j}\}_{j \in [J]}$) in the network and construct clusters. Specifically, for a subset of indexes $S \subset [n]$, define the conductance of $S$ as $\phi_{\mathcal{A}}(S) = \frac{|\partial_{\mathcal{A}}(S)|}{vol_A(S)}$, where $|\partial_{\mathcal{A}}(S)| = \sum_{\ell \in S}\sum_{\ell' \in [n]/S}{\mathcal{A}}_{\ell,\ell'}$ is the number of links involving a unit in $S$ and a unit not in $S$ and $vol_{\mathcal{A}}(S) = \sum_{\ell \in S}\sum_{\ell' \in [n]}\mathcal{A}_{\ell,\ell'}$ is the sum of degrees $\sum_{\ell' \in [n]}{\mathcal{A}}_{\ell,\ell'}$ of units $\ell \in S$. Then, \cite{L22} shows \eqref{eq:normal_approx} holds with a finite number of clusters ($J$ fixed) and $r_n = \sqrt{n}$ when $\max_{j \in [J]}\phi_\mathcal{A}(I_{n,j}) (\frac{1}{n}\sum_{\ell \in [n]}\sum_{\ell' \in [n]}\mathcal{A}_{\ell,\ell'})\rightarrow 0$ as $n \rightarrow \infty$ and the observations exhibit weak network dependence in the sense of \citet[Assumption 5]{L22}.\footnote{To be more specific, \cite{L22} shows \eqref{eq:normal_approx} holds when $\Upsilon$ contains a finite and fixed number of quantile indexes.Extending his result to cover a continuum of quantile indexes is plausible but outside the scope of this paper. } As the partition $\{I_{n,j}\}_{j \in [J]}$ are unobserved, \cite{L22} further showed that it is possible to recover the clusters by spectral clustering, a method that clusters the leading $J$ eigenvectors of network graph Laplacian by the k-means algorithm. We provide more details about the spectral clustering in Section \ref{sec: simu}.   
\end{example}

\begin{example}[Factor Structure]
As mentioned in the Introduction, our asymptotic framework, which treats the number of clusters as fixed,  follows previous studies such as IM, BCH, and CRS. We emphasize that the main restriction of such an asymptotic framework is it requires the within-cluster dependence to be sufficiently weak for some CLT to hold within each cluster (as illustrated in Examples \ref{ex:panel}-\ref{ex:network}).
For example, as pointed out by \citet[Sections 3.1, 3.2, and 4.2]{mackinnon2022cluster}, this requirement rules out the case where the error follows a factor structure, i.e., for $\eps_{i,j}(\tau) = y_{i,j} - X_{i,j} \beta_n(\tau) - W_{i,j}^\top \gamma_n(\tau)$ in the current IVQR model, 
\begin{align}\label{eq: factor-structure}
    \eps_{i,j}(\tau)  = \lambda_{i,j}(\tau) f^s_{j}(\tau) + u_{i,j}(\tau),
\end{align}
where $u_{i,j}(\cdot)$ denotes the idiosyncratic error, $f^s_j(\cdot)$ denotes the cluster-wide shock, and $\lambda_{i,j}(\cdot)$ is the factor loading. By contrast, such a dependence structure can be handled under the asymptotic framework that lets the number of clusters $J$ diverge to infinity. However, we conjecture that our gradient wild bootstrap procedure is also valid under the alternative asymptotic framework with a large number of small clusters (more discussions are provided in Remark \ref{rem:Hagemann}).

\end{example}

\begin{example}[Cluster Fixed Effects]
If cluster fixed effects exist in the IVQR model, we can add cluster dummies into the control variables $W$. In our setting, the number of clusters is fixed so that even $W$ includes cluster dummies, it still has a fixed dimension, and all our assumptions can still hold. We also note that in the case with linear regressions, adding cluster dummies is equivalent to first projecting out the fixed effects so that $(y_{i,j}, X_{i,j}, W_{i,j}, Z_{i,j})$ is expressed as deviations from cluster means, which is also recommended by \citet[Section 2.2]{Djogbenou-Mackinnon-Nielsen(2019)} and \citet[Section 3.2]{mackinnon2022cluster}. However, we emphasize that in the current setting with quantile regressions, cluster-level demeaning and adding cluster dummies are not equivalent to each other.  
\end{example}

\begin{example}[Heterogeneous IV Strength Across Clusters]\label{ex:Hetero_IV_strength}
As mentioned in the Introduction, we allow for cluster-level heterogeneity with regard to IV strength. For instance, we can consider the following first-stage regression:
\begin{align*}
X_{i,j} =     Z_{i,j}^\top \Pi_{z,j,n} + W_{i,j}^\top \Pi_{w,j,n} + E_{i,j},
\end{align*}
Then, our model \eqref{eq:IVQR} allows for both $\Pi_{z,j,n}$ and $\Pi_{w,j,n}$ to vary across clusters. 
In this case, the Jacobian $Q_{\Phi, X,j}(\tau)$ for the $j$-th cluster takes the form of 
\begin{align*}
Q_{\Phi, X,j}(\tau) = \lim_{n \rightarrow \infty} \overline{\mathbb{P}}_{n,j}f_{\eps_{i,j}(\tau)}(0|W_{i,j},Z_{i,j})\Phi_{i,j}(\tau)Z_{i,j}^\top \Pi_{z,j,n}  V_{i,j}(\tau).
\end{align*}
In particular, our bootstrap Wald tests (i.e., $T_n$ and $T_{CR,n}$) are still valid even when $\Pi_{z,j,n}$, and thus, $Q_{\Phi, X,j}(\tau)$ decay to or are zero for some of the clusters. Our bootstrap AR tests (i.e., $AR_n$ and $AR_{CR,n}$) control asymptotic size even when $\Pi_{z,j,n}$ decay to or are zero for all clusters.  
\end{example}

\begin{example}[Heterogeneous Slope for the Endogenous Variable]\label{ex:Hetero_slop}
Similar to \citet[Example 2]{Canay-Santos-Shaikh(2020)}, we cannot allow for $\beta_n(\tau)$ in \eqref{eq:IVQR} to be heterogeneous across clusters, denoted as $(\beta_{n,j}(\tau))_{j \in [J]}$. More specifically, in this case, the IVQR estimator based on the full sample will estimate $\tilde \beta(\tau)$, a certain weighted average of $(\beta_{n,j}(\tau))_{j \in [J]}$. Then, in Assumption \ref{ass:boot}, we have
\begin{align*}
    \tilde{f}_{\tau}(D_{i,j},\beta_n(\tau),\gamma_n(\tau),0) = (\tau - 1\{X_{i,j}(\beta_j(\tau) - \tilde \beta(\tau))+\eps_{i,j}(\tau)\leq 0\})\Phi_{i,j}(\tau)V_{i,j}(\tau),
\end{align*}
so that Assumption \ref{ass:boot}(ii) is violated. 
\end{example}

\begin{example}[Cluster-level Endogenous Variable]\label{ex:cluster_IV}
If $X_{i,j}$ is a cluster-level variable (say, $X_j$), then the within-cluster limiting Jacobian $Q_{\Psi, X,j}(\tau)$ may be random and potentially correlated with the within-cluster score component $\mathcal{Z}_j$ 
(as $X_{j}$ is endogenous), which violates Assumption \ref{ass:asym}(iii). We notice that similar issues can arise with the approaches of BCH, IM, and CRS. On the other hand, our bootstrap weak-instrument-robust tests remain valid in this case as they do not depend on $Q_{\Psi, X,j}(\tau)$.
\end{example}

\subsection{Construction of Instruments}\label{sec:construction_IV}
In Assumption \ref{ass:boot}(i) above, we require the IVs to satisfy the following condition: for $j \in [J]$, 
\begin{align*}
    Q_{W,\Phi,j}(\tau) = \lim_{n\rightarrow \infty}\overline{\mathbb{P}}_{n,j} f_{\eps_{i,j}(\tau)}(0|W_{i,j},Z_{i,j})W_{i,j}\Phi_{i,j}^\top(\tau) V_{i,j}(\tau)= 0, 
\end{align*}
where $\eps_{i,j}(\tau) = y_{i,j} - X_{i,j}\beta_n(\tau) - W_{i,j}^\top \gamma_n(\tau)$ and $f_{\eps_{i,j}(\tau)}(0|W_{i,j},Z_{i,j})$ is the conditional PDF of $\eps_{i,j}(\tau)$ given $(W_{i,j},Z_{i,j})$ and evaluated at $0$. This section proposes three ways to construct IVs that satisfy this requirement. 

\begin{remark}[Parametric Approach]\label{rem:par} Suppose $Z_{i,j} = G(W_{i,j},\pi) + U_{i,j} \in \Re^{d_z}$ such that $\pi$ is a finite dimensional  parameter, $G(\cdot)$ is a known $d_z$-dimensional function (e.g., $G(w,\pi) = w^\top \pi$), and $U_{i,j}$ is a random shock such that $U_{i,j} \indep \eps_{i,j}| W_{i,j}$ and $\mathbb E(U_{i,j}|W_{i,j}) = 0$. Furthermore, let $\hat \lambda$ be the regression coefficient of $Z_{i,j}$ in the linear (first-stage) regression of $X_{i,j}$ on $Z_{i,j}$ and $W_{i,j}$ using the full sample and $\lambda$ be the probability limit of $\hat \lambda$. Then,  we can let $\Phi_{i,j} = \lambda^\top U_{i,j} \in \Re$ and its feasible version be 
\begin{align*}
    \hat \Phi_{i,j} = \hat \lambda^\top (Z_{i,j} - G(W_{i,j},\hat \pi)), 
\end{align*}
where $\hat \pi$ is a consistent estimator of $\pi$.\footnote{When $G(w,\pi) = w^\top \pi$, we can compute $\hat \pi$ as the coefficient of $W_{i,j}$ in the linear regression of $Z_{i,j}$ on $W_{i,j}$ using observations in the full sample.} We can see that, if $V_{i,j}(\tau) = 1$, then  
\begin{align*}
   \overline{\mathbb{P}}_{n,j} f_{\eps_{i,j}(\tau)}(0|W_{i,j}, Z_{i,j})W_{i,j} \Phi_{i,j}^\top & =    \overline{\mathbb{P}}_{n,j} f_{\eps_{i,j}(\tau)}(0|W_{i,j},Z_{i,j})W_{i,j} U_{i,j}^\top \lambda \\
   & = \overline{\mathbb{P}}_{n,j} f_{\eps_{i,j}(\tau)}(0|W_{i,j}) W_{i,j} \mathbb E(U_{i,j}^\top|W_{i,j}) \lambda = 0. 
\end{align*}    

The key benefit of this approach is that it does not require nonparametric estimation of the conditional density and, thus, the tuning parameters. The way to convert the potentially multi-dimensional IVs into one via the least squares estimator $\hat \lambda$ is also recommended by \cite{Chernozhukov-Hansen(2006)}. For example, in their application of IVQR to \cite{Angrist-Krueger(1991)}'s dataset for the study of returns to schooling, \cite{Chernozhukov-Hansen(2006)} used the linear projection of years of schooling, the endogenous variable, onto covariates and three quarter-of-birth dummy variables (i.e., three-dimensional IVs).
However, efficiency may instead be improved by choosing $\hat \Phi_{i,j}(\tau)$ and $\hat V_{i,j}(\tau)$ appropriately.
\end{remark}

\begin{remark}[Nonparametric Approach]\label{rem:nonpar}
To enforce Neyman-orthogonality between IVs and control variables, we follow \cite{CHW20} and partial out the effect of $W_{i,j}$ from $Z_{i,j}$.  Specifically, for the current nonparametric approach, we construct $\hat{\Phi}_{i,j}(\tau)$ as $\hat{\Phi}_{i,j}(\tau) =  \hat Z_{i,j}-  \hat{\chi}^\top(\tau)W_{i,j}$, where $\hat Z_{i,j} =  (Z_{i,j}^\top,W_{i,j}^\top) \hat \lambda$ and $\hat \lambda$ contains the regression coefficients of both $Z_{i,j}$ and $W_{i,j}$ in the linear (first-stage) regression of $X_{i,j}$ on $Z_{i,j}$ and $W_{i,j}$ using the full sample. To compute $\hat{\chi}(\tau)$, we first need to compute the residual $\hat{\underline{\eps}}_{i,j}(\tau)$. Specifically, let 
\begin{align}
\hat{\underline{\eps}}_{i,j}(\tau) = Y_{i,j} - X_{i,j} \beta_0(\tau) - W_{i,j}^\top \hat{\gamma}(\beta_0(\tau),\tau),
\label{eq:epsunderline_2}
\end{align}
where $\hat{\gamma}(\beta_0(\tau),\tau)$ is defined in \eqref{eq:br1} with $\hat \Phi_{i,j}(\tau) = \hat Z_{i,j}$ and $\beta_0(\tau)$ is the null hypothesis. Under the null and local alternative, we have $\sup_{\tau \in \Upsilon}||\hat{\gamma}(\beta_0(\tau),\tau) - \gamma_n(\tau)||_2 = O_p(r_n^{-1})$ so that $\hat{\underline{\eps}}_{i,j}(\tau)$ can approximate the true error $\eps_{i,j}(\tau)$ well. In addition, let $K(\cdot)$ be a symmetric kernel function and $(h_1,h_2)$ be bandwidths. 
Then, we compute $\hat \chi$ as 
\begin{align}
\hat{\chi}(\tau) = \underline{\hat{Q}}_{W,W}(\tau) \underline{\hat{Q}}^{-}_{W,W}(\tau) \underline{\hat{Q}}^{-}_{W,W}(\tau)\hat{\underline{Q}}_{W,Z}(\tau)
\label{eq:chi}
\end{align}
where 
\begin{align}
\underline{\hat{Q}}_{W,W}(\tau) = \mathbb{P}_{n} \left( \frac{1}{h_1}K\left(\frac{\hat{\underline{\eps}}_{i,j}(\tau)}{h_1}\right)V_{i,j}(\tau)W_{i,j}W_{i,j}^\top\right), \quad \text{and}
\label{eq:Jhat_rr}
\end{align}
\begin{align}
\hat{\underline{Q}}_{W,Z}(\tau) = \mathbb{P}_{n} \left( \frac{1}{h_2}K\left(\frac{\hat{\underline{\eps}}_{i,j}(\tau)}{h_2}\right)V_{i,j}(\tau)W_{i,j} \hat Z_{i,j} \right).
\label{eq:Jhatunder_rt}
\end{align}
For a symmetric and positive semidefinite matrix $A$, $A^-$ is its generalized inverse. 

We suggest using the uniform kernel $K(u) = 1\{|u| \leq 1\}/2 $. \cite{Kato12} has derived the rule-of-thumb bandwidths for both independent and weakly dependent data: 
\begin{align*}
h_1 & =  \hat{s} \left[\frac{4.5 \mathbb P_n \hat V_{i,j}(\tau) ||W_{i,j}||_2^4 }{q(\tau) \left\Vert \mathbb P_n \hat V_{i,j}(\tau) W_{i,j}W_{i,j}^\top \right\Vert_F^2} \right]^{1/5}n^{-1/5}, \\
h_2 & =  \hat{s} \left[\frac{4.5 \mathbb P_n \hat V_{i,j}(\tau) ||W_{i,j}||_2^2 \hat Z_{i,j}^2 }{q(\tau) \left\Vert \mathbb P_n \hat V_{i,j}(\tau) W_{i,j}\hat Z_{i,j} \right\Vert_F^2} \right]^{1/5}n^{-1/5},
\end{align*}
where $q(\tau) = (1-F_N^{-1}(\tau))^2 f_N(F_N^{-1}(\tau))$, $F_N(\cdot)$ and $f_N(\cdot)$ are the distribution and density functions of the standard normal distribution, respectively, and $\hat{s}$ is the sample standard error of $\{\underline{\hat{\eps}}_{i,j}\}_{i \in I_{n,j}},j\in [J]\}$. 

This approach is valid given that there exists $\chi_{n,j}(\tau)$ such that
\begin{align}\label{eq:chi_condition}
    Q_{W,Z,j}(\tau) = Q_{W,W,j}(\tau)\chi_{n,j}(\tau) \quad \text{and} \quad \overline{\mathbb{P}}_{n,j}||W_{i,j}^\top(\chi(\tau) - \chi_{n,j}(\tau))||_{op}^2 = o(1),
\end{align}
where $Q_{W,W,j}(\tau)$ is defined in Assumption \ref{ass:asym}(vi) and $Q_{W,Z,j}(\tau)$ is defined in the same manner, while $Q_{W,W}(\tau)$, $Q_{W,Z}(\tau)$, and $ \chi(\tau) = Q_{W,W}^{-1}(\tau)Q_{W,Z}(\tau)$ are the probability limits of $\hat{\underline{Q}}_{W,W}(\tau)$, $\hat{\underline{Q}}_{W,Z}(\tau)$, and  $\hat \chi(\tau)$, respectively. The requirement in \eqref{eq:chi_condition} is similar in spirit to \citet[Assumption 2(iv) in Section A]{Canay-Santos-Shaikh(2020)}. \cite{Canay-Santos-Shaikh(2020)} further pointed out that one sufficient but not necessary condition for \eqref{eq:chi_condition} is that the distributions of $(Z^{\top}_{i,j}, W^{\top}_{i,j})_{i \in I_{n,j}}$ are the same across clusters. Note this still allows for heterogeneous IV strength in the first stage. In Section \ref{sec:proj} of the Online Supplement, we further provide regularity conditions, which, along with \eqref{eq:chi_condition}, imply $\hat{\Phi}_{i,j}(\tau)$ satisfies Assumption \ref{ass:boot}(i).

Furthermore, we note that the concerns about the presence of tuning parameters are mitigated for two reasons. First, we do not suffer from the curse of dimensionality because $\underline{\hat{\eps}}_{i,j}$ is a scalar. Second, we aim to estimate consistently, rather than make inferences of,  $Q_{W,W}(\tau)$ and $Q_{W,Z}(\tau)$, and thus, other automatic bandwidths such as cross validation can be well integrated into our bootstrap method. We also emphasize that unlike the estimation of $\Sigma_j$ defined in Assumption \ref{ass:boot}(ii), the consistency of $\underline{\hat{Q}}_{W,W}(\tau)$ and $\underline{\hat{Q}}_{W,\Phi}(\tau)$ holds under general weak dependence of observations within clusters, and importantly, does not require us to specify this dependence structure. 
\end{remark}

\begin{remark}[Cluster-level Estimation]\label{rem:cluster}
In the two above examples, we estimate the parameters $(\pi,\chi(\tau))$ using all the observations. To allow for the case where the coefficients may be heterogeneous across clusters, we can estimate them at the cluster level instead. 

Specifically, for the parametric approach, if $G(w,\pi_j) = w^\top \pi_j$, then we can estimate $\pi_j$ by the OLS regression of $Z_{i,j}$ on $W_{i,j}$ using observations in the $j$-th cluster. 

For the nonparametric approach, we can let $\hat{\Phi}_{i,j}(\tau) = \hat Z_{i,j}-  \hat{\chi}_j^\top(\tau)W_{i,j}$, where $\hat \chi_j$ is defined as 
\begin{align}
\hat{\chi}_j(\tau) = \underline{\hat{Q}}_{W,W,j}(\tau) \underline{\hat{Q}}^{-}_{W,W,j}(\tau) \underline{\hat{Q}}^{-}_{W,W,j}(\tau) \hat{\underline{Q}}_{W,Z,j}(\tau),
\label{eq:chi_j}
\end{align}
where 
\begin{align}
\underline{\hat{Q}}_{W,W,j}(\tau) = \mathbb{P}_{n,j} \left( \frac{1}{h_{3,j}}K\left(\frac{\hat{\underline{\eps}}_{i,j}(\tau)}{h_{3,j}}\right)V_{i,j}(\tau)W_{i,j}W_{i,j}^\top\right), \quad \text{and}
\label{eq:Jhat_rrj}
\end{align}
\begin{align}
\hat{\underline{Q}}_{W,Z,j}(\tau) = \mathbb{P}_{n,j} \left( \frac{1}{h_{4,j}}K\left(\frac{\hat{\underline{\eps}}_{i,j}(\tau)}{h_{4,j}}\right)V_{i,j}(\tau)W_{i,j} \hat Z_{i,j}^\top \right).
\label{eq:Jhatunder_rtj}
\end{align}
This definition allows for $\underline{\hat{Q}}_{W,W,j}(\tau)$ to be non-invertible for some but not all clusters. Then, following \cite{Kato12}, we can use the uniform kernel $K(u) = 1\{|u| \leq 1\}/2 $ and the rule of thumb bandwidths : 
\begin{align*}
h_{3,j} & =  \hat{s}_j \left[\frac{4.5 \mathbb P_{n,j} \hat V_{i,j}(\tau) ||W_{i,j}||_2^4 }{q(\tau) \left\Vert \mathbb P_{n,j} \hat V_{i,j}(\tau) W_{i,j}W_{i,j}^\top \right\Vert_F^2} \right]^{1/5}n_j^{-1/5} \\
h_{4,j} & =  \hat{s}_j \left[\frac{4.5 \mathbb P_{n,j} \hat V_{i,j}(\tau) ||W_{i,j}||_2^2 ||\hat Z_{i,j}||_2^2 }{q(\tau) \left\Vert \mathbb P_{n,j} \hat V_{i,j}(\tau) W_{i,j}\hat Z_{i,j}^\top \right\Vert_F^2} \right]^{1/5}n_j^{-1/5},
\end{align*}
where $\hat{s}_j$ is the sample standard error of $\{\underline{\hat{\eps}}_{i,j}\}_{i \in I_{n,j}}$.
In Section \ref{sec:proj} of the Online Supplement, we also provide the regularity conditions that imply $\hat{\Phi}_{i,j}(\tau)$ constructed from the cluster-level estimation satisfies Assumption \ref{ass:boot}(i). 

We note that in the dataset, if there exist some clusters with rather few numbers of observations, then the finite-sample performance of the cluster-level estimation may be negatively affected. In this case, we recommend first merging such small clusters into larger ones or using the full-sample estimation approaches described in Remarks \ref{rem:par} and \ref{rem:nonpar} instead.

\end{remark}

\subsection{Inference for Wald Statistics}
\label{subsec:subvector-result}

Denote $Q_{\Psi,\Psi}(\tau) = \sum_{j \in [J]}\xi_jQ_{\Psi,\Psi,j}(\tau)$, $Q_{\Psi,X}(\tau) = \sum_{j \in [J]} \xi_j Q_{\Psi,X,j}(\tau)$, $Q_{\Phi,X,j}(\tau) = \omega Q_{\Psi,X,j}(\tau)$, and $Q_{\Phi,X}(\tau) = \sum_{ j \in [J]} \xi_j Q_{\Phi,X,j}(\tau)$, where $\omega = (0_{d_\phi \times d_w}, \mathbb{I}_{d_\phi})$.

\begin{assumption}
	\begin{enumerate}[label=(\roman*)]
		\item 	There are compact subsets $\mathcal{R}$ and $\Theta$ of $\textbf{R}^{d_w}$ and $\textbf{R}^{d_\phi}$, respectively, such that Jacobian matrix $\frac{\partial}{\partial(b^\top,r^\top)}\Pi(b,r,0,\tau)$ is continuous and has full column rank, uniformly in $n$ and over $\mathcal{B} \times \mathcal{R} \times \Theta \times \Upsilon$.\footnote{For a sequence of matrices $A_n(v)$ indexed by $v \in \mathcal{V}$ and $n$, we say that $A_n(v)$ is of full column rank uniformly over $v \in \mathcal{V}$ and $n$ if 
			$\inf_{v \in \mathcal{V}, n \rightarrow \infty}\lambda_{\min}(A_n^\top(v)A_n(v)) \geq \underline{c}>0,
$ for some constant $\underline{c}$.}  
		\item The image of $\mathcal{B} \times \mathcal{R}$ under the mapping $(b,r) \mapsto \Pi(b,r,0,\tau)$ is simply connected.	
		\item Suppose $\sup_{\tau \in \Upsilon}||\hat{A}_1(\tau) - A_1 (\tau)||_{op} = o_p(1)$, where $A_1(\tau)$ is a symmetric $d_\phi \times d_\phi$ deterministic matrix such that 
		$0<c\leq \inf_{\tau \in \Upsilon}\lambda_{\min}(A_1 (\tau)) \leq \sup_{\tau \in \Upsilon}\lambda_{\max}(A_1 (\tau)) \leq C<\infty, \text{and}$ 
		\begin{align*}
		0<c\leq & \inf_{\tau \in \Upsilon}\left(Q_{\Phi,X}^\top(\tau) Q_{\Phi,\Phi}^{-1}(\tau) A_1 (\tau)  Q_{\Phi,\Phi}^{-1} Q_{\Phi,X}(\tau) \right) \\
		\leq & \sup_{\tau \in \Upsilon}\left(Q_{\Phi,X}^\top(\tau) Q_{\Phi,\Phi}^{-1}(\tau) A_1 (\tau)  Q_{\Phi,\Phi}^{-1} Q_{\Phi,X}(\tau) \right) \leq C < \infty
		\end{align*}
		for some constants $c,C$.
	\end{enumerate}
	\label{ass:id2}
\end{assumption}

\begin{remark}
    Assumptions \ref{ass:id2}(i) and \ref{ass:id2}(ii) are Assumptions R$5^*$ and R$6^*$ in \cite{Chernozhukov-Hansen(2008a)}. They, along with Assumption \ref{ass:id}, imply that $\beta_n(\tau)$ is uniquely defined. Second, by \citet[Theorem 2]{Chernozhukov-Hansen(2006)}, under Assumptions \ref{ass:id} and \ref{ass:id2}(i)--\ref{ass:id2}(iii), $(\beta_n(\tau),\gamma_n(\tau))$ uniquely solves the system of equations $\mathbb{E}(\tau - 1\{y_{i,j} \leq X_{i,j} b + W_{i,j}^\top r \})\Psi_{i,j}(\tau)V_{i,j}(\tau) = 0$. Third, Assumption \ref{ass:id2}(iii) implies $Q_{\Phi,X}(\tau)$ is of full column rank, and thus, $\beta_n(\tau)$ is strongly identified. 
    However, it allows for the presence of weak IV clusters. Specifically, let us define 
    \begin{align}\label{eq:a_j}
    a_j(\tau) = \Gamma(\tau) Q_{\Psi,X,j}(\tau) = \tilde \Gamma(\tau)Q_{\Phi,X,j}(\tau), 
    \end{align}    
    and
		\begin{align}
	\Gamma(\tau) &= \left[Q_{\Psi,X}^\top(\tau) Q_{\Psi,\Psi}^{-1}(\tau) \omega^\top A_1 (\tau) \omega  Q_{\Psi,\Psi}^{-1}(\tau) Q_{\Psi,X}(\tau)\right]^{-1}Q_{\Psi,X}^\top(\tau) Q_{\Psi,\Psi}^{-1}(\tau) \omega^\top  A_1 (\tau) \omega Q_{\Psi,\Psi}^{-1}(\tau), \notag \\
 \tilde\Gamma(\tau) &= \left[Q_{\Phi,X}^\top(\tau) Q_{\Phi,\Phi}^{-1}(\tau) A_1(\tau)  Q_{\Phi,\Phi}^{-1}(\tau) Q_{\Phi,X}(\tau)\right]^{-1}Q_{\Phi,X}^\top(\tau) Q_{\Phi,\Phi}^{-1}(\tau) A_1(\tau)  Q_{\Phi,\Phi}^{-1}(\tau).
		\label{eq:Gamma}
		\end{align}
Here, $a_j(\tau)$ measures the identification strength of the $j$-th cluster and $\sum_j \xi_j a_j(\tau) = 1$ by construction. We say the $j$-th cluster is a weak IV cluster if $||Q_{\Phi,X,j}(\tau)||_2 = 0$, which implies $a_j(\tau) = 0$. In contrast, the $j$-th cluster is a strong IV cluster if $a_j(\tau) \neq 0$. 
When there exist $j \in [J]$ such that $a_j(\tau)=0$, the inference procedures that are based on cluster-level IVQR estimators of $\beta_n(\tau)$ (e.g., $\hat{\beta}_{j}(\tau)$ for $j \in [J]$) can become invalid,\footnote{For example, the identification for the $j$-th cluster may be too weak for $\hat{\beta}_j(\tau)$, the IVQR estimator of the $j$-th cluster, to retain consistency. In this case, the inference methods based on cluster-level estimators will become invalid.} while our gradient bootstrap procedure remains valid, provided that the overall identification, captured by $Q_{\Phi,X}(\tau)$, is strong.
\end{remark}

\begin{assumption}\label{ass:TQR_alternative}
\begin{enumerate}[label=(\roman*)]
\item 
Suppose $\sup_{\tau \in \Upsilon}|\hat{A}_2(\tau) - A_2(\tau)| = o_p(1)$, where $A_2(\tau)$ is deterministic and $0 < c \leq A_2(\tau) \leq C <\infty,$ for some constants $c,C$.
\item Suppose there exists a subset $\mathcal J_s$ of $[J]$ such that $\inf_{j \in \mathcal J_s,\tau \in \Upsilon}a_j(\tau)\geq c_0>0$ and $a_{j}(\tau) = 0$ for $(j,\tau) \in ([J] \backslash \mathcal J_s) \times \Upsilon$, where $a_j(\tau)$ is defined in (\ref{eq:a_j}). Further denote $J_s = |\mathcal J_s|$, which satisfies  $$\lceil |\textbf{G}|(1-\alpha) \rceil \leq |\textbf{G}|-2^{J-J_s+1}.$$ 
\end{enumerate}
\end{assumption}

\begin{theorem}\label{thm:unstudentizedsize}
	Suppose Assumptions \ref{ass:id}-\ref{ass:id2} and \ref{ass:TQR_alternative}(i) hold. 	Then under $\mathcal{H}_0$ defined in (\ref{eq: hypothesis}), that is, $\mu_{\beta}(\tau) = 0$ for $\tau \in \Upsilon$, 
	\begin{align*}
	\alpha - \frac{1}{2^{J-1}} \leq \liminf_{n \rightarrow \infty} \mathbb{P}(T_{n} > \hat{c}_{n}(1-\alpha)) \leq \limsup_{n \rightarrow \infty} \mathbb{P}(T_{n} > \hat{c}_{n}(1-\alpha)) \leq \alpha + \frac{1}{2^{J-1}}. 
	\end{align*}
	In addition, if Assumption \ref{ass:TQR_alternative}(ii) holds, then under $\mathcal{H}_{1,n}$ defined in (\ref{eq: hypothesis}), 
	\begin{align*}
\lim_{\sup_{\tau \in \Upsilon}|\mu_{\beta}(\tau)| \rightarrow \infty} \liminf_{n \rightarrow \infty}\mathbb{P}(T_{n} > \hat{c}_{n}(1-\alpha)) = 1. 
	\end{align*}
\end{theorem}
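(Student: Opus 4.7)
The plan is to combine a uniform-in-$\tau$ Bahadur linearization of both $\hat\beta(\tau)$ and the bootstrap estimator $\hat\beta_g^*(\tau)$ with the randomization-test principle for the sign-flip group $\textbf{G}=\{-1,1\}^J$. Following the profiled-IVQR argument of \cite{Chernozhukov-Hansen(2006)} adapted to the clustered drifting-parameter setting, under Assumptions \ref{ass:id}--\ref{ass:id2} I would first establish
\[
r_n(\hat\beta(\tau)-\beta_n(\tau)) = \tilde\Gamma(\tau)\, r_n\mathbb{P}_n\tilde f_\tau(D_{i,j},\beta_n(\tau),\gamma_n(\tau),0)+o_p(1)
\]
uniformly in $\tau\in\Upsilon$, with $\tilde\Gamma$ from \eqref{eq:Gamma}. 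The same profiled argument applied to the bootstrap objective in \eqref{eq:boot_beta}, whose subgradient is perturbed by the cluster-level sign-flipped null score $\hat f^{r}_\tau(D_{i,j}):=\hat f_\tau(D_{i,j},\beta_0(\tau),\hat\gamma^r(\tau),0)$, yields
\[
r_n(\hat\beta_g^*(\tau)-\hat\beta(\tau)) = \tilde\Gamma(\tau)\cdot \tfrac{r_n}{n}\sum_{j\in[J]}g_j\sum_{i\in I_{n,j}}\tilde f^{r}_\tau(D_{i,j})+o_p(1),
\]
uniformly in $\tau$ and jointly over $g\in\textbf{G}$.

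\textbf{Limits of the cluster scores.} Expanding $\tilde f^{r}_\tau$ around $(\beta_n,\gamma_n,0)$ and invoking the Neyman-orthogonality $Q_{W,\Phi,j}=0$ of Assumption \ref{ass:boot}(i), the contribution of the estimation error in $\hat\gamma^r(\tau)$ vanishes to first order in the $\Phi$-block, so by Assumption \ref{ass:boot}(ii) the per-cluster limit is $r_n\mathbb{P}_{n,j}\tilde f^{r}_\tau\rightsquigarrow\mathcal{Z}_j(\tau)-Q_{\Phi,X,j}(\tau)\mu_\beta(\tau)$. Summing across clusters with the weights $\xi_j$ and using $\tilde\Gamma(\tau)Q_{\Phi,X,j}(\tau)=a_j(\tau)$ from \eqref{eq:a_j} (so that $\sum_j\xi_j a_j(\tau)\equiv 1$), I obtain the joint limit
\begin{align*}
r_n(\hat\beta(\tau)-\beta_0(\tau)) &\rightsquigarrow \mu_\beta(\tau)+\tilde\Gamma(\tau){\textstyle\sum_{j}}\xi_j\mathcal{Z}_j(\tau),\\
r_n(\hat\beta_g^*(\tau)-\hat\beta(\tau)) &\rightsquigarrow \tilde\Gamma(\tau){\textstyle\sum_{j}}g_j\xi_j\mathcal{Z}_j(\tau)-\bigl({\textstyle\sum_{j}}g_j\xi_j a_j(\tau)\bigr)\mu_\beta(\tau),
\end{align*}
jointly over $g\in\textbf{G}$ and uniformly in $\tau$.

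\textbf{Size under $\mathcal{H}_0$.} Under $\mathcal{H}_0$ we have $\mu_\beta\equiv 0$ and both displays reduce to the common functional $T(g\cdot\mathcal{Z})$, where $T(Y_1,\ldots,Y_J):=\sup_{\tau\in\Upsilon}\|\tilde\Gamma(\tau)\sum_j\xi_j Y_j(\tau)\|_{A_2(\tau)}$ and the identity $g=(1,\ldots,1)$ produces the original statistic. Because the $\mathcal{Z}_j$ are independent centered Gaussian processes, $\mathcal{Z}\stackrel{d}{=}g\cdot\mathcal{Z}$ for every $g\in\textbf{G}$, and the joint convergence $(T_n,\{T_n^*(g)\}_{g\in\textbf{G}})\rightsquigarrow (T(\mathcal{Z}),\{T(g\cdot\mathcal{Z})\}_g)$, combined with Assumption \ref{ass:TQR_alternative}(i) (consistency of $\hat A_2$) and the standard randomization-test lemma (e.g., Lemma S.1.3 of \cite{Canay-Santos-Shaikh(2020)}), delivers the two-sided bound $\bigl|\mathbb{P}(T_n>\hat c_n(1-\alpha))-\alpha\bigr|\le 2^{-(J-1)}$; the $2^{1-J}$ slack is the standard discreteness term for a group of cardinality $2^J$.

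\textbf{Power under $\mathcal{H}_{1,n}$ and main obstacle.} From the first expansion, $r_n T_n=\sup_\tau\|\mu_\beta(\tau)+O_p(1)\|_{A_2(\tau)}\to\infty$ in probability as $\sup_\tau|\mu_\beta(\tau)|\to\infty$. Writing $c_g(\tau)=\sum_{j\in\mathcal{J}_s}g_j\xi_j a_j(\tau)$ (using $a_j\equiv 0$ off $\mathcal{J}_s$) and combining $\inf_{j\in\mathcal{J}_s,\tau}a_j(\tau)\ge c_0>0$ with $\sum_{j\in\mathcal{J}_s}\xi_j a_j(\tau)=1$, the triangle inequality gives $|c_g(\tau)|\le 1$, with equality at some $\tau$ only when $g_j$ takes a constant sign on $\mathcal{J}_s$; this accounts for exactly $2\cdot 2^{J-J_s}=2^{J-J_s+1}$ sign patterns, and every other $g$ admits a uniform gap $\sup_\tau|c_g(\tau)|\le 1-\delta_g<1$ by compactness of $\Upsilon$. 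Under Assumption \ref{ass:TQR_alternative}(ii), the $(1-\alpha)$-quantile of $\{r_n T_n^*(g)\}$ therefore lies among the $|\textbf{G}|-2^{J-J_s+1}$ shrinking-coefficient values and is bounded above by $(1-\delta)\sup_\tau|\mu_\beta|+O_p(1)$ for $\delta=\min_g\delta_g>0$, whereas $r_n T_n\gtrsim\sup_\tau|\mu_\beta|-O_p(1)$, so the rejection probability tends to $1$. The main technical hurdle throughout is the joint uniform-in-$\tau$, joint-in-$g$ Bahadur expansion for $\hat\beta_g^*$, which must correctly track the null-restricted $\hat\gamma^r(\tau)$ and the estimated IV $\hat\Phi_{i,j}(\tau)$ inside the sign-flipped score and exploit Assumption \ref{ass:boot}(i) to cancel the $\hat\gamma^r$-noise from the $\Phi$-block, so that the limit is a clean sign-randomization functional of $\{\mathcal{Z}_j\}_{j\in[J]}$.
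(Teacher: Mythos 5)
Your proposal follows essentially the same route as the paper: the profiled Bahadur expansions of $\hat\beta(\tau)$ and $\hat\beta_g^*(\tau)$ (Lemmas \ref{lem:expand} and \ref{lem:bootexpand} there), joint weak convergence to the sign-randomized Gaussian functional $T_\infty(g)$, the Lehmann--Romano randomization bound for size, and the separation of $\textbf{G}_w$ from $\textbf{G}_s$ via $\sup_{\tau}|\sum_j g_j\xi_j a_j(\tau)|\le 1-2\min_j\xi_j c_0$ for the power claim. Two immaterial slips: the drift in the bootstrap limit enters with a plus sign, $r_n(\hat\beta_g^*(\tau)-\hat\beta(\tau))\rightsquigarrow\tilde\Gamma(\tau)\sum_j g_j\xi_j\mathcal{Z}_j(\tau)+\bigl(\sum_j g_j\xi_j a_j(\tau)\bigr)\mu_\beta(\tau)$, not minus (only $|\sum_j g_j\xi_j a_j(\tau)|$ matters downstream, so nothing breaks); and the $2^{1-J}$ error is twice the bare discreteness term $2^{-J}$ because $T_\infty(g)=T_\infty(-g)$ forces ties to occur in pairs, not merely because $|\textbf{G}|=2^J$.
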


\begin{remark}
Theorem \ref{thm:unstudentizedsize} shows that the $T_n$-based gradient wild bootstrap test controls size asymptotically when at least one of the clusters is strong. The error $1/2^{J-1}$ can be viewed as the upper bound for the asymptotic size distortion, which vanishes exponentially with the total number of clusters rather than the number of strong IV clusters. Intuitively, although the weak IV clusters do not contribute to the identification of $\beta_n$, 
the scores of such clusters, i.e., $r_n\mathbb{P}_{n,j}\tilde{f}_{\tau}(D,\beta_n(\tau),\gamma_n(\tau),0)$  for $j \in ([J] \backslash \mathcal J_s)$, still contribute to the limiting distribution of the IVQR estimator, which in turn determines the total number of possible sign changes in the bootstrap Wald statistic.  
\end{remark}

\begin{remark}
Furthermore, Theorem \ref{thm:unstudentizedsize} shows that the gradient wild bootstrap test has power against $r_n^{-1}$-local alternatives if further Assumption \ref{ass:TQR_alternative}(ii) holds. 
To see why Assumption \ref{ass:TQR_alternative}(ii) is needed,  note that our procedure compares the test statistic $T_n$ with the critical value $\hat c_n(1-\alpha)$, where $T_n$ is asymptotically equivalent to $T^*_n (\iota_J)$, i.e., the bootstrap test statistic with $g$ equal to a $J \times 1$ vector of ones, and the critical value $\hat c_n(1-\alpha)$ is just the $\lceil |\textbf{G}|(1-\alpha) \rceil$-th order statistic of $\{T^*_n (g)\}_{ g \in \textbf{G}}$. 
In the proof of Theorem \ref{thm:unstudentizedsize}, we show that when $|\mu_{\beta}(\tau)| \rightarrow \infty$ and the signs of $g_j$ for all strong IV clusters are the same (only weak IV clusters have different signs),
$T^*_n (g)$ is equivalent to $T^*_n (\iota_J)$, and thus, $T_n$, even under the alternative. 
Intuitively, the effect of $|\mu_{\beta}(\tau)|$  on the asymptotic behaviour of $T_n^*(g)$ is only manifested through sign changes on those strong IV clusters (the effect of sign changes from the weak IV clusters becomes negligible).

Let us denote the set of $g$'s that only flip the sign of weak IV clusters as $\textbf{G}_w = \{g: g_j = g_{j'}, \forall j, j' \in \mathcal J_s\}$. Given there are $J_s$ strong IV clusters, the cardinality of $\textbf{G}_w$ is $2^{J-J_s+1}$. 
To establish the power against $\mathcal{H}_{1,n}$ in Theorem \ref{thm:unstudentizedsize}, we request that our bootstrap critical value $\hat c_n(1-\alpha)$ does not take values of $T^*_n(g)$ for $g \in \textbf{G}_w$ because otherwise the test statistic $T_n$ and the critical value are asymptotically equivalent even under the alternative. 
This implies
\begin{align*}
\lceil |\textbf{G}|(1-\alpha) \rceil \leq |\textbf{G}|-2^{J-J_s+1}.
\end{align*}
Therefore, we need a sufficient number of strong IV clusters to establish the power result. 
For instance, the condition $\lceil |\textbf{G}|(1-\alpha) \rceil \leq |\textbf{G}|-2^{J-J_s+1}$ requires that 
$J_s \geq 5$ and $J_s \geq 6$ for $\alpha=10\%$ and $5\%$, respectively. Theorem \ref{thm:unstudentizedsize} suggests that although the size of the gradient wild bootstrap test is well controlled even with only one strong IV cluster, its power depends on the number of strong IV clusters.
\end{remark}

\begin{remark}\label{rem:Hagemann}
We conjecture that when $\beta_n(\tau)$ is strongly identified, our gradient wild bootstrap-based Wald inference procedure is also valid for IVQR under the alternative asymptotic framework with a large number of small clusters (e.g., see \cite{Hagemann(2017)} in the QR context).
Specifically, to establish bootstrap validity in this case, we need to impose regularity conditions similar to those in \cite{Hagemann(2017)} and show that conditional on the data, as the number of small clusters diverges, the distribution of the resampling process for the bootstrap IVQR estimator $r_n(\hat \beta^* (\tau) - \hat \beta (\tau) )$ is approximately the same as that of the sampling process $r_n(\hat \beta (\tau) - \beta_0(\tau))$ under the null. However, such arguments for the consistency of the bootstrap distribution are rather different from the randomization test perspective underlying the proof of Theorem \ref{thm:unstudentizedsize}. For the conciseness of the paper, we leave this direction of investigation for future research. 

\end{remark}

We need the following assumption for the Wald inference of IVQR with CRVE. 
\begin{assumption}\label{ass:J}
\begin{enumerate}[label=(\roman*)]
\item Suppose
\begin{align*}
\sup_{\tau \in \Upsilon}||    \hat G(\tau) - G(\tau)||_{op} = o_p(1),
\end{align*}
where $\inf_{\tau \in \Upsilon}(G^\top(\tau) G(\tau)) \geq c>0$ for some constant $c$ and 
$\hat G(\tau)$ is defined in \eqref{eq:A1random}.  
\item Let $b_j(\tau)  =  G^\top (\tau)Q_{\Phi,X,j}(\tau)$ and 
\begin{align*}
    v_{j}^\top(\tau) = -[\xi_1,\cdots,\xi_J] \otimes  (b_j(\tau) \tilde \Gamma(\tau)) + [0_{d_\phi(j-1)}^\top,  G^\top (\tau),0_{d_\phi(J-j)}^\top] \in \Re^{1 \times d_\phi J},
\end{align*}
where $\tilde \Gamma(\tau) = \omega \Gamma(\tau)$. Then, the rank of $[v_{1}(\tau),\cdots,v_{J}(\tau)] \in \Re^{(d_\phi J) \times J}$ is strictly greater than 1 uniformly over $\tau \in \Upsilon$. 


\end{enumerate}
\end{assumption}

\begin{remark}
Assumption \ref{ass:J}(i) guarantees that the CRVE $\hat{A}_{CR}(\tau)$ is invertible and the corresponding test statistic $T_{CR,n}$ does not degenerate. Assumption \ref{ass:J}(ii) is a rank condition, which holds in general when $J>1$. 
\end{remark}


\begin{theorem}\label{thm:studentized}
	Suppose Assumptions \ref{ass:id}--\ref{ass:id2} and \ref{ass:J} hold.	Then under $\mathcal{H}_0$ defined in (\ref{eq: hypothesis}), that is, $\mu_{\beta}(\tau) = 0$ for $\tau \in \Upsilon$, 
	\begin{align*}
	\alpha - \frac{1}{2^{J-1}} \leq \liminf_{n \rightarrow \infty} \mathbb{P}(T_{CR,n} > \hat{c}_{CR,n}(1-\alpha)) \leq \limsup_{n \rightarrow \infty} \mathbb{P}(T_{CR,n} > \hat{c}_{CR,n}(1-\alpha)) \leq \alpha + \frac{1}{2^{J-1}}. 
	\end{align*}
	In addition, suppose there exists a subset $\mathcal J_s$ of $[J]$ such that $\inf_{j \in \mathcal J_s, \tau \in \Upsilon}\min(|a_j(\tau)|,|b_j(\tau)|)\geq c_0>0$, $a_{j}(\tau) = b_{j}(\tau) = 0$ for $(j,\tau) \in [J] \backslash \mathcal J_s \times \Upsilon$, and $\lceil |\textbf{G}|(1-\alpha) \rceil \leq |\textbf{G}|-2^{J-J_s+1}$, where $J_s = |\mathcal J_s|$. Then under $\mathcal{H}_{1,n}$ defined in (\ref{eq: hypothesis}),
	\begin{align*}
\lim_{\sup_{\tau \in \Upsilon}|\mu_{\beta}(\tau)| \rightarrow \infty} \liminf_{n \rightarrow \infty}\mathbb{P}(T_{CR,n} > \hat{c}_{CR,n}(1-\alpha)) = 1.
	\end{align*}
\end{theorem}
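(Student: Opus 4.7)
The plan is to adapt the randomization-test argument used for Theorem \ref{thm:unstudentizedsize} to the studentized statistic $T_{CR,n}$, with the additional task of tracking the random limit of the CRVE under fixed $J$ and its bootstrap counterpart. I would organize the proof into three main steps.

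\textbf{Step 1 (Joint asymptotic behavior under $\mathcal{H}_0$).} Using the Bahadur-type expansion from \cite{Chernozhukov-Hansen(2006)} adapted to clustered data via Assumptions \ref{ass:id}, \ref{ass:asym}, \ref{ass:boot}, and \ref{ass:id2}, I would show that uniformly in $\tau \in \Upsilon$,
\begin{equation*}
r_n(\hat{\beta}(\tau) - \beta_0(\tau)) = -\tilde{\Gamma}(\tau) \sum_{j \in [J]} \xi_j \mathcal{Z}_j(\tau) + o_p(1).
\end{equation*}
Meanwhile, because $J$ is fixed and each cluster score satisfies $r_n \mathbb{P}_{n,j}\tilde{f}_\tau \convP \mathcal{Z}_j(\tau)$, the cluster-aggregated object in \eqref{eq:Vhat} satisfies (after suitable normalization)
\begin{equation*}
\hat{\Omega}(\tau,\tau') \convD \sum_{j \in [J]} \xi_j^2 \mathcal{Z}_j(\tau) \mathcal{Z}_j^\top(\tau'),
\end{equation*}
jointly with the above, so that by Assumption \ref{ass:J}(i) the normalization $\hat{A}_{CR}(\tau) = [\hat{G}^\top(\tau)\hat{\Omega}(\tau,\tau)\hat{G}(\tau)]^{-1}$ converges jointly to a \emph{random} limit depending on $(\mathcal{Z}_j)_{j \in [J]}$. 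Assembling these pieces delivers the joint limit of $T_{CR,n}$.

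\textbf{Step 2 (Bootstrap limit conditional on the data).} For each $g \in \mathbf{G}$, the null-imposed gradient perturbation in \eqref{eq:boot_beta} together with Assumption \ref{ass:boot}(i) (Neyman orthogonality) yields, by the same expansion,
\begin{equation*}
r_n(\hat{\beta}_g^*(\tau) - \hat{\beta}(\tau)) = -\tilde{\Gamma}(\tau) \sum_{j \in [J]} \xi_j (g_j-1) \mathcal{Z}_j(\tau) + o_p(1).
\end{equation*}
Turning to $\hat{\Omega}_g^*(\tau,\tau')$ in \eqref{eq:Vhatg}, under $\mathcal{H}_0$ both $\hat{\beta}_g^*$ and $\hat{\beta}$ are $O_p(r_n^{-1})$-close to $\beta_0$, so Assumption \ref{ass:asym}(ii)--(iii) imply the second and third terms of $\hat{f}_{\tau,g}^*$ cancel up to $o_p(r_n^{-1})$ after cluster aggregation; hence the random limit is driven by $g_j \mathcal{Z}_j(\tau)$, and because $g_j^2 = 1$ we get
\begin{equation*}
\hat{\Omega}_g^*(\tau,\tau') \convD \sum_{j \in [J]} \xi_j^2 \mathcal{Z}_j(\tau) \mathcal{Z}_j^\top(\tau'),
\end{equation*}
which matches the limit of $\hat{\Omega}(\tau,\tau')$. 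Consequently, jointly over $g \in \mathbf{G}$, $\{T_{CR,n}^*(g)\}_{g \in \mathbf{G}}$ converges to a random vector whose distribution is invariant under the sign-change group action $g \mapsto (g_j \mathcal{Z}_j)$, since the $\mathcal{Z}_j$ are independent, centered Gaussians.

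\textbf{Step 3 (Size and power).} For size, invariance of the joint limit under $\mathbf{G}$ plus an application of the randomization-inference lemma used in the proof of Theorem \ref{thm:unstudentizedsize} (in the spirit of \cite{Canay-Romano-Shaikh(2017)} and \cite{Canay-Santos-Shaikh(2020)}) yields the bounds with error $1/2^{J-1}$ coming from the discreteness of $|\mathbf{G}| = 2^J$ and possible ties. For power under $\mathcal{H}_{1,n}$, the expansions above acquire additional drift terms: $r_n(\hat\beta(\tau)-\beta_0(\tau))$ gains $\mu_\beta(\tau)$, which under $\sup_\tau |\mu_\beta(\tau)| \to \infty$ forces $T_{CR,n} \to \infty$ at rate $|\mu_\beta|$ through the strong IV clusters (where $a_j \geq c_0$). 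For the critical value, the subtle point is that under the alternative the first term $g_j \hat{f}_\tau(D_{i,j},\beta_0,\hat{\gamma}^r,0)$ of $\hat{f}_{\tau,g}^*$ picks up a drift $g_j b_j(\tau)\mu_\beta(\tau)$ whose squared cluster aggregate $\hat{\Omega}_g^*(\tau,\tau)$ inflates like $\mu_\beta^2$, shrinking $\hat{A}_{CR,g}^*(\tau)$, so that $T_{CR,n}^*(g)$ stays bounded for every $g$ outside the sign-invariant subgroup $\mathbf{G}_w = \{g: g_j = g_{j'},\, \forall j,j' \in \mathcal{J}_s\}$; the rank condition Assumption \ref{ass:J}(ii) ensures this occurs at enough $g$'s, and the threshold $\lceil|\mathbf{G}|(1-\alpha)\rceil \leq |\mathbf{G}| - 2^{J-J_s+1}$ guarantees the quantile $\hat{c}_{CR,n}(1-\alpha)$ is not pulled up by the $g \in \mathbf{G}_w$ (for which $T_{CR,n}^*(g)$ is equivalent to $T_{CR,n}$).

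The main obstacle is the bookkeeping in the second half of Step 2 and all of Step 3's power analysis: showing that the particular cross-subtraction design of $\hat{f}_{\tau,g}^*$ in \eqref{eq:Vhatg} simultaneously (a) cancels drift terms exactly under $\mathcal{H}_0$ so that $\hat{\Omega}_g^*$ matches $\hat{\Omega}$ in distribution, and (b) injects the $\mu_\beta$ drift under $\mathcal{H}_{1,n}$ into $\hat{\Omega}_g^*$ through the $g_j$-weighted first term (whereas $\hat{\Omega}$ uses $\hat{\beta}$, which absorbs the drift in the unrestricted estimator). Carefully separating the three terms of $\hat{f}_{\tau,g}^*$ via the equicontinuity in Assumption \ref{ass:asym}(ii) and matching Jacobian estimates under both hypotheses is the technical crux; Assumption \ref{ass:J}(ii) is then used to ensure that enough sign-flipped statistics $T_{CR,n}^*(g)$ concentrate on a bounded region while $T_{CR,n} \to \infty$.
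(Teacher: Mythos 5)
Your architecture matches the paper's (expansion of $\hat\beta$ and $\hat\beta_g^*$, joint limits of the CRVE and its bootstrap analogue, randomization-test bounds for size, divergence of the numerator versus boundedness of the sign-flipped statistics for power), but two of your key computations are wrong in ways that matter. First, the limit of $\hat\Omega(\tau,\tau)$ is \emph{not} $\sum_{j}\xi_j^2\mathcal{Z}_j(\tau)\mathcal{Z}_j^\top(\tau)$: the CRVE is built from $\hat f_\tau(D_{i,j},\hat\beta(\tau),\hat\gamma(\tau),0)$, and with $J$ fixed the estimation effect of $(\hat\beta,\hat\gamma)$ does not wash out of the cluster-level scores. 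The correct (suitably projected) limit of the $j$-th cluster score is $G^\top(\tau)\mathcal{Z}_j(\tau)-b_j(\tau)\tilde\Gamma(\tau)\sum_{\tilde j}\xi_{\tilde j}\mathcal{Z}_{\tilde j}(\tau)$, i.e.\ $v_j^\top(\tau)\mathcal{Z}$ in the notation of Assumption \ref{ass:J}(ii). This is exactly why that rank condition is needed: it guarantees the limiting denominator is almost surely nonzero. Under your stated limit the assumption would be vacuous, which is a signal the limit is wrong. Relatedly, the second and third terms of $\hat f^*_{\tau,g}$ do not cancel; their difference supplies the analogous recentering $-b_j(\tau)\tilde\Gamma(\tau)\sum_{\tilde j}g_{\tilde j}\xi_{\tilde j}\mathcal{Z}_{\tilde j}(\tau)$, so $\hat\Omega_g^*$ and $\hat\Omega$ agree only \emph{in distribution} under the map $\mathcal{Z}_j\mapsto g_j\mathcal{Z}_j$ (which contains cross-cluster products), not almost surely. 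The size conclusion survives because equality in distribution under the group action is all the randomization argument needs, but your route to it does not.

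Second, in the power step the drift entering the bootstrap cluster score is $(g_j-\overline a_g^*(\tau))\,b_j(\tau)\mu_\beta(\tau)$, not $g_jb_j(\tau)\mu_\beta(\tau)$: the first term of $\hat f^*_{\tau,g}$ contributes $g_jb_j(\tau)\mu_\beta(\tau)/r_n$ but the difference of the second and third terms contributes $-\overline a_g^*(\tau)b_j(\tau)\mu_\beta(\tau)/r_n$. The offset is essential. For $g=\iota_J$ one has $\overline a^*_{\iota_J}(\tau)=\sum_j\xi_ja_j(\tau)=1$, so the drift vanishes and $A^*_{CR,\iota_J}(\tau)$ is free of $\mu_\beta$; for $g\in\mathbf G_s$ at least one strong cluster has $g_j\neq\overline a_g^*(\tau)$, which is what inflates the denominator and bounds $T^*_{CR,\infty}(g)$ (the paper makes this precise by completing the square in $\mu_\beta$ and again invoking Assumption \ref{ass:J}(ii) to ensure the residual quadratic form $M_1-M_2^2/\overline c^{\,2}$ is a.s.\ positive — not, as you state, to ensure "enough $g$'s" are controlled; that is handled by the cardinality condition $\lceil|\mathbf G|(1-\alpha)\rceil\le|\mathbf G|-2^{J-J_s+1}$). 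With your drift coefficient the case $g=\iota_J$ would also inflate, and the careful separation between the diverging test statistic and the bounded sign-flipped statistics — the crux of the power proof — would not go through as you have written it.
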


\begin{theorem}\label{thm:power_comparison}
Suppose the assumptions in Theorem \ref{thm:studentized} hold and $\Upsilon$ is a singleton. 
Let 
$$\textbf{G}(c_0) = \{ g \in \textbf{G}: |\sum_{j \in [J]} g_j \xi_j a_j(\tau)| \geq c_0\}$$ 
for some positive constant $c_0$. We assume there exists a constant $c_0>0$ such that $|\textbf{G}(c_0)| > |\textbf{G}| - \lceil |\textbf{G}|(1-\alpha) \rceil$. Then under $\mathcal{H}_{1,n}$ in \eqref{eq: hypothesis}, 
for any $\delta>0$, there exists a constant $c_{\mu}>0$ such that when $|\mu_{\beta}(\tau)| >c_\mu$, 
\begin{align*}
\liminf_{n \rightarrow \infty}\mathbb{P}(\phi^{cr}_n \geq \phi_n) \geq 1-\delta,
\end{align*}
where $\phi^{cr}_n = 1 \{ T_{CR,n} > \hat{c}_{CR,n}(1-\alpha)\}$ and $\phi_n = 1 \{ T_{n} > \hat{c}_{n}(1-\alpha) \}$.
\end{theorem}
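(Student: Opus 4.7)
The strategy is to prove the stronger statement that $\phi^{cr}_n=1$ with probability tending to $1-\delta$ as $|\mu_\beta(\tau)|$ grows, which automatically forces $\phi^{cr}_n\geq\phi_n$ since $\phi_n\in\{0,1\}$. I would first obtain a joint distributional limit of $(T_{CR,n},\{T^*_{CR,n}(g)\}_{g\in\mathbf{G}})$ under $\mathcal{H}_{1,n}$ as $n\to\infty$ with $\mu_\beta$ held fixed, and then let $|\mu_\beta|\to\infty$ in the limiting objects.

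Reusing the asymptotic representations developed in the proofs of Theorems \ref{thm:unstudentizedsize} and \ref{thm:studentized}, since $\Upsilon$ is a singleton one has
\begin{align*}
r_n(\hat\beta-\beta_0)\rightsquigarrow\mu_\beta+Z,\qquad r_n(\hat\beta^*_g-\beta_0)\rightsquigarrow A_g\mu_\beta+Z_g,
\end{align*}
where $Z=\tilde\Gamma\sum_j\xi_j Z_j$, $Z_g=\tilde\Gamma\sum_j\xi_j g_j Z_j$, and $A_g=\sum_j\xi_j g_j a_j$. A Taylor expansion of the cluster-level score around $(\beta_n,\gamma_n)$, combined with the Neyman orthogonality $Q_{W,\Phi,j}=0$ and the consistency of $\hat\gamma^r,\hat\gamma,\hat\gamma^*_g$, yields the cluster-level limits
\begin{align*}
r_n\omega\mathbb{P}_{n,j}\hat f_\tau(\hat\beta,\hat\gamma,0)&\rightsquigarrow Z_j-H_j Z,\\
r_n\omega\mathbb{P}_{n,j}\hat f^*_{\tau,g}(D_{i,j})&\rightsquigarrow g_j Z_j+H_j B_{j,g},
\end{align*}
with $H_j=Q_{\Phi,X,j}$ and $B_{j,g}=(g_j-A_g+1)\mu_\beta+(Z-Z_g)$. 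Continuous mapping then delivers $r_n T_{CR,n}\rightsquigarrow|\mu_\beta+Z|/\sigma$ and $r_n T^*_{CR,n}(g)\rightsquigarrow|(A_g-1)\mu_\beta+(Z_g-Z)|/\sigma_g$, where $\sigma^2$ and $\sigma_g^2$ are the natural $G$-weighted quadratic forms built from the cluster-level limits above; in particular $\sigma^2$ does not involve $\mu_\beta$ while $\sigma_g^2$ inherits all $\mu_\beta$-dependence through $B_{j,g}$.

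The decisive step is the scaling in $|\mu_\beta|$. On the full-probability event $\{\sigma<\infty\}$, the limit of $r_n T_{CR,n}$ diverges linearly in $|\mu_\beta|$. For the bootstrap, the leading-order term of $\sigma^2_g$ as $|\mu_\beta|\to\infty$ is $\mu_\beta^2\sum_{j\in\mathcal J_s}\xi_j^2 b_j^2\eta_{j,g}^2$ with $\eta_{j,g}=g_j-A_g+1$. I verify that $\sum_{j\in\mathcal J_s}\xi_j^2 b_j^2\eta_{j,g}^2>0$ for every $g\in\mathbf{G}$ by contradiction: if $\eta_{j,g}=0$ for all strong $j$, then all strong $g_j$'s equal a common value $g'\in\{\pm1\}$, so $A_g=g'\sum_{j\in\mathcal J_s}\xi_j a_j=g'$, but then $\eta_{j,g}=0$ also requires $g'=A_g-1=g'-1$, a contradiction. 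Hence $\sigma_g$ grows like $|\mu_\beta|$, the numerator of $T^*_{CR,n}(g)$ grows at most linearly, and $r_n T^*_{CR,n}(g)$ converges to the bounded limit $|A_g-1|/\sqrt{\sum_{j\in\mathcal J_s}\xi_j^2 b_j^2\eta_{j,g}^2}$ (or to $0$ when $A_g=1$). Invoking the condition $|\mathbf{G}(c_0)|>|\mathbf{G}|-\lceil|\mathbf{G}|(1-\alpha)\rceil$ to pin down the $(1-\alpha)$ order statistic of these limits in terms of $c_0$ and the cluster weights, the limiting bootstrap critical value is bounded by a deterministic constant $C^*$.

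To conclude, for any $\delta>0$ I pick $c_\mu$ large enough that on an event of probability at least $1-\delta$ one has both $\sigma<\infty$ and $|\mu_\beta+Z|/\sigma>C^*$ whenever $|\mu_\beta|>c_\mu$, which yields $\liminf_n\mathbb{P}(T_{CR,n}>\hat c_{CR,n}(1-\alpha))\geq 1-\delta$ and hence $\liminf_n\mathbb{P}(\phi^{cr}_n\geq\phi_n)\geq 1-\delta$. The main obstacle is the variance expansion underlying the bootstrap cluster-level score limit: tracking how the sign-change perturbation $g_j\hat f_\tau(\beta_0,\hat\gamma^r,0)$ injects an $H_j\mu_\beta$ bias, combining this cleanly with the other two summands defining $\hat f^*_{\tau,g}$, and isolating the coefficient $(g_j-A_g+1)$ that governs whether $\sigma_g^2$ diverges at the $\mu_\beta^2$ rate. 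This is exactly the mechanism flagged in Remark \ref{rem: boot-CRVE}: by design $\hat\Omega^*_g$ mimics $\hat\Omega$ under the null but inflates quadratically under the alternative, so that $T_{CR,n}$ eventually dominates the bootstrap critical value for distant alternatives and produces the power advantage asserted in the theorem.
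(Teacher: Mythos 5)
Your strategy --- prove that $\phi^{cr}_n=1$ with probability at least $1-\delta$ and deduce $\phi^{cr}_n\geq\phi_n$ trivially --- is not the paper's route and, more importantly, rests on an expansion that is incorrect at the decisive step. The paper instead observes that, with $\Upsilon$ a singleton, the unstudentized test satisfies $1\{T_n>\hat c_n(1-\alpha)\}=1\{T_{CR,n}>\tilde c_{CR,n}(1-\alpha)\}$, where $\tilde c_{CR,n}(1-\alpha)$ is the $(1-\alpha)$ quantile of $\{|\hat\beta^*_g(\tau)-\hat\beta(\tau)|\sqrt{\hat A_{CR}(\tau)}:g\in\textbf{G}\}$, i.e.\ the bootstrap draws studentized by the \emph{original} CRVE; the theorem then reduces to showing $\mathbb{P}(\hat c_{CR,n}(1-\alpha)>\tilde c_{CR,n}(1-\alpha))\leq\delta$, which is where the hypothesis $|\textbf{G}(c_0)|>|\textbf{G}|-\lceil|\textbf{G}|(1-\alpha)\rceil$ enters (it guarantees some $g$ among the bottom $k^*$ order statistics has $|\overline{a}^*_g(\tau)|\geq c_0$, so $\tilde c_{CR,n}$ scales like $c_0|\mu_\beta(\tau)|$ while $\hat c_{CR,n}$ stays $O_p(1)$). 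Your route makes that hypothesis play no role and would turn the theorem into a restatement of the power part of Theorem \ref{thm:studentized}, which is a warning sign.

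The concrete error is in your bootstrap cluster-score limit. By Lemmas \ref{lem:bootexpand} and \ref{lem:stud}, $r_n(\hat\beta^*_g(\tau)-\hat\beta(\tau))=\tilde\Gamma(\tau)\sum_j\xi_jg_j\mathcal{Z}_j(\tau)+\overline{a}^*_g(\tau)\mu_\beta(\tau)+o_p(1)$ (centered at $\hat\beta$, not at $\beta_0$), and the coefficient of $\mu_\beta(\tau)$ in $r_n\hat G^\top(\tau)\omega\mathbb{P}_{n,j}\hat f^*_{\tau,g}(D_{i,j})$ is $(g_j-\overline{a}^*_g(\tau))b_j(\tau)$, not $(g_j-\overline{a}^*_g(\tau)+1)b_j(\tau)$; your extra ``$+1$'' (and the extra $Z$ in $B_{j,g}$) comes from subtracting two limits each mis-centered at $\beta_0$. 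With the correct coefficient, your contradiction argument collapses: if all strong-cluster signs equal $g'$, then $\overline{a}^*_g(\tau)=g'$ and $(g_j-\overline{a}^*_g(\tau))b_j(\tau)=0$ for every $j$, so $\hat\Omega^*_g$ does \emph{not} diverge with $\mu_\beta^2$ for any $g\in\textbf{G}_w$ (this is in fact required for size control: under the null $\hat\Omega^*_{\iota_J}$ must match $\hat\Omega$). For those $2^{J-J_s+1}$ draws the numerator is $\overline{a}^*_g(\tau)\mu_\beta(\tau)+O_p(1)=\pm\mu_\beta(\tau)+O_p(1)$ over a bounded denominator, so $T^*_{CR,n}(g)$ diverges linearly in $|\mu_\beta(\tau)|$ and the bootstrap critical value is \emph{not} bounded by a deterministic constant unless $k^*\leq|\textbf{G}_s|$, i.e.\ unless there are enough strong clusters. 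Your claimed uniform bound $C^*$ on the limiting critical value therefore fails in exactly the regimes (few strong IV clusters) that the condition on $\textbf{G}(c_0)$ is designed to cover, and the argument does not establish the theorem.
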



\begin{remark}\label{rem: power_comparison}
Theorem \ref{thm:studentized} shows that similar to the $T_n$-based bootstrap test, the bootstrap test with CRVE controls size asymptotically (with a small error) as long as there exists at least one strong IV cluster, and it has power against $r_n^{-1}$-local alternative if there are a sufficient number of strong IV clusters.  However, the power result in Theorem \ref{thm:studentized} is derived using arguments very different from those for the one without CRVE. In particular, distinct from the $T_n$-based test, the local power of the $T_{CR,n}$-based test is established without the assumption that the IVQR first-stage coefficients have the same sign for all clusters, which may not hold in some empirical studies (e.g., Figure \ref{fig:first_stage}). Therefore, the power result for the bootstrap test with CRVE allows for substantially more heterogeneity in the first stage of IVQR.


Furthermore, we establish in Theorem \ref{thm:power_comparison} that in the case where $\Upsilon$ is a singleton (i.e., testing $\mathcal{H}_0:  \beta_n(\tau)= \beta_0(\tau) \; v.s. \; \mathcal{H}_{1,n}: \beta_n(\tau) \neq  \beta_0(\tau)$ for a certain quantile index $\tau$), the power of the $T_{CR,n}$-based bootstrap test dominates that based on $T_n$ with a large probability when the local parameter $\mu_{\beta}(\tau)$ is sufficiently different from zero. Specifically, in this case, we have
\begin{align*}
    1\{T_n > \hat{c}_{n}(1-\alpha)\} = 1\{T_{CR,n} >  \tilde{c}_{CR,n}(1-\alpha)\},
\end{align*}
where $\tilde{c}_{CR,n}(1-\alpha)$ denotes the $(1-\alpha)$ quantile of $\left\{ ||\hat{\beta}_g^*(\tau) -\hat{\beta}(\tau)||_{\hat{A}_{CR}(\tau)}: g \in \textbf{G}\right\}$, and $\hat{A}_{CR}(\tau)$ is the inverse of the original CRVE instead of its bootstrap analogue. Then, Theorem \ref{thm:power_comparison} follows because $\tilde{c}_{CR,n}(1-\alpha) > \hat{c}_{CR,n}(1-\alpha)$ with large probability as $|\mu_{\beta}(\tau)|$ becomes sufficiently large.
Intuitively, $\hat \Omega(\tau,\tau)$, and thus, $\hat A_{CR}(\tau)$ have random limits under a fixed number of clusters. 
We show that when $\mathcal{H}_0$ is true, the original $\hat{A}_{CR}(\tau)$ and its bootstrap counterpart $\hat{A}_{CR,g}^*(\tau)$ have the same limit distribution. By contrast, under $\mathcal{H}_{1,n}$, although the local parameter does not enter the limit distribution of $\hat{A}_{CR}(\tau)$, it does enter that of $\hat{A}_{CR,g}^*(\tau)$ because of our design of the formula in (\ref{eq:Vhatg}). 
Then, when $\mu_{\beta}(\tau)$ is sufficiently different from zero, 
it becomes dominant in $\hat{A}_{CR,g}^*(\tau)$, helping to ``drag down" the value of each $T^*_n(g)$ and, as a consequence, the bootstrap critical value $\hat{c}_{CR,n}(1-\alpha)$.
This gives the $T_{CR,n}$-based bootstrap test a power advantage over its $T_n$-based counterpart. The assumption in Theorem \ref{thm:power_comparison} that there exists a constant $c_0>0$ such that $|\textbf{G}(c_0)| > |\textbf{G}| - \lceil |\textbf{G}|(1-\alpha) \rceil$ is also mild. For example, it can hold even with one strong IV cluster as $|\textbf{G}(c_0)|$ is at least as large as the number of all possible sign changes of the weak IV clusters given that $\sum_{j \in [J]}\xi_ja_j(\tau)=\sum_{j \in \mathcal{J}_s}\xi_ja_j(\tau)=1$.

\end{remark}

\subsection{Inference for Weak-instrument-robust Statistics}\label{subsec:fullvector-result}
\begin{assumption}
	Suppose one of the conditions below holds. 
	\begin{enumerate}[label=(\roman*)]
		\item There exists a symmetric $d_\phi \times d_\phi$ matrix $A_3 (\tau)$ such that $\sup_{\tau \in \Upsilon}||\hat{A}_3(\tau) - A_3 (\tau)||_{op} = o_p(1)$, where $A_3(\tau)$ is some deterministic matrix and for some constants $c$ and $C$,
		$$0<c\leq \inf_{\tau \in \Upsilon}\lambda_{\min}(A_3 (\tau)) \leq \sup_{\tau \in \Upsilon}\lambda_{\max}(A_3 (\tau)) \leq C<\infty.$$ 
		\item Suppose $\hat{A}_3(\tau)$ is set as $\tilde {A}_{CR}(\tau) =  \left[\hat H  (\tau)\tilde \Omega(\tau,\tau)\hat H(\tau)\right]^{-1}$, where  $\sup_{\tau \in \Upsilon}||\hat H (\tau) - H(\tau)||_{op} = o_p(1)$ for some deterministic $H(\tau)$ such that  
  $$0<c\leq \inf_{\tau \in \Upsilon}\lambda_{\min}( H^\top (\tau)  H(\tau)) \leq \sup_{\tau \in \Upsilon}\lambda_{\max}( H^\top (\tau) H(\tau)) \leq C<\infty.$$ 
  Furthermore, we require $J > d_\phi$ in this case. 
	\end{enumerate}
	\label{ass:full}
\end{assumption}

\begin{theorem}\label{thm:full}
	Suppose Assumptions \ref{ass:id}--\ref{ass:boot} and \ref{ass:full} hold and $\beta_n(\tau) = \beta_0(\tau)$ for $\tau \in \Upsilon$. Then, 
	\begin{align*}
	\alpha - \frac{1}{2^{J-1}} \leq & \liminf_{n \rightarrow \infty} \mathbb{P}(AR_{n}> \hat{c}_{AR,n}(1-\alpha)) \\
 &	\leq  \limsup_{n \rightarrow \infty} \mathbb{P}(AR_{n}> \hat{c}_{AR,n}(1-\alpha)) \leq \alpha + \frac{1}{2^{J-1}}, \,\,\text{and}
	\end{align*}
	\begin{align*}
	\alpha - \frac{1}{2^{J-1}} \leq & \liminf_{n \rightarrow \infty} \mathbb{P}(AR_{CR,n}> \hat{c}_{AR,CR,n}(1-\alpha)) \\
&	\leq \limsup_{n \rightarrow \infty} \mathbb{P}(AR_{CR,n}> \hat{c}_{AR,CR,n}(1-\alpha)) \leq \alpha + \frac{1}{2^{J-1}}.
	\end{align*}
\end{theorem}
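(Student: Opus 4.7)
The plan is to reduce both $AR_n$ and $AR_{CR,n}$, in distribution, to a functional of the $J$ independent Gaussian processes $\{\mathcal{Z}_j\}_{j \in [J]}$ from Assumption \ref{ass:boot}(ii) that is invariant under the action of the sign-change group $\textbf{G} = \{-1,1\}^J$, and then invoke the CRS-type randomization argument in the spirit of \cite{Canay-Santos-Shaikh(2020)} to bound the rejection probability within $1/2^{J-1}$ of $\alpha$. Because $\theta_\infty(\beta_0(\tau),\tau) = 0$ holds under the null regardless of whether $\beta$ is identified, only the $(\gamma,\theta)$-Jacobian from Assumption \ref{ass:id}(v) is needed; this is why the argument goes through under weak or no identification, in contrast to the Wald-type results.

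The first step is a uniform linearization of $\hat\theta(\beta_0(\tau),\tau)$ about $(\gamma_n(\tau),0)$. Convexity of the check-function criterion, the stochastic equicontinuity in Assumption \ref{ass:asym}(ii), the full-rank Jacobian from Assumption \ref{ass:id}(v), and the Neyman orthogonality $Q_{W,\Phi,j} = 0$ of Assumption \ref{ass:boot}(i) combine, via a Chernozhukov–Hansen-type expansion, to give
\begin{align*}
r_n\, \hat\theta(\beta_0(\tau),\tau) \;=\; -Q_{\Phi,\Phi}^{-1}(\tau)\,\omega\, r_n \mathbb{P}_n f_\tau(D_{i,j}, \beta_0(\tau), \gamma_n(\tau), 0) + o_p(1)
\end{align*}
uniformly over $\tau \in \Upsilon$, where orthogonality block-diagonalizes the Jacobian so that only the IV portion of the score drives the $\theta$-component. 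An analogous expansion applied to the FOC obtained from \eqref{eq:boot_beta} at $b = \beta_0(\tau)$, using in addition the $O_p(r_n^{-1})$ consistency of $\hat\gamma^r(\tau)$, yields
\begin{align*}
r_n\bigl(\hat\theta_g^{*r}(\tau) - \hat\theta(\beta_0(\tau),\tau)\bigr) \;=\; Q_{\Phi,\Phi}^{-1}(\tau) \sum_{j \in [J]} \xi_j g_j \cdot r_n \mathbb{P}_{n,j} \tilde f_\tau(D_{i,j}, \beta_0(\tau), \gamma_n(\tau), 0) + o_p(1),
\end{align*}
uniformly in $(g,\tau)$. Invoking \eqref{eq:normal_approx}, both quantities converge jointly to functionals of $\{\mathcal{Z}_j\}_{j \in [J]}$: $AR_n \convD T(\iota_J)$ and $AR_n^*(g) \convD T(g)$ jointly over $g \in \textbf{G}$, where $T(g) := \sup_{\tau \in \Upsilon} \|Q_{\Phi,\Phi}^{-1}(\tau) \sum_j \xi_j g_j \mathcal{Z}_j(\tau)\|_{A_3(\tau)}$.

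For the CRVE variant the same scheme applies after replacing $A_3(\tau)$ by the random limit of $\tilde A_{CR}(\tau)$. The key structural observation is that $\tilde\Omega(\tau,\tau)$ is built from the null-imposed residuals in \eqref{eq:fhat} and therefore appears identically in $AR_{CR,n}$ and in every $AR_{CR,n}^*(g)$; moreover, being a symmetric quadratic form in the cluster scores, its limit is invariant under $\mathcal{Z}_j \mapsto g_j \mathcal{Z}_j$ because $g_j^2 = 1$, and it is almost surely invertible thanks to the $J > d_\phi$ requirement in Assumption \ref{ass:full}(ii). Independence and symmetry of the Gaussian marginals give $(g_j \mathcal{Z}_j)_j \stackrel{d}{=} (\mathcal{Z}_j)_j$ for every $g \in \textbf{G}$, so $\{T(g)\}_{g \in \textbf{G}}$ is exchangeable under left multiplication by $\textbf{G}$, and the standard empirical-quantile inversion delivers the two-sided bounds $\alpha \pm 1/2^{J-1}$. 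The main obstacle will be the uniform-in-$(g,\tau)$ consistency and $r_n^{-1}$-rate of $\hat\theta_g^{*r}(\tau)$: the sign-flipped perturbation shifts the FOC at the $O(r_n^{-1})$ scale and must be controlled simultaneously across the $2^J$ perturbed QR programs; a preliminary consistency argument for each perturbed program, followed by localization using Assumption \ref{ass:asym}(ii), should close this gap.
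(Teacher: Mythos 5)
Your proposal follows essentially the same route as the paper's proof: linearize $r_n\hat\theta(\beta_0(\tau),\tau)$ and $r_n(\hat\theta_g^{*r}(\tau)-\hat\theta(\beta_0(\tau),\tau))$ into $Q_{\Phi,\Phi}^{-1}(\tau)$ times (signed) cluster-score sums, pass to the Gaussian limits $\{\mathcal{Z}_j\}$, observe that $\tilde\Omega(\tau,\tau)$ is sign-invariant and a.s. invertible when $J>d_\phi$, and close with the randomization-test bound giving $\alpha\pm 2^{1-J}$; the paper's Lemmas B.1 and B.3 supply exactly the uniform-in-$(g,\tau)$ expansions you flag as the main technical obstacle. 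The sign discrepancy in your first display relative to the paper's convention is immaterial since the statistic is a norm of a symmetric Gaussian limit.
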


\begin{remark}
Theorem \ref{thm:full} holds without assuming strong identification (i.e., Assumption \ref{ass:id2}). The asymptotic size of the $AR_{n}$ and $AR_{CR,n}$-based bootstrap inference is therefore controlled up to an error $2^{1-J}$, even when $\beta_n(\tau)$ is weakly or partially identified. This aligns with the robust inference approach proposed by \cite{Chernozhukov-Hansen(2008a)} for i.i.d. data, which is based on chi-squared critical values.  
\end{remark}

\section{Monte Carlo Simulation}\label{sec: simu}
To examine the performance of the gradient wild bootstrap inference for IVQR, we consider two data-generating processes (DGPs). The first DGP is inspired by those designed by \cite{KS17} and \cite{KW18} and considers the setting with clustered data and within-cluster dependence. The second DGP is inspired by that designed by \cite{BDF09} with network dependence, and then, the clusters are obtained by spectral clustering as proposed by \cite{L22}. 
For both designs, we set the number of observations, simulation repetitions, and bootstrap repetitions as 500, 500, and 300, respectively. 

\subsection{Simulation Designs}
\noindent \textbf{DGP 1.} Let $a_{k,j} = (a_{k,1,j},\cdots,a_{k,n_j,j})^\top$ for $k=1,\cdots,d_z$, where $d_z$ is the dimension of $Z$, and $u_{j} = (u_{l,1,j},\cdots,u_{l,n_j,j})^\top$ for $l=1,2$. Then,  $(a_{1,j},\cdots,a_{d_z,j},u_{1,j},u_{2,j})$ are independent, and each of them is independent across $j \in [J]$ and follows an $n_j \times 1$ multivariate normal distribution with mean zero and covariance $\Sigma(\rho_j)$, where $\Sigma(\rho_j)$ is an $n_j \times n_j$ toeplitz matrix with coefficient $\rho_j$ and $\rho_j = 0.2+0.5j/J$ for $j \in [J]$. We define the original instruments as
\begin{align*}
Z_{k,i,j} =  1\{F_N(a_{k,i,j})>0.5\}, \quad \forall k \in [d_z],
\end{align*}
where $F_N(\cdot)$ is the standard normal CDF. Then, the endogenous variable $X_{i,j}$ is generated as 
\begin{align*}
& X_{i,j} = 1\left\{0.1+\sum_{k=1}^{d_z} \Pi_{k,j} (F_N(a_{k,i,j}) -0.5) + 0.5(F_N(u_{i,j})-0.5) >0  \right\},    
\end{align*}
where $\Pi_{k,j}$ determines the identification strength. To allow for first-stage heterogeneity in the identification strength across clusters, for all $k \in [d_z]$, we let 
\begin{align*}
 \Pi_{k,j} = 
 \begin{cases}
 &     \pi \quad \text{if} \quad 1\leq j \leq J/3\\
 &    0 \quad \text{if} \quad J/3< j \leq 2J/3\\
 &    2\pi \quad \text{if} \quad 2J/3< j \leq J
  \end{cases},   
\end{align*}
where $\pi \in (1,1/2,1/4)$. 

The outcome $Y_{i,j}$ is generated as 
\begin{align*}
& Y_{i,j} = X_{i,j}\beta(u_{1,i,j}) + \underline{W}_{i,j} \gamma + \sqrt{0.1}u_{i,j},
\end{align*}
where $\underline{W}_{i,j}$ is 
distributed following $\frac{1}{2}\chi^2_{1}$ and independent of $(a_{1,j},\cdots,a_{d_z,j},u_{1j}, u_{2j})$, and $W_{i,j} = (1,\underline{W}_{i,j})^\top$. We set $\beta(u) = 1 + F_N(u)$, $\gamma = 0$, and the total number of observations $n=500$. The correlation between $X_{i,j}$ and $u_{i,j}$ is about 0.77, indicating the endogeneity level. 

To allow for unbalanced clusters, we follow \cite{Djogbenou-Mackinnon-Nielsen(2019)} and \cite{Mackinnon2021} and set the cluster sizes as
\begin{align*}
    n_j = \left[ n \frac{\exp( r \cdot j /J)}{\sum_{j \in [J]} \exp(r \cdot j /J) }  \right], \;\; \text{for} \;\; j = 1, ..., J-1,  
\end{align*}
and $n_J = n - \sum_{j \in [j]} n_j$. 
We let $r=4$ when $J=9$ and $J=18$ to generate substantial heterogeneity in cluster sizes. The parameter $r$ corresponds to the heterogeneity parameter considered in the simulations of \cite{Djogbenou-Mackinnon-Nielsen(2019)} and \cite{mackinnon2022cluster}, and $r=4$ is the most heterogeneous setting considered in those papers. 
 When $J=9$, the cluster sizes are $(5,8,12,19,30,48,75,117,186)$. When $J=18$, the resulting cluster sizes are $(2,2,3,4,5,7,8,10,13,17,21,26,33,41,52,65,81,110)$. We  test the following hypothesis for $d_z \in \{1,3\}$ and $\tau \in \{0.1,0.25,0.5,0.75,0.9\}$:
\begin{align*}
H_0: \beta(\tau) = 1+ \tau \quad v.s. \quad H_1: \beta(\tau) = 0.5+ \tau.
\end{align*}

\noindent \textbf{DGP 2.} We consider the linear-in-mean social interaction model detailed in Example \ref{ex:network}. Specifically, following \cite{L22}, we generate the network $\mathcal A$ with $n=500$ nodes as $\mathcal A_{\ell,\ell'} = 1\{||\eta_{\ell} - \eta_{\ell'}||_2 \geq (7/(\pi n))^{1/2}\}$, where $\eta_\ell \stackrel{i.i.d.}{\sim} \text{Uniform}[0,1]^2$. 


Then, we generate $B_\ell$ in  \eqref{eq:linear-in-mean}, the $\ell$-th node's background characteristic, as follows:  
\begin{align*}
    B_\ell = \begin{cases}
0 & \text{ with probability $1/2$} \\
\exp( - \log(2) + (\log(4))^{1/2}\eps_\ell') & \text{ with probability $1/2$} 
    \end{cases},
\end{align*}
where $\eps_\ell' \stackrel{i.i.d.}{\sim}\N(0,1)$. Then, the control variables $W$ and IV $Z$ can be constructed as in Example \ref{ex:network}. 

Further denote $\tilde {\mathcal A}$ a normalized adjacency matrix with a typical entry $\tilde {\mathcal A}_{\ell,\ell'} = \mathcal  A_{\ell,\ell'}/d_\ell$, where $d_\ell = \sum_{\ell' \neq \ell} {\mathcal A}_{\ell,\ell'}$ is the degree of node $\ell$. Then, we generate the outcome variable as  \eqref{eq:linear-in-mean} in which $U = \{U_\ell\}_{\ell \in [n]}$ is a sequence of i.i.d. uniform (0,1) random variables and 
\begin{align*}
    y = \left(\mathbb I_n - \diag(\beta(U)) \tilde{\mathcal A}\right)^{-1} \left(\delta_0(U) + \diag(\delta_1(U)) B + \diag(\delta_2(U))\tilde{\mathcal A} B\right),
\end{align*}
where $\diag(v)$ for a $n \times 1$ vector $v$ denotes an $n \times n$ diagonal matrix with $v$ as the diagonal, $y = (y_1,\cdots,y_n)^\top$, $B = (B_1,\cdots,B_n)^\top$, $\beta(U) = (\beta(U_1),\cdots,\beta(U_n))^\top$, and $(\delta_0(U), \delta_1(U), \delta_2(U))$ are similarly defined. In addition, we set $\delta_0(u) = 0.7683+0.25(u-0.5)$, $\beta(u) = 0.4666+0.2(u-0.5)$, $\delta_1(u) = 0.0834+0.1(u-0.5)$, and $\delta_2(u) = 0.1507+0.2(u-0.5)$. For comparison, the values  $(\delta_0(0.5),\beta(0.5),\delta_1(0.5),\delta_2(0.5)) = (0.7683,0.4666,0.0834,0.1507)$ are set according to the calibration study by \cite{BDF09}. Given the outcome variable, we can set the endogenous variable as $X = \tilde{\mathcal A}y$, where $X = (X_1,\cdots,X_n)^\top$. Finally, we note that \eqref{eq:IVQR} holds because coefficients $(\delta_0(\cdot),\beta(\cdot))$ are monotone increasing and their corresponding regressors are positive and $\{U_\ell\}_{\ell \in [n]} \indep \{B_\ell\}_{\ell \in [n]}$. The correlation between $X_\ell$ and $U_\ell$ is about 0.16, which indicates the endogeneity level. As discussed in Example \ref{ex:network}, we let $W = (\iota_n, B, \tilde A B)$ and $Z = \tilde A^2 B$. 

We follow the classification procedure proposed by \cite{L22} to obtain the clusters. 
\begin{enumerate}
    \item Input: a positive integer $L$ and network $\mathcal A$. 
    \item Compute all separated components (no links between two components) of the network denoted as  $\{\mathcal V_{h}\}_{h \in \{0\} \cup [H]}$, where $\{\mathcal V_{h}\}_{h \in \{0\} \cup [H]}$ is a partition of $[n]$ (n vertexes) and they are sorted in ascending order according to their sizes. We keep all the components whose sizes are greater than 5. Denote the number of components left as $L'+1$ for some $L' \geq 0$.  
    \item Suppose the biggest component $\mathcal V_0$ has size $\tilde n_0$. By permuting labels, we suppose  $\mathcal V_0 = [\tilde n_0]$ and denote its adjacency matrix as $\mathcal A_0$. Then, we compute the graph Laplacian as 
\begin{align*}
\mathcal L_0 = \mathbb I_{\tilde n_0} - D_0^{-1/2} \mathcal A_0 D_0^{-1/2},    
\end{align*}
where   $D_0 = \diag(\sum_{i \in [\tilde n_0]} A_{0,1,i},\cdots,\sum_{i \in [\tilde n_0]} A_{0,\tilde n_0,i})$ is an $\tilde n_0 \times \tilde n_0$ diagonal matrix of degrees.  
\item Obtain the top $L$ eigenvector matrix of $\mathcal L_0$ corresponding to its $L$ largest eigenvalues and denote it as $ V = [V_1^\top,\cdots,V_{\tilde n_0}^\top]^\top,$
where $V_{i} \in \Re^{L}$ for $i \in [\tilde n_0]$. 
\item Apply the k-means algorithm to $V$ and divide $\mathcal V_0 = [\tilde n_0]$ into $L$ groups, denoted as $\mathcal V_{0,1},\cdots,\mathcal V_{0,L}$. 
\item Output: we obtain $J = L+L'$ clusters $(\mathcal V_{0,1},\cdots,\mathcal V_{0,L},\mathcal V_{1},\cdots,\mathcal V_{L'})$. 
\end{enumerate}
We let $L$ be $(10,20)$. The number of clusters ($J$) depends on $L'$, which varies across simulation replications. Note that different from DGP 1, there is no cluster heterogeneity in the identification strength under DGP 2. 
We  test the following hypothesis for $\tau \in \{0.1,0.25,0.5,0.75,0.9\}$:
\begin{align*}
H_0: \beta(\tau) = 0.4666+0.2(\tau-0.5) \quad v.s. \quad H_1: \beta(\tau) = 1.2166+0.2(\tau-0.5).
\end{align*}

\subsection{Inference Procedure}
We compare the performance of our three bootstrap inference methods  $T_{CR,n}$, $T_n$, and $AR_n$ (denoted as T\_CR, T, and AR in Tables \ref{tab:1}-\ref{tab:5}) with the conventional asymptotic Wald inference based on CRVE and two alternative methods based on cluster-level IVQR estimators available in the literature.
\begin{enumerate}
\item T\_STD: This inference method is based on the same test statistic as $T_{CR,n}$ but with the conventional normal critical value. 
    \item IM: This inference method is proposed by \cite{Ibragimov-Muller(2010)}, which compares their group-based $t$-test statistic with the critical value of a $t$-distribution with $J-1$ degrees of freedom. The test statistic is constructed by separately running IVQR using the samples in each cluster.  
    \item CRS: The approximate randomization inference method proposed by \cite{Canay-Romano-Shaikh(2017)}, which compares IM's group $t$-test statistic with the critical value of the sign changes-based randomization distribution of the statistic.
\end{enumerate}

We construct the instrument using the nonparametric approach outlined in Remark \ref{rem:nonpar}. The tuning parameters are set based on the rule-of-thumb provided in Remark \ref{rem:nonpar}. Because this approach eventually produces a scalar $\hat \Phi_{i,j}(\tau)$ regardless of the dimension of the original IVs $Z_{i,j}$, our $AR_n$ and $AR_{CR,n}$ tests are numerically equivalent. Therefore, we only report the performance of $AR_n$. In addition, because we focus on inference point-wise in $\tau$ and $\hat \Phi_{i,j}(\tau)$ is a scalar, the choices of $\hat A_1(\tau)$ in the estimation of $\hat \beta(\tau)$, $\hat A_2(\tau)$ in the Wald inference based on $T_n(\tau)$, $\hat G(\tau)$ in the Wald inference based on $T_{CR,n}(\tau)$, and $\hat A_3(\tau)$ in the AR inference based on $AR_n(\tau)$ will not affect the performance of corresponding inference methods, and are set to 1 without loss of generality. 

\subsection{Simulation Results}

Tables \ref{tab:1}--\ref{tab:4} and \ref{tab:5} collect the simulation results for DGPs 1 and 2, respectively, when the nominal null rejection rate is set at $10\%$.\footnote{The rejection probabilities using $AR_n$ and $AR_{CR,n}$ are numerically the same with one IV.} Several key observations emerge from the results. First, T\_CR, T, and AR effectively control size across various settings, while T\_STD, IM, and CRS have size distortions, particularly in DGP 1, which includes both weak and strong IV clusters (i.e., there exists substantial cluster heterogeneity in the IV strength under DGP 1). The size distortions of T\_STD, IM, and CRS also typically increase when $\pi$ becomes small in Tables \ref{tab:1}-\ref{tab:4}. 
Second, for both DGPs, T\_CR demonstrates greater power than T across various settings, aligning with our theoretical expectations. 
Third, we observe that T\_STD typically has larger size distortion when the number of clusters is small, while the size distortions of IM and CRS tend to increase with the number of clusters. 
Fourth, the power of CRS and IM is comparable, and in DGP 2, their power is similar to that of T\_CR, although they display slightly higher size distortion.
Overall, T\_CR has the best size and power performance among the six methods, and is thus our recommended inference procedure.

\begin{table}[htbp]
  \centering
    \begin{tabular}{l|rrrrr|r|rrrrr}
          & \multicolumn{5}{c|}{H0}                &       & \multicolumn{5}{c}{H1} \\ \hline 
    $\tau$   & 0.1   & 0.25  & 0.5   & 0.75  & 0.9   & \multicolumn{1}{l|}{$\tau$} & 0.1   & 0.25  & 0.5   & 0.75  & 0.9 \\ \hline 
          & \multicolumn{11}{c}{$\pi = 1$} \\ \hline 
    T\_CR & 9.4   & 9.4   & 11.6  & 7.4   & 4.4   &       & 66.6  & 62.0  & 69.0  & 67.4  & 52.4 \\
    T     & 6.8   & 8.6   & 8.6   & 5.4   & 3.2   &       & 60.2  & 51.4  & 70.4  & 63.6  & 48.4 \\
    AR    & 5.4   & 6.4   & 8.8   & 7.2   & 5.6   &       & 26.4  & 64.8  & 69.2  & 50.2  & 21.0 \\
    T\_STD & 18.8  & 20.0  & 16.4  & 15.8  & 24.2  &       & 7.0   & 48.2  & 99.6  & 99.0  & 93.2 \\
    IM    & 48.0  & 37.2  & 21.0  & 23.0  & 22.6  &       & 92.8  & 89.6  & 84.8  & 86.6  & 81.8 \\
    CRS   & 47.4  & 37.8  & 22.2  & 23.2  & 23.2  &       & 92.6  & 89.8  & 85.2  & 86.8  & 81.2 \\ \hline 
          & \multicolumn{11}{c}{$\pi = 0.5$} \\ \hline 
    T\_CR & 8.8   & 11.0  & 11.6  & 8.2   & 10.6  &       & 64.8  & 59.4  & 64.0  & 47.2  & 33.4 \\
    T     & 6.6   & 9.2   & 8.4   & 5.8   & 10.2  &       & 62.4  & 48.6  & 62.6  & 45.6  & 31.8 \\
    AR    & 6.2   & 8.0   & 9.8   & 6.6   & 5.4   &       & 21.0  & 65.2  & 61.2  & 23.2  & 11.6 \\
    T\_STD & 20.2  & 17.6  & 17.8  & 23.6  & 43.8  &       & 3.0   & 11.8  & 97.8  & 86.6  & 57.4 \\
    IM    & 48.0  & 33.2  & 28.2  & 27.6  & 29.8  &       & 87.6  & 85.8  & 80.6  & 79.4  & 78.6 \\
    CRS   & 47.6  & 35.6  & 30.0  & 27.8  & 29.8  &       & 86.8  & 85.4  & 80.2  & 79.2  & 78.4 \\ \hline 
          & \multicolumn{11}{c}{$\pi = 0.25$} \\ \hline 
    T\_CR & 10.8  & 8.4   & 5.8   & 9.2   & 10.8  &       & 46.4  & 74.0  & 38.6  & 16.8  & 18.6 \\
    T     & 9.2   & 5.8   & 4.8   & 7.6   & 10.6  &       & 44.6  & 67.8  & 36.6  & 15.6  & 17.0 \\
    AR    & 5.6   & 6.4   & 9.6   & 6.6   & 8.4   &       & 9.8   & 72.2  & 22.2  & 6.6   & 8.0 \\
    T\_STD & 17.0  & 9.4   & 9.4   & 36.2  & 58.0  &       & 1.0   & 0.2   & 60.4  & 33.8  & 29.0 \\
    IM    & 49.4  & 51.6  & 29.2  & 30.4  & 37.2  &       & 89.2  & 88.6  & 79.2  & 77.8  & 82.6 \\
    CRS   & 50.0  & 52.2  & 30.6  & 30.8  & 39.0  &       & 89.4  & 87.2  & 78.6  & 77.8  & 83.0 \\
    \end{tabular}%
      \caption{Rejection Probability of DGP 1 with 1 IV and $J=9$}
  \label{tab:1}%
\end{table}%

\begin{table}[htbp]
  \centering
    \begin{tabular}{l|rrrrr|r|rrrrr}
          & \multicolumn{5}{c|}{H0}                &       & \multicolumn{5}{c}{H1} \\ \hline 
    $\tau$   & 0.1   & 0.25  & 0.5   & 0.75  & 0.9   & \multicolumn{1}{l|}{$\tau$} & 0.1   & 0.25  & 0.5   & 0.75  & 0.9 \\ \hline 
          & \multicolumn{11}{c}{$\pi = 1$} \\ \hline 
    T\_CR & 8.0   & 10.2  & 10.6  & 6.6   & 4.4   &       & 84.8  & 96.2  & 99.0  & 90.0  & 67.8 \\
    T     & 6.6   & 9.2   & 9.4   & 4.6   & 4.2   &       & 81.6  & 89.6  & 97.2  & 86.4  & 63.6 \\
    AR    & 7.2   & 7.8   & 9.4   & 8.0   & 5.0   &       & 42.0  & 92.8  & 97.4  & 77.0  & 40.4 \\
    T\_STD & 12.6  & 10.2  & 9.6   & 9.8   & 18.0  &       & 1.2   & 7.2   & 100.0 & 96.2  & 82.8 \\
    IM    & 60.6  & 52.4  & 32.8  & 33.2  & 31.8  &       & 99.8  & 99.6  & 99.0  & 98.6  & 99.2 \\
    CRS   & 60.4  & 50.4  & 32.4  & 32.8  & 32.0  &       & 99.4  & 99.6  & 98.8  & 98.6  & 99.2 \\ \hline 
          & \multicolumn{11}{c}{$\pi = 0.5$} \\ \hline 
    T\_CR & 6.8   & 11.0  & 9.4   & 7.8   & 12.6  &       & 76.0  & 95.8  & 94.6  & 55.2  & 31.2 \\
    T     & 5.2   & 10.0  & 7.4   & 7.0   & 9.6   &       & 71.0  & 88.6  & 90.2  & 52.6  & 29.4 \\
    AR    & 7.6   & 9.2   & 9.4   & 9.0   & 9.6   &       & 32.8  & 90.4  & 86.0  & 33.2  & 12.8 \\
    T\_STD & 10.6  & 7.2   & 5.8   & 16.4  & 37.4  &       & 0.4   & 0.2   & 100.0 & 71.6  & 40.4 \\
    IM    & 62.2  & 54.0  & 33.2  & 39.8  & 49.8  &       & 100.0 & 100.0 & 99.2  & 97.0  & 99.2 \\
    CRS   & 62.0  & 53.4  & 33.6  & 38.6  & 49.6  &       & 100.0 & 100.0 & 99.4  & 97.0  & 99.2 \\ \hline 
          & \multicolumn{11}{c}{$\pi = 0.25$} \\ \hline 
    T\_CR & 12.6  & 9.0   & 5.4   & 9.6   & 11.6  &       & 43.2  & 90.6  & 47.8  & 16.6  & 19.6 \\
    T     & 11.0  & 6.0   & 3.0   & 8.0   & 11.8  &       & 40.4  & 83.4  & 46.0  & 16.4  & 17.8 \\
    AR    & 7.6   & 6.8   & 8.0   & 9.2   & 9.2   &       & 17.0  & 80.8  & 25.6  & 7.4   & 9.2 \\
    T\_STD & 15.2  & 5.0   & 8.0   & 29.0  & 55.4  &       & 1.8   & 0.0   & 100.0 & 33.6  & 39.2 \\
    IM    & 61.6  & 60.8  & 46.6  & 55.4  & 75.4  &       & 99.8  & 99.4  & 98.0  & 98.8  & 99.8 \\
    CRS   & 61.8  & 60.8  & 46.2  & 55.4  & 74.2  &       & 99.4  & 99.6  & 97.4  & 98.6  & 99.8 \\
    \end{tabular}%
      \caption{Rejection Probability of DGP 1 with 1 IV and $J=18$}
  \label{tab:2}%
\end{table}%

\begin{table}[htbp]
  \centering
    \begin{tabular}{l|rrrrr|r|rrrrr}
          & \multicolumn{5}{c|}{H0}                &       & \multicolumn{5}{c}{H1} \\ \hline 
    $\tau$   & 0.1   & 0.25  & 0.5   & 0.75  & 0.9   & \multicolumn{1}{l|}{$\tau$} & 0.1   & 0.25  & 0.5   & 0.75  & 0.9 \\ \hline 
          & \multicolumn{11}{c}{$\pi = 1$} \\ \hline 
    T\_CR & 10.0  & 10.2  & 9.6   & 10.2  & 5.2   &       & 67.0  & 55.4  & 71.3  & 67.4  & 61.2 \\
    T     & 6.6   & 9.0   & 9.6   & 7.6   & 5.2   &       & 63.4  & 47.4  & 66.7  & 64.2  & 54.2 \\
    AR    & 7.8   & 6.4   & 9.2   & 7.8   & 6.4   &       & 27.0  & 36.2  & 63.6  & 51.8  & 26.6 \\
    T\_STD & 19.2  & 19.0  & 14.8  & 18.4  & 21.0  &       & 6.4   & 40.4  & 96.8  & 96.8  & 93.4 \\
    IM    & 40.8  & 28.6  & 21.8  & 21.8  & 19.6  &       & 88.8  & 85.0  & 84.4  & 81.6  & 75.4 \\
    CRS   & 42.4  & 30.4  & 22.4  & 22.4  & 19.6  &       & 88.0  & 85.6  & 82.9  & 80.0  & 74.2 \\ \hline 
          & \multicolumn{11}{c}{$\pi = 0.5$} \\ \hline 
    T\_CR & 9.0   & 11.0  & 10.4  & 8.4   & 6.8   &       & 63.4  & 58.0  & 64.5  & 67.2  & 55.0 \\
    T     & 7.8   & 11.0  & 10.8  & 6.2   & 4.8   &       & 64.0  & 49.2  & 64.3  & 67.4  & 54.6 \\
    AR    & 7.6   & 9.2   & 10.0  & 7.8   & 7.0   &       & 27.6  & 32.2  & 63.6  & 51.0  & 19.2 \\
    T\_STD & 17.0  & 16.6  & 16.8  & 19.0  & 26.4  &       & 3.0   & 28.4  & 95.6  & 95.8  & 87.4 \\
    IM    & 46.0  & 38.2  & 30.4  & 31.2  & 25.4  &       & 87.2  & 86.8  & 82.0  & 84.0  & 80.0 \\
    CRS   & 47.0  & 38.6  & 31.6  & 32.2  & 26.2  &       & 87.6  & 87.2  & 82.0  & 84.6  & 79.6 \\\hline  
          & \multicolumn{11}{c}{$\pi = 0.25$} \\\hline  
    T\_CR & 8.2   & 11.0  & 10.2  & 8.0   & 11.6  &       & 57.4  & 57.0  & 61.6  & 42.8  & 30.4 \\
    T     & 7.2   & 8.4   & 7.6   & 7.0   & 9.2   &       & 58.8  & 50.6  & 55.5  & 43.0  & 29.6 \\
    AR    & 7.6   & 7.2   & 10.0  & 7.8   & 8.2   &       & 22.0  & 35.6  & 51.5  & 25.0  & 9.8 \\
    T\_STD & 16.0  & 13.4  & 12.6  & 21.0  & 44.6  &       & 2.2   & 9.2   & 91.2  & 76.4  & 45.2 \\
    IM    & 49.6  & 42.6  & 39.4  & 36.6  & 32.8  &       & 87.6  & 86.6  & 85.1  & 83.0  & 81.6 \\
    CRS   & 50.0  & 44.0  & 41.6  & 38.6  & 33.4  &       & 87.2  & 88.0  & 84.2  & 82.8  & 81.2 \\ 
    \end{tabular}%
      \caption{Rejection Probability of DGP 1 with 3 IVs and $J=9$}
  \label{tab:3}%
\end{table}%

\begin{table}[htbp]
  \centering
    \begin{tabular}{l|rrrrr|r|rrrrr}
          & \multicolumn{5}{c|}{H0}                &       & \multicolumn{5}{c}{H1} \\ \hline 
    $\tau$   & 0.1   & 0.25  & 0.5   & 0.75  & 0.9   & \multicolumn{1}{l|}{$\tau$} & 0.1   & 0.25  & 0.5   & 0.75  & 0.9 \\ \hline 
          & \multicolumn{11}{c}{$\pi = 1$} \\ \hline 
    T\_CR & 10.2  & 10.2  & 10.8  & 10.4  & 6.2   &       & 83.8  & 94.1  & 94.7  & 91.8  & 84.0 \\
    T     & 6.8   & 8.8   & 10.0  & 9.4   & 5.2   &       & 79.2  & 83.1  & 92.4  & 89.8  & 78.0 \\
    AR    & 8.0   & 7.8   & 10.8  & 9.8   & 6.6   &       & 43.2  & 75.4  & 92.0  & 85.4  & 54.2 \\
    T\_STD & 12.2  & 14.4  & 12.6  & 10.8  & 16.6  &       & 1.4   & 29.8  & 97.2  & 98.0  & 95.8 \\
    IM    & 26.4  & 20.0  & 13.8  & 10.6  & 9.8   &       & 93.0  & 90.1  & 84.9  & 79.2  & 74.6 \\
    CRS   & 26.0  & 19.8  & 14.0  & 10.0  & 9.8   &       & 92.8  & 89.0  & 84.0  & 77.8  & 74.8 \\ \hline 
          & \multicolumn{11}{c}{$\pi = 0.5$} \\ \hline 
    T\_CR & 8.8   & 9.4   & 11.0  & 8.8   & 6.8   &       & 84.6  & 93.4  & 96.6  & 89.4  & 67.0 \\
    T     & 6.0   & 8.6   & 8.4   & 8.2   & 6.2   &       & 80.0  & 82.5  & 91.8  & 87.2  & 67.8 \\
    AR    & 6.0   & 6.8   & 8.0   & 8.6   & 7.8   &       & 46.4  & 74.3  & 90.8  & 77.6  & 38.2 \\
    T\_STD & 12.8  & 11.0  & 8.6   & 12.0  & 21.0  &       & 1.0   & 16.0  & 96.6  & 97.0  & 87.2 \\
    IM    & 39.2  & 31.2  & 19.2  & 17.0  & 14.0  &       & 94.8  & 91.4  & 89.1  & 87.2  & 80.4 \\
    CRS   & 39.2  & 30.4  & 19.2  & 16.4  & 13.6  &       & 94.8  & 91.0  & 88.7  & 86.6  & 80.2 \\ \hline 
          & \multicolumn{11}{c}{$\pi = 0.25$} \\ \hline 
    T\_CR & 8.8   & 11.6  & 8.4   & 5.0   & 13.2  &       & 79.0  & 89.5  & 92.2  & 64.8  & 35.6 \\
    T     & 8.8   & 10.4  & 7.6   & 4.0   & 10.6  &       & 76.0  & 77.9  & 87.8  & 64.2  & 34.4 \\
    AR    & 9.0   & 10.6  & 9.0   & 7.2   & 7.4   &       & 41.2  & 68.0  & 83.8  & 41.0  & 12.4 \\
    T\_STD & 10.4  & 9.6   & 6.0   & 12.0  & 36.2  &       & 0.6   & 2.4   & 90.4  & 79.4  & 47.8 \\
    IM    & 42.0  & 36.0  & 26.8  & 18.8  & 16.0  &       & 95.2  & 95.2  & 91.0  & 88.6  & 82.4 \\
    CRS   & 41.0  & 35.8  & 24.8  & 19.2  & 16.2  &       & 94.8  & 95.8  & 89.9  & 87.8  & 82.6 \\ 
    \end{tabular}%
      \caption{Rejection Probability of DGP 1 with 3 IVs and $J=18$}
  \label{tab:4}%
\end{table}%

\begin{table}[htbp]
  \centering
    \begin{tabular}{l|rrrrr|r|rrrrr}
          & \multicolumn{5}{c|}{H0}                &       & \multicolumn{5}{c}{H1} \\ \hline 
   $\tau$    & 0.1   & 0.25  & 0.5   & 0.75  & 0.9   & \multicolumn{1}{l|}{$\tau$} & 0.1   & 0.25  & 0.5   & 0.75  & 0.9 \\ \hline 
          & \multicolumn{11}{c}{$L= 10$} \\ \hline 
    T\_CR & 10.0  & 8.3   & 11.3  & 13.0  & 7.7   &       & 78.7  & 69.7  & 69.3  & 77.0  & 82.0 \\
    T     & 5.0   & 6.7   & 9.7   & 6.7   & 3.3   &       & 37.7  & 50.0  & 61.3  & 70.3  & 65.3 \\
    AR    & 6.3   & 8.7   & 9.3   & 6.7   & 4.7   &       & 37.3  & 59.7  & 67.0  & 73.7  & 63.0 \\
    T\_STD & 36.0  & 32.0  & 26.7  & 34.0  & 16.7  &       & 80.7  & 68.0  & 63.0  & 52.7  & 59.3 \\
    IM    & 13.7  & 8.0   & 8.3   & 10.7  & 8.7   &       & 74.7  & 69.0  & 71.0  & 76.0  & 78.0 \\
    CRS   & 13.7  & 8.3   & 8.7   & 10.0  & 9.0   &       & 74.7  & 69.7  & 71.0  & 75.0  & 78.0 \\ \hline 
          & \multicolumn{11}{c}{$L=20$} \\ \hline 
    T\_CR & 6.3   & 8.7   & 12.0  & 12.7  & 5.0   &   & 86.0  & 75.0  & 77.0  & 83.0  & 84.0 \\
    T     & 3.0   & 7.0   & 10.7  & 7.0   & 2.7   &    & 42.7  & 53.7  & 65.0  & 74.0  & 66.0 \\
    AR    & 3.3   & 6.7   & 9.3   & 9.7   & 5.7   &    & 42.3  & 64.3  & 72.3  & 77.7  & 65.7 \\
    T\_STD & 30.7  & 27.3  & 29.3  & 33.7  & 17.7  &    & 74.3  & 63.7  & 66.0  & 52.3  & 57.3 \\
    IM    & 14.3  & 13.0  & 13.7  & 13.3  & 12.7  &    & 81.3  & 82.3  & 82.0  & 87.0  & 93.3 \\
    CRS   & 15.0  & 13.7  & 12.7  & 13.0  & 13.0  &    & 81.0  & 81.0  & 81.3  & 86.0  & 92.7 \\
    \end{tabular}%
    \caption{Rejection Probability with DGP 2}
  \label{tab:5}%
\end{table}%

\section{Empirical Applications}\label{sec: emp}
In an influential study, \cite{ADH2013} analyzes the effect of rising Chinese import competition on wages and employment in US local labor markets between 1990 and 2007, when the share of total US spending on Chinese goods increased substantially from 0.6\%  to 4.6\%. In this section, 
we further analyze the region-wise distributional effects of such import exposure by applying IVQR and the proposed gradient wild bootstrap procedures to the Census Bureau-designated South region with 16 states and total number of observations equal to 578.\footnote{We focus on the South region because it has the highest IV strength.}

For the IVQR model, we let the outcome variable $(y_{i,j})$ denote the decadal change in the average individual log weekly wage in a given CZ.
The endogenous variable $(X_{i,j})$ is the change in Chinese import exposure per worker in a CZ, which is instrumented by $(Z_{i,j})$ Chinese import growth in other high-income countries.\footnote{See Sections I.B and III.A in \cite{ADH2013} for a detailed definition of these variables.} We follow the nonparametric approach in Remark \ref{rem:nonpar} to construct $\hat \Phi_{i,j}$ used in IVQR. 
In addition, the exogenous variables $(W_{i,j})$ include the characteristic variables of commuting zones (CZs) and decade specified in \cite{ADH2013} as well as state fixed effects.
Our IV quantile regressions are based on the CZ samples in the South region, and the samples are clustered at the state level, following \cite{ADH2013}.
Besides the results for the full sample, we also report those for male and female samples separately. 

The main results are given in Table \ref{tab:app-IV}. Specifically, for $\tau \in \{0.1, 0.25, 0.5, 0.75, 0.9\}$, we report the point IVQR estimate and the 90\% bootstrap confidence sets (CSs) constructed by inverting the corresponding $AR_n$, $T_n$, and $T_{CR,n}$-based tests. The computation of the bootstrap CSs was conducted over the parameter space $[-3,1]$ with a step size of 0.01, and the number of bootstrap draws is set at 300 for each step. We can draw three key observations from the results in Table \ref{tab:app-IV}. First, the impact of Chinese imports on wages shows distributional heterogeneity. Specifically, the three types of bootstrap CSs reveal that the effects are relatively significant at the high quantiles, followed by the median quantile, but not at the lower quantiles, across the full, male, and female samples. Second, the $T_{CR,n}$-based CSs are generally shorter than those of $AR_n$ and $T_n$. This aligns with our theory and simulation results, which suggest that the $T_{CR,n}$-based bootstrap test is more effective at detecting distant local alternatives. Third, the distributional effects of Chinese imports on wages are fairly consistent between males and females. In addition, we observe that the effects of Chinese imports are relatively more substantial for the male samples.

\begin{table}[htbp]
  \centering
    \begin{tabular}{r|c|c|c|c|c|c}
Gender          & $\tau$ & \multicolumn{1}{c|}{0.1} & \multicolumn{1}{c|}{0.25} & \multicolumn{1}{c|}{0.5} & \multicolumn{1}{c|}{0.75} & \multicolumn{1}{c}{0.9} \\ \hline 
    \multicolumn{1}{l|}{All} & Point Est. & \multicolumn{1}{c|}{-0.67} & \multicolumn{1}{c|}{-0.56} & \multicolumn{1}{c|}{-0.72} & \multicolumn{1}{c|}{-1.12} & \multicolumn{1}{c}{-1.17} \\    
          & AR    & [-2.86,1.00] & [-2.07,0.76] & [-3.00,-0.04] & [-2.52,-0.26] & [-3.00,1.00] \\
          & T     & [-2.82,1.00] & [-2.02,1.00] & [-2.65,0.92] & [-2.37,-0.25] & [-2.99,0.66] \\
          & T\_CR  & [-2.15,0.49] & [-1.47,0.45] & [-2.26,0.24] & [-2.20,-0.23] & [-1.81,-0.39] \\ \hline 
    \multicolumn{1}{l|}{Male} & Point Est. & \multicolumn{1}{c|}{-0.27} & \multicolumn{1}{c|}{-0.46} & \multicolumn{1}{c|}{-0.81} & \multicolumn{1}{c|}{-1.42} & \multicolumn{1}{c}{-1.38}  \\
          & AR    & [-3.00,1.00] & [-1.99,0.84] & [-3.00,-0.16] & [-3.00,-0.47] & [-3.00,0.92] \\
          & T     & [-2.99,1.00] & [-1.99,1.00] & [-2.88,0.64] & [-2.99,-0.46] & [-3.00,0.24] \\
          & T\_CR & [-1.16,0.66] & [-1.45,0.60] & [-2.21,0.01] & [-2.83,-0.35] & [-2.92,-0.18] \\ \hline
    \multicolumn{1}{l|}{Female} & Point Est. & \multicolumn{1}{c|}{-0.40} & \multicolumn{1}{c|}{-0.27} & \multicolumn{1}{c|}{-0.52} & \multicolumn{1}{c|}{-1.12} & \multicolumn{1}{c}{-0.90} \\
          & AR    & [-1.37,1.00] & [-1.96,0.99] & [-3.00,0.28] & [-2.60,-0.17] & [-3.00,1.00] \\
          & T     & [-1.37,1.00] & [-1.72,1.00] & [-2.82,0.99] & [-2.35,-0.18] & [-2.80,1.00] \\
          & T\_CR & [-0.91,0.60] & [-1.50,0.97] & [-2.24,0.89] & [-2.60,-0.36] & [-2.09,-0.27] \\
    \end{tabular}%
    \caption{IVQR Point Estimates and Confidence Intervals for South Region}
  \label{tab:app-IV}%
\end{table}%

\newpage

\appendix

\section{Constructing the IVs}
\label{sec:proj0}
In this section, we discuss how to implement projections at both the full-sample and the cluster levels for IVQR to construct IVs that satisfy the requirement in Assumption \ref{ass:boot}.
\label{sec:proj_ivqr}
\subsection{Full-Sample Projection}
We first consider the full-sample projection mentioned in Remark \ref{rem:nonpar} in the main text. 
\begin{assumption}
	Recall $\underline{\hat{Q}}_{W,W,j}$ and $\underline{\hat{Q}}_{W,Z,j}$ defined in \eqref{eq:Jhat_rrj} and \eqref{eq:Jhatunder_rtj} in the main text. 
	Define 
	\begin{align*}
	& Q_{W,W,j}(\tau) =  \lim_{n \rightarrow \infty}\overline{\mathbb{P}}_{n,j}f_{\eps_{i,j}(\tau)}(0|W_{i,j},Z_{i,j})V_{i,j}(\tau)W_{i,j}W_{i,j}^\top,\\
	&Q_{W,Z,j}(\tau) =  \lim_{n \rightarrow \infty}\overline{\mathbb{P}}_{n,j}f_{\eps_{i,j}(\tau)}(0|W_{i,j},Z_{i,j})V_{i,j}(\tau)W_{i,j}\hat Z_{i,j}^\top  ,\\
	& Q_{W,W}(\tau) = \sum_{j \in [J]} \xi_j Q_{W,W,j}(\tau),  \quad Q_{W,Z}(\tau) = \sum_{j \in [J]} \xi_j Q_{W,Z,j}(\tau), \quad \text{and} \quad \chi(\tau) = Q^{-1}_{W,W}(\tau)Q_{W,Z}(\tau).
	\end{align*} 
	\begin{enumerate}[label=(\roman*)]
		\item Suppose 
		\begin{align*}
		\sup_{\tau \in \Upsilon}\left[||\hat{\underline{Q}}_{W,W,j}(\tau) - Q_{W,W,j}(\tau)||_{op}+||\hat{\underline{Q}}_{W,Z,j}(\tau) - Q_{W,Z,j}(\tau)||_{op}\right] = o_p(1).
		\end{align*}  
		\item Recall $\delta_{i,j}(v,\tau) = X_{i,j}^\top v_b + W_{i,j}^\top v_r + \hat{\Phi}_{i,j}^\top(\tau) v_t.$ Then, 
		\begin{align*}
		\sup \left\Vert \overline{\mathbb{P}}_{n,j} f_{\eps_{i,j}(\tau)}(\delta_{i,j}(v,\tau)|W_{i,j},Z_{i,j})V_{i,j}(\tau)W_{i,j} \hat Z_{i,j}^\top - Q_{W,Z,j}(\tau)\right\Vert_{op} \convP 0,
		\end{align*}
		where the supremum is taken over $\{j \in [J],||v||_2 \leq \delta,\tau \in \Upsilon\}$, $v = (v_b^\top,v_r^\top,v_t^\top)^\top$.
		\item For $j \in [J]$, there exists $\chi_{n,j}(\tau)$ such that 
		\begin{align*}
		Q_{W,Z,j}(\tau) = Q_{W,W,j}(\tau)\chi_{n,j}(\tau) \quad \text{and} \quad \overline{\mathbb{P}}_{n,j}||W_{i,j}^\top(\chi(\tau) - \chi_{n,j}(\tau))||_{op}^2 = o(1). 
		\end{align*}
		\item There exist constants $c,C$ such that 
		\begin{align*}
		0<c<\inf_{\tau \in \Upsilon }\lambda_{\min}(Q_{W,W}(\tau))\leq \sup_{\tau \in \Upsilon }\lambda_{\max}(Q_{W,W}(\tau)) \leq C<\infty.   
		\end{align*}
	\end{enumerate}
	\label{ass:boot'}
\end{assumption}

Assumption \ref{ass:boot'}(i) holds under mild regularity conditions because $\beta_n(\tau)$ converges to $\beta_0(\tau)$ under local alternatives. Second, Assumption \ref{ass:boot'} requires the nonparametric estimation of $Q_{W,W,j}(\tau) $ and $Q_{W,Z,j}(\tau)$ via kernel smoothing. We discuss the choice of kernel function and bandwidth in Remark \ref{rem:nonpar}. Third, Assumption \ref{ass:boot'}(ii) holds under mild smoothness conditions. Fourth, Assumption \ref{ass:boot'}(iii) is discussed in Remark \ref{rem:nonpar}.

\begin{proposition}
	Suppose Assumption \ref{ass:boot'} holds and $\hat{\Phi}_{i,j}(\tau) = \hat{Z}_{i,j} -\hat{\chi}^\top(\tau) W_{i,j}$, where $\hat{\chi}(\tau)$ is defined in \eqref{eq:chi} in the main text.  Then, $Q_{W,\Phi,j}(\tau)$ defined in Assumption \ref{ass:boot} is zero for $j \in [J]$ and $\tau \in \Upsilon$, i.e., by letting $n\rightarrow \infty$ followed by $\delta \rightarrow 0$, we have
	\begin{align*}
	\sup \left\Vert \overline{\mathbb{P}}_{n,j} f_{\eps_{i,j}(\tau)}(\delta_{i,j}(v,\tau)|W_{i,j},Z_{i,j})V_{i,j}(\tau)W_{i,j}\hat{\Phi}_{i,j}^\top(\tau) \right\Vert_{op} \convP 0,
	\end{align*}
	where the supremum is taken over $\{j \in [J],||v||_2 \leq \delta,\tau \in \Upsilon\}$ for $v = (v_b^\top,v_r^\top,v_t^\top)^\top$. 
	\label{prop:boot_full}
\end{proposition}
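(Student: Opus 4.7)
The plan is to exploit the identity $Q_{W,Z,j}(\tau) = Q_{W,W,j}(\tau)\chi_{n,j}(\tau)$ from Assumption \ref{ass:boot'}(iii), which encodes precisely the population-level Neyman orthogonality that the feasible construction must inherit. First I would substitute $\hat{\Phi}_{i,j}(\tau) = \hat Z_{i,j} - \hat\chi^\top(\tau) W_{i,j}$ and split
\begin{align*}
\overline{\mathbb{P}}_{n,j}\, f_{\eps_{i,j}(\tau)}(\delta_{i,j}(v,\tau)|W_{i,j},Z_{i,j})\, V_{i,j}(\tau)\, W_{i,j}\, \hat{\Phi}_{i,j}^\top(\tau) \;=\; A_{n,j}(v,\tau) \;-\; B_{n,j}(v,\tau)\,\hat\chi(\tau),
\end{align*}
where $A_{n,j}(v,\tau)$ and $B_{n,j}(v,\tau)$ denote the analogous averages with $\hat Z_{i,j}^\top$ and $W_{i,j}^\top$ in the last slot. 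Assumption \ref{ass:boot'}(ii) handles $A_{n,j}$: as $n\to\infty$ followed by $\delta\to 0$, it converges uniformly in $(j,v,\tau)$ to $Q_{W,Z,j}(\tau)$. The analogous uniform convergence $B_{n,j}(v,\tau)\convP Q_{W,W,j}(\tau)$ follows from the same kernel-smoothing/continuity argument (or by restricting Assumption \ref{ass:asym}(iii) to the $(W,W)$-block of $\hat{\Psi}\hat{\Psi}^\top$), and $\sup_{n,j,\tau}||B_{n,j}(v,\tau)||_{op}<\infty$ by Assumption \ref{ass:id}(vi).

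Next I would show $\sup_{\tau\in\Upsilon}||\hat\chi(\tau)-\chi(\tau)||_{op}=o_p(1)$: by Assumption \ref{ass:boot'}(i) each $\hat{\underline Q}_{W,W,j}(\tau)$ and $\hat{\underline Q}_{W,Z,j}(\tau)$ are consistent for their population counterparts, and taking the $\xi_j$-weighted sum delivers consistency of the full-sample $\hat{\underline Q}_{W,W}(\tau)$ and $\hat{\underline Q}_{W,Z}(\tau)$ for $Q_{W,W}(\tau)$ and $Q_{W,Z}(\tau)$; the eigenvalue bound in Assumption \ref{ass:boot'}(iv) then yields the claim via the continuous mapping theorem applied to \eqref{eq:chi}. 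I would then decompose
\begin{align*}
B_{n,j}\,\hat\chi \;=\; B_{n,j}\,\chi_{n,j} \;+\; B_{n,j}\,(\hat\chi-\chi) \;+\; B_{n,j}\,(\chi-\chi_{n,j}).
\end{align*}
The first summand tends to $Q_{W,W,j}(\tau)\chi_{n,j}(\tau) = Q_{W,Z,j}(\tau)$, which exactly cancels the limit of $A_{n,j}$; the second is a bounded matrix times $o_p(1)$. The third is handled by Cauchy--Schwarz,
\begin{align*}
||B_{n,j}(v,\tau)(\chi(\tau)-\chi_{n,j}(\tau))||_{op} \;\leq\; C\bigl(\overline{\mathbb{P}}_{n,j}||W_{i,j}||_2^2\bigr)^{1/2}\bigl(\overline{\mathbb{P}}_{n,j}||W_{i,j}^\top(\chi(\tau)-\chi_{n,j}(\tau))||_{op}^2\bigr)^{1/2} \;=\; o(1),
\end{align*}
where the first factor is bounded by Assumption \ref{ass:id}(vi) and the second is $o(1)$ by Assumption \ref{ass:boot'}(iii). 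Tracking the uniformity in $(j,v,\tau)$ throughout closes the argument.

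The main obstacle is precisely this last summand: because $Q_{W,W,j}(\tau)$ need not be invertible cluster-by-cluster, $\chi_{n,j}(\tau)$ can fail to converge in operator norm to $\chi(\tau)$, so it does \emph{not} suffice to invoke $\hat\chi\to\chi$ and push through. The weaker $L^2$-closeness provided by Assumption \ref{ass:boot'}(iii) -- closeness of the linear predictors $W^\top\chi_{n,j}$ to $W^\top\chi$ rather than of the coefficients themselves -- is exactly what keeps the Cauchy--Schwarz bound tight while allowing genuinely heterogeneous covariate distributions across clusters, the same mechanism that lets our bootstrap accommodate cluster-level heterogeneity in first-stage IV strength.
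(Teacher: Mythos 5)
Your proof follows essentially the same route as the paper's: the same split of $W_{i,j}\hat\Phi_{i,j}^\top(\tau)$ into the $\hat Z_{i,j}$-piece and the $W_{i,j}W_{i,j}^\top\hat\chi(\tau)$-piece, uniform consistency of $\hat\chi(\tau)$ for $\chi(\tau)$ from Assumption \ref{ass:boot'}(i) and the eigenvalue bound in \ref{ass:boot'}(iv), and a Cauchy--Schwarz bound that converts the $L^2$-closeness of $W_{i,j}^\top\chi_{n,j}(\tau)$ to $W_{i,j}^\top\chi(\tau)$ in Assumption \ref{ass:boot'}(iii) into operator-norm smallness of the residual; your closing remark about why one cannot simply push through $\hat\chi\to\chi$ cluster-by-cluster is exactly the point of the paper's argument.

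The one step to tighten is the claim that $B_{n,j}(v,\tau)\chi_{n,j}(\tau)$ ``tends to'' $Q_{W,W,j}(\tau)\chi_{n,j}(\tau)$. Since $Q_{W,W,j}(\tau)$ need not be invertible, Assumption \ref{ass:boot'}(iii) does not guarantee that $\chi_{n,j}(\tau)$ is bounded, so $(B_{n,j}-Q_{W,W,j})\chi_{n,j}\to 0$ does not follow from $B_{n,j}\to Q_{W,W,j}$ alone. You can repair this within your own decomposition by writing $(B_{n,j}-Q_{W,W,j})\chi_{n,j}=(B_{n,j}-Q_{W,W,j})\chi+(B_{n,j}-Q_{W,W,j})(\chi_{n,j}-\chi)$ and killing the second piece with the same Cauchy--Schwarz device (both $B_{n,j}$ and $Q_{W,W,j}$ are integrals against $W_{i,j}W_{i,j}^\top$), or you can regroup as the paper does: keep $(B_{n,j}-Q_{W,W,j})\hat\chi(\tau)$ as one term, where $\hat\chi(\tau)=O_p(1)$, and apply Cauchy--Schwarz only to $Q_{W,W,j}(\tau)(\chi(\tau)-\chi_{n,j}(\tau))=\lim_{n}\overline{\mathbb{P}}_{n,j}f_{\eps_{i,j}(\tau)}(0|W_{i,j},Z_{i,j})W_{i,j}W_{i,j}^\top(\chi(\tau)-\chi_{n,j}(\tau))$. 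Either fix closes the argument.
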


\subsection{Cluster-Level Projection}
\label{sec:cluster_proj}

In this section, we consider the cluster-level projection mentioned in Remark \ref{rem:cluster} in the main text. 
\begin{assumption}
	\begin{enumerate}[label=(\roman*)]
		\item Suppose 
		\begin{align*}
		\sup_{\tau \in \Upsilon}\left[||\underline{\hat{Q}}_{W,W,j}(\tau) - Q_{W,W,j}(\tau)||_{op}+||\underline{\hat{Q}}_{W,Z,j}(\tau) - Q_{W,Z,j}(\tau)||_{op}\right] = o_p(1).
		\end{align*}  
		\item Recall $\delta_{i,j}(v,\tau) = X_{i,j}^\top v_b + W_{i,j}^\top v_r + \hat{\Phi}_{i,j}^\top(\tau) v_t.$ Then, 
		\begin{align*}
		\sup\left\Vert \overline{\mathbb{P}}_{n,j} f_{\eps_{i,j}(\tau)}(\delta_{i,j}(v,\tau)|W_{i,j},Z_{i,j})V_{i,j}(\tau)W_{i,j}\hat{Z}_{i,j}^\top - Q_{W,Z,j}(\tau)\right\Vert_{op} \convP 0,
		\end{align*}
		where the supremum is taken over $\{j \in [J],||v||_2 \leq \delta,\tau \in \Upsilon\}$, $v = (v_b^\top,v_r^\top,v_t^\top)^\top$.
		\item Define the $k$-th largest singular value of $Q_{W,W,j}(\tau)$ as $\sigma_{k}(Q_{W,W,j}(\tau))$ for $k \in [d_w]$. Then there exist constants $c,C$ and integer $R \in [1,d_w]$ such that 
		\begin{align*}
		0<c \leq \inf_{\tau \in \Upsilon}\sigma_{R}(Q_{W,W,j}(\tau)) \leq \sup_{\tau \in \Upsilon}\sigma_{1}(Q_{W,W,j}(\tau))\leq C<\infty. 
		\end{align*}
	\end{enumerate}
	\label{ass:boot''}
\end{assumption}

We note that the use of generalized inverse in \eqref{eq:chi_j} in the main text accommodates the case that $Q_{W,W,j}(\tau)$ is not invertible. Assumption \ref{ass:boot'}(iii) only requires that the minimum nonzero singular value of $Q_{W,W,j}(\tau)$ is bounded away from zero uniformly over $\tau \in \Upsilon$. 

\begin{proposition}
	Suppose Assumption \ref{ass:boot''} holds and $\hat{\Phi}_{i,j}(\tau) = \hat{Z}_{i,j}(\tau) -\hat{\chi}_j^\top(\tau) W_{i,j}$, where $\hat{\chi}_j^\top(\tau)$ is defined in \eqref{eq:chi} in the main text. Then, $Q_{W,\Phi,j}(\tau)$ defined in Assumption \ref{ass:boot} is zero for $j \in [J]$ and $\tau \in \Upsilon$, i.e., by letting $n\rightarrow \infty$ followed by $\delta \rightarrow 0$, we have
	\begin{align*}
	\sup\left\Vert \overline{\mathbb{P}}_{n,j} f_{\eps_{i,j}(\tau)}(\delta_{i,j}(v,\tau)|W_{i,j},Z_{i,j})V_{i,j}(\tau)W_{i,j}\hat{\Phi}_{i,j}^\top(\tau) \right\Vert_{op} \convP 0,
	\end{align*}
	where the suprema are taken over $\{j \in [J],||v||_2 \leq \delta,\tau \in \Upsilon\}$ for $v = (v_b^\top,v_r^\top,v_t^\top)^\top$. 
	\label{prop:boot}
\end{proposition}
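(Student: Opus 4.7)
The plan is to decompose the target into the natural difference
$$\overline{\mathbb{P}}_{n,j}[\cdot]\,W_{i,j}\hat\Phi_{i,j}^\top(\tau) \;=\; \overline{\mathbb{P}}_{n,j}[\cdot]\,W_{i,j}\hat Z_{i,j}^\top \;-\; \overline{\mathbb{P}}_{n,j}[\cdot]\,W_{i,j}W_{i,j}^\top\,\hat\chi_j(\tau),$$
where $[\cdot]:=f_{\eps_{i,j}(\tau)}(\delta_{i,j}(v,\tau)|W_{i,j},Z_{i,j})V_{i,j}(\tau)$, and to prove that both summands share the uniform probability limit $Q_{W,Z,j}(\tau)$ (over $\{j\in[J],\|v\|_2\le\delta,\tau\in\Upsilon\}$, taking $n\to\infty$ then $\delta\to 0$). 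Assumption~\ref{ass:boot''}(ii) settles the first summand at once; the work is entirely in the second.

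First I would further split the second summand as $(\overline{\mathbb{P}}_{n,j}[\cdot]W_{i,j}W_{i,j}^\top - \underline{\hat Q}_{W,W,j}(\tau))\hat\chi_j(\tau) + \underline{\hat Q}_{W,W,j}(\tau)\hat\chi_j(\tau)$. For the latter piece, I use the Moore--Penrose identity $AA^-A^- = A^-$ (valid for symmetric PSD $A$) to collapse the defining formula of $\hat\chi_j(\tau)$, obtaining the clean identity
$$\underline{\hat Q}_{W,W,j}(\tau)\,\hat\chi_j(\tau) \;=\; \underline{\hat Q}_{W,W,j}(\tau)\,\underline{\hat Q}^-_{W,W,j}(\tau)\,\underline{\hat Q}_{W,Z,j}(\tau) \;=\; P_{\underline{\hat Q}_{W,W,j}(\tau)}\,\underline{\hat Q}_{W,Z,j}(\tau),$$
the orthogonal projection of $\underline{\hat Q}_{W,Z,j}(\tau)$ onto the range of $\underline{\hat Q}_{W,W,j}(\tau)$. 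I would then verify $Q_{W,Z,j}(\tau)\in\text{range}(Q_{W,W,j}(\tau))$: any $v\in\ker(Q_{W,W,j}(\tau))$ forces $\lim\overline{\mathbb{P}}_{n,j}f_{\eps_{i,j}(\tau)}(0|W,Z)V_{i,j}(\tau)(v^\top W_{i,j})^2=0$ by positive semidefiniteness, so $v^\top W_{i,j}=0$ almost surely in the weighted sense, and hence $v^\top Q_{W,Z,j}(\tau)=0$. Since the column spans of both $\underline{\hat Q}_{W,W,j}(\tau)$ and $Q_{W,W,j}(\tau)$ lie in the linear hull of the $W_{i,j}$'s, whose population weighted span has dimension exactly $R$ by Assumption~\ref{ass:boot''}(iii), the uniform spectral gap at $\sigma_R$ yields $P_{\underline{\hat Q}_{W,W,j}(\tau)}\to P_{Q_{W,W,j}(\tau)}$ uniformly in probability; combining with Assumption~\ref{ass:boot''}(i) then gives $\underline{\hat Q}_{W,W,j}(\tau)\hat\chi_j(\tau)\to P_{Q_{W,W,j}(\tau)}Q_{W,Z,j}(\tau)=Q_{W,Z,j}(\tau)$ uniformly. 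For the former piece of the split, both $\overline{\mathbb{P}}_{n,j}[\cdot]W_{i,j}W_{i,j}^\top$ and $\underline{\hat Q}_{W,W,j}(\tau)$ converge uniformly to $Q_{W,W,j}(\tau)$ (the former by the $WW^\top$-analog of Assumption~\ref{ass:boot''}(ii), embedded in the first convergence of Assumption~\ref{ass:asym}(iii); the latter from (i)), so their difference is $o_p(1)$; rank-stability in (iii) confines $\hat\chi_j(\tau)$ to the $R$-dimensional stable subspace and keeps it $O_p(1)$, so the product is $o_p(1)$ uniformly. Assembling both pieces yields the conclusion.

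The principal obstacle is the potential rank-deficiency of $Q_{W,W,j}(\tau)$, which renders the Moore--Penrose pseudo-inverse discontinuous at the limit and therefore precludes any direct plug-in argument for $\hat\chi_j(\tau)$ itself. The specific algebraic form $\hat\chi_j(\tau)=\underline{\hat Q}_{W,W,j}\,\underline{\hat Q}^-_{W,W,j}\,\underline{\hat Q}^-_{W,W,j}\,\underline{\hat Q}_{W,Z,j}$ is engineered precisely so that left-multiplication by $\underline{\hat Q}_{W,W,j}$ simplifies to a bounded orthogonal projector, not an isolated pseudo-inverse, and Assumption~\ref{ass:boot''}(iii) then supplies the uniform spectral gap that makes this projector continuous in $\tau$ and under perturbation; once that is in place, the rest of the argument is essentially bookkeeping parallel to Proposition~\ref{prop:boot_full}.
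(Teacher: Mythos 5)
Your decomposition and the treatment of every term except $\underline{\hat{Q}}_{W,W,j}(\tau)\hat{\chi}_j(\tau)$ match the paper's proof. The divergence is in that one term, and it is where your argument has a gap. You reduce $\underline{\hat{Q}}_{W,W,j}(\tau)\hat{\chi}_j(\tau)$ to $P_{\underline{\hat{Q}}_{W,W,j}(\tau)}\underline{\hat{Q}}_{W,Z,j}(\tau)$ and then assert that $P_{\underline{\hat{Q}}_{W,W,j}(\tau)}\to P_{Q_{W,W,j}(\tau)}$ uniformly in probability. That convergence is false in general: the projector onto the \emph{full} range of a matrix is discontinuous under perturbations that increase rank. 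If $\underline{\hat{Q}}_{W,W,j}(\tau)$ happens to be nonsingular in finite samples (nothing in Assumption \ref{ass:boot''} rules this out when $R<d_w$), then $P_{\underline{\hat{Q}}_{W,W,j}(\tau)}=\mathbb{I}_{d_w}$, which cannot converge to a rank-$R$ projector. The spectral gap at $\sigma_R$ gives you only a one-sided statement: the top-$R$ eigenspace of $\underline{\hat{Q}}_{W,W,j}(\tau)$ converges to $\mathrm{range}(Q_{W,W,j}(\tau))$ and is \emph{contained} in $\mathrm{range}(\underline{\hat{Q}}_{W,W,j}(\tau))$, so $\|(\mathbb{I}_{d_w}-P_{\underline{\hat{Q}}_{W,W,j}(\tau)})Q_{W,Z,j}(\tau)\|\to 0$ once you have $Q_{W,Z,j}(\tau)\in\mathrm{range}(Q_{W,W,j}(\tau))$. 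That weaker statement is all you actually need, so the gap is fixable, but the step as written is wrong and the phrase ``the linear hull of the $W_{i,j}$'s \ldots has dimension exactly $R$'' conflates the sample span (which can exceed $R$) with the population one.

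The paper avoids all of this asymptotic projection analysis with an \emph{exact finite-sample} identity: writing $\underline{\hat{Q}}_{W,W,j}(\tau)=n_j^{-1}\mathcal{X}_j^\top\mathcal{X}_j$ and $\underline{\hat{Q}}_{W,Z,j}(\tau)=n_j^{-1}\mathcal{X}_j^\top\mathcal{Z}_j$ for suitable design matrices and using the SVD of $\mathcal{X}_j$, it shows $\underline{\hat{Q}}_{W,W,j}(\tau)\hat{\chi}_j(\tau)=\underline{\hat{Q}}_{W,Z,j}(\tau)$ holds identically for every $n$, because the columns of $\mathcal{X}_j^\top\mathcal{Z}_j$ automatically lie in $\mathrm{range}(\mathcal{X}_j^\top\mathcal{X}_j)$, so your projector acts as the identity on $\underline{\hat{Q}}_{W,Z,j}(\tau)$ \emph{in finite samples}, not just in the limit. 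Two consequences: first, the paper never needs the population range-inclusion $Q_{W,Z,j}(\tau)\in\mathrm{range}(Q_{W,W,j}(\tau))$ that your route requires (your Cauchy--Schwarz derivation of it from positive semidefiniteness is correct, but it is an extra ingredient the paper's proof does without --- this is precisely why Proposition \ref{prop:boot}, unlike Proposition \ref{prop:boot_full}, carries no analogue of condition \eqref{eq:chi_condition}); second, it sidesteps the rank-discontinuity problem entirely. For the boundedness $\sup_\tau\|\hat{\chi}_j(\tau)\|_{op}=O_p(1)$, the paper argues as you do, via continuity of the generalized inverse at $Q_{W,W,j}(\tau)$ under the spectral-gap condition \ref{ass:boot''}(iii). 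If you repair your projection step by replacing the two-sided convergence claim with the one-sided range-containment argument above, your proof goes through; but the cleaner and intended route is the exact algebraic identity.
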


\section{Proof of Theorem \ref{thm:unstudentizedsize}}
\label{sec:pf_sec3}
Recall $\Gamma(\tau)$ and $\tilde{\Gamma}(\tau)$
defined in \eqref{eq:Gamma} in the main text and $\omega = (0_{d_\phi \times d_w}, \mathbb{I}_{d_\phi})$. 
Because $Q_{\Psi,\Psi}(\tau)$ is a block matrix, so is $Q_{\Psi,\Psi}^{-1}(\tau)$, and thus
\begin{align*}
\omega Q_{\Psi,\Psi}^{-1}(\tau) = (0_{d_{\phi} \times d_w},    Q^{-1}_{\Phi,\Phi}(\tau)). 
\end{align*}

Then, we have
\begin{align*}
Q_{\Psi,X}^\top(\tau) Q_{\Psi,\Psi}^{-1}(\tau) \omega^\top A_1(\tau) \omega Q_{\Psi,\Psi}^{-1}(\tau) Q_{\Psi,X}(\tau) = Q_{\Phi,X}^\top(\tau) Q_{\Phi,\Phi}^{-1}(\tau) A_1(\tau)  Q_{\Phi,\Phi}^{-1}(\tau) Q_{\Phi,X}(\tau)
\end{align*}
and 
\begin{align}\label{eq:Jftilde}
& \omega Q^{-1}_{\Psi,\Psi}(\tau)f_\tau(D_{i,j},\beta_n(\tau),\gamma_n(\tau),0) = Q_{\Phi,\Phi}^{-1}(\tau)\tilde{f}_\tau(D_{i,j},\beta_n(\tau),\gamma_n(\tau),0) \quad \text{and} \notag \\
& \Gamma(\tau)f_\tau(D_{i,j},\beta_n(\tau),\gamma_n(\tau),0) = \tilde \Gamma(\tau)\tilde f_\tau(D_{i,j},\beta_n(\tau),\gamma_n(\tau),0).
\end{align}

Therefore, by Lemma \ref{lem:expand}, we have
\begin{align*}
r_n \left(\hat{\beta}(\tau) - \beta_n(\tau)\right) & = \tilde{\Gamma}(\tau)\left[\sum_{j \in [J]} \xi_j r_n(\mathbb{P}_{n,j} - \overline{\mathbb{P}}_{n,j})\tilde{f}_{\tau}(D_{i,j},\beta_n(\tau),\gamma_n(\tau),0)\right] + o_p(1),
\end{align*}
where the $o_p(1)$ term holds uniformly over $\tau \in \Upsilon$. 

Then we have
\begin{align*}
r_n(\hat{\beta}(\tau)  - \beta_0(\tau)) & = r_n \left(\hat{\beta}(\tau) - \beta_n(\tau)\right) + \mu_\beta(\tau)\\
& =  \tilde{\Gamma}(\tau)\left[\sum_{j \in [J]} \xi_j \mathcal{Z}_j(\tau)\right]+ \mu_\beta(\tau) + o_p(1),
\end{align*}
where the $o_p(1)$ term holds uniformly over $\tau \in \Upsilon$. 

By Lemma \ref{lem:bootexpand}, we have
\begin{align}\label{eq:hatbeta^*_g-hatbeta}
& r_n (\hat{\beta}_g^*(\tau) - \hat{\beta}(\tau)) =\tilde{\Gamma}(\tau)\left[\sum_{j \in [J]} g_j \xi_j \mathcal{Z}_j(\tau)\right] + \overline{a}_g^*(\tau)\mu_\beta(\tau) + o_p(1),
\end{align}
where $o_p(1)$ term holds uniformly over $\tau \in \Upsilon$, $\overline{a}_g^*(\tau) = \sum_{ j \in [J]}\xi_j g_j a_j(\tau)$, 
\begin{align*}
a_j(\tau) = \Gamma(\tau) Q_{\Psi,X,j}(\tau) = \tilde \Gamma(\tau)Q_{\Phi,X,j}(\tau).
\end{align*}

Let $T_\infty(g) = \sup_{\tau \in \Upsilon} \left\Vert \tilde{\Gamma}(\tau)\left[\sum_{j \in [J]} g_j \xi_j \mathcal{Z}_j(\tau)\right] + \overline{a}_g^*(\tau)\mu_\beta(\tau)\right\Vert_{A_2(\tau)}$. Then, we have, uniformly over $\tau \in \Upsilon$
\begin{align*}
\{r_n T_{n},\{r_n T_{n}^{*}(g) \}_{g \in \textbf{G}}\} \convD (T_\infty(\iota_J), \{T_\infty(g)\}_{g \in \textbf{G}}), 
\end{align*}
where $\iota_J$ is a $J \times 1$ vector of ones and we use the fact that $\overline{a}_{\iota_J}^*(\tau) = \sum_{j \in [J]} \xi_j a_j(\tau) = 1$. In addition, under $\mathcal{H}_0$, $T_\infty(g) \stackrel{ d}{=} T_\infty(g')$ for $g,g' \in \textbf{G}$. Let $k^* = \lceil |\textbf{G}|(1-\alpha) \rceil$. We order $\{T_n^{*}(g)\}_{g \in \textbf{G}}$ and $\{T_\infty(g)\}_{g \in \textbf{G}}$ in ascending order: 
\begin{align*}
(T_n^{*})^{(1)} \leq \cdots \leq (T_n^{*})^{(|\textbf{G}|)} \quad \text{and} \quad (T_\infty)^{(1)} \leq \cdots \leq (T_\infty)^{|\textbf{G}|}.  
\end{align*}
Then, we have
\begin{align*}
\hat{c}_{n}(1-\alpha) \convP (T_\infty)^{(k^*)}
\end{align*}
and 
\begin{align*}
\limsup_{n \rightarrow \infty}\mathbb{P} \{ T_n > \hat{c}_n(1-\alpha) \} & \leq \limsup_{n \rightarrow \infty}\mathbb{P}\{T_n \geq \hat{c}_n(1-\alpha)\} \\
& \leq \mathbb{P}\left\{ T_\infty(\iota_J) \geq (T_\infty)^{(k^*)}  \right\} \leq \alpha + \frac{1}{2^{J-1}}, 
\end{align*}
where the second inequality is due to  Portmanteau's theorem (see, e.g., \citep{VW96}, Theorem 1.3.4(iii)), and the third inequality is due to the properties of randomization tests (see, e.g., \citep{LR06}, Theorem 15.2.1) and the facts that the distribution of $T_\infty(g)$ is invariant w.r.t. $g$, $T_\infty(g) = T_\infty(-g)$, and $T_\infty(g) \neq T_\infty(g')$ if $g \notin \{g', -g'\}$.  Similarly, we have
\begin{align*}
\liminf_{n \rightarrow \infty}\mathbb{P} \{ T_n > \hat{c}_n(1-\alpha) \} = \liminf_{n \rightarrow \infty}\mathbb{P} \{ T_n > (T_n^{*})^{(k^*)} \} \geq  \mathbb{P}\left\{ T_\infty(\iota_J) > (T_\infty)^{(k^*)}  \right\} \geq \alpha - \frac{1}{2^{J-1}}. 
\end{align*}
To see the last inequality, we note that  $T_\infty(g) = T_\infty(-g)$, and $T_\infty(g) \neq T_\infty(g')$ if $g \notin \{g', -g'\}$. Therefore, 
\begin{align*}
\sum_{g \in \textbf{G}}1\{T_\infty(g) \leq (T_\infty)^{(k^*)}\} \leq k^* + 1.
\end{align*}
Then, with $|\textbf{G}| = 2^J$, we have 
\begin{align*}
|\textbf{G}| \mathbb{E}1\{T_\infty(\iota_J) > (T_\infty)^{(k^*)}\} & = \mathbb{E}\sum_{g \in \textbf{G}}1\{T_\infty(g) > (T_\infty)^{(k^*)}\} \\
& = |\textbf{G}|-\mathbb{E}\sum_{g \in \textbf{G}}1\{T_\infty(g) \leq (T_\infty)^{(k^*)}\} \geq |\textbf{G}| - (k^*+1) \\
& \geq \lfloor |\textbf{G}|\alpha \rfloor-1 \geq |\textbf{G}|\alpha  -2,
\end{align*}
where the first equality holds because $T_\infty(\iota_J) \stackrel{d}{=} T_\infty(g)$ for $g \in \textbf{G}$. 

Under $\mathcal{H}_{1,n}$, we still have 
\begin{align*}
\liminf_{n \rightarrow \infty}\mathbb{P} \{ T_{n} > \hat{c}_n(1-\alpha) \} & \geq  \mathbb{P}\left\{ T_\infty(\iota_J) > (T_\infty)^{(k^*)}  \right\}.
\end{align*}
Let $\textbf{G}_s = \textbf{G} \backslash \textbf{G}_w$, where $\textbf{G}_w = \{g \in \textbf{G}: g_{j} = g_{j'}, \forall j,j' \in \mathcal J_s\}$.  We aim to show that, as $\sup_{\tau \in \Upsilon}||\mu_{\beta}(\tau)||_2 \rightarrow \infty$, 
\begin{align}
\mathbb{P} \{ T_\infty(\iota_J) > \max_{g \in \textbf{G}_s}T_\infty(g) \} \rightarrow 1.     
\label{eq:Gs}
\end{align}
In addition, we note that $|\textbf{G}_s| = |\textbf{G}| - 2^{J-J_s+1} \geq k^*$. This implies 
as $\sup_{\tau \in \Upsilon}||\mu_\beta(\tau)||_2 \rightarrow \infty$, 
\begin{align*}
\mathbb{P} \{ T_\infty(\iota_J) > (T_\infty)^{(k^*)} \} \geq \mathbb{P}\{T_\infty(\iota_J) > \max_{g \in \textbf{G}_s}T_\infty(g)\} \rightarrow 1.     
\end{align*}
Therefore, it suffices to establish \eqref{eq:Gs}. Note $T_\infty(\iota_J) \geq \sup_{\tau \in \Upsilon}||\mu_\beta(\tau)||_{A_2(\tau)} - O_p(1)$ and 
\begin{align*}
\max_{g \in \textbf{G}_s}T_\infty(g) \leq \sup_{\tau \in \Upsilon,g \in \textbf{G}_s}|\sum_{j \in [J]}g_j \xi_j a_j(\tau)|  \sup_{\tau \in \Upsilon}||\mu_\beta(\tau)||_{A_2(\tau)} + O_p(1). 
\end{align*}
Because when $g \in \textbf{G}_s$, the signs of $\{g_j\}_{j \in \mathcal J_s}$ cannot be the same.  In addition, we have $\sum_{j \in [J]} \xi_j a_j(\tau) = \sum_{j \in \mathcal J_s} \xi_j a_j(\tau) = 1$, and $\inf_{\tau \in \Upsilon,j \in \mathcal J_s} a_j(\tau) \geq c_0>0$. These imply 
\begin{align*}
\max_{g \in \textbf{G}_s, \tau \in \Upsilon}|\sum_{j \in [J]}g_j \xi_j a_j(\tau)| \leq 1-2\min_{j \in [J]}\xi_jc_0  <1. 
\end{align*}
Then, as $\sup_{\tau \in \Upsilon}||\mu_\beta(\tau)||_{2} \rightarrow \infty$, we have 
\begin{align*}
& \sup_{\tau \in \Upsilon}||\mu_\beta(\tau)||_{A_2(\tau)} - (1-2\min_{j \in [J]}\xi_jc_0)\sup_{\tau \in \Upsilon}||\mu_\beta(\tau)||_{A_2(\tau)} \\
& \geq 2\min_{j \in [J]}\xi_jc_0 \inf_{\tau \in \Upsilon}\lambda_{\min}(A_2(\tau)) \sup_{\tau \in \Upsilon}||\mu_\beta(\tau)||_{2} \rightarrow \infty.
\end{align*}
This concludes the proof. 
$\blacksquare$

\section{Proof of Theorem \ref{thm:studentized}}
Recall $\tilde{\Gamma}(\tau)$ defined in \eqref{eq:Gamma},  
\begin{align*}
T_{CR,n} = \sup_{\tau \in \Upsilon} ||r_n(\hat{\beta}(\tau) - \beta_0(\tau))||_{\hat{A}_{CR}(\tau)}
\end{align*}
and 
\begin{align*}
T_{CR,n}^{*}(g) = \sup_{\tau \in \Upsilon} ||r_n(\hat{\beta}_g^*(\tau) -\hat{\beta}(\tau))||_{\hat{A}^*_{CR,g}(\tau)}.
\end{align*}

Following the proof of Theorem \ref{thm:unstudentizedsize}, we have
\begin{align*}
r_n(\hat{\beta}(\tau) - \beta_0(\tau)) = \tilde{\Gamma}(\tau)\left[\sum_{ j \in [J]} \xi_j \mathcal{Z}_j \right] + \mu_\beta(\tau) + o_p(1)
\end{align*}
and 
\begin{align*}
& r_n(\hat{\beta}_g^*(\tau) -\hat{\beta}(\tau))= \tilde{\Gamma}(\tau) \sum_{ j \in [J]}\xi_j g_j\mathcal Z_j + \overline a^*_g(\tau) \mu_\beta(\tau) + o_p(1),
\end{align*}
where the $o_p(1)$ terms in these two displays hold uniformly over $\tau \in \Upsilon$.

Next, we derive the limits of $\hat{A}_{CR}^{-1}(\tau)$ and $(\hat{A}_{CR,g}^*)^{-1}(\tau)$. Note
\begin{align*}
\hat{A}_{CR}^{-1}(\tau)     & =  \hat G^\top (\tau)\hat \Omega(\tau,\tau)\hat G(\tau)  \\
& = \sum_{j \in [J]} \xi_j n_j  \hat G^\top(\tau)\omega \left[\mathbb{P}_{n,j}\hat{f}_\tau(D_{i,j}, \hat{\beta}(\tau),\hat{\gamma}(\tau),0)\right]\left[\mathbb{P}_{n,j}\hat{f}_{\tau}(D_{i,j}, \hat{\beta}(\tau),\hat{\gamma}(\tau),0)\right]^\top \omega^\top \hat G(\tau) .
\end{align*}

Recall $b_j(\tau) = G^\top (\tau)Q_{\Phi,X,j}(\tau)$. Then, by Lemma \ref{lem:stud}, we have
\begin{align*}
& r_n  \hat G^\top(\tau) \omega \left[\mathbb{P}_{n,j}\hat{f}_\tau(D_{i,j}, \hat{\beta}(\tau),\hat{\gamma}(\tau),0)\right] \\
& = \left[G^\top(\tau) r_n\mathbb{P}_{n,j} \tilde{f}_\tau(D_{i,j}, \beta_n(\tau),\gamma_n(\tau),0) - b_j(\tau)\tilde{\Gamma}(\tau) r_n \mathbb{P}_{n} \tilde{f}_\tau(D_{i,j}, \beta_n(\tau),\gamma_n(\tau),0)\right]+ o_p(1)\\
& =  \left[G^\top(\tau) \mathcal{Z}_j(\tau) - b_j(\tau) \tilde \Gamma(\tau) \sum_{\tilde{j} \in J} \xi_{\tilde{j}}\mathcal{Z}_{\tilde{j}}(\tau)\right] + o_p(1), 
\end{align*}
where the $o_p(1)$ term holds uniformly over $\tau \in \Upsilon$. Let $\mathcal Z^\top  = [\mathcal Z_1^\top,\cdots,\mathcal Z_J^\top ]$. Then, by Assumption \ref{ass:J}(ii), we have 
\begin{align*}
& \inf_{\tau \in \Upsilon}
\sum_{j \in [J]} \xi_j^2 \left[G^\top(\tau) \mathcal{Z}_j(\tau) - b_j(\tau) \tilde \Gamma(\tau) \sum_{\tilde{j} \in J}\xi_{\tilde{j}}\mathcal{Z}_{\tilde{j}}(\tau)\right]^2  = \inf_{\tau \in \Upsilon}
\sum_{j \in [J]} \xi_j^2 \left[v_{j}(\tau)^\top \mathcal Z\right]^2>0
\end{align*}
with probability one and  
\begin{align*}
\frac{n \hat{A}_{CR}(\tau)}{r_n^2} & = \left\{ \sum_{j \in [J]} \xi_j^2 \left[G^\top(\tau) \mathcal{Z}_j(\tau) - b_j(\tau) \tilde \Gamma(\tau) \sum_{\tilde{j} \in J}\xi_{\tilde{j}}\mathcal{Z}_{\tilde{j}}(\tau)\right]^2 \right\}^{-1} + o_p(1) \\
& \equiv A_{CR}(\tau) + o_p(1). 
\end{align*}

Similarly, we have
\begin{align*}
(\hat{A}_{CR,g}^*)^{-1}(\tau) & =  \hat G^\top(\tau)\hat{\Omega}_g^*(\tau,\tau)\hat G(\tau)  \\
& = \sum_{j \in [J]} \xi_j n_j  \hat G^\top(\tau) \omega \left[\mathbb{P}_{n,j}\hat f_{\tau,g}^*(D_{i,j}) \right]\left[\mathbb{P}_{n,j}\hat f_{\tau,g}^*(D_{i,j}) \right]^\top\omega^\top \hat G(\tau) .
\end{align*}
Furthermore, Lemmas \ref{lem:expand} and \ref{lem:stud} imply 
\begin{align*}
& r_n  \hat G^\top(\tau) \omega \left[\mathbb{P}_{n,j}\hat f_{\tau,g}^*(D_{i,j}) \right] \\
& = r_n \biggl[g_j G^\top(\tau)\mathbb{P}_{n,j} \tilde{f}_\tau(D_{i,j}, \beta_n(\tau),\gamma_n(\tau),0) - (g_j - \overline{a}_g^*(\tau)) b_j (\tau) (\beta_0(\tau) - \beta_n(\tau)) \\
& - b_j (\tau)\tilde \Gamma(\tau)\sum_{\tilde{j} \in J} g_{\tilde{j}} \xi_{\tilde{j}}  \mathbb{P}_{n,\tilde{j}} \tilde{f}_\tau(D_{i,\tilde{j}}, \beta_n(\tau),\gamma_n(\tau),0)\biggr]+ o_p(1)\\
& =  \left[g_j G^\top(\tau) \mathcal{Z}_j(\tau) - b_j(\tau)\tilde \Gamma(\tau) \sum_{ \tilde{j} \in J} g_{\tilde{j}}\xi_{\tilde{j}} \mathcal{Z}_{\tilde{j}}(\tau)\right] +   (g_j - \overline{a}_g^*(\tau)) b_j(\tau) \mu_\beta(\tau) + o_p(1).
\end{align*}


Then, as $J > 1$,  we have
\begin{align*}
\frac{n \hat{A}_{CR,g}^*(\tau)}{r_n^2} = A_{CR,g}^*(\tau) + o_p(1), 
\end{align*}
where the $o_p(1)$ term holds uniformly over $\tau \in \Upsilon$ and
\begin{align*}
A_{CR,g}^*(\tau) = \left\{ 
\sum_{j \in [J]} \xi_j^2\left[  g_j G^\top(\tau) \mathcal{Z}_j(\tau) - b_j(\tau) \tilde \Gamma(\tau)\sum_{ \tilde{j} \in J} g_{\tilde{j}}\xi_{\tilde{j}} \mathcal{Z}_{\tilde{j}}(\tau) +  (g_j - \overline{a}_g^*(\tau)) b_j(\tau)  \mu_\beta(\tau)\right]^2\right\}^{-1}.
\end{align*}

Let 
\begin{align*}
T_{CR,\infty}(g) = \sup_{\tau \in \Upsilon} \left[\left\Vert \tilde{\Gamma}(\tau) \sum_{j \in [J]} g_j \xi_j \mathcal{Z}_j(\tau) + \overline a_{g}^*(\tau) \mu_\beta(\tau)\right\Vert_{A_{CR,g}^*(\tau)}\right]. 
\end{align*}
Because $\overline{a}_{\iota_J}^*(\tau) = 1$ and $\overline{a}_{-\iota_J}^*(\tau) = -1$, we have,
\begin{align*}
\{ \sqrt{n} T_{CR,n}, \{\sqrt{n} T_{CR,n}^{*}(g)\}_{g \in \textbf{G}} \convD \{T_{CR,\infty}(\iota_J), \{T_{CR,\infty}(g)\}_{g \in \textbf{G}} \}.
\end{align*}
In addition, under the null, $T_{CR,\infty}(g) \stackrel{d}{=}T_{CR,\infty}(g')$ for any $g,g' \in G$ and $T_{CR,\infty}(g) =T_{CR,\infty}(g')$ if and only if $g \in \{g',-g'\}$. Then following the exact same argument in the proof of Theorem \ref{thm:unstudentizedsize}, we have
\begin{align*}
\alpha - \frac{1}{2^{J-1}} \leq \liminf_{n \rightarrow \infty} \mathbb{P} \{ T_{CR,n} > \hat{c}_{CR,n}(1-\alpha) \} \leq \limsup_{n \rightarrow \infty} \mathbb{P} \{ T_{CR,n} > \hat{c}_{CR,n}(1-\alpha) \} \leq \alpha + \frac{1}{2^{J-1}}. 
\end{align*}

For the power analysis, we still have 
\begin{align*}
\liminf_{n \rightarrow \infty}\mathbb{P} \{ T_{CR,n} > \hat{c}_n \} & \geq  \mathbb{P}\left\{ T_{CR,\infty}(\iota_J) > (T_{CR,\infty})^{(k^*)}  \right\}.
\end{align*}
In addition, we aim to show that, as $\sup_{\tau \in \Upsilon}||\mu_\beta(\tau)||_2 \rightarrow \infty$, we have
\begin{align}
\mathbb{P}\left\{ T_{CR,\infty}(\iota_J) > \max_{g \in \textbf{G}_s}T_{CR,\infty}(g)\right\} \rightarrow 1.
\label{eq:GsCR}
\end{align}
Then, given $|\textbf{G}_s| = |\textbf{G}|-2^{J-J_s+1}$ and $\lceil |\textbf{G}|(1-\alpha) \rceil \leq |\textbf{G}|-2^{J-J_s+1}$, \eqref{eq:GsCR} implies,  $\sup_{\tau \in \Upsilon}||\mu_\beta(\tau)||_2 \rightarrow \infty$,
\begin{align*}
\mathbb{P}\left\{ T_{CR,\infty}(\iota_J) > (T_{CR,\infty})^{(k^*)}  \right\} \geq \mathbb{P}\left\{ T_{CR,\infty}(\iota_J) > \max_{g \in \textbf{G}_s}T_{CR,\infty}(g) \right\} \rightarrow 1.
\end{align*}
Therefore, it suffices to establish \eqref{eq:GsCR}. Note that 
$$T_{CR,\infty}(\iota_J) = \sup_{\tau \in \Upsilon} \left[\left\Vert \tilde{\Gamma}(\tau)\left[\sum_{j \in [J]} \xi_j \mathcal{Z}_j(\tau)\right] + \mu_\beta(\tau)\right\Vert_{A_{CR,\iota_J}^*(\tau)}\right],$$
where 
\begin{align*}
A_{CR,\iota_J}^*(\tau) = \left\{\sum_{j \in [J]} \xi_j^2\left[G^\top(\tau) \mathcal{Z}_j(\tau) - b_j(\tau) \tilde{\Gamma}(\tau) \sum_{\tilde{j} \in J}\xi_{\tilde{j}} \mathcal{Z}_{\tilde{j}}(\tau)\right]^2\right\}^{-1}.
\end{align*}
We see that $A_{CR,\iota_J}^*(\tau)$ is independent of $\mu_\beta(\tau)$. For any $\delta>0$, we can find a constant $c>0$ such that with probability greater than $1-\delta$, 
\begin{align*}
\inf_{\tau \in \Upsilon} A_{CR,\iota_J}^*(\tau) \geq c>0,
\end{align*}
and thus, 
\begin{align}
T_{CR,\infty}^2(\iota_J) \geq  c\sup_{\tau \in \Upsilon}\mu_\beta^2(\tau) - O_p(1). 
\label{eq:Tiota}
\end{align}
On the other hand, for $g \in \textbf{G}_s$, we can write  $T_{CR,\infty}(g)$ as
\begin{align*}
T_{CR,\infty}^2(g)&=\sup_{\tau \in \Upsilon} \frac{(N_{0,g}(\tau) + c_{0,g}(\tau)\mu_\beta(\tau))^2}{\sum_{j \in [J]}\xi_j^2 (N_{j,g}(\tau) + c_{j,g}(\tau) \mu_\beta(\tau))^2}, 
\end{align*}
where 
\begin{align*}
& N_{0,g}(\tau) =     \tilde{\Gamma}(\tau)\left[\sum_{j \in [J]} g_j \xi_j \mathcal{Z}_j(\tau)\right], \\
& N_{j,g}(\tau) =    \left[g_j G^\top (\tau) \mathcal{Z}_j(\tau) - b_j(\tau)\tilde{\Gamma}(\tau) \sum_{\tilde{j} \in J}\xi_{\tilde{j}} g_{\tilde{j}}\mathcal{Z}_{\tilde{j}}(\tau)\right],\quad j \in [J],\\
& c_{0,g}(\tau) =  \overline{a}_{g}^*(\tau), \quad \text{and} \quad c_{j,g}(\tau) = (g_j - \overline{a}_g^*(\tau))b_j(\tau), \quad j \in [J].
\end{align*}
We claim that for $g \in \textbf{G}_s$, $c_{j,g}(\tau) \neq 0$ for some $j \in \mathcal J_s$. To see this claim, suppose it does not hold. Then, it implies $g_j = \overline{a}^*_g(\tau)$ for all $j \in \mathcal J_s$, i.e., for all $j \in \mathcal J_s$, $g_j$ shares the same sign. This contradicts with the definition of $\textbf{G}_s$. This claim and the fact that $\inf_{\tau \in \Upsilon, j \in \mathcal J_s} \min(|a_j(\tau)|,|b_j(\tau)|)\geq c_0>0$ further imply that 
\begin{align*}
&\inf_{\tau \in \Upsilon, g \in \textbf{G}_s}\sum_{j \in [J]} \xi_j^2 c_{j,g}^2(\tau) \\
&\geq (c_0^2 \min_{j \in [J]}\xi_j^2) \min((1 -  \overline{a}^*_g(\tau))^2,(1 +  \overline{a}^*_g(\tau))^2) \geq c>0,
\end{align*}
for some constant $c>0$.

In addition, we have
\begin{align*}
& \sum_{j \in [J]}\xi_j^2 (N_{j,g}(\tau) + c_{j,g}(\tau) \mu_\beta(\tau))^2 \\
& = \sum_{j \in [J]}\xi_j^2 N_{j,g}^2(\tau) + \sum_{j \in [J]} 2\xi_j^2 c_{j,g}(\tau)N_{j,g}(\tau) \mu_\beta(\tau)+ (\sum_{j \in [J]} \xi_j^2 c_{j,g}^2(\tau)) \mu_\beta^2(\tau) \\
& = M_1 + 2M_2 \mu_\beta(\tau) + \overline{c}^2 \mu_\beta^2(\tau) \\
& = M_1 -\left(M_2/\overline c\right)^2 + (\overline c \mu_\beta(\tau) + M_2/\overline c)^2,
\end{align*}
where we denote
\begin{align*}
M_1 =     \sum_{j \in [J]}\xi_j^2 N_{j,g}^2(\tau), \quad M_2 = \sum_{j \in [J]} \xi_j^2 c_{j,g}(\tau)N_{j,g}(\tau), \quad  \text{and} \quad \overline{c}^2 = \sum_{j \in [J]} \xi_j^2 c_{j,g}^2(\tau)>0. 
\end{align*}
For notation ease, we suppress the dependence of $(M_1,M_2,\overline{c})$ on $(g,\tau)$. In addition, we note that 
\begin{align*}
M_1 - \frac{M_2^2}{\overline{c}^2} = \sum_{j \in [J]} \xi_j^2  N_{j,g}^2(\tau) - \frac{\left(\sum_{j \in [J]}\xi_j^2  N_{j,g}(\tau) c_{j,g}(\tau)\right)^2}{\sum_{j \in [J]} \xi_j^2 c_{j,g}^2(\tau)} \geq 0,
\end{align*}
where the equal sign holds if and only if there exist $(g,c, \tau) \in \textbf{G}_s \times \Re \times \Upsilon$ such that 
\begin{align*}
N_{j,g}(\tau) = c c_{j,g}(\tau),~ \forall j \in [J], 
\end{align*}
or equivalently, 
\begin{align*}
[v_{1}(\tau),\cdots,v_{J}(\tau)]^\top   \begin{pmatrix}
g_1 \mathcal Z_1(\tau) \\
\vdots \\
g_J \mathcal Z_J(\tau)
\end{pmatrix} = c \begin{pmatrix}
c_{1,g}(\tau) \\
\vdots \\
c_{J,g}(\tau)
\end{pmatrix}.
\end{align*}
Given that the rank of $[v_{1}(\tau),\cdots,v_{J}(\tau)]$ is greater than 1, the RHS is a linear space with rank 1, and $\begin{pmatrix}
g_1 \mathcal Z_1(\tau) \\
\vdots \\
g_J \mathcal Z_J(\tau)
\end{pmatrix}$ is a non-degenerate $(d_\phi J)$ vector of normal random variables, the equality holds with probability zero.



Therefore, $\mathbb{M} \equiv M_1 - \frac{M_2^2 }{\overline{c}^2}$ is invertible with probability one. In addition, denote $\frac{M_2}{\overline{c}} + \overline{c}\mu_\beta(\tau)$ as $\mathbb{V}$. Then,  we have
\begin{align*}
& \left[\sum_{j \in [J]}\xi_j^2 (N_{j,g}(\tau) + c_{j,g}(\tau) \mu_\beta(\tau))^2\right]^{-1} = [\mathbb{M} + \mathbb{V}^2]^{-1} = \mathbb{M}^{-1} - \frac{\mathbb{V}^2\mathbb M^{-2}}{1+ \mathbb{V}^2 \mathbb{M}^{-1}}.
\end{align*}

Next, we note that 
\begin{align*}
N_{0,g}(\tau) + c_{0,g}(\tau)\mu_\beta(\tau) & = N_{0,g}(\tau) + c_{0,g}(\tau)\left(\frac{\mathbb{V}}{\overline{c}} - \frac{M_2}{\overline{c}^2}\right)\\
& \equiv \mathbb{M}_0 + \frac{c_{0,g}(\tau)}{\overline{c}}\mathbb{V},
\end{align*}
where 
\begin{align*}
\mathbb{M}_0 = N_{0,g}(\tau) - \frac{c_{0,g}(\tau)M_2}{\overline{c}^2} = N_{0,g}(\tau) - \frac{c_{0,g}(\tau) (\sum_{j \in [J]}\xi_j^2 c_{j,q}(\tau)N_{j,g}(\tau))}{\sum_{j \in [J]} \xi_j^2 c_{j,g}^2(\tau)}. 
\end{align*}
With these notations, we have
\begin{align*}
& \frac{(N_{0,g}(\tau) + c_{0,g}(\tau)\mu_\beta(\tau))^2}{ \sum_{j \in [J]}\xi_j^2 (N_{j,g}(\tau) + c_{j,g}(\tau) \mu_\beta(\tau))^2} \\
& = \frac{\left(\mathbb{M}_0 + \frac{c_{0,g}(\tau)}{\overline{c}}\mathbb{V} \right)^2 }{\mathbb{M} + \mathbb{V}^2} \\
& \leq 2\mathbb{M}_0^2 \left( \mathbb{M}^{-1} - \frac{\mathbb{V}^2\mathbb M^{-2}}{1+ \mathbb{V}^2 \mathbb{M}^{-1}}\right) +  \frac{2 c_{0,g}^2(\tau) \mathbb V^2}{\overline{c}^2} \left(\mathbb{M}^{-1} - \frac{\mathbb{V}^2\mathbb M^{-2}}{1+ \mathbb{V}^2 \mathbb{M}^{-1}} \right) \\
& \leq 2\mathbb{M}_0^2 \mathbb{M}^{-1} + \frac{2 c_{0,g}^2(\tau)}{\overline{c}^2}\frac{\mathbb{V}^2 \mathbb{M}^{-1} }{1+ \mathbb{V}^2 \mathbb{M}^{-1}} \\
& \leq 2 \frac{(N_{0,g}(\tau) - \frac{c_{0,g}(\tau) (\sum_{j \in [J]}\xi_j^2 c_{j,q}(\tau)N_{j,g}(\tau))}{\sum_{j \in [J]} \xi_j^2 c_{j,g}^2(\tau)})^2}{\sum_{j \in [J]} \xi_j^2  N_{j,g}^2(\tau) - \frac{\left(\sum_{j \in [J]}\xi_j^2  N_{j,g}(\tau) c_{j,g}(\tau)\right)^2}{\sum_{j \in [J]} \xi_j^2 c_{j,g}^2(\tau)} }  + \frac{2c_{0,g}^2(\tau)}{\sum_{j \in [J]} \xi_j^2 c_{j,g}^2(\tau)} \\
& \equiv C(g,\tau).
\end{align*}
By taking the supremum over $\tau \in \Upsilon$ and $g \in \textbf{G}_s $, we have
\begin{align}
\max_{g \in \textbf{G}_s}T_{CR,\infty}(g)  \leq   \sup_{(g,\tau) \in \textbf{G}_s \times \Upsilon}C(g,\tau) = O_p(1).
\label{eq:Tiota*}
\end{align}

Combining \eqref{eq:Tiota} and \eqref{eq:Tiota*}, we have, as $\sup_{\tau \in \Upsilon}||\mu_\beta(\tau)||_2\rightarrow \infty$, 
\begin{align*}
& \liminf_{n \rightarrow \infty}\mathbb{P} \{ T_{CR,n} > \hat{c}_{CR,n}(1-\alpha) \} 
\geq  \mathbb{P}\left\{ T_{CR,\infty}(\iota_J) > (T_{CR,\infty})^{(k^*)}  \right\} 
\geq \mathbb{P}\left\{ T_{CR,\infty}(\iota_J) > \max_{g \in \textbf{G}_s} T_{CR,\infty}(g)  \right\} \\
&  = 1- \mathbb{P}\left\{ T_{CR,\infty}(\iota_J) \leq \max_{g \in \textbf{G}_s} T_{CR,\infty}(g)  \right\} 
\geq 1- \mathbb{P}\left\{ c\sup_{\tau \in \Upsilon}\mu_\beta^2(\tau) - M \leq \sup_{(g,\tau) \in \textbf{G}_s \times \Upsilon}C(g,\tau) \right\}-\delta \rightarrow 1-\delta, 
\end{align*}
where $M>0$ is a sufficiently large constant. 
As $\delta$ is arbitrary, we have established \eqref{eq:GsCR}. This concludes the proof. $\blacksquare$



\section{Proof of Theorem \ref{thm:power_comparison}}
Let $\tilde{c}_{CR,n}(1-\alpha)$ denote the $(1-\alpha)$ quantile
of 
\begin{align*}
\left\{|\hat{\beta}^*_{g}(\tau) - \hat\beta(\tau)|\sqrt{\hat A_{CR}(\tau)}: g \in \textbf{G}\right\},
\end{align*}
i.e., the bootstrap statistic $T_n^*(g)$ studentized by the original CRVE instead of the bootstrap CRVE. 
Because $\Upsilon$ is a singleton, we have
\begin{align*}
1\{T_n > \hat{c}_{n}(1-\alpha)\} = 1\{T_{CR,n} >  \tilde{c}_{CR,n}(1-\alpha)\}.
\end{align*}
Therefore, we have
\begin{align*}
\liminf_{n \rightarrow \infty}\mathbb{P}(\phi^{cr}_n \geq \phi_n) & = \liminf_{n \rightarrow \infty}\mathbb{P}(1\{T_{CR,n} >  \hat {c}_{CR,n}(1-\alpha)\} \geq 1\{T_{CR,n} >  \tilde{c}_{CR,n}(1-\alpha)\}) \\
& = 1 - \limsup_{n \rightarrow \infty}\mathbb{P}(1\{T_{CR,n} >  \hat {c}_{CR,n}(1-\alpha)\} < 1\{T_{CR,n} >  \tilde{c}_{CR,n}(1-\alpha)\}) \\
& \geq 1 - \limsup_{n \rightarrow \infty}\mathbb{P}(\hat {c}_{CR,n}(1-\alpha)>  \tilde{c}_{CR,n}(1-\alpha)) \\
& = 1 - \limsup_{n \rightarrow \infty}\mathbb{P}\left( \sqrt{n} \hat {c}_{CR,n}(1-\alpha)  > \frac{ \tilde T_n^{k^*}}{ \left( \frac{r_n^2}{n} \hat G^\top(\tau) \hat \Omega(\tau,\tau) \hat G (\tau)  \right)^{1/2}}  \right),
\end{align*}
where $k^* = \lceil |\textbf{G}|(1-\alpha) \rceil$ and $\tilde T_n^{k^*}$ is the $k^*$-th order statistic of $\left\{|r_n(\hat{\beta}^*_{g}(\tau) - \hat \beta(\tau))|\right\}_{g \in \textbf{G}}$ in ascending order. We collect all $g \in \textbf{G}$ such that 
\begin{align*}
\tilde T_n^{k^*} \geq |r_n(\hat{\beta}^*_{g}(\tau) - \hat \beta(\tau))|
\end{align*}
and denote it as $\textbf{G}(k^*)$. Then, we have $|\textbf{G}(k^*)| = k^*$ because the probability of ties shrinks to zero. Further recall $\textbf{G}(c_0)$ defined in Theorem \ref{thm:power_comparison}. Because $|\textbf{G}(c_0)| > 2^J - k^*$, it means $\textbf{G}(c_0) \cap \textbf{G}(k^*) \neq \emptyset$. Suppose $g' \in \textbf{G}(c_0) \cap \textbf{G}(k^*)$, then we have
\begin{align*}
\tilde T_n^{k^*} & \geq |r_n(\hat{\beta}^*_{g'}(\tau) - \hat \beta(\tau))| \\
& =  \left|\tilde{\Gamma}(\tau)\left[\sum_{j \in [J]} g_j' \xi_j \mathcal{Z}_j(\tau)\right] + \overline{a}_{g'}^*(\tau)\mu_\beta(\tau) + o_p(1)\right| \\
& \geq c_0 |\mu_\beta(\tau)| - \left|\tilde{\Gamma}(\tau)\left[\sum_{j \in [J]} g_j' \xi_j \mathcal{Z}_j(\tau)\right] + o_p(1)\right|,
\end{align*}
where the equality is due to \eqref{eq:hatbeta^*_g-hatbeta}.

Therefore, we have
\begin{align*}
& \limsup_{n \rightarrow \infty}\mathbb{P}\left( \sqrt{n} \hat {c}_{CR,n}(1-\alpha)  > \frac{ \tilde T_n^{k^*}}{ \left( \frac{r_n^2}{n} \hat G^\top(\tau) \hat \Omega(\tau,\tau) \hat G (\tau)  \right)^{1/2}}  \right) \\
& \leq \limsup_{n \rightarrow \infty}\mathbb{P}\left( \sqrt{n} \hat {c}_{CR,n}(1-\alpha)  > \frac{ c_0 |\mu_\beta(\tau)| - |\tilde{\Gamma}(\tau)\left[\sum_{j \in [J]} g_j' \xi_j \mathcal{Z}_j(\tau)\right] + o_p(1)|}{ \left( \frac{r_n^2}{n} \hat G^\top(\tau) \hat \Omega(\tau,\tau) \hat G (\tau)  \right)^{1/2}}  \right).
\end{align*}

Further note that $\max_{g \in \textbf{G}}|\tilde{\Gamma}(\tau)\left[\sum_{j \in [J]} g_j\xi_j \mathcal{Z}_j(\tau)\right]| = O_P(1)$ and does not depend on $\mu_\beta(\tau)$,
\begin{align*}
\left(\frac{r_n^2}{n} \hat G^\top(\tau) \hat \Omega(\tau,\tau) \hat G (\tau)\right)^{-1}    \convD A_{CR}(\tau),
\end{align*}
and $A_{CR}(\tau)$ does not depend on $\mu_\beta(\tau)$ either. 

Last, although $\hat{c}_{CR,n}(1-\alpha)$ depends on $\mu_\beta(\tau)$, we have 
\begin{align*}
\sqrt{n}\hat{c}_{CR,n}(1-\alpha) \leq \sqrt{n}\max_{g \in \textbf{G}_s} T_{CR,n}^*(g) \convD \max_{g \in \textbf{G}_s} T_{CR,\infty}(g) \leq\sup_{(g,\tau) \in \textbf{G}_s \times \Upsilon}C(g,\tau) = O_P(1)
\end{align*}
for some $C(g)$ that does not depend on $\mu_\beta(\tau)$ as has been proved in the last section. Therefore, for any $\delta>0$, there exists a constant $c_{\mu}>0$, such that when $|\mu_\beta (\tau)| > c_\mu$, 
\begin{align*}
&  \limsup_{n \rightarrow \infty}\mathbb{P}\left( \sqrt{n} \hat {c}_{CR,n}(1-\alpha)  > \frac{ c_0 |\mu_\beta(\tau)| - |\tilde{\Gamma}(\tau)\left[\sum_{j \in [J]} g_j' \xi_j \mathcal{Z}_j(\tau)\right] + o_p(1)|}{ \left( \frac{r_n^2}{n} \hat G^\top(\tau) \hat \Omega(\tau,\tau) \hat G (\tau)  \right)^{1/2}}  \right)\\
& \leq \limsup_{n \rightarrow \infty}\mathbb{P}\left(\left( \frac{r_n^2}{n} \hat G^\top(\tau) \hat \Omega(\tau,\tau) \hat G (\tau)  \right)^{1/2} \sqrt{n}\max_{g \in \textbf{G}_s} T_{CR,n}^*(g)  + C > c_0 c_\mu \right) + \delta/2  \\
& \leq \mathbb P\left(A_{CR}^{-1/2}(\tau) \max_{g \in \textbf{G}_s} T_{CR,\infty}(g)  + C \geq c_0 c_\mu \right)+ \delta/2 \\
& \leq \mathbb P\left(A_{CR}^{-1/2}(\tau) \max_{g \in \textbf{G}_s} C(g)  + C \geq c_0 c_\mu \right)+ \delta/2
\leq \delta,
\end{align*}
where $C$ in the first inequality is a constant such that for $n$ being sufficiently large, 
\begin{align*}
\mathbb P \left(\max_{g \in \textbf{G}}\left|\tilde{\Gamma}(\tau)\left[\sum_{j \in [J]} g_j \xi_j \mathcal{Z}_j(\tau)\right] + o_p(1) \right| \geq C  \right) \leq \delta/2,
\end{align*}
the second inequality is by Portmanteau theorem, and the last inequality holds if $c_\mu$ is sufficiently large. This concludes the proof.
$\blacksquare$

\section{Proof of Theorem \ref{thm:full}}
\label{sec:pf_sec3end}
We focus on the case when Assumption \ref{ass:full}(ii) holds. The proof for the case with Assumption \ref{ass:full}(i) is similar but simpler, and thus, is omitted for brevity. We divide the proof into three steps. In the first step, we derive the limit distribution of $AR_{CR,n}$. In the second step, we derive the limit distribution of $AR_{CR,n}^{*}(g)$. In the third step, we prove the desired result.  Throughout the proof, we impose the null that $\beta_n(\tau) = \beta_0(\tau)$ for $\tau \in \Upsilon$. 

\textbf{Step 1.} By Lemma \ref{lem:basic} with $b_n(\tau) = \beta_n(\tau)=\beta_0(\tau)$, we have
\begin{align}
r_n\hat{\theta}(\beta_0(\tau),\tau) & = \omega \left[\hat F_1(\beta_0(\tau),\tau)\right]^{-1} r_n \left[ I_n(\tau) + II_n(\beta_0(\tau),\tau) + o_p(1) - \hat F_2(\beta_0(\tau),\tau)(\beta_0(\tau) - \beta_n(\tau)) \right] \notag \\
& = \omega \left[Q_{\Psi,\Psi}(\tau)\right]^{-1} r_n  I_n(\tau)  + o_p(1),
\label{eq:thetafull}
\end{align}
where the $o_p(1)$ terms hold uniformly over $\tau \in \Upsilon$ and 
\begin{align*}
I_n(\tau) = (\mathbb{P}_n - \overline{ \mathbb{P}}_n)f_\tau(D,\beta_0(\tau),\gamma_n(\tau),0).
\end{align*}

Therefore, we have
\begin{align}
r_n\hat{\theta}(\beta_0(\tau),\tau) = Q_{\Phi,\Phi}^{-1}(\tau) \sum_{ j \in [J]} \xi_j\mathcal{Z}_j(\tau) + o_p(1),
\label{eq:thetafull'}
\end{align}
where the $o_P(1)$ term holds uniformly over $\tau \in \Upsilon$. In addition, note that
\begin{align*}
\tilde{A}_{CR}^{-1}(\tau) & = \sum_{ j \in [J]} \xi_j n_j \hat H (\tau)\omega \left[ \mathbb{P}_{n,j}\hat{f}_\tau(D_{i,j},\beta_0(\tau),\hat{\gamma}(\beta_0(\tau),\tau))\right] \\
& \times \left[ \mathbb{P}_{n,j}\hat{f}_\tau(D_{i,j},\beta_0(\tau),\hat{\gamma}(\beta_0(\tau),\tau))\right]^\top \omega^\top \hat H(\tau)  
\end{align*}
and
\begin{align*}
& \mathbb{P}_{n,j}\hat{f}_\tau(D_{i,j},\beta_0(\tau),\hat{\gamma}(\beta_0(\tau),\tau),0) \\
& = (\mathbb{P}_{n,j} - \overline{\mathbb{P}}_{n,j})\hat{f}_\tau(D_{i,j},\beta_0(\tau),\hat{\gamma}(\beta_0(\tau),\tau),0) + \overline{\mathbb{P}}_{n,j}\hat{f}_\tau(D_{i,j},\beta_0(\tau),\hat{\gamma}(\beta_0(\tau),\tau),0) \\
& = (\mathbb{P}_{n,j} - \overline{\mathbb{P}}_{n,j})f_\tau(D_{i,j},\beta_0(\tau),\gamma_n(\tau),0) + \overline{\mathbb{P}}_{n,j}\hat{f}_\tau(D_{i,j},\beta_0(\tau),\hat{\gamma}(\beta_0(\tau),\tau),0) + o_p(r_n^{-1})\\
& = \mathbb{P}_{n,j}f_\tau(D_{i,j},\beta_0(\tau),\gamma_n(\tau),0) - Q_{\Psi,\Psi,j}(\tau) \begin{pmatrix}
\hat{\gamma}(\beta_0(\tau),\tau) - \gamma_n(\tau) \\
0
\end{pmatrix} + o_p(r_n^{-1}) \\
& = \mathbb{P}_{n,j}f_\tau(D_{i,j},\beta_0(\tau),\gamma_n(\tau),0) + o_p(r_n^{-1}),
\end{align*}
where the $o_p(r_n^{-1})$ term holds uniformly over $\tau \in \Upsilon$, the second equality is due to Assumption \ref{ass:asym}(ii), the third equality is due to Assumption \ref{ass:asym}(iii) and the fact that 
$$\sup_{\tau \in \Upsilon}||\hat{\gamma}(\beta_0(\tau),\tau) - \gamma_n(\tau)||_2 = o_P(r_n^{-1})$$ as shown in Lemma \ref{lem:expand'}, and the last equality is due to Lemma \ref{lem:expand'}.

Then, we have
\begin{align*}
& \hat H (\tau) \omega \left[ \mathbb{P}_{n,j}\hat{f}_\tau(D_{i,j},\beta_0(\tau),\hat{\gamma}(\beta_0(\tau),\tau))\right] \\
& = \hat H (\tau) \omega \left[ \mathbb{P}_{n,j}f_\tau(D_{i,j},\beta_0(\tau),\gamma_n(\tau),0) - Q_{\Psi,\Psi,j}(\tau) \begin{pmatrix}
\hat{\gamma}(\beta_0(\tau),\tau) - \gamma_n(\tau) \\
0
\end{pmatrix}\right] + o_P(r_n^{-1}) \\
& = H (\tau) \left[ \mathbb{P}_{n,j}\tilde{f}_\tau(D_{i,j},\beta_0(\tau),\gamma_n(\tau),0)\right] + o_P(r_n^{-1}),
\end{align*}
where the $o_p(r_n^{-1})$ term holds uniformly over $\tau \in \Upsilon$ and the second inequality holds because under Assumption \ref{ass:boot}(i),
\begin{align*}
\omega Q_{\Psi,\Psi,j}(\tau) \begin{pmatrix}
\hat{\gamma}(\beta_n(\tau),\tau) - \gamma_n(\tau) \\
0
\end{pmatrix} = 0. 
\end{align*}

Therefore, we have
\begin{align*}
\frac{r_n^2}{n}\hat{A}_{CR}^{-1}(\tau) = \sum_{ j \in [J]}\xi_j^2 H (\tau) \mathcal{Z}_j(\tau)\mathcal{Z}_j^\top(\tau) H(\tau) + o_p(1),
\end{align*}
where the  $o_p(1)$ term hold uniformly over $\tau \in \Upsilon$. When $J>d_\phi$, $\sum_{ j \in [J]}\xi_j^2 H (\tau) \mathcal{Z}_j(\tau)\mathcal{Z}_j^\top(\tau) H(\tau)$ is invertible with probability one, we have
\begin{align}
\frac{n}{r_n^2}\hat{A}_{CR}(\tau) = \left[\sum_{ j \in [J]}\xi_j^2 H (\tau) \mathcal{Z}_j(\tau)\mathcal{Z}_j^\top(\tau) H(\tau)\right]^{-1} + o_p(1),
\label{eq:A3}
\end{align}
where the  $o_p(1)$ term hold uniformly over $\tau \in \Upsilon$. Then, combining  \eqref{eq:thetafull'} and \eqref{eq:A3}, we have
\begin{align*}
n(AR_{CR,n})^2 & = \sup_{\tau \in \Upsilon}\left[\sum_{ j \in [J]} \xi_j \mathcal{Z}_j^\top(\tau)\right]Q_{\Phi,\Phi}^{-1}(\tau) \left[\sum_{ j \in [J]}\xi_j^2 H (\tau) \mathcal{Z}_j(\tau)\mathcal{Z}_j^\top(\tau)H(\tau)\right]^{-1}Q_{\Phi,\Phi}^{-1}(\tau)\left[\sum_{ j \in [J]} \xi_j \mathcal{Z}_j(\tau)\right] \\
& + o_p(1).
\end{align*}

\textbf{Step 2}. Next, we consider the limit distribution of the bootstrap test statistic. By Lemma \ref{lem:bootexpand}, under the null, we have 
\begin{align*}
r_n  \hat{\theta}_g^*(\beta_0(\tau),\tau) & = \omega Q_{\Psi,\Psi}^{-1}(\tau) r_n \sum_{j \in [J]}(1+g_j) \xi_j \mathbb P_{n,j}f_{\tau}(D_{i,j},\beta_n(\tau),\gamma_n(\tau),0)  + o_p(1) \\
& = Q_{\Phi,\Phi}^{-1}(\tau)\sum_{j \in [J]}(1+g_j) \xi_j \mathcal Z_j(\tau) + o_p(1),
\end{align*}
and thus, 
\begin{align*}
r_n  (\hat{\theta}_g^*(\beta_0(\tau),\tau) - \hat{\theta}(\beta_0(\tau),\tau))& = \omega Q_{\Psi,\Psi}^{-1}(\tau) r_n \sum_{j \in [J]} g_j \xi_j \mathbb P_{n,j}f_{\tau}(D_{i,j},\beta_n(\tau),\gamma_n(\tau),0)  + o_p(1) \\
& = Q_{\Phi,\Phi}^{-1}(\tau)\sum_{j \in [J]}g_j \xi_j \mathcal Z_j(\tau) + o_p(1),
\end{align*}
where the $o_P(1)$ term holds uniformly over $\tau \in \Upsilon$.

\textbf{Step 3.} We further define 
\begin{align*}
& AR_{CR,\infty}(g) \\
&  = \sup_{\tau \in \Upsilon}\left\{\left[\sum_{ j \in [J]} \xi_j g_j \mathcal{Z}_j^\top(\tau)\right]Q_{\Phi,\Phi}^{-1}(\tau)\left[\sum_{ j \in [J]}\xi_j^2 H (\tau) \mathcal{Z}_j(\tau)\mathcal{Z}_j^\top(\tau)H(\tau)\right]^{-1} Q_{\Phi,\Phi}^{-1}(\tau)\left[\sum_{ j \in [J]} \xi_j g_j\mathcal{Z}_j(\tau)\right]\right\}^{1/2}.
\end{align*} 
Then, we have, under the null, 
\begin{align*}
(\sqrt{n}AR_{CR,n},\{\sqrt{n}AR_{CR,n}^{*}(g)\}_{g\in G}) \convD (AR_{CR,\infty}(\iota_J),\{AR_{CR,\infty}(g)\}_{g \in \textbf{G}}).
\end{align*}
The distribution of $AR_{CR,\infty}(g)$ is invariant in $g$, and $AR_{CR,\infty}(g) = AR_{CR,\infty}(g')$ if and only if $g \in \{g',-g'\}$. Then, by the same argument in the proofs of Theorem \ref{thm:unstudentizedsize}, we have
\begin{align*}
\alpha - \frac{1}{2^{J-1}} & \leq \liminf_{n \rightarrow \infty} \mathbb{P} \{ AR_{CR,n}> \hat{c}_{AR,CR,n}(1-\alpha)\} \\
&\leq \limsup_{n \rightarrow \infty} \mathbb{P} \{ AR_{CR,n}> \hat{c}_{AR,CR,n}(1-\alpha)\} \leq \alpha + \frac{1}{2^{J-1}}. 
\end{align*}
Similarly, we can show that 
\begin{align*}
\alpha - \frac{1}{2^{J-1}} & \leq \liminf_{n \rightarrow \infty} \mathbb{P} \{ AR_{n}> \hat{c}_{AR,n}(1-\alpha)\} \\
&\leq \limsup_{n \rightarrow \infty} \mathbb{P} \{ AR_{n}> \hat{c}_{AR,n}(1-\alpha)\} \leq \alpha + \frac{1}{2^{J-1}}. \quad  \blacksquare
\end{align*}

\section{Proof of Propositions \ref{prop:boot_full} and \ref{prop:boot}}
\label{sec:proj}
For Proposition \ref{prop:boot_full}, we have
\begin{align*}
& \sup\left\Vert \overline{\mathbb{P}}_{n,j} f_{\eps_{i,j}(\tau)}(\delta_{i,j}(v,\tau)|W_{i,j},Z_{i,j})V_{i,j}(\tau)W_{i,j}\hat{\Phi}_{i,j}^\top(\tau) \right\Vert_{op} \\
& \leq \sup\left\Vert \overline{\mathbb{P}}_{n,j} f_{\eps_{i,j}(\tau)}(\delta_{i,j}(v,\tau)|W_{i,j},Z_{i,j})V_{i,j}(\tau)W_{i,j}\hat Z_{i,j}^\top-Q_{W,Z,j}(\tau) \right\Vert_{op} \\
& +  \sup\left\Vert \left[\overline{\mathbb{P}}_{n,j} f_{\eps_{i,j}(\tau)}(\delta_{i,j}(v,\tau)|W_{i,j},Z_{i,j})V_{i,j}(\tau)W_{i,j}W_{i,j}^\top(\tau) - Q_{W,W,j}(\tau)\right]\hat{\chi}(\tau) \right\Vert_{op}\\
& +\sup_{\tau \in \Upsilon}\left\Vert Q_{W,W,j}(\tau)(\hat{\chi}(\tau)-\chi(\tau))\right\Vert_{op} + \sup_{\tau \in \Upsilon}||Q_{W,W,j}(\tau)\chi(\tau) - Q_{W,Z,j}(\tau)||_{op}
\end{align*}
where the suprema in the first three lines are taken over $\{j \in J,||v||_2 \leq \delta,\tau \in \Upsilon\}$ for $v = (v_b^\top,v_r^\top,v_t^\top)^\top$. We note that, based on Assumption \ref{ass:boot'}, by letting $n \rightarrow \infty$ followed by $\delta \rightarrow 0$,  
\begin{align*}
& \sup\left\Vert \overline{\mathbb{P}}_{n,j} f_{\eps_{i,j}(\tau)}(\delta_{i,j}(v,\tau)|W_{i,j},Z_{i,j})V_{i,j}(\tau)W_{i,j}\hat Z_{i,j}^\top-Q_{W,Z,j}(\tau) \right\Vert_{op} \convP 0 \quad \text{and}\\
& \sup\left\Vert \left[\overline{\mathbb{P}}_{n,j} f_{\eps_{i,j}(\tau)}(\delta_{i,j}(v,\tau)|W_{i,j},Z_{i,j})V_{i,j}(\tau)W_{i,j}W_{i,j}^\top(\tau) - Q_{W,W,j}(\tau)\right]\hat{\chi}(\tau) \right\Vert_{op}\convP 0.
\end{align*}
In addition, Assumption \ref{ass:boot'} implies 
$\sup_{\tau \in \Upsilon}\left\Vert \hat{\chi}(\tau)-\chi(\tau)\right\Vert = o_p(1)$. Therefore, in order to show the result, it suffices to show 
$$\sup_{\tau \in \Upsilon}||Q_{W,Z,j}(\tau)- Q_{W,W,j}(\tau)\chi(\tau)||_{op} = o(1).$$
We note that 
\begin{align*}
& \sup_{\tau \in \Upsilon}||Q_{W,Z,j}(\tau)- Q_{W,W,j}(\tau)\chi(\tau)||_{op} \\
& = \sup_{\tau \in \Upsilon}|| Q_{W,W,j}(\tau)(\chi_{n,j}(\tau)-\chi(\tau))||_{op} \\
& = \sup_{\tau \in \Upsilon}||\lim_{n\rightarrow \infty}\overline{\mathbb{P}}_{n,j}f_{\eps_{i,j}(\tau)}(0|W_{i,j},Z_{i,j})W_{i,j} W_{i,j}^\top(\chi_{n,j}(\tau)-\chi(\tau))||_{op} \\
& \leq \limsup_{n\rightarrow \infty}\sup_{\tau \in \Upsilon}\left[\overline{\mathbb{P}}_{n,j}||f_{\eps_{i,j}(\tau)}(0|W_{i,j},Z_{i,j})W_{i,j}||_2^2\overline{\mathbb{P}}_{n,j} ||W_{i,j}^\top(\chi_{n,j}(\tau)-\chi(\tau))||_{op}^2  \right]^{1/2} = o(1). 
\end{align*}
\vspace{0.1in}

For Proposition \ref{prop:boot}, we note that
\begin{align*}
& \sup\left\Vert \overline{\mathbb{P}}_{n,j} f_{\eps_{i,j}(\tau)}(\delta_{i,j}(v,\tau)|W_{i,j},Z_{i,j})V_{i,j}(\tau)W_{i,j}\hat{\Phi}_{i,j}^\top(\tau) \right\Vert_{op} \\
& \leq \sup\left\Vert \overline{\mathbb{P}}_{n,j} f_{\eps_{i,j}(\tau)}(\delta_{i,j}(v,\tau)|W_{i,j},Z_{i,j})V_{i,j}(\tau)W_{i,j}\hat Z_{i,j}^\top-Q_{W,Z,j}(\tau) \right\Vert_{op} \\
& +  \sup\left\Vert \left[\overline{\mathbb{P}}_{n,j} f_{\eps_{i,j}(\tau)}(\delta_{i,j}(v,\tau)|W_{i,j},Z_{i,j})V_{i,j}(\tau)W_{i,j}W_{i,j}^\top(\tau) - Q_{W,W,j}(\tau)\right]\hat{\chi}_j(\tau) \right\Vert_{op}\\
& + \sup_{\tau \in \Upsilon}||\hat{\underline Q}_{W,W,j}(\tau)\hat{\chi}_j(\tau) - \hat{ \underline{Q}}_{W,Z,j}(\tau)||_{op}  +  \sup_{\tau \in \Upsilon}|| Q_{W,Z,j}(\tau)-\hat{ \underline{Q}}_{W,Z,j}(\tau)||_{op} \\
& +\sup_{\tau \in \Upsilon}\left\Vert \left[Q_{W,W,j}(\tau) - \underline{\hat{Q}}_{W,W,j}(\tau)\right]\hat{\chi}_j(\tau)  \right\Vert_{op} 
\end{align*}
where the suprema in the first three lines are taken over $\{j \in [J],||v||_2 \leq \delta,\tau \in \Upsilon\}$ for $v = (v_b^\top,v_r^\top,v_t^\top)^\top$. We note that, by Assumption \ref{ass:boot''}, by letting $n \rightarrow \infty$ followed by $\delta \rightarrow 0$,  
\begin{align*}
& \sup\left\Vert \overline{\mathbb{P}}_{n,j} f_{\eps_{i,j}(\tau)}(\delta_{i,j}(v,\tau)|W_{i,j},Z_{i,j})V_{i,j}(\tau)W_{i,j}\hat Z_{i,j}^\top-Q_{W,Z,j}(\tau) \right\Vert_{op} \convP 0 \quad \text{and}\\
& \sup\left\Vert \left[\overline{\mathbb{P}}_{n,j} f_{\eps_{i,j}(\tau)}(\delta_{i,j}(v,\tau)|W_{i,j},Z_{i,j})V_{i,j}(\tau)W_{i,j}W_{i,j}^\top(\tau) - Q_{W,W,j}(\tau)\right]\hat{\chi}_j(\tau) \right\Vert_{op}\convP 0.
\end{align*}
Assumption \ref{ass:boot''} also implies 
\begin{align*}
\sup_{\tau \in \Upsilon}|| Q_{W,Z,j}(\tau)-\hat{ \underline{Q}}_{W,Z,j}(\tau)||_{op} + \sup_{\tau \in \Upsilon}\left\Vert Q_{W,W,j}(\tau) - \underline{\hat{Q}}_{W,W,j}(\tau)\right\Vert_{op}  = o_p(1).
\end{align*}
Therefore, in order to show the result, it suffices to show 
\begin{align}
\sup_{\tau \in \Upsilon}||\hat{\chi}_j(\tau)||_{op} = O_p(1)
\label{eq:p1}
\end{align}
and 
\begin{align}
\underline{\hat{Q}}_{W,W,j}(\tau)\hat{\chi}_j(\tau) - \hat{ \underline{Q}}_{W,Z,j}(\tau)=0.
\label{eq:p2}
\end{align}

To see \eqref{eq:p1}, we note that Assumption \ref{ass:boot''}(ii) implies the generalized  inverse is continuous at $Q_{W,W,j}(\tau)$ uniformly over $\tau \in \Upsilon$. Therefore, by the continuous mapping theorem, we have
\begin{align*}
\sup_{\tau \in \Upsilon
}||\underline{\hat{Q}}^-_{W,W,j}(\tau)-Q_{W,W,j}^-(\tau)||_{op} = o_p(1),
\end{align*}
and thus
\begin{align*}
\sup_{\tau \in \Upsilon}||\hat{ \chi}_j(\tau)||_{op} & \leq \sup_{\tau \in \Upsilon}||\hat{ \chi}_j(\tau) - Q_{W,W,j}(\tau)Q_{W,W,j}^-(\tau)Q_{W,W,j}^-(\tau)Q_{W,Z,j}(\tau)||_{op} \\
& + \sup_{\tau \in \Upsilon}||Q_{W,W,j}(\tau)Q_{W,W,j}^-(\tau)Q_{W,W,j}^-(\tau)Q_{W,Z,j}(\tau)||_{op}= O_p(1). 
\end{align*}

To show \eqref{eq:p2}, we define 
$$\mathcal{X}_j = \left(W_{1,j}\hat V_{1,j}^{1/2}(\tau)K^{1/2}\left( \frac{\hat{\underline{\eps}}_{i,j}(\tau) }{h_{3,j}}\right)h_{3,j}^{-1/2},\cdots, W_{n_j,j} \hat V_{n_j,j}^{1/2}(\tau) K^{1/2}\left( \frac{\hat{\underline{\eps}}_{n_j,j}(\tau)}{h_{3,j}}\right)h_{3,j}^{-1/2}\right)^\top$$ 
and 
$$\mathcal Z_j = \left(Z_{1,j} \hat V_{1,j}^{1/2}(\tau) K^{1/2}\left( \frac{\hat{\underline{\eps}}_{i,j}(\tau)}{h_{4,j}}\right)h_{4,j}^{-1/2},\cdots,Z_{n_j,j} \hat V_{n_j,j}^{1/2}(\tau) K^{1/2}\left( \frac{\hat{\underline{\eps}}_{n_j,j}(\tau)}{h_{4,j}}\right)h_{4,j}^{-1/2}\right)^\top.$$ 
Then, we have
\begin{align*}
\underline{\hat{Q}}_{W,W,j}(\tau) = \frac{1}{n_j}\mathcal{X}_j^\top\mathcal{X}_j \quad \text{and} \quad  \underline{\hat{Q}}_{W,Z,j}(\tau) = \frac{1}{n_j}\mathcal{X}_j^\top\mathcal{Z}_j. 
\end{align*}
Define the singular value decomposition of $\mathcal{X}_j$ as $\mathcal{X}_j = U_j^\top \Sigma_j V_j$, where $U_j$ and $V_j$ are $n_j \times n_j$ and $d_w \times d_w$ orthonormal matrices and $\Sigma_j$ is a $n_j \times d_w$ matrix with the first $R$ diagonal elements being positive and all the rest entries in the matrix being zero. Then, we have
\begin{align*}
\underline{\hat{Q}}_{W,W,j}(\tau)\hat{\chi}_j(\tau) & = \frac{1}{n_j}\mathcal{X}_j^T \mathcal{X}_j\mathcal{X}_j^T \mathcal{X}_j\left(\mathcal{X}_j^T \mathcal{X}_j\right)^-\left(\mathcal{X}_j^T \mathcal{X}_j\right)^- \mathcal{X}_j^\top \mathcal{Z}_j  \\
& = \frac{1}{n_j}V_j^\top (\Sigma_j^\top \Sigma_j)(\Sigma_j^\top \Sigma_j)  (\Sigma_j^\top \Sigma_j)^-  (\Sigma_j^\top \Sigma_j)^- \Sigma_j^\top U_j \mathcal{Z}_j\\
& = \frac{1}{n_j}V_j^\top \begin{pmatrix}
I_R & 0_{R \times (d_w-R)} \\
0_{(d_w-R) \times R} & 0_{(d_w-R) \times (d_w-R)}
\end{pmatrix} \Sigma_j^\top U_j \mathcal{Z}_j \\
& = \frac{1}{n_j}V_j^\top \Sigma_j^\top U_j \mathcal{Z}_j \\
& = \underline{\hat{Q}}_{W,Z,j}(\tau),
\end{align*}
where we use the fact that 
\begin{align*}
\begin{pmatrix}
I_R & 0_{R \times (d_w-R)} \\
0_{(d_w-R) \times R} & 0_{(d_w-R) \times (d_w-R)}
\end{pmatrix} \Sigma_j^\top = \Sigma_j^\top.
\end{align*}
This concludes the proof. $\blacksquare$

\section{Technical Lemmas used in the Proofs of Results in Section \ref{sec:IVQR}}
\label{sec:lemma_sec3}
\subsection{Linear Expansion of $\hat{\gamma}(b_n(\tau),\tau)$}
\label{sec:pf_lem3}
\begin{lemma}
	Let $\mathcal{B}(\delta) = \{b(\cdot) \in \ell^\infty(\Upsilon): \sup_{\tau \in \Upsilon}||b(\tau)-\beta_n(\tau)||_2 \leq \delta\}$. Suppose Assumptions \ref{ass:id} and \ref{ass:asym} hold. Let $b_n(\tau)$ be a generic point in $\mathcal{B}(\delta)$. Then, for any $\eps>0$ and $\eps'>0$, there exist constants $\overline{ \delta}>0$ and $c>0$ that are independent of $(n,\delta,\eps, \overline{ \delta})$ such that for $\delta \leq \overline{ \delta}$, with probability greater than $1-c\eps$, 
	\begin{align}
	& \begin{pmatrix}
	r_n\left(\hat{\gamma}(b_n(\tau),\tau) -\gamma_n(\tau)\right) \\
	r_n\hat{\theta}(b_n(\tau),\tau)
	\end{pmatrix} \notag \\
	& = \left[\hat F_1(b_n(\tau),\tau)\right]^{-1} r_n \left[ I_n(\tau) + II_n(b_n(\tau),\tau) + o_p(1/r_n) - \hat F_2(b_n(\tau),\tau)(b_n(\tau) - \beta_n(\tau)) \right],
	\label{eq:betagammarate}
	\end{align}
	where the $o_p(1/r_n)$ term on the RHS of the above display holds uniformly over $\tau \in \Upsilon,b_n(\cdot) \in B(\delta)$, 
	\begin{align*}
	I_n(\tau) = (\mathbb{P}_n - \overline{ \mathbb{P}}_n)f_\tau(D,\beta_n(\tau),\gamma_n(\tau),0), \quad \sup_{\tau \in \Upsilon}||r_nI_n(\tau)||_2 = O_p(1),
	\end{align*}
	\begin{align*}
	II_n(b_n(\tau),\tau) = (\mathbb{P}_n - \overline{ \mathbb{P}}_n)\left(\hat{f}_\tau(D,b_n(\tau),\hat{\gamma}(b_n(\tau),\tau),\hat{\theta}_n(b_n(\tau),\tau))-f_\tau(D,\beta_n(\tau),\gamma_n(\tau),0)\right), 
	\end{align*}
	\begin{align*}
	\sup_{b_n(\cdot) \in B(\delta),\tau \in \Upsilon}||r_nII_n(b_n(\tau),\tau)||_2 \leq \eps',
	\end{align*}
	\begin{align*}
	&	\hat F_1(b_n(\tau),\tau) =  \overline{\mathbb{P}}_{n}f_{\eps_{i,j}(\tau)}(\hat{\delta}_{i,j}(\tau)|W_{i,j},Z_{i,j})\hat{\Psi}_{i,j}(\tau)\hat{\Psi}_{i,j}^\top(\tau)\hat V_{i,j}(\tau),\\
	& \sup_{b_n(\cdot) \in B(\delta),\tau \in \Upsilon}||\hat F_1(b_n(\tau),\tau) - Q_{\Psi,\Psi}(\tau)||_{op} \leq \eps',
	\end{align*}
	\begin{align*}
	&	\hat F_2(b_n(\tau),\tau) = \overline{\mathbb{P}}_{n}f_{\eps_{i,j}(\tau)}(\hat{\delta}_{i,j}(\tau)|W_{i,j},Z_{i,j})\hat{\Psi}_{i,j}(\tau)X_{i,j} \hat V_{i,j}(\tau),\\
	& \sup_{b_n(\cdot) \in B(\delta),\tau \in \Upsilon}||\hat F_2(b_n(\tau),\tau) - Q_{\Psi,X}(\tau)||_{op} \leq \eps',
	\end{align*}
	and 
	$\hat{\delta}_{i,j}(\tau) \in (0, X_{i,j}(b_n(\tau) - \beta_n(\tau)) + W_{i,j}^\top (\hat{\gamma}(b_n(\tau),\tau)-\gamma_n(\tau)) + \hat{\Phi}_{i,j}^\top(\tau) \hat{\theta}(b_n(\tau),\tau))$.
	\label{lem:basic}
\end{lemma}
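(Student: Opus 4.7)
The plan is to exploit the subgradient/score condition associated with the profiled quantile regression in \eqref{eq:br1}, decompose it into an empirical process piece and a mean piece, control the former by stochastic equicontinuity, and Taylor-expand the latter using the smoothness of the conditional CDF. First, for any $b_n(\tau)\in\mathcal{B}(\delta)$, the minimizer $(\hat{\gamma}(b_n(\tau),\tau),\hat{\theta}(b_n(\tau),\tau))$ of the convex check-function objective satisfies the approximate first-order condition
\begin{equation*}
\mathbb{P}_n \hat{f}_\tau\bigl(D_{i,j},b_n(\tau),\hat{\gamma}(b_n(\tau),\tau),\hat{\theta}(b_n(\tau),\tau)\bigr) = o_p(r_n^{-1})
\end{equation*}
uniformly in $\tau \in \Upsilon$ and $b_n(\cdot)\in\mathcal{B}(\delta)$, with the $o_p(r_n^{-1})$ arising from the jumps of the subgradient (of order $1/n$) and absorbed into the error. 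A preliminary step, using Assumption~\ref{ass:asym}(i) together with the uniform strict convexity of $\mathcal{Q}_\infty$ in $(r,t)$ and a standard argmin argument, yields the consistency $\sup_{\tau,b_n(\cdot)}\|\hat{\gamma}(b_n(\tau),\tau)-\gamma_n(\tau)\|_2+\|\hat{\theta}(b_n(\tau),\tau)\|_2 = o_p(1)$ as $\delta\downarrow 0$, so the estimator lies in the compact interior $\mathcal{R}\times\Theta$ where the later smoothness conditions apply.

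Next, I would decompose the score identity as
\begin{equation*}
0 = \underbrace{(\mathbb{P}_n-\overline{\mathbb{P}}_n)f_\tau(D,\beta_n(\tau),\gamma_n(\tau),0)}_{I_n(\tau)} + \underbrace{(\mathbb{P}_n-\overline{\mathbb{P}}_n)\Delta_n(\tau)}_{II_n(b_n(\tau),\tau)} + \overline{\mathbb{P}}_n \hat{f}_\tau(\cdot) + o_p(r_n^{-1}),
\end{equation*}
where $\Delta_n(\tau)$ is the difference of the plugged-in $\hat f$ and $f$ evaluated at the truth. The bound $\sup_\tau\|r_n I_n(\tau)\|_2 = O_p(1)$ is exactly Assumption~\ref{ass:asym}(iv), and the bound $\sup_{b_n(\cdot)\in\mathcal{B}(\delta),\tau}\|r_n II_n(b_n(\tau),\tau)\|_2 \le \varepsilon'$ with high probability for $\delta$ small is Assumption~\ref{ass:asym}(ii), the stochastic equicontinuity assumption, applied to the increment $v=(b_n(\tau)-\beta_n(\tau),\hat{\gamma}(b_n(\tau),\tau)-\gamma_n(\tau),\hat{\theta}(b_n(\tau),\tau))$, which has $\|v\|_2\le\delta$ with probability approaching one by the preliminary consistency step.

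For the mean piece $\overline{\mathbb{P}}_n\hat{f}_\tau$, the smoothness of $f_{y_{i,j}|W_{i,j},X_{i,j},Z_{i,j}}$ (Assumption~\ref{ass:id}(iv)) lets me apply a mean value theorem in the indicator's argument: writing $1\{y_{i,j}\le X_{i,j}b_n(\tau)+W_{i,j}^\top\hat\gamma(b_n(\tau),\tau)+\hat\Phi_{i,j}^\top(\tau)\hat\theta(b_n(\tau),\tau)\}$ and comparing with the truth produces, after taking expectations,
\begin{equation*}
\overline{\mathbb{P}}_n\hat{f}_\tau(\cdot) = \hat{F}_1(b_n(\tau),\tau)\begin{pmatrix}\hat\gamma(b_n(\tau),\tau)-\gamma_n(\tau)\\ \hat\theta(b_n(\tau),\tau)\end{pmatrix} + \hat{F}_2(b_n(\tau),\tau)(b_n(\tau)-\beta_n(\tau))
\end{equation*}
for an intermediate point $\hat\delta_{i,j}(\tau)$ as in the statement. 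Assumption~\ref{ass:asym}(iii) then gives $\sup_{b_n(\cdot),\tau}\|\hat F_1(b_n(\tau),\tau)-Q_{\Psi,\Psi}(\tau)\|_{op}+\|\hat F_2(b_n(\tau),\tau)-Q_{\Psi,X}(\tau)\|_{op}\le\varepsilon'$ with probability tending to one as $\delta\downarrow 0$, because $\hat\delta_{i,j}(\tau)$ is controlled by $\|v\|_2$.

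Combining these three pieces and inverting $\hat F_1(b_n(\tau),\tau)$, which is possible uniformly because $Q_{\Psi,\Psi}(\tau)$ is invertible with eigenvalues bounded away from zero by Assumption~\ref{ass:asym}(vi) and $\hat F_1$ is uniformly close, yields \eqref{eq:betagammarate}. The main obstacle is the interplay between the non-smooth check function and the random perturbation $v$: the $II_n$ term must be shown uniformly $o_p(r_n^{-1})$ over $(b_n(\cdot),\tau,\hat\gamma,\hat\theta)$ ranging in a shrinking ball whose radius is itself random, which is exactly the content of Assumption~\ref{ass:asym}(ii) and requires applying it along the realized trajectory of $(\hat\gamma(b_n(\tau),\tau),\hat\theta(b_n(\tau),\tau))$ via a standard maximal inequality plus the preliminary consistency.
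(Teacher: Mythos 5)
Your proposal is correct and follows essentially the same route as the paper's proof: the approximate subgradient condition, the three-way decomposition into $I_n$, $II_n$, and the mean term, consistency of $(\hat\gamma,\hat\theta)$ via convexity and uniqueness from Assumption \ref{ass:asym}(i), stochastic equicontinuity (Assumption \ref{ass:asym}(ii)) for $II_n$, a mean-value expansion of the mean term with Assumption \ref{ass:asym}(iii) controlling $\hat F_1,\hat F_2$, and inversion of $\hat F_1$ via Assumption \ref{ass:asym}(vi). The only slip is the sign in your displayed expansion of $\overline{\mathbb{P}}_n\hat f_\tau$: since $f_\tau$ involves $\tau-1\{\cdot\le 0\}$, the expansion reads $-\hat F_1(\cdot)(\hat\gamma-\gamma_n,\hat\theta)^\top-\hat F_2(\cdot)(b_n-\beta_n)$, which is what produces the signs in \eqref{eq:betagammarate} after rearranging.
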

\begin{proof}
	By Assumption \ref{ass:id}, the sub-gradient condition for $(\hat{\gamma}(b_n(\tau),\tau),\hat{\theta}(b_n(\tau),\tau))$ implies 
	\begin{align}
	o_p(1/r_n) & = \mathbb{P}_n\hat{f}_{\tau}(D_{i,j},b_n(\tau), \hat{\gamma}(b_n(\tau),\tau),\hat{\theta}(b_n(\tau),\tau)) \notag \\
	& = (\mathbb{P}_{n}-\overline{\mathbb{P}}_n)\hat{f}_{\tau}(D_{i,j},b_n(\tau), \hat{\gamma}(b_n(\tau),\tau),\hat{\theta}(b_n(\tau),\tau)) + \overline{\mathbb{P}}_n\hat{f}_{\tau}(D_{i,j},b_n(\tau), \hat{\gamma}(b_n(\tau),\tau),\hat{\theta}(b_n(\tau),\tau)) \notag \\
	& = (\mathbb{P}_{n}-\overline{\mathbb{P}}_n)f_{\tau}(D_{i,j},\beta_n(\tau),\gamma_n(\tau),0)  \notag \\
	& + (\mathbb{P}_{n}-\overline{\mathbb{P}}_n)\left(\hat{f}_{\tau}(D_{i,j},b_n(\tau), \hat{\gamma}(b_n(\tau),\tau),\hat{\theta}(b_n(\tau),\tau)) - f_{\tau}(D_{i,j},\beta_n(\tau),\gamma_n(\tau),0)  \right) \notag \\
	& +  \overline{\mathbb{P}}_n\hat{f}_{\tau}(D_{i,j},b_n(\tau), \hat{\gamma}(b_n(\tau),\tau),\hat{\theta}(b_n(\tau),\tau)) \notag \\
	& = I_n(\tau) + II_n(b_n(\tau),\tau) + III_n(b_n(\tau),\tau),
	\label{eq:expand1}
	\end{align}
	where the $o_p(1/r_n)$  term on the LHS of the above display holds uniformly over $\tau \in \Upsilon,b_n(\cdot) \in B(\delta)$. For the first term, we note that $\overline{\mathbb{P}}_nf_{\tau}(D_{i,j},\beta_n(\tau),\gamma_n(\tau),0)=0$. Then, by Assumption \ref{ass:asym}(iv), we have, 
	\begin{align*}
	\sup_{\tau \in \Upsilon} ||r_n I_n(\tau)||_2 = O_p(1).
	\end{align*}
	
	For the second term, note
	\begin{align*}
	& \sup_{\tau \in \Upsilon}||\hat{\gamma}(b_n(\tau),\tau) -\gamma_n(\tau)||_2 \\
	& \leq \sup_{\tau \in \Upsilon,b \in \mathcal{B}}||\hat{\gamma}(b,\tau) -\gamma_n(b,\tau)||_2  + \sup_{\tau \in \Upsilon, b ,b' \in \mathcal{B},||b-b'||_2 \leq \delta}||\gamma_n(b,\tau) -\gamma_n(b',\tau)||_2 \\
	& \leq \sup_{\tau \in \Upsilon,b \in \mathcal{B}}||\hat{\gamma}(b,\tau) -\gamma_n(b,\tau)||_2 + \sup_{\tau \in \Upsilon, b ,b' \in \mathcal{B},||b-b'||_2 \leq \delta}||\gamma_\infty(b,\tau) -\gamma_\infty(b',\tau)||_2 \\
	&  + 2 \sup_{\tau \in \Upsilon,b \in \mathcal{B}}||\gamma_\infty(b,\tau) -\gamma_n(b,\tau)||_2.
	\end{align*}
	
	By \citet[Theorem 1]{K09} and the fact that both $\hat{\mathcal Q}_n(b,r,t,\tau)$ and $\mathcal Q_n(b,r,t,\tau)$ are convex, we have 
	\begin{align*}
	\sup_{(b,\tau) \in \mathcal{B} \times \Upsilon}\left(||\hat{\gamma}(b,\tau)-\gamma_n(b,\tau)||_2 + ||\hat{\theta}(b,\tau)-\theta_n(b,\tau)||_2 \right) = o_p(1) 
	\end{align*}	
	and
	\begin{align*}
	\sup_{(b,\tau) \in \mathcal{B} \times \Upsilon}\left(||\gamma_\infty(b,\tau)-\gamma_n(b,\tau)||_2 + ||\theta_\infty(b,\tau)-\theta_n(b,\tau)||_2 \right) = o(1).
	\end{align*}
	In addition, by Assumption \ref{ass:asym}(i),  $\gamma_\infty(b,\tau)$ is continuous in $b \in \mathcal B$ uniformly over $\tau \in \Upsilon$. Therefore, for any $\delta'>0$ and $\eps>0$, there exist $\underline{n}$ and $\overline{\delta}$ such that for $n \geq \underline{n}$ and $\delta' \leq \overline{\delta}$, with probability greater than $1-\eps$, 
	\begin{align*}
	\sup_{\tau \in \Upsilon}||\hat{\gamma}(b_n(\tau),\tau) -\gamma_n(\tau)||_2 \leq \delta'.
	\end{align*}
	Similarly, we have
	\begin{align*}
	\sup_{\tau \in \Upsilon}||\hat{\theta}(b_n(\tau),\tau) - 0||_2 \leq \delta'.
	\end{align*}
	
	Then,  for any $\eps>0$, there exist $\underline{n}$ and $\overline{\delta}$ such that for $n \geq \underline{n}$ and $\delta',\delta \leq \overline{\delta}$, with probability greater than $1-\eps$, we have,
	\begin{align*}
	&\sup_{b_n(\cdot) \in B(\delta),\tau \in \Upsilon}  ||r_n II_n(b_n(\tau),\tau)||_2 \\
	& \leq \sup_{||v||_2  \leq 2\delta'+\delta, \tau \in \Upsilon} \biggl\Vert r_n(\mathbb{P}_{n}-\overline{\mathbb{P}}_n)\biggl(\hat{f}_{\tau}(D_{i,j},\beta_n(\tau)+v_b,\gamma_n(\tau)+v_r,v_t) - f_{\tau}(D_{i,j},\beta_n(\tau),\gamma_n(\tau),0)  \biggr)\biggr\Vert_2,
	\end{align*}
	where $v = (v_b^\top, v_r^\top, v_t^\top)^\top$. 
	Then, by Assumption \ref{ass:asym}(ii), for any $\eps>0$, there exist $\underline{n}$ and $\overline{\delta}$ such that for $n \geq \underline{n}$ and $\delta,\delta' \leq \overline{\delta}$, we have, with probability greater than $1-2\eps$, 
	\begin{align*}
	\sup_{b_n(\cdot) \in B(\delta),\tau \in \Upsilon} ||r_n II_n(b_n(\tau),\tau)||_2 \leq \eps'.
	\end{align*}
	
	For the third term in \eqref{eq:expand1}, we have 
	\begin{align}
	& \overline{\mathbb{P}}_n\hat{f}_{\tau}(D_{i,j},b_n(\tau), \hat{\gamma}(b_n(\tau),\tau),\hat{\theta}(b_n(\tau),\tau)) \notag \\
	& = \sum_{j \in [J]} \xi_j \overline{\mathbb{P}}_{n,j}(\tau - 1\{y_{i,j} - X_{i,j} b_n(\tau) - W_{i,j}^\top \hat{\gamma}(b_n(\tau),\tau) - \hat{\Phi}_{i,j}^\top(\tau) \hat{\theta}(b_n(\tau),\tau) \leq 0 \})\hat{\Psi}_{i,j}(\tau) \hat V_{i,j}(\tau)\notag \\
	& =-\overline{\mathbb{P}}_{n,j}f_{\eps_{i,j}(\tau)}(\hat{\delta}_{i,j}(\tau)|W_{i,j},Z_{i,j})\hat{\Psi}_{i,j}(\tau)X_{i,j}(b_n(\tau) - \beta_n(\tau)) \hat V_{i,j}(\tau) \notag \\
	& -  \overline{\mathbb{P}}_{n,j}f_{\eps_{i,j}(\tau)}(\hat{\delta}_{i,j}(\tau)|W_{i,j},Z_{i,j})\hat{\Psi}_{i,j}(\tau)\hat{\Psi}_{i,j}^\top(\tau) \hat V_{i,j}(\tau)
	\begin{pmatrix}
	\hat{\gamma}(b_n(\tau),\tau) -\gamma_n(\tau)\\
	\hat{\theta}(b_n(\tau),\tau)
	\end{pmatrix},
	\label{eq:Pjf}
	\end{align}
	where $\hat{\delta}_{i,j}(\tau) \in (0, X_{i,j}(b_n(\tau) - \beta_n(\tau)) + W_{i,j}^\top (\hat{\gamma}(b_n(\tau),\tau)-\gamma_n(\tau)) + \hat{\Phi}_{i,j}^\top(\tau) \hat{\theta}(b_n(\tau),\tau))$.  For any $\eps>0$, there exist $\underline{n}$ and $\overline{\delta}$ such that for $n \geq \underline{n}$ and $\delta,\delta' \leq \overline{\delta}$, we have, with probability greater than $1-\eps$, 
	\begin{align*}
	\sup_{\tau \in \Upsilon}\left( ||b_n(\tau) - \beta_n(\tau)||_2 + ||\hat{\gamma}(b_n(\tau),\tau)-\gamma_n(\tau)||_2 + ||\hat{\theta}(b_n(\tau),\tau)||_2\right) \leq \delta+2\delta'.
	\end{align*}
	This implies, with probability greater than $1-\eps$, 
	\begin{align*}
	& \sup_{b_n(\cdot) \in B(\delta),\tau \in \Upsilon,j \in [J]}||\hat F_2(b_n(\tau),\tau) - Q_{\Psi,X,j}(\tau)||_{op} \\
	& \leq \left(\max_{j \in [J]} \xi_j\right) \sup_{b_n(\cdot) \in B(\delta),\tau \in \Upsilon,j \in [J]}\left\Vert\overline{\mathbb{P}}_{n,j} f_{\eps_{i,j}(\tau)}(\hat{\delta}_{i,j}(\tau)|W_{i,j},Z_{i,j})\hat{\Psi}_{i,j}(\tau)X_{i,j}\hat V_{i,j}(\tau) - Q_{\Psi,X,j}(\tau) \right\Vert_{op} \\
	& \leq \left(\max_{j \in [J]} \xi_j\right) \sup \left\Vert \overline{\mathbb{P}}_{n,j}f_{\eps_{i,j}(\tau)}(\delta_{i,j}(v,\tau)|W_{i,j},Z_{i,j})\hat{\Psi}_{i,j}(\tau)X_{i,j} \hat V_{i,j}(\tau) - Q_{\Psi,X,j}(\tau) \right\Vert_{op},
	\end{align*}
	where the supremum in the second inequality is taken over $\{(j,v,\tau): j \in [J], ||v||_2 \leq  \delta+2\delta', \tau \in \Upsilon\}$, $v = (v_b^\top,v_r^\top,v_t^\top)^\top$, and 
	\begin{align*}
	\delta_{ij}(v,\tau) = X_{i,j} v_b + W_{i,j}^\top v_r + \hat \Phi_{i,j}^\top v_t.
	\end{align*}
	
	Then, Assumption \ref{ass:asym}(iii) implies, with probability greater than $1-2\eps$, 
	\begin{align*}
	\sup_{b_n(\cdot) \in B(\delta),\tau \in \Upsilon}||\hat F_2(b_n(\tau),\tau) - Q_{\Psi,X}(\tau)||_{op} \leq \eps'. 
	\end{align*}
	Similarly, we have, with probability greater than $1-2\eps$, 
	\begin{align*}
	\sup_{b_n(\cdot) \in B(\delta),\tau \in \Upsilon}||\hat F_1(b_n(\tau),\tau) - Q_{\Psi,\Psi}(\tau)||_{op} \leq \eps'. 
	\end{align*}

	Then, by Assumption \ref{ass:asym} and the fact that $\eps$ can be made arbitrarily small, we have, with probability greater than $1-2\eps$, $\hat F_1(b_n(\tau),\tau) $ is invertible. Therefore, \eqref{eq:expand1} implies, with probability greater than $1-2\eps$, 
	\begin{align*}
	& \begin{pmatrix}
	r_n \left(\hat{\gamma}(b_n(\tau),\tau) -\gamma_n(\tau)\right) \\
	r_n \hat{\theta}(b_n(\tau),\tau)
	\end{pmatrix} \notag \\
	& = \left[\hat F_1(b_n(\tau),\tau)\right]^{-1} r_n \left[ I_n(\tau) + II_n(b_n(\tau),\tau) + o_p(1/r_n)  - \hat F_2(b_n(\tau),\tau)(b_n(\tau) - \beta_n(\tau)) \right]
	\end{align*}
	where the $o_p(1/r_n) $ term on the RHS of the above display holds uniformly over $\tau \in \Upsilon,b_n(\cdot) \in B(\delta)$. 
\end{proof}

\subsection{Technical Results for the IVQR Estimator}
\begin{lemma}
	Suppose Assumptions \ref{ass:id}, \ref{ass:asym}, and \ref{ass:id2} hold. Then, 
	\begin{align*}
	& \begin{pmatrix}
	r_n \left(\hat{\beta}(\tau) - \beta_n(\tau)\right) \\
	r_n\left(\hat{\gamma}(\tau) -\gamma_n(\tau)\right) \\
	r_n\hat{\theta}(\tau)
	\end{pmatrix}\\
	& = \begin{pmatrix}
	&	\left[Q_{\Psi,X}^\top(\tau) Q_{\Psi,\Psi}^{-1}(\tau) \omega^\top A_1(\tau)  \omega Q^{-1}_{\Psi,\Psi}(\tau) Q_{\Psi,X}(\tau)\right]^{-1}Q_{\Psi,X}^\top(\tau) Q_{\Psi,\Psi}^{-1}(\tau) \omega^\top A_1(\tau) \omega Q^{-1}_{\Psi,\Psi}(\tau) \\
	&	Q_{\Psi,\Psi}^{-1}(\tau)\biggl[ \mathbb{I}_{d_w+d_\phi} -  Q_{\Psi,X}(\tau)\left[Q_{\Psi,X}^\top(\tau) Q_{\Psi,\Psi}^{-1}(\tau) \omega^\top A_1(\tau)  \omega Q^{-1}_{\Psi,\Psi}(\tau) Q_{\Psi,X}(\tau)\right]^{-1} \\
	& \times Q_{\Psi,X}^\top(\tau) Q_{\Psi,\Psi}^{-1}(\tau) \omega^\top A_1(\tau) \omega Q^{-1}_{\Psi,\Psi}(\tau) \biggr] 
	\end{pmatrix}\\
	& \times r_n\mathbb{P}_n f_{\tau}(D_{i,j},\beta_n(\tau),\gamma_n(\tau),0) + o_p(1),
	\end{align*}
	where $o_p(1)$ term holds uniformly over $\tau \in \Upsilon$. 
	\label{lem:expand}
\end{lemma}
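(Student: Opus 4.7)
\textbf{Proof proposal for Lemma \ref{lem:expand}.} The plan is to combine the uniform expansion of the profiled estimators $(\hat\gamma(b,\tau),\hat\theta(b,\tau))$ from Lemma \ref{lem:basic} with the first-order optimality of $\hat\beta(\tau)$ defined via \eqref{eq:br2}. First I would establish uniform consistency $\sup_{\tau \in \Upsilon}\|\hat\beta(\tau)-\beta_n(\tau)\|_2 = o_p(1)$. This follows from Assumptions \ref{ass:id} and \ref{ass:id2}(i)--(iii): under uniform full-column-rank of $\partial \Pi/\partial(b^\top,r^\top)$ and the simple-connectedness of the image, the map $b \mapsto \|\theta_\infty(b,\tau)\|_{A_1(\tau)}$ has a well-separated minimum at $\beta_n(\tau)$ uniformly over $\tau$, and an argmin/continuous mapping argument using $\sup_{(b,\tau)}\|\hat\theta(b,\tau)-\theta_\infty(b,\tau)\|_2 = o_p(1)$ (which is guaranteed by Assumption \ref{ass:asym}(i) and \citet{K09}) then delivers uniform consistency of $\hat\beta(\tau)$.

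Next I would apply Lemma \ref{lem:basic} with $b_n(\tau) = \hat\beta(\tau)$, which (after multiplying through by $\omega$ to pick off the $\theta$-block) yields, uniformly in $\tau$,
\begin{align*}
r_n\,\hat\theta(\tau) = \omega\,Q_{\Psi,\Psi}^{-1}(\tau)\bigl[r_n I_n(\tau) - Q_{\Psi,X}(\tau)\,r_n(\hat\beta(\tau)-\beta_n(\tau))\bigr] + o_p(1),
\end{align*}
and analogously for the $\gamma$-block. To pin down $r_n(\hat\beta(\tau)-\beta_n(\tau))$ I would invoke the optimality of $\hat\beta(\tau)$ in \eqref{eq:br2}: because $\hat\theta(b,\tau)$ admits the uniform linear stochastic expansion above (with slope in $b$ converging to $-Q_{\Psi,\Psi}^{-1}(\tau)Q_{\Psi,X}(\tau)$), the objective $\|\hat\theta(b,\tau)\|_{\hat A_1(\tau)}^2$ is locally approximately quadratic in $b$, and the approximate first-order condition yields
\begin{align*}
Q_{\Psi,X}^\top(\tau)Q_{\Psi,\Psi}^{-1}(\tau)\omega^\top A_1(\tau)\omega Q_{\Psi,\Psi}^{-1}(\tau)\bigl[r_n I_n(\tau) - Q_{\Psi,X}(\tau)\,r_n(\hat\beta(\tau)-\beta_n(\tau))\bigr] = o_p(1),
\end{align*}
uniformly in $\tau$. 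Solving for $r_n(\hat\beta(\tau)-\beta_n(\tau))$ using Assumption \ref{ass:id2}(iii) to guarantee the invertibility of $Q_{\Psi,X}^\top Q_{\Psi,\Psi}^{-1}\omega^\top A_1 \omega Q_{\Psi,\Psi}^{-1}Q_{\Psi,X}$ gives the first row of the claimed expansion. Substituting this back into the $\hat\gamma,\hat\theta$ expansion from Lemma \ref{lem:basic} and simplifying produces the second block.

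The main obstacle is making the ``approximate FOC'' argument rigorous, since $\hat\theta(b,\tau)$ is not smooth in $b$ and $\hat\beta(\tau)$ is obtained by a grid search rather than by setting a derivative to zero. My approach would be to work directly with the minimizing inequality $\|\hat\theta(\hat\beta(\tau),\tau)\|_{\hat A_1(\tau)} \le \|\hat\theta(\beta_n(\tau),\tau)\|_{\hat A_1(\tau)}$ and exploit the uniform linear expansion of $\hat\theta(b,\tau)$ in a $\delta$-neighborhood of $\beta_n(\tau)$ (Lemma \ref{lem:basic} allows $\delta \to 0$), together with the local quadratic behavior of $\|\cdot\|_{A_1(\tau)}^2$, to show that any minimizer must satisfy the FOC up to $o_p(r_n^{-1})$. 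A standard empirical-process/convexity argument in the spirit of \citet{Chernozhukov-Hansen(2006)} will deliver uniformity in $\tau \in \Upsilon$, completing the proof.
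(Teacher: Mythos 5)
Your proposal follows the same overall architecture as the paper's proof: uniform consistency of $\hat\beta(\tau)$ from the well-separated minimum of $b \mapsto \|\theta_\infty(b,\tau)\|_{A_1(\tau)}$ (via Assumptions \ref{ass:id2}(i)--(ii) and the uniform convergence of $\hat\theta(b,\tau)$), then the expansion of Lemma \ref{lem:basic} evaluated at $b_n(\tau)=\hat\beta(\tau)$, then solving for $r_n(\hat\beta(\tau)-\beta_n(\tau))$ from the minimizing property, and finally back-substitution for the $(\gamma,\theta)$-block. The one place where you genuinely diverge is the characterization of the minimizer: you propose an approximate first-order condition for $\|\hat\theta(b,\tau)\|^2_{\hat A_1(\tau)}$, whereas the paper sidesteps the non-smoothness entirely by an argmin-convergence argument --- after establishing the rate $\sup_{\tau}\|r_n(\hat\beta(\tau)-\beta_n(\tau))\|_2=O_p(1)$, it shows that $u\mapsto\|\hat\theta(\beta_n(\tau)+u/r_n,\tau)\|_{\hat A_1(\tau)}$ converges uniformly over $\{\|u\|_2\le C\}\times\Upsilon$ to the quadratic $\|\omega Q_{\Psi,\Psi}^{-1}(\tau)[r_n\mathbb{P}_nf_\tau-Q_{\Psi,X}(\tau)u]\|_{A_1(\tau)}$ and invokes \citet[Lemma B.1]{Chernozhukov-Hansen(2006)} to conclude that $\hat u(\tau)$ converges to the unique minimizer of that quadratic, which is exactly the claimed linear formula. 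Your FOC route reaches the same answer and can be made rigorous, but it is more delicate precisely for the reason you flag (non-smoothness, grid search); the argmin-convergence lemma is the cleaner tool. Two points you should make explicit rather than leave implicit: (i) the intermediate rate step --- the paper's Step 2 uses the inequality $\|r_n\hat\theta(\hat\beta(\tau),\tau)\|_{\hat A_1(\tau)}\le\|r_n\hat\theta(\beta_n(\tau),\tau)\|_{\hat A_1(\tau)}=O_p(1)$ together with the lower bound $\underline c\,\|r_n(b_n(\tau)-\beta_n(\tau))\|_2^2-O_p(1)$ on the objective to get $r_n$-consistency \emph{before} any localized expansion in $u$ is legitimate; and (ii) invertibility of $Q_{\Psi,X}^\top Q_{\Psi,\Psi}^{-1}\omega^\top A_1\omega Q_{\Psi,\Psi}^{-1}Q_{\Psi,X}$ from Assumption \ref{ass:id2}(iii), which you do cite. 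With those two steps spelled out, your argument is sound.
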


\begin{proof}
	We divide the proof into three steps. In the first step, we show $(\hat{\beta}(\tau),\hat{\gamma}(\tau),\hat{\theta}(\tau))$ are consistent. In the second step, we derive convergence rates of $(\hat{\beta}(\tau),\hat{\gamma}(\tau),\hat{\theta}(\tau))$. In the third step, we derive linear expansions for $(\hat{\beta}(\tau),\hat{\gamma}(\tau),\hat{\theta}(\tau))$.

	\textbf{Step 1.} We first show the consistency of $(\hat{\beta}(\tau),\hat{\gamma}(\tau),\hat{\theta}(\tau))$. Note by construction, we have $\gamma(\beta_n(\tau),\tau) = \gamma_n(\tau)$, $\theta_n(\beta_n(\tau),\tau) = 0$, $\hat{\gamma}(\tau) = \hat{\gamma}(\hat{\beta}(\tau),\tau)$, and $\hat{\theta}(\tau) = \hat{\theta}(\hat{\beta}(\tau),\tau)$. By \citet[Theorem 1]{K09} and the fact that both $\hat{\mathcal Q}_n(b,r,t,\tau)$ and $\mathcal Q_n(b,r,t,\tau)$ are convex, we have 
	\begin{align*}
	\sup_{(b,\tau) \in \mathcal{B} \times \Upsilon}\left(||\hat{\gamma}(b,\tau)-\gamma_n(b,\tau)||_2 + ||\hat{\theta}(b,\tau)-\theta_n(b,\tau)||_2 \right) = o_p(1). 
	\end{align*}	
	Similarly, we have
	\begin{align*}
	\sup_{(b,\tau) \in \mathcal{B} \times \Upsilon}\left(||\gamma_\infty(b,\tau)-\gamma_n(b,\tau)||_2 + ||\theta_\infty(b,\tau)-\theta_n(b,\tau)||_2 \right) = o(1).
	\end{align*}
	This implies 
	\begin{align*}
	\sup_{(b,\tau) \in \mathcal{B} \times \Upsilon}\left|||\hat{\theta}(b,\tau)||_{\hat{A}_1(\tau)} - ||\theta_\infty(b,\tau)||_{A_1(\tau)} \right| = o_p(1),
	\end{align*}	
	and $0 = \lim_{n \rightarrow \infty}\theta_n(\beta_n(\tau),\tau) = \theta_\infty(\beta_0(\tau),\tau)$. In addition, under Assumptions \ref{ass:id2}(i) and \ref{ass:id2}(ii), \citet[Proof of Theorem 3]{Chernozhukov-Hansen(2005)} showed  $\theta_\infty(b,\tau)$ has a unique root for $\tau \in \Upsilon$,  which implies 
	$||\theta_\infty(b,\tau)||_{A_1(\tau)}$ is uniquely minimized at $b = \beta_0(\tau)$. Then, \citet[Lemma B.1]{Chernozhukov-Hansen(2006)} implies 
	\begin{align*}
	\sup_{\tau \in \Upsilon}||\hat{\beta}(\tau) - \beta_0(\tau)||_2 = o_p(1), 
	\end{align*}
	and thus, 
	\begin{align*}
	\sup_{\tau \in \Upsilon}||\hat{\beta}(\tau) - \beta_n(\tau)||_2 = o_p(1).
	\end{align*}
	Then, we have
	\begin{align*}
	& \sup_{\tau \in \Upsilon}||\hat{\gamma}(\hat{\beta}(\tau),\tau) -\gamma_n(\tau)||_2 \\
	&\leq \sup_{\tau \in \Upsilon,b \in \mathcal{B}}||\hat{\gamma}(b,\tau) -\gamma_\infty(b,\tau)||_2 + \sup_{\tau \in \Upsilon}||\gamma_\infty(\hat{\beta}(\tau),\tau) -\gamma_\infty(\beta_n(\tau),\tau)||_2 = o_p(1),
	\end{align*}
	and similarly, 
	\begin{align*}
	\sup_{\tau \in \Upsilon}||\hat{\theta}(\hat{\beta}(\tau),\tau) - 0||_2 =o_p(1).
	\end{align*}

	\textbf{Step 2.} We derive the convergence rates of $\hat{\beta}(\tau)$, $\hat{\gamma}(\tau)$, and $\hat{\theta}(\tau)$. Let $\mathcal{B}(\delta) = \{b(\cdot) \in \ell^\infty(\Upsilon): \sup_{\tau \in \Upsilon}||b(\tau)-\beta_n(\tau)||_2 \leq \delta\}$. For any $\delta>0$, we have $\hat{\beta}(\cdot) \in \mathcal{B}(\delta)$ w.p.a.1. Let $b_n(\tau)$ be a generic point in $\mathcal{B}(\delta)$. Recall $\omega = (0_{d_w \times d_\phi},\mathbb{I}_{d_\phi})$. Then, Lemma \ref{lem:basic} implies, with probability greater than $1-c\eps$,
	\begin{align*}
	& ||r_n\hat{\theta}(b_n(\tau),\tau)||_{\hat{A}_1(\tau)}^2 \\
	& = r_n^2 \left[ I_n(\tau) + II_n(b_n(\tau),\tau) + o_p(1/r_n) -\hat F_2(b_n(\tau),\tau)(b_n(\tau) - \beta_n(\tau)) \right]^\top \\
	& \times \left[\hat F_1(b_n(\tau),\tau)\right]^{-1} \omega^\top \hat{A}_1(\tau) \omega  \left[\hat F_1(b_n(\tau),\tau)\right]^{-1} \\
	& \times \left[ I_n(\tau) + II_n(b_n(\tau),\tau) + o_p(1/r_n) - \hat F_2(b_n(\tau),\tau)(b_n(\tau) - \beta_n(\tau)) \right] \\
	& \geq  r_n^2 (b_n(\tau) - \beta_n(\tau))^\top \biggl\{\hat F_2^\top(b_n(\tau),\tau)\left[\hat F_1(b_n(\tau),\tau)\right]^{-1} \omega^\top \hat{A}_1(\tau) \omega  \left[\hat F_1(b_n(\tau),\tau)\right]^{-1}\hat F_2(b_n(\tau),\tau)\biggr\}\\
	& \times (b_n(\tau) - \beta_n(\tau)) -  O_p(1),
	\end{align*}
	where the $O_p(1)$ term on the RHS of the above display holds uniformly over $\tau \in \Upsilon,b_n(\cdot) \in B(\delta)$. In addition, Lemma \ref{lem:basic} implies
	\begin{align*}
	& \sup_{b_n(\cdot) \in B(\delta),\tau \in \Upsilon}\biggl\Vert \hat F_2^\top(b_n(\tau),\tau)\left[\hat F_1(b_n(\tau),\tau)\right]^{-1} \omega^\top \hat{A}_1(\tau) \omega  \left[\hat F_1(b_n(\tau),\tau)\right]^{-1}\hat F_2(b_n(\tau),\tau) \\
	& - Q_{\Psi,X}^\top(\tau) Q_{\Psi,\Psi}^{-1}(\tau) \omega^T A_1(\tau) \omega Q_{\Psi,\Psi}^{-1}(\tau) Q_{\Psi,X}(\tau)\biggr\Vert_{op} = o_p(1). 
	\end{align*}
	Therefore, by Assumption \ref{ass:asym}, there exists a constant $\underline{c}$ independent of $\tau$ and $b_n(\cdot)$ such that 
	\begin{align*}
	& r_n^2 (b_n(\tau) - \beta_n(\tau))^\top \biggl\{\hat F_2^\top(b_n(\tau),\tau)\left[\hat F_1(b_n(\tau),\tau)\right]^{-1} \omega^\top \hat{A}_1(\tau) \omega  \left[\hat F_1(b_n(\tau),\tau)\right]^{-1}\hat F_2(b_n(\tau),\tau)\biggr\} \\
	& \times (b_n(\tau) - \beta_n(\tau)) - O_p(1) \\
	& \geq \underline{c}||r_n(b_n(\tau) - \beta_n(\tau))||_2^2 - O_p(1),
	\end{align*}
	where the $O_p(1)$ term on the RHS of the above display holds uniformly over $\tau \in \Upsilon,b_n(\cdot) \in B(\delta)$. 
	
	On the other hand, we have $\hat{\beta}(\tau) \in B(\delta)$ w.p.a.1 for any $\delta>0$ and 
	\begin{align*}
	\sup_{\tau \in \Upsilon}	||r_n\hat{\theta}(\hat{\beta}(\tau),\tau)||_{\hat{A}_1(\tau)} \leq \sup_{\tau \in \Upsilon} ||r_n\hat{\theta}(\beta_n(\tau),\tau)||_{\hat{A}_1(\tau)} = O_p(1), 
	\end{align*}
	where the last equality holds by Lemma \ref{lem:basic}. This implies
	\begin{align*}
	\underline{c}\sup_{\tau \in \Upsilon}||r_n(\hat{\beta}(\tau) - \beta_n(\tau))||_2^2 - O_p(1) \leq \sup_{\tau \in \Upsilon}||r_n\hat{\theta}(\hat{\beta}(\tau),\tau)||_{\hat{A}_1(\tau)} = O_p(1),
	\end{align*}
	and thus, 
	\begin{align}
	\sup_{\tau \in \Upsilon}||r_n(\hat{\beta}(\tau) - \beta_n(\tau))||_2 = O_p(1).
	\label{eq:alpharate}
	\end{align}
	Plugging \eqref{eq:alpharate} into \eqref{eq:betagammarate}, we obtain that 
	\begin{align*}
	\sup_{\tau \in \Upsilon}||r_n(\hat{\gamma}(\tau) -\gamma_n(\tau))||_2 = O_p(1) \quad \text{and} \quad \sup_{\tau \in \Upsilon}||r_n(\hat{\theta}(\tau) - 0)||_2 = O_p(1). 
	\end{align*}

	\textbf{Step 3.} Next, we derive the linear expansions for $\hat{\beta}(\tau)$ and $\hat{\gamma}(\tau)$. Let $\hat{u}(\tau) = r_n(\hat{\beta}(\tau) - \beta_n(\tau))$. Then, Step 2 shows $\sup_{\tau \in \Upsilon}||\hat{u}(\tau)||_2 = O_p(1)$. For any $\eps>0$, there exists a constant $C>0$ such that with probability greater than $1-\eps$, we have, for all $\tau \in \Upsilon$,
	\begin{align*}
	\hat{u}(\tau) = \arg \inf_{u: ||u||_2 \leq C}||\hat{\theta}(\beta_n(\tau) + u/r_n,\tau )||_{\hat{A}_1(\tau)}.
	\end{align*}
	Denote $b_n(\tau) = \beta_n(\tau) + u/r_n$ for $||u||_2 \leq C$. Then, by Lemma \ref{lem:basic}, we have 
	\begin{align*}
	& ||\hat{\theta}(\beta_n(\tau) + u/r_n,\tau )||_{\hat{A}_1(\tau)} \\
	& = \biggl[r_n\mathbb{P}_nf_{\tau}(D_{i,j},\beta_n(\tau),\gamma_n(\tau),0) - \hat F_2(\beta_n(\tau) + u/r_n,\tau) u + r_n II_n(\beta_n(\tau)+ u/r_n,\tau) + o_p(1)\biggr]^\top \\
	& \times \left[ \hat F_1^{-1}(\beta_n(\tau) + u/r_n,\tau)\omega^\top A_1(\tau) \omega \hat F_1^{-1}(\beta_n(\tau) + u/r_n,\tau)\right] \\
	& \times \biggl[r_n\mathbb{P}_nf_{\tau}(D_{i,j},\beta_n(\tau),\gamma_n(\tau),0) - \hat F_2(\beta_n(\tau) + u/r_n,\tau) u + r_n II_n(\beta_n(\tau)+ u/r_n,\tau) + o_p(1)\biggr],
	\end{align*}
	where the $o_p(1)$ term is uniform over $\tau \in \Upsilon$ and $|u| \leq C$. 
	In addition, by Assumption \ref{ass:asym}, we have
	\begin{align}
	\sup_{\tau \in \Upsilon,||u||_2 \leq C} \left\Vert \hat F_1^{-1}(\beta_n(\tau) + u/r_n,\tau) - Q_{\Psi,\Psi}^{-1}(\tau)\right\Vert_{op} = o_p(1),
	\label{eq:Jrb1}
	\end{align}
	\begin{align*}
	\sup_{\tau \in \Upsilon,||u||_2 \leq C} \left\Vert \hat F_2(\beta_n(\tau) + u/r_n,\tau) - Q_{\Psi,X}(\tau)\right\Vert_{op} = o_p(1),
	\end{align*}
	and 
	\begin{align}
	\sup_{\tau \in \Upsilon,||u||_2 \leq C}r_n||II_n(\beta_n(\tau) + u/r_n,\tau)||_2 = o_p(1). 
	\label{eq:II2}
	\end{align}
	
	Then, we have
	\begin{align}\label{eq:thetahat_limit}
	\left||| \hat{\theta}(\beta_n(\tau) + u/r_n,\tau )||_{\hat{A}_1(\tau)} - || \omega Q^{-1}_{\Psi,\Psi}(\tau)\left[r_n\mathbb{P}_n f_{\tau}(D_{i,j},\beta_n(\tau),\gamma_n(\tau),0) - Q_{\Psi,X}(\tau) u \right]  ||_{A_1(\tau)}\right| = o_p(1).
	\end{align}
	Then, \citet[Lemma B.1]{Chernozhukov-Hansen(2006)} implies 
	\begin{align}
	\hat{u}(\tau) & = \left[Q_{\Psi,X}^\top(\tau) Q_{\Psi,\Psi}^{-1}(\tau) \omega^\top A_1(\tau)  \omega Q^{-1}_{\Psi,\Psi}(\tau) Q_{\Psi,X}(\tau)\right]^{-1} \notag \\
	& \times Q_{\Psi,X}^\top(\tau) Q_{\Psi,\Psi}^{-1}(\tau) \omega^\top A_1(\tau) \omega Q^{-1}_{\Psi,\Psi}(\tau) r_n\mathbb{P}_nf_{\tau}(D_{i,j},\beta_n(\tau),\gamma_n(\tau),0) + o_p(1),
	\label{eq:alpha_expand2}
	\end{align}
	where $o_p(1)$ term holds uniformly over $\tau \in \Upsilon$. Plugging \eqref{eq:alpha_expand2} into \eqref{eq:betagammarate}, we have
	\begin{align*}
	& \begin{pmatrix}
	r_n\left(\hat{\gamma}(\tau) -\gamma_n(\tau)\right) \\
	r_n\hat{\theta}(\tau)
	\end{pmatrix}\\
	& = \left[\hat F_1(\hat{\beta}(\tau),\tau)\right]^{-1} r_n \left[ I_n(\tau) + II_n(\hat{\beta}(\tau),\tau) + o_p(1/r_n) - \hat F_2(\hat{\beta}(\tau),\tau)(\hat{\beta}(\tau) - \beta_n(\tau)) \right] \\
	& = Q_{\Psi,\Psi}^{-1}(\tau)\biggl[ \mathbb{I}_{d_w+d_\phi} -  Q_{\Psi,X}(\tau)\left[Q_{\Psi,X}^\top(\tau) Q_{\Psi,\Psi}^{-1}(\tau) \omega^\top A_1(\tau)  \omega Q^{-1}_{\Psi,\Psi}(\tau) Q_{\Psi,X}(\tau)\right]^{-1}\\
	& \times Q_{\Psi,X}^\top(\tau) Q_{\Psi,\Psi}^{-1}(\tau) \omega^\top A_1(\tau) \omega Q^{-1}_{\Psi,\Psi}(\tau) \biggr] \times r_n\mathbb{P}_nf_{\tau}(D_{i,j},\beta_n(\tau),\gamma_n(\tau),0) + o_p(1),
	\end{align*}
	where both $o_p(1/r_n)$ and $o_p(1)$ terms hold uniformly over $\tau \in \Upsilon$. This concludes the proof. 
	
\end{proof}

\begin{lemma}
	If Assumptions \ref{ass:id} and \ref{ass:asym} hold, then 
	\begin{align*}
	\begin{pmatrix}
	r_n\left(\hat{\gamma}(\beta_n(\tau),\tau) -\gamma_n(\tau)\right) \\
	r_n\hat{\theta}(\beta_n(\tau),\tau)
	\end{pmatrix}=  Q_{\Psi,\Psi}^{-1}(\tau)r_n (\mathbb{P}_n - \overline{\mathbb{P}}_n)f_\tau(D,\beta_n(\tau),\gamma_n(\tau),0) + o_p(1),
	\end{align*}
	where the $o_p(1)$ terms hold uniformly over $\tau \in \Upsilon$.
	\label{lem:expand'}
\end{lemma}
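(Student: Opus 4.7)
The plan is to specialize Lemma \ref{lem:basic} by taking the generic trajectory $b_n(\cdot)$ to be the constant choice $b_n(\tau) \equiv \beta_n(\tau)$, which is trivially in $\mathcal{B}(\delta)$ for every $\delta \geq 0$. With this choice the ``bias'' term $\hat F_2(b_n(\tau),\tau)(b_n(\tau) - \beta_n(\tau))$ in the expansion \eqref{eq:betagammarate} vanishes identically (not merely asymptotically), so the conclusion of Lemma \ref{lem:basic} reduces to
\begin{align*}
\begin{pmatrix} r_n(\hat{\gamma}(\beta_n(\tau),\tau) - \gamma_n(\tau)) \\ r_n\hat{\theta}(\beta_n(\tau),\tau) \end{pmatrix}
= [\hat F_1(\beta_n(\tau),\tau)]^{-1}\left[r_n I_n(\tau) + r_n II_n(\beta_n(\tau),\tau) + o_p(1)\right],
\end{align*}
uniformly over $\tau \in \Upsilon$.

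Next I would invoke the three uniform controls already established in Lemma \ref{lem:basic}: for any $\eps'>0$, with arbitrarily high probability one has $\sup_{\tau \in \Upsilon}\|\hat F_1(\beta_n(\tau),\tau) - Q_{\Psi,\Psi}(\tau)\|_{op} \leq \eps'$ and $\sup_{\tau \in \Upsilon}\|r_n II_n(\beta_n(\tau),\tau)\|_2 \leq \eps'$; since $\eps'$ can be made arbitrarily small, both quantities are $o_p(1)$. Assumption \ref{ass:asym}(vi) gives that $Q_{\Psi,\Psi}(\tau)$ is uniformly invertible with a bounded inverse over $\Upsilon$, so by continuity of matrix inversion $[\hat F_1(\beta_n(\tau),\tau)]^{-1} = Q_{\Psi,\Psi}^{-1}(\tau) + o_p(1)$ uniformly. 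Meanwhile Assumption \ref{ass:asym}(iv), together with the identity $\overline{\mathbb{P}}_n f_\tau(D,\beta_n(\tau),\gamma_n(\tau),0) = 0$ implied by Assumption \ref{ass:id}(i), yields $\sup_{\tau \in \Upsilon}\|r_n I_n(\tau)\|_2 = O_p(1)$.

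Combining these, I can substitute $Q_{\Psi,\Psi}^{-1}(\tau)$ for $[\hat F_1(\beta_n(\tau),\tau)]^{-1}$ at the cost of $o_p(1) \cdot O_p(1) = o_p(1)$ applied to $r_n I_n(\tau)$, while the $r_n II_n$ and $o_p(1)$ remainders get absorbed after multiplication by the $O_p(1)$ matrix $[\hat F_1]^{-1}$. This gives the claimed expansion
\begin{align*}
\begin{pmatrix} r_n(\hat{\gamma}(\beta_n(\tau),\tau) - \gamma_n(\tau)) \\ r_n\hat{\theta}(\beta_n(\tau),\tau) \end{pmatrix}
= Q_{\Psi,\Psi}^{-1}(\tau)\, r_n(\mathbb{P}_n - \overline{\mathbb{P}}_n) f_\tau(D,\beta_n(\tau),\gamma_n(\tau),0) + o_p(1),
\end{align*}
uniformly over $\tau \in \Upsilon$. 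There is no real obstacle here: all the heavy machinery — stochastic equicontinuity (Assumption \ref{ass:asym}(ii)), uniform consistency of the Jacobian estimates (Assumption \ref{ass:asym}(iii)), and the uniform invertibility of $Q_{\Psi,\Psi}$ — was already absorbed into Lemma \ref{lem:basic}. The present lemma is essentially a bookkeeping specialization, and the only point worth emphasizing is that the exact cancellation of the bias term at $b_n = \beta_n$ is what makes $r_n(\hat \gamma(\beta_n,\tau) - \gamma_n(\tau))$ and $r_n \hat \theta(\beta_n(\tau),\tau)$ behave as pure sample-mean linearizations with no drift contribution.
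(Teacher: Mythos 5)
Your proposal is correct and follows essentially the same route as the paper: specialize Lemma \ref{lem:basic} at $b_n(\tau)=\beta_n(\tau)$ so the bias term $\hat F_2(b_n(\tau),\tau)(b_n(\tau)-\beta_n(\tau))$ vanishes identically, then use the uniform consistency $\sup_{\tau}\|\hat F_1(\beta_n(\tau),\tau)-Q_{\Psi,\Psi}(\tau)\|_{op}=o_p(1)$ together with $\sup_{\tau}r_n\|II_n(\beta_n(\tau),\tau)\|_2=o_p(1)$ and $\sup_{\tau}\|r_nI_n(\tau)\|_2=O_p(1)$ to replace $[\hat F_1]^{-1}$ by $Q_{\Psi,\Psi}^{-1}(\tau)$ up to an $o_p(1)$ remainder. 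The paper's proof is exactly this two-line specialization (citing its displays \eqref{eq:Jrb1} and \eqref{eq:II2}), so no further comment is needed.
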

\begin{proof}
	By Lemma \ref{lem:basic} with $b_n(\tau) = \beta_n(\tau)$, we have
	\begin{align*}
	& \begin{pmatrix}
	r_n\left(\hat{\gamma}(\beta_n(\tau),\tau) -\gamma_n(\tau)\right) \\
	r_n\hat{\theta}(\beta_n(\tau),\tau)
	\end{pmatrix} = \left[\hat F_1(\beta_n(\tau),\tau)\right]^{-1} r_n \left[ I_n(\tau) + II_n(\beta_n(\tau),\tau) + o_p(1/r_n)\right],
	\end{align*}
	where the $o_p(1/r_n)$ term holds uniformly over $\tau \in \Upsilon$. Then, by \eqref{eq:Jrb1} and \eqref{eq:II2}, we have
	\begin{align*}
	\begin{pmatrix}
	r_n\left(\hat{\gamma}(\beta_n(\tau),\tau) -\gamma_n(\tau)\right) \\
	r_n\hat{\theta}(\beta_n(\tau),\tau)
	\end{pmatrix} =  Q_{\Psi,\Psi}^{-1}(\tau) r_n  I_n(\tau) + o_p(1),
	\end{align*}
	where the $o_p(1)$ terms hold uniformly over $\tau \in \Upsilon$. This concludes the proof. 
\end{proof}

\subsection{Technical Results for the Bootstrap IVQR Estimator}
\begin{lemma}
	Suppose Assumptions  \ref{ass:id} and \ref{ass:asym} hold. Then, 
	\begin{align*}
	\sup_{\tau \in \Upsilon} \left(  || r_n \hat \theta_g^*(\beta_0(\tau),\tau) ||_2 + || r_n (\hat \gamma_g^*(\beta_0(\tau),\tau)-\gamma_n(\tau)) ||_2 \right)= O_p(1),
	\end{align*}
	and 
	\begin{align*}
	r_n  \hat{\theta}_g^*(\beta_0(\tau),\tau) & = \omega Q_{\Psi,\Psi}^{-1}(\tau) r_n\sum_{j \in [J]}(1+g_j) \xi_j \mathbb P_{n,j}f_{\tau}(D_{i,j},\beta_n(\tau),\gamma_n(\tau),0) \\
	& + \omega Q_{\Psi,\Psi}^{-1}(\tau) \sum_{j \in [J]} \xi_j (1+g_j) Q_{\Psi,X,j}\mu_\beta(\tau) + o_p(1),
	\end{align*}
	where the $o_p(1)$ term holds uniformly over $\tau \in \Upsilon$. 
	If we further assume Assumption  \ref{ass:id2}, then we have 
	\begin{align*}
	r_n(\hat{\beta}_g^*(\tau) - \hat{\beta}(\tau)) & =  \tilde{\Gamma}(\tau) \sum_{ j \in [J]}\xi_j g_j\mathcal Z_j + \overline a^*_g(\tau) \mu_\beta(\tau) + o_p(1),
	\end{align*}
	where $o_p(1)$ term holds uniformly over $\tau \in \Upsilon$ and $\tilde \Gamma(\tau)$  and $\Gamma(\tau)$ are defined in \eqref{eq:Gamma}.
	\label{lem:bootexpand}
\end{lemma}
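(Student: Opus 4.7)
The lemma is the gradient-wild-bootstrap analogue of Lemmas~\ref{lem:basic}, \ref{lem:expand'}, and \ref{lem:expand}, and I would follow their three-stage structure. Differentiating the bootstrap criterion in \eqref{eq:boot_beta} with respect to $(r,t)$ gives the sub-gradient identity
\[
\mathbb P_n \hat f_\tau(D_{i,j},b,\hat\gamma_g^*(b,\tau),\hat\theta_g^*(b,\tau)) = -\sum_{j\in[J]} g_j\xi_j\,\mathbb P_{n,j}\hat f_\tau(D_{i,j},\beta_0(\tau),\hat\gamma^r(\tau),0) + o_p(r_n^{-1}),
\]
uniformly in $\tau\in\Upsilon$ and in $b$ ranging over an $o(1)$-neighbourhood of $\beta_n(\tau)$. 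Because the extra linear perturbation depends only on the data and does not disturb convexity of the criterion in $(r,t)$, the consistency of $(\hat\gamma_g^*(b,\tau),\hat\theta_g^*(b,\tau))$ to the same population limits as $(\hat\gamma(b,\tau),\hat\theta(b,\tau))$ follows by applying Kato's (2009) convexity argument exactly as in Step~1 of the proof of Lemma~\ref{lem:expand}.

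For the Bahadur expansion, I would Taylor-expand both sides of the sub-gradient identity as in Lemma~\ref{lem:basic}. Fixing $b=\beta_0(\tau)$ and using $r_n(\beta_0(\tau)-\beta_n(\tau))=-\mu_\beta(\tau)$, the LHS becomes $r_n\mathbb P_n f_\tau(D,\beta_n,\gamma_n,0)+Q_{\Psi,X}(\tau)\mu_\beta(\tau)-Q_{\Psi,\Psi}(\tau)(r_n(\hat\gamma_g^*-\gamma_n),r_n\hat\theta_g^*)^\top+o_p(1)$, and each cluster-level term on the RHS satisfies $r_n\mathbb P_{n,j}\hat f_\tau(D,\beta_0,\hat\gamma^r,0) = r_n\mathbb P_{n,j}f_\tau(D,\beta_n,\gamma_n,0) + Q_{\Psi,X,j}(\tau)\mu_\beta(\tau) - Q_{\Psi,\Psi,j}(\tau)(r_n(\hat\gamma^r-\gamma_n),0)^\top + o_p(1)$, where Lemma~\ref{lem:expand'} applied at $b_n=\beta_0$ controls the null-imposed estimator $\hat\gamma^r$. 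The decisive observation is that the Neyman orthogonality of Assumption~\ref{ass:boot}(i), $Q_{W,\Phi,j}=0$, makes every $Q_{\Psi,\Psi,j}$ block-diagonal so that $\omega\,Q_{\Psi,\Psi,j}(a,0)^\top=0$; projecting the equated identity onto the $\Phi$-block via $\omega$ therefore annihilates the $\hat\gamma^r$ correction entirely. Collecting the remaining pieces and using $\omega f_\tau=\tilde f_\tau$, $\omega Q_{\Psi,\Psi}^{-1}=(0,Q_{\Phi,\Phi}^{-1})$, and $\omega Q_{\Psi,X,j}=Q_{\Phi,X,j}$, yields the stated expansion for $r_n\hat\theta_g^*(\beta_0(\tau),\tau)$; the $O_p(1)$ rates for $r_n(\hat\gamma_g^*(\beta_0,\tau)-\gamma_n(\tau))$ and $r_n\hat\theta_g^*(\beta_0,\tau)$ then follow by inverting the full Jacobian $Q_{\Psi,\Psi}(\tau)$ in the same display.

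For the last claim, I would repeat the expansion at an arbitrary $b$ in a $r_n^{-1}$-neighbourhood of $\beta_n(\tau)$ — only the LHS depends on $b$ since the bootstrap perturbation is anchored at $\beta_0$ — to obtain $r_n\hat\theta_g^*(b,\tau)=Q_{\Phi,\Phi}^{-1}(\tau)[\sum_j(1+g_j)\xi_j\mathcal Z_j(\tau)-Q_{\Phi,X}(\tau) u+\sum_j g_j\xi_j Q_{\Phi,X,j}(\tau)\mu_\beta(\tau)]+o_p(1)$ uniformly in $u=r_n(b-\beta_n(\tau))$ on bounded sets. Under Assumption~\ref{ass:id2}, applying \cite{Chernozhukov-Hansen(2006)}'s Lemma~B.1 to the quadratic minimisation $\hat\beta_g^*(\tau)=\arg\inf_b\|\hat\theta_g^*(b,\tau)\|_{\hat A_1(\tau)}$ — exactly as in Step~3 of the proof of Lemma~\ref{lem:expand} — gives $r_n(\hat\beta_g^*(\tau)-\beta_n(\tau))=\tilde\Gamma(\tau)[\sum_j(1+g_j)\xi_j\mathcal Z_j(\tau)+\sum_j g_j\xi_j Q_{\Phi,X,j}(\tau)\mu_\beta(\tau)]+o_p(1)$. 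Subtracting the expansion of $r_n(\hat\beta(\tau)-\beta_n(\tau))$ from Lemma~\ref{lem:expand} and recalling $a_j(\tau)=\tilde\Gamma(\tau)Q_{\Phi,X,j}(\tau)$ and $\bar a_g^*(\tau)=\sum_j g_j\xi_j a_j(\tau)$ produces the advertised form. The main obstacle will be the meticulous bookkeeping of the $\mu_\beta(\tau)$-drift across both sides of the sub-gradient identity and verifying that Assumption~\ref{ass:boot}(i) really does eliminate the $\hat\gamma^r$ Jacobian contribution from the $\Phi$-block; without that cancellation the additional Jacobian term would contaminate the leading order and destroy the clean randomization structure on which Theorems~\ref{thm:unstudentizedsize}--\ref{thm:full} rely.
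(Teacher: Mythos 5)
Your proposal is correct and follows essentially the same route as the paper's proof: consistency via Kato's convexity lemma after showing the sign-flipped gradient perturbation is $o_p(1)$, a Bahadur expansion of the bootstrap sub-gradient condition at $b=\beta_0(\tau)$ in which the block-diagonality of $Q_{\Psi,\Psi,j}(\tau)$ under Assumption \ref{ass:boot}(i) kills the $\hat\gamma^r$ Jacobian term in the $\Phi$-block, and then the argmin expansion via \citet[Lemma B.1]{Chernozhukov-Hansen(2006)} followed by subtraction of the expansion in Lemma \ref{lem:expand}. The only cosmetic slip is attributing the $O_p(r_n^{-1})$ rate of $\hat\gamma^r(\tau)$ to Lemma \ref{lem:expand'} "applied at $b_n=\beta_0$"; the paper records that fact as Lemma \ref{lem:restricted} (itself Lemma \ref{lem:basic} evaluated at $b_n(\tau)=\beta_0(\tau)$), which changes nothing in the substance of the argument.
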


\begin{proof}
	We divide the proof into three steps. In the first step, we show the consistency of $(\hat{\gamma}_g^*(\beta_0(\tau),\tau),\hat{\theta}_g^*(\beta_0(\tau), \tau))$. In the second step, we show the first desired result. These two steps do not require Assumption \ref{ass:id2} as $\beta_0$ is assumed to be in the local neighborhood of $\beta_n(\tau)$. 	In the third step, we show the second desired result. 
	
	\textbf{Step 1}. By Assumption \ref{ass:asym}(iv), we have 
	\begin{align}\label{eq:0}
	\sup_{\tau \in \Upsilon}     \left\Vert\frac{1}{n}\sum_{j \in [J]} g_j\sum_{i \in I_{n,j}}  f_{\tau}(D_{i,j},\beta_n(\tau),\gamma_n(\tau),0)\right\Vert_2 = o_p(1).
	\end{align}
	
	In addition, note that 
	\begin{align}
	& \left\Vert\frac{1}{n}\sum_{j \in [J]} g_j\sum_{i \in I_{n,j}} \left( \hat{f}_{\tau}(D_{i,j},\beta_0(\tau),\hat{\gamma}^r(\tau),0) - f_{\tau}(D_{i,j},\beta_n(\tau),\gamma_n(\tau),0)\right)\right\Vert_2 \notag \\
	& \leq \sum_{j \in [J]} \xi_j \left\Vert  (\mathbb{P}_{n,j} - \overline{\mathbb{P}}_{n,j}) ( \hat{f}_{\tau}(D_{i,j},\beta_0(\tau),\hat{\gamma}^r(\tau),0) - f_{\tau}(D_{i,j},\beta_n(\tau),\gamma_n(\tau),0))\right\Vert_2 \notag \\
	& + \sum_{j \in [J]} \xi_j \left\Vert  \overline{\mathbb{P}}_{n,j}  \hat{f}_{\tau}(D_{i,j},\beta_0(\tau),\hat{\gamma}^r(\tau),0)\right\Vert_2.
	\label{eq:1}
	\end{align}
	Because $\hat{\gamma}^r(\tau)$ is consistent as shown in Lemma \ref{lem:restricted} and $\sup_{\tau \in \Upsilon}|\beta_0(\tau) - \beta_n(\tau)| = o(1)$, Assumption \ref{ass:asym} implies 
	\begin{align}
	\sup_{\tau \in \Upsilon}\sum_{j \in [J]} \xi_j \left\Vert  (\mathbb{P}_{n,j} - \overline{\mathbb{P}}_{n,j}) ( \hat{f}_{\tau}(D_{i,j},\beta_0(\tau),\hat{\gamma}^r(\tau),0) - f_{\tau}(D_{i,j},\beta_n(\tau),\gamma_n(\tau),0))\right\Vert_2 = o_p(r_n^{-1}).
	\label{eq:II}
	\end{align}
	In addition, due to  Lemma \ref{lem:restricted}, we have
	\begin{align*}
	\sup_{\tau \in \Upsilon}\left\Vert   \hat{\gamma}^r(\tau) -\gamma_n(\tau)  \right\Vert_2 = O_p(r_n^{-1}).
	\end{align*}
	
	Then, by Assumption \ref{ass:asym} and
	following the same argument in \eqref{eq:Pjf}, we can show that 
	\begin{align}
	& \sup_{\tau \in \Upsilon}\left\Vert  \overline{\mathbb{P}}_{n,j}  \hat{f}_{\tau}(D_{i,j},\beta_0(\tau),\hat{\gamma}^r(\tau),0) + Q_{\Psi,X,j}(\tau)(\beta_0(\tau) - \beta_n(\tau)) + Q_{\Psi,\Psi,j}(\tau) \begin{pmatrix}
	\hat{\gamma}^r(\tau) -\gamma_n(\tau) \\
	0 
	\end{pmatrix}  \right\Vert_2 \notag \\
	& = o_p(r_n^{-1}).
	\label{eq:III}
	\end{align}
	
	This further implies 
	\begin{align}
	\sup_{\tau \in \Upsilon}\sum_{j \in [J]} \xi_j\left\Vert  \overline{\mathbb{P}}_{n,j}  \hat{f}_{\tau}(D_{i,j},\beta_0(\tau),\hat{\gamma}^r(\tau),0)\right\Vert_2 = o_p(1).
	\label{eq:III'}
	\end{align}
	
	Combining \eqref{eq:0}--\eqref{eq:III'}, we have
	\begin{align}\label{eq:2}
	\sup_{\tau \in \Upsilon} \left\Vert\frac{1}{n}\sum_{j \in [J]} g_j\sum_{i \in I_{n,j}}  \hat{f}_{\tau}(D_{i,j},\beta_0(\tau),\hat{\gamma}^r(\tau),0)\right\Vert_2 = o_p(1).
	\end{align}
	Let 
	\begin{align*}
	\tilde{\mathcal Q}_n(b,r,t,\tau) & = \sum_{j \in [J]}\sum_{i \in I_{n,j}}\rho_\tau(y_{i,j} - X_{i,j} b - W_{i,j}^\top r - \hat{\Phi}_{i,j}^\top(\tau) t) \hat{V}_{i,j}(\tau) \\
	& - \sum_{j \in [J]} g_j\sum_{i \in I_{n,j}} \hat{f}^\top_\tau(D_{i,j},\beta_0(\tau),\hat{\gamma}^r(\tau),0) \begin{pmatrix}
	r \\
	t
	\end{pmatrix} \\
	& = \hat{\mathcal Q}_n(b,r,t,\tau) - \sum_{j \in [J]} g_j\sum_{i \in I_{n,j}} \hat{f}^\top_\tau(D_{i,j},\beta_0(\tau),\hat{\gamma}^r(\tau),0) \begin{pmatrix}
	r \\
	t
	\end{pmatrix}.
	\end{align*}
	Then, we have
	\begin{align*}
	(\hat{\gamma}_g^*(b,\tau),\hat{\theta}_g^*(b,\tau)) & = \arg \inf_{r,t} \tilde{\mathcal Q}_n(b,r,t,\tau).
	\end{align*}
	By \eqref{eq:2}, uniformly over $(\tau,b) \in \Upsilon \times \mathcal{B}$,  
	\begin{align*}
	\tilde{\mathcal Q}_n(b,r,t,\tau) \convP \mathcal Q_\infty(b,r,t,\tau).
	\end{align*}
	Then, because $\tilde{\mathcal Q}_n(b,r,t,\tau) $ is convex in $(r,t)$, by \citet[Theorem 1]{K09}, we have
	\begin{align}\label{eq:theta*uniform}
	\sup_{(b,\tau) \in \mathcal{B} \times \Upsilon}\left(||\hat{\gamma}_g^*(b,\tau)-\gamma_\infty(b,\tau)||_2 + ||\hat{\theta}_g^*(b,\tau)-\theta_\infty(b,\tau)||_2 \right) = o_p(1). 
	\end{align}	
	This implies 
	$\sup_{\tau \in \Upsilon}||\hat{\theta}_g^*(\beta_0(\tau),\tau)||_2 = o_p(1)$.

	\textbf{Step 2.} For any $b_n( \cdot) \in \mathcal{B}(\delta)$, the sub-gradient condition for $(\hat{\gamma}_g^*(b_n(\tau),\tau), \hat{\theta}_g^*(b_n(\tau),\tau))$ is 
	\begin{align}
	o_p(1/r_n) & = \mathbb{P}_n\hat{f}_{\tau}(D_{i,j},b_n(\tau), \hat{\gamma}_g^*(b_n(\tau),\tau),\hat{\theta}_g^*(b_n(\tau),\tau),\tau) + \frac{1}{n} \sum_{j \in [J]} g_j\sum_{i \in I_{n,j}}\hat{f}_{\tau}(D_{i,j},\beta_0(\tau), \hat{\gamma}^r(\tau),0),
	\label{eq:expand2}
	\end{align}
	where the $o_p(1/r_n)$ term on the LHS of the above display holds uniformly over $\tau \in \Upsilon,b_n(\cdot) \in B(\delta)$.
	
	Following the same argument in Lemma \ref{lem:basic}, for any $\eps>0$, there exists $\overline{\delta}$ such that for $\delta,\delta' \leq \overline{\delta}$, we have, with probability greater than $1-c \eps$, 
	\begin{align}
	& \mathbb{P}_n\hat{f}_{\tau}(D_{i,j},b_n(\tau), \hat{\gamma}_g^*(b_n(\tau),\tau),\hat{\theta}_g^*(b_n(\tau),\tau),\tau) \notag \\
	& = I_n(\tau) + II_{n,g}(b_n(\tau),\tau)  - \hat F_{2,g}(b_n(\tau),\tau)(b_n(\tau) - \beta_n(\tau)) - \hat F_{1,g}(b_n(\tau),\tau)\begin{pmatrix}
	\hat{\gamma}_g^*(b_n(\tau),\tau) -\gamma_n(\tau) \\
	\hat{\theta}_g^*(b_n(\tau),\tau) 
	\end{pmatrix},
	\label{eq:subgradient*}
	\end{align}
	where 
	\begin{align*}
	I_n(\tau) = (\mathbb{P}_{n}-\overline{\mathbb{P}}_{n})f_{\tau}(D_{i,j},\beta_n(\tau),\gamma_n(\tau),0), 
	\end{align*}
	\begin{align*}
	II_{n,g}(b_n(\tau),\tau) = \sum_{j \in [J]} \xi_j (\mathbb{P}_{n,j}-\overline{\mathbb{P}}_{n,j})\left(\hat{f}_{\tau}(D_{i,j},b_n(\tau), \hat{\gamma}_g^*(b_n(\tau),\tau),\hat{\theta}_g^*(b_n(\tau),\tau),\tau) - f_{\tau}(D_{i,j},\beta_n(\tau),\gamma_n(\tau),0)  \right)
	\end{align*}
	such that $\sup_{b_n(\cdot) \in B(\delta),\tau \in \Upsilon} r_n||II_{n,g}(b_n(\tau),\tau) ||_2 \leq \eps$,
	\begin{align*}
	\hat F_{1,g}(b_n(\tau),\tau) = \sum_{j \in [J]}\frac{\xi_j}{n_j} \sum_{i \in I_{n,j}} \mathbb{E}f_{\eps_{i,j}(\tau)}(\hat{\delta}_{i,j,g}(\tau)|W_{i,j},Z_{i,j})\hat{\Psi}_{i,j}(\tau)\hat{\Psi}_{i,j}^\top(\tau)\hat V_{i,j}(\tau),
	\end{align*}
	\begin{align*}
	\hat F_{2,g}(b_n(\tau),\tau) = \sum_{j \in [J]}\frac{\xi_j}{n_j} \sum_{i \in I_{n,j}} \mathbb{E}f_{\eps_{i,j}(\tau)}(\hat{\delta}_{i,j,g}(\tau)|W_{i,j},Z_{i,j})\hat{\Psi}_{i,j}(\tau)X_{i,j}\hat V_{i,j}(\tau),
	\end{align*}
	\begin{align*}
	\hat{\delta}_{i,j,g}(\tau) \in (0, X_{i,j}(b_n(\tau) -\beta_n(\tau)) + W_{i,j}^\top (\hat{\gamma}_g^*(b_n(\tau),\tau)-\gamma_n(\tau)) + \hat{\Phi}_{i,j}^\top(\tau) \hat{\theta}_g^*(b_n(\tau),\tau)),
	\end{align*}
	\begin{align*}
	\sup_{b_n(\cdot) \in B(\delta),\tau \in \Upsilon,j \in [J]}\left\Vert \frac{1}{n_j} \sum_{i \in I_{n,j}} \mathbb{E}f_{\eps_{i,j}(\tau)}(\hat{\delta}_{i,j,g}(\tau)|W_{i,j},Z_{i,j})\hat{\Psi}_{i,j}(\tau)\hat{\Psi}_{i,j}^\top(\tau)\hat V_{i,j}(\tau) - Q_{\Psi,\Psi,j}(\tau)\right\Vert_{op} \leq \eps,
	\end{align*}
	and 
	\begin{align*}
	\sup_{b_n(\cdot) \in B(\delta),\tau \in \Upsilon,j \in [J]}\left\Vert \frac{1}{n_j} \sum_{i \in I_{n,j}} \mathbb{E}f_{\eps_{i,j}(\tau)}(\hat{\delta}_{i,j,g}(\tau)|W_{i,j},Z_{i,j})\hat{\Psi}_{i,j}(\tau)X_{i,j}\hat V_{i,j}(\tau) - Q_{\Psi,X,j}(\tau)\right\Vert_{op} \leq \eps.
	\end{align*}
	
	In addition, by \eqref{eq:1}, \eqref{eq:II}, and \eqref{eq:III}, we have
	\begin{align*}
	&  \frac{1}{n}\sum_{j \in [J]} g_j\sum_{i \in I_{n,j}}  \hat{f}_{\tau}(D_{i,j},\beta_0(\tau),\hat{\gamma}^r(\tau),0) \\
	& = \frac{1}{n}\sum_{j \in [J]} g_j\sum_{i \in I_{n,j}} \left( \hat{f}_{\tau}(D_{i,j},\beta_0(\tau),\hat{\gamma}^r(\tau),0) - f_{\tau}(D_{i,j},\beta_n(\tau),\gamma_n(\tau),0)\right) \\
	& +  \frac{1}{n}\sum_{j \in [J]} g_j\sum_{i \in I_{n,j}} f_{\tau}(D_{i,j},\beta_n(\tau),\gamma_n(\tau),0) \\
	& =  \frac{1}{n}\sum_{j \in [J]} g_j\left[\sum_{i \in I_{n,j}}\left( f_{\tau}(D_{i,j},\beta_n(\tau),\gamma_n(\tau),0)  - Q_{\Psi,X,j}(\tau)(\beta_0(\tau) - \beta_n(\tau)) - Q_{\Psi,\Psi,j}(\tau) \begin{pmatrix}
	\hat{\gamma}^r(\tau) -\gamma_n(\tau) \\
	0
	\end{pmatrix} \right)\right] \\
	& + o_p(1/r_n),
	\end{align*}
	where the $o_p(1/r_n)$ term holds uniformly over $\tau \in \Upsilon$. Combining this with Assumption \ref{ass:asym} and Lemma \ref{lem:restricted}, we have  
	\begin{align}\label{eq:expand3}
	\sup_{\tau \in \Upsilon}\left\Vert \frac{1}{n}\sum_{j \in [J]} g_j\sum_{i \in I_{n,j}}  \hat{f}_{\tau}(D_{i,j},\beta_0(\tau),\hat{\gamma}^r(\tau),0) \right\Vert_2 = O_p(1/r_n).
	\end{align}
	
	Combining \eqref{eq:expand2}, \eqref{eq:subgradient*}, and \eqref{eq:expand3} implies 
	\begin{align}
	o_p(1) =  -\hat F_{2,g}(b_n(\tau),\tau)r_n(b_n(\tau) - \beta_n(\tau)) - \hat F_{1,g}(b_n(\tau),\tau)\begin{pmatrix}
	r_n(\hat{\gamma}_g^*(b_n(\tau),\tau) -\gamma_n(\tau)) \\
	r_n\hat{\theta}_g^*(b_n(\tau),\tau) 
	\end{pmatrix} + O_p(1),
	\label{eq:betastar}
	\end{align}
	where the $o_p(1)$ and $O_p(1)$ terms hold uniformly over $\{b_n(\cdot) \in B(\delta),\tau \in \Upsilon\}$. 
	
	By letting $b_n(\tau) = \beta_0(\tau)$ in the above display and noting that 
	\begin{align*}
	\sup_{\tau \in \Upsilon}\left\Vert \hat F_{1,g}(\beta_0(\tau),\tau) - Q_{\Psi,\Psi}(\tau)\right\Vert_{op} = o_p(1),
	\end{align*}
	we have
	\begin{align*}
	\sup_{\tau \in \Upsilon} \left(  || r_n \hat \theta_g^*(\beta_0(\tau),\tau) ||_2 + || r_n (\hat \gamma_g^*(\beta_0(\tau),\tau)-\gamma_n(\tau)) ||_2 \right)= O_p(1).
	\end{align*}
	
	In addition, by letting $b_n(\tau) = \beta_0(\tau)$ in \eqref{eq:expand2}, \eqref{eq:subgradient*}, and \eqref{eq:expand3}, we have
	\begin{align*}
	o_p(1/r_n) & =  I_n(\tau) + II_{n,g}(\beta_0(\tau),\tau)  + \hat F_{2,g}(\beta_0(\tau),\tau)\frac{\mu_\beta(\tau)}{r_n} - \hat F_{1,g}(\beta_0(\tau),\tau)\begin{pmatrix}
	\hat{\gamma}_g^*(\beta_0(\tau),\tau) -\gamma_n(\tau) \\
	\hat{\theta}_g^*(\beta_0(\tau),\tau) 
	\end{pmatrix} \\
	& + \frac{1}{n}\sum_{j \in [J]} g_j\left[\sum_{i \in I_{n,j}}\left( f_{\tau}(D_{i,j},\beta_n(\tau),\gamma_n(\tau),0) + Q_{\Psi,X,j}(\tau)\frac{\mu_\beta(\tau)}{r_n} - Q_{\Psi,\Psi,j}(\tau) \begin{pmatrix}
	\hat{\gamma}^r(\tau) -\gamma_n(\tau) \\
	0
	\end{pmatrix}\right)\right] \\
	& + o_p(1/r_n), 
	\end{align*}
	where the $o_p(1/r_n)$ holds uniformly over $\tau \in \Upsilon$. This further implies 
	\begin{align*}
	r_n \begin{pmatrix}
	\hat{\gamma}_g^*(\beta_0(\tau),\tau) -\gamma_n(\tau) \\
	\hat{\theta}_g^*(\beta_0(\tau),\tau) 
	\end{pmatrix} & = Q_{\Psi,\Psi}^{-1}(\tau) r_n\sum_{j \in [J]}(1+g_j) \xi_j \mathbb P_{n,j}f_{\tau}(D_{i,j},\beta_n(\tau),\gamma_n(\tau),0) \\
	& + Q_{\Psi,\Psi}^{-1}(\tau) \sum_{j \in [J]} \xi_j (1+g_j) Q_{\Psi,X,j}\mu_\beta(\tau) \\
	& - Q_{\Psi,\Psi}^{-1}(\tau) \sum_{j \in [J]}\xi_j g_j Q_{\Psi,\Psi,j} (\tau)\begin{pmatrix}
	r_n(\hat{\gamma}^r(\tau) -\gamma_n(\tau)) \\
	0
	\end{pmatrix} + o_p(1),
	\end{align*}
	where the $o_p(1)$ term holds uniformly over $\tau \in \Upsilon$. In addition, we note that 
	\begin{align*}
	\omega Q_{\Psi,\Psi}^{-1}(\tau) \sum_{j \in [J]}\xi_j g_j Q_{\Psi,\Psi,j} (\tau)\begin{pmatrix}
	r_n(\hat{\gamma}^r(\tau) -\gamma_n(\tau)) \\
	0
	\end{pmatrix} = 0 
	\end{align*}
	because $\omega Q_{\Psi,\Psi}^{-1}(\tau)$ and $Q_{\Psi,\Psi,j}$ are all block diagonal matrices. 
	This implies 
	\begin{align*}
	r_n  \hat{\theta}_g^*(\beta_0(\tau),\tau) & = \omega Q_{\Psi,\Psi}^{-1}(\tau) r_n\sum_{j \in [J]}(1+g_j) \xi_j \mathbb P_{n,j}f_{\tau}(D_{i,j},\beta_n(\tau),\gamma_n(\tau),0) \\
	& + \omega Q_{\Psi,\Psi}^{-1}(\tau) \sum_{j \in [J]} \xi_j (1+g_j) Q_{\Psi,X,j}\mu_\beta(\tau) + o_p(1),
	\end{align*}
	where the $o_p(1)$ term holds uniformly over $\tau \in \Upsilon$.

	\textbf{Step 3.} Last, we show the second result in the Lemma. Suppose, in addition, Assumption \ref{ass:id2} holds. Then, by \eqref{eq:theta*uniform} and the same argument in Step 1 of the proof of Lemma \ref{lem:expand}, we can show 
	\begin{align*}
	& \sup_{\tau \in \Upsilon}||\hat{\beta}_g^*(\tau) - \beta_n(\tau)||_2 = o_p(1), 
	\end{align*}
	which further implies 
	\begin{align*}
	& \sup_{\tau \in \Upsilon}||\hat{\gamma}_g^*(\tau) -\gamma_n(\tau)||_2 = o_p(1) \quad \text{and} \quad \sup_{\tau \in \Upsilon}||\hat{\theta}_g^*(\tau) - 0||_2 = o_p(1). 
	\end{align*}
	
	By letting $b_n(\tau) = \hat{\beta}_g^*(\tau)$ in \eqref{eq:betastar}, we have $\hat{\beta}_g^*(\tau) \in \mathcal{B}(\delta)$ w.p.a.1 for any $\delta>0$, and thus, 
	\begin{align*}
	||\omega \left[\hat F_{1,g}(\hat{\beta}_g^*(\tau),\tau)\right]^{-1} \left[\hat F_{2,g}(\hat{\beta}_g^*(\tau),\tau)r_n(\hat{\beta}_g^*(\tau) - \beta_n(\tau))+O_p(1)\right] ||_{\hat{A}_1(\tau)}^2 \leq ||r_n\hat{\theta}_g^*(\beta_n(\tau),\tau) ||_{\hat{A}_1(\tau)}^2. 
	\end{align*}
	In addition, note that w.p.a.1,
	\begin{align*}
	\inf_{\tau \in \Upsilon}\lambda_{\min}([\hat F_{2,g}(\hat{\beta}_g^*(\tau),\tau)]^\top\left[\hat F_{1,g}(\hat{\beta}_g^*(\tau),\tau)\right]^{-1}\omega^T \hat{A}_1(\tau)\omega \left[\hat F_{1,g}(\hat{\beta}_g^*(\tau),\tau)\right]^{-1} \hat F_{2,g}(\hat{\beta}_g^*(\tau),\tau) ) \geq \underline{c}>0,
	\end{align*}
	so that we have
	\begin{align*}
	\sup_{\tau \in \Upsilon}	\underline{c}r_n^2||\hat{\beta}_g^*(\tau) - \beta_n(\tau)||_2^2 - O_p(1) \leq \sup_{\tau \in \Upsilon}||r_n\hat{\theta}_g^*(\beta_n(\tau),\tau) ||_{\hat{A}_1(\tau)}^2 \leq O_p(1).
	\end{align*}
	Therefore, we have
	\begin{align*}
	\sup_{\tau \in \Upsilon}r_n||\hat{\beta}_g^*(\tau) - \beta_n(\tau)||_2 = O_p(1). 
	\end{align*}
	Plugging this  into \eqref{eq:betastar}, we have
	\begin{align*}
	\sup_{\tau \in \Upsilon}r_n||\hat{\gamma}_g^*(\tau) -\gamma_n(\tau)||_2 = O_p(1) \quad \text{and} \quad \sup_{\tau \in \Upsilon}r_n||\hat{\theta}_g^*(\tau)||_2 = O_p(1).
	\end{align*}
	
	Then, let $b_n(\tau) = \beta_n(\tau) + u/r_n$ in \eqref{eq:expand2}, \eqref{eq:subgradient*}, and \eqref{eq:expand3}, we have 
	\begin{align*}
	o_p(1) & = r_nI_n(\tau) - Q_{\Psi,X}(\tau)u - Q_{\Psi,\Psi}(\tau)\begin{pmatrix}
	r_n(\hat{\gamma}_g^*(\beta_n(\tau) + u/r_n,\tau) -\gamma_n(\tau)) \\
	r_n\hat{\theta}_g^*(\beta_n(\tau) + u/r_n,\tau) 
	\end{pmatrix} \\
	& + r_nI_{n,g}(\tau)- \sum_{j \in [J]} g_j \xi_j Q_{\Psi,X,j}(\tau)r_n (\beta_0(\tau) - \beta_n(\tau)) \\
	& -\sum_{j \in [J]} g_j \xi_j Q_{\Psi,\Psi,j}(\tau) \begin{pmatrix}
	r_n(\hat{\gamma}^r(\tau) -\gamma_n(\tau)) \\
	0 
	\end{pmatrix} -  o_p(1),
	\end{align*}
	where 
	\begin{align*}
	I_{n,g}(\tau) = \sum_{j \in [J]}\xi_j g_j(\mathbb{P}_{n,j} - \overline{\mathbb{P}}_{n,j})f_{\tau}(D_{i,j},\beta_n(\tau),\gamma_n(\tau),0)
	\end{align*}
	such that $r_n I_{n,g}(\tau) = \sum_{j \in [J]} \xi_j g_j r_n(\mathbb{P}_{n,j} - \overline{\mathbb{P}}_{n,j})f_{\tau}(D_{i,j},\beta_n(\tau),\gamma_n(\tau),0)$
	and the $o_p(1)$ term holds uniformly over $\tau \in \Upsilon, |u|\leq M$. This implies 
	\begin{align*}
	& r_n\hat{\theta}_g^*(\beta_n(\tau) + u/r_n,\tau) \\
	& = \omega Q_{\Psi,\Psi}^{-1}(\tau)\biggl[r_n I_n(\tau) +  r_n I_{n,g}(\tau) - Q_{\Psi,X}(\tau) u - \sum_{j \in [J]} g_j \xi_j Q_{\Psi,X,j}(\tau)r_n (\beta_0(\tau) - \beta_n(\tau)) \\
	& -\sum_{j \in [J]} g_j \xi_j Q_{\Psi,\Psi,j}(\tau) \begin{pmatrix}
	r_n(\hat{\gamma}^r(\tau) -\gamma_n(\tau)) \\
	0 
	\end{pmatrix} -  o_p(1)\biggr]\\ 
	& = G_n(u,\tau) + o_p(1),
	\end{align*}
	where 
	\begin{align*}
	G_n(u,\tau) & = \omega Q_{\Psi,\Psi}^{-1}(\tau)\biggl[r_nI_n(\tau) + r_n I_{n,g}(\tau) - Q_{\Psi,X}(\tau) u - \sum_{j \in [J]} g_j \xi_j Q_{\Psi,X,j}(\tau)r_n (\beta_0(\tau) - \beta_n(\tau)) \\
	& -\sum_{j \in [J]} g_j \xi_j Q_{\Psi,\Psi,j}(\tau) \begin{pmatrix}
	r_n(\hat{\gamma}^r(\tau) -\gamma_n(\tau)) \\
	0 
	\end{pmatrix}\biggr]
	\end{align*}
	and the $o_p(1)$ term holds uniformly over $\tau \in \Upsilon, |u|\leq M$. Let $\hat{u}_g^*(\tau) = r_n(\hat{\beta}_g^*(\tau) - \beta_n(\tau))$. 
	Because $\sup_{\tau \in \Upsilon}||\hat{u}^*_g(\tau)||_2 = O_p(1)$, for any $\eps>0$, there exists an integer $\underline{n}$ such that for $n \geq \underline{n}$, there exists a sufficiently large constant $M>0$ such that 
	\begin{align*}
	\hat{u}_g^*(\tau) = \arg \inf_{||u||_2 \leq M}||r_n\hat{\theta}_g^*(\beta_n(\tau) + u/r_n,\tau)||^2_{\hat{A}_1(\tau)}. 
	\end{align*}
	In addition, because
	\begin{align*}
	\sup_{\tau \in \Upsilon, ||u||_2 \leq M}\biggl| ||r_n\hat{\theta}_g^*(\beta_n(\tau) + u/r_n,\tau)||^2_{\hat{A}_1(\tau)}- ||G_n(u,\tau)||^2_{A_1(\tau)} \biggr| = o_p(1),
	\end{align*}
	\citet[Lemma B.1]{Chernozhukov-Hansen(2006)} implies 
	\begin{align}
	\hat{u}_g^*(\tau) & = \left[Q_{\Psi,X}^\top(\tau) Q_{\Psi,\Psi}^{-1}(\tau) \omega^\top A_1(\tau)  \omega Q^{-1}_{\Psi,\Psi}(\tau) Q_{\Psi,X}(\tau)\right]^{-1}Q_{\Psi,X}^\top(\tau) Q_{\Psi,\Psi}^{-1}(\tau) \omega^\top A_1(\tau) \omega Q^{-1}_{\Psi,\Psi}(\tau) \notag \\
	& \times \biggl[r_nI_n(\tau) + r_nI_{n,g}(\tau) - \sum_{j \in [J]} g_j \xi_j Q_{\Psi,X,j}(\tau)r_n (\beta_0(\tau) - \beta_n(\tau)) \notag \\
	& -\sum_{j \in [J]} g_j \xi_j Q_{\Psi,\Psi,j}(\tau) \begin{pmatrix}
	r_n(\hat{\gamma}^r(\tau) -\gamma_n(\tau)) \\
	0 
	\end{pmatrix}\biggr]   + o_p(1), 
	\label{eq:uhatstar}
	\end{align}
	where the $o_p(1)$ term holds uniformly over $\tau \in \Upsilon$. Subtracting \eqref{eq:alpha_expand2} from \eqref{eq:uhatstar}, we have
	\begin{align*}
	& r_n(\hat{\beta}_g^*(\tau) - \hat{\beta}(\tau))\\
	& = \Gamma (\tau) \biggl[r_nI_{n,g}(\tau) - \sum_{j \in [J]} g_j \xi_j Q_{\Psi,X,j}(\tau)r_n (\beta_0(\tau) - \beta_n(\tau)) -\sum_{j \in [J]} g_j \xi_j Q_{\Psi,\Psi,j}(\tau) \begin{pmatrix}
	r_n(\hat{\gamma}^r(\tau) -\gamma_n(\tau)) \\
	0 
	\end{pmatrix}\biggr] \\
	& + o_p(1).
	\end{align*}
	Assumption \ref{ass:boot}(i) implies  $Q_{\Psi,\Psi,j}(\tau)$, and thus, $[Q_{\Psi,\Psi}(\tau)]^{-1} Q_{\Psi,\Psi,j}(\tau)$ are block diagonal, i.e., 
	\begin{align*}
	[Q_{\Psi,\Psi}(\tau)]^{-1} Q_{\Psi,\Psi,j}(\tau) = \begin{pmatrix}
	Q_{d_w \times d_w} & 0_{d_w \times d_\phi} \\
	0_{ d_\phi \times d_w} &  Q_{d_\phi \times d_\phi}
	\end{pmatrix}.
	\end{align*}
	Then, 
	\begin{align*}
	\omega Q^{-1}_{\Psi,\Psi}(\tau)Q_{\Psi,\Psi,j}(\tau)  \begin{pmatrix}
	r_n(\hat{\gamma}^r(\tau) -\gamma_n(\tau)) \\
	0 
	\end{pmatrix}& = \omega [ Q_{\Psi,\Psi}(\tau)]^{-1} Q_{\Psi,\Psi,j}(\tau)\begin{pmatrix}
	r_n(\hat{\gamma}^r(\tau) -\gamma_n(\tau)) \\
	0 
	\end{pmatrix} \\
	& = [0_{d_\phi \times d_w},\mathbb{I}_{d_\phi}] 
	\begin{pmatrix}
	Q_{d_w \times d_w} & 0_{d_w \times d_\phi} \\
	0_{ d_\phi \times d_w} &  Q_{d_\phi \times d_\phi}
	\end{pmatrix}
	\begin{pmatrix}
	r_n(\hat{\gamma}^r(\tau) -\gamma_n(\tau)) \\
	0_{d_\phi \times 1} 
	\end{pmatrix} = 0.
	\end{align*}
	In addition, for the same reason, we have 
	\begin{align*}
	\Gamma (\tau)f_\tau(D_{i,j},\beta_n(\tau),\gamma_n(\tau),0) = \tilde{\Gamma}(\tau)\tilde{f}_\tau(D_{i,j},\beta_n(\tau),\gamma_n(\tau),0).
	\end{align*}
	
	Therefore, we have
	\begin{align*}
	& r_n(\hat{\beta}_g^*(\tau) - \hat{\beta}(\tau))\\
	& = \Gamma (\tau) r_n I_{n,g}(\tau) - \sum_{ j \in [J]}g_j \xi_j \Gamma (\tau)Q_{\Psi,\Psi,j}(\tau)r_n (\beta_0(\tau) - \beta_n(\tau)) + o_p(1),\\
	& = \tilde{\Gamma}(\tau) \sum_{ j \in [J]}\xi_j g_jr_n (\mathbb{P}_{n,j} - \overline{\mathbb{P}}_{n,j})\tilde{f}_\tau(D_{i,j},\beta_n(\tau),\gamma_n(\tau),0) \\
	& - \sum_{ j \in [J]}g_j \xi_j \Gamma (\tau)Q_{\Psi,X,j}(\tau)r_n (\beta_0(\tau) - \beta_n(\tau)) + o_p(1) \\
	& = \tilde{\Gamma}(\tau) \sum_{ j \in [J]}\xi_j g_j\mathcal Z_j + \overline a^*_g(\tau) \mu_\beta(\tau) + o_p(1),
	\end{align*}
	where the $o_p(1)$ term holds uniformly over $\tau \in \Upsilon$.
	
\end{proof}

\subsection{Technical Results for the Restricted Estimator}
\begin{lemma}
	Suppose Assumptions  \ref{ass:id}--\ref{ass:id2} hold. Then, 
	\begin{align*}
	\sup_{\tau \in \Upsilon} ||\hat{\gamma}^r(\tau)  -\gamma_n(\tau)||_2 = O_p(r_n^{-1}). 
	\end{align*}
	\label{lem:restricted}
\end{lemma}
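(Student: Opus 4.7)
The plan is to apply Lemma \ref{lem:basic} directly with the specific choice $b_n(\tau) = \beta_0(\tau)$. Recall from Assumption \ref{ass:id}(i)-(ii) that $\beta_0(\tau) = \beta_n(\tau) - \mu_\beta(\tau)/r_n$ with $\sup_{\tau \in \Upsilon}\|\mu_\beta(\tau)\|_2 \leq C$, so $\sup_{\tau \in \Upsilon}\|\beta_0(\tau) - \beta_n(\tau)\|_2 \leq C/r_n$. Thus for any fixed $\delta>0$, we have $\beta_0(\cdot) \in \mathcal{B}(\delta)$ for all $n$ sufficiently large, and in particular we can take $\delta \leq \overline{\delta}$ in the statement of Lemma \ref{lem:basic}.

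Reading off the top block of the expansion \eqref{eq:betagammarate} with $b_n(\tau) = \beta_0(\tau)$, and letting $P$ denote the $d_w \times (d_w+d_\phi)$ selector matrix that extracts the first $d_w$ coordinates, we obtain
\begin{align*}
r_n\bigl(\hat{\gamma}^r(\tau) - \gamma_n(\tau)\bigr)
= P\,[\hat F_1(\beta_0(\tau),\tau)]^{-1} \Bigl[ r_n I_n(\tau) + r_n II_n(\beta_0(\tau),\tau) + o_p(1) - \hat F_2(\beta_0(\tau),\tau)\,\mu_\beta(\tau)\,(-1) \Bigr].
\end{align*}
Now I would bound each factor uniformly in $\tau$. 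By Lemma \ref{lem:basic}, $\sup_{\tau \in \Upsilon}\|r_n I_n(\tau)\|_2 = O_p(1)$ and, with arbitrarily high probability, $\sup_{\tau \in \Upsilon}\|r_n II_n(\beta_0(\tau),\tau)\|_2 \leq \varepsilon'$ for any prescribed $\varepsilon'>0$. Moreover the same lemma gives $\sup_{\tau \in \Upsilon}\|\hat F_1(\beta_0(\tau),\tau) - Q_{\Psi,\Psi}(\tau)\|_{op} = o_p(1)$ and $\sup_{\tau \in \Upsilon}\|\hat F_2(\beta_0(\tau),\tau) - Q_{\Psi,X}(\tau)\|_{op} = o_p(1)$, so by Assumption \ref{ass:asym}(vi) the matrix $\hat F_1(\beta_0(\tau),\tau)$ is invertible with operator norm of its inverse uniformly bounded in probability, while $\|\hat F_2(\beta_0(\tau),\tau)\|_{op}$ is also uniformly bounded in probability. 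Finally, $\sup_{\tau \in \Upsilon}\|\mu_\beta(\tau)\|_2 \leq C$ by Assumption \ref{ass:id}(ii).

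Combining these bounds and taking the supremum over $\tau \in \Upsilon$ yields $\sup_{\tau \in \Upsilon} r_n \|\hat{\gamma}^r(\tau) - \gamma_n(\tau)\|_2 = O_p(1)$, which is the claim. The argument is essentially a one-step specialization of Lemma \ref{lem:basic}; there is no genuine obstacle since the uniform stochastic equicontinuity (Assumption \ref{ass:asym}(ii)) and uniform consistency of the Jacobians (Assumption \ref{ass:asym}(iii)) have already been invoked in establishing that lemma. The only point worth being careful about is ensuring $\beta_0(\cdot) \in \mathcal{B}(\delta)$, which holds automatically because the drift $\mu_\beta(\tau)/r_n$ vanishes uniformly in $\tau$; no strong-identification content from Assumption \ref{ass:id2} beyond what is already used inside Lemma \ref{lem:basic} is required for this particular rate statement.
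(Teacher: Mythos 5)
Your proposal is correct and follows exactly the route the paper takes: note that $\sup_{\tau\in\Upsilon}\|\beta_0(\tau)-\beta_n(\tau)\|_2=O(r_n^{-1})$ and plug $b_n(\tau)=\beta_0(\tau)$ into the expansion \eqref{eq:betagammarate} of Lemma \ref{lem:basic}, reading off the $O_p(r_n^{-1})$ rate from the uniform bounds on $I_n$, $II_n$, $\hat F_1^{-1}$, and $\hat F_2$. Your additional remark that Assumption \ref{ass:id2} plays no real role here is also accurate, since Lemma \ref{lem:basic} is stated under Assumptions \ref{ass:id} and \ref{ass:asym} only.
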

\begin{proof}
	We note that 
	\begin{align*}
	\sup_{\tau \in \Upsilon}||\beta_0(\tau) - \beta_n(\tau)||_2 = O(r_n^{-1}). 
	\end{align*}
	Plugging this into  \eqref{eq:betagammarate}, Lemma \ref{lem:basic} with $b_n(\tau) = \beta_0(\tau)$ implies 
	\begin{align*}
	\sup_{\tau \in \Upsilon}||\hat{\gamma}^r(\tau) - \gamma_n(\tau)||_2 = O_p(r_n^{-1}). 
	\end{align*}
	
	
\end{proof}

\subsection{Lemma used in the Proof of Theorem \ref{thm:studentized}}
\label{sec:pf_lem_sec3end}


\begin{lemma}
	Suppose Assumptions \ref{ass:id}--\ref{ass:id2} and \ref{ass:J} hold.  Then, we have
	\begin{align*}
	\hat G^\top(\tau)\omega \mathbb{P}_{n,j}\hat{f}_\tau(D_{i,j}, \hat{\beta}(\tau),\hat{\gamma}(\tau),0) & = G^\top(\tau)\mathbb{P}_{n,j} \tilde{f}_\tau(D_{i,j}, \beta_n(\tau),\gamma_n(\tau),0) \\
	&- b_j(\tau)\tilde{\Gamma}(\tau) \mathbb{P}_{n} \tilde{f}_\tau(D_{i,j}, \beta_n(\tau),\gamma_n(\tau),0) + o_p(r_n^{-1}),
	\end{align*}
	\begin{align*}
	\hat G^\top(\tau) \omega \mathbb{P}_{n,j} \hat{f}_\tau(D_{i,j}, \hat{\beta}_g^*(\tau),\hat{\gamma}_g^*(\tau),0) & = G^\top(\tau)\mathbb{P}_{n,j} \tilde{f}_\tau(D_{i,j}, \beta_n(\tau),\gamma_n(\tau),0) \\
	& - b_j(\tau) \tilde{\Gamma}(\tau)\mathbb{P}_n \tilde{f}_\tau(D_{i,j}, \beta_n(\tau),\gamma_n(\tau),0) \\
	& - b_j(\tau) \tilde{\Gamma}(\tau)\sum_{\tilde{j} \in J} g_{\tilde{j}} \xi_{\tilde{j}}  \mathbb{P}_{n,\tilde{j}} \tilde{f}_\tau(D_{i,\tilde{j}}, \beta_n(\tau),\gamma_n(\tau),0) \\
	& + \overline{a}_g^*(\tau)  b_j(\tau) (\beta_0(\tau) - \beta_n(\tau)) + o_p(r_n^{-1}), 
	\end{align*}
	\begin{align*}
	\hat G^\top(\tau)\omega \mathbb{P}_{n,j}\hat{f}_\tau(D_{i,j}, \beta_0(\tau),\hat{\gamma}^r(\tau),0) & = G^\top(\tau)\mathbb{P}_{n,j} \tilde{f}_\tau(D_{i,j}, \beta_n(\tau),\gamma_n(\tau),0)  \\
	&- b_j(\tau)(\beta_0(\tau) - \beta_n(\tau)) + o_p(r_n^{-1}),
	\end{align*}
	and
	\begin{align*}
	\hat G^\top(\tau) \omega \mathbb{P}_{n,j}\hat f_{\tau,g}^*(D_{i,j}) & = g_j G^\top(\tau)\mathbb{P}_{n,j} \tilde{f}_\tau(D_{i,j}, \beta_n(\tau),\gamma_n(\tau),0) - (g_j - \overline{a}_g^*(\tau)) b_j(\tau) (\beta_0(\tau) - \beta_n(\tau)) \\
	& - b_j(\tau)\tilde{\Gamma}(\tau)\sum_{\tilde{j} \in J} g_{\tilde{j}} \xi_{\tilde{j}}  \mathbb{P}_{n,\tilde{j}} \tilde{f}_\tau(D_{i,\tilde{j}}, \beta_n(\tau),\gamma_n(\tau),0) +  o_p(r_n^{-1}),
	\end{align*}
	where all the $o_p(r_n^{-1})$ terms hold uniformly over $\tau \in \Upsilon$ and $\overline{a}_g^*(\tau) = \sum_{ j \in [J]} \xi_j g_j a_j(\tau)$. 
	\label{lem:stud}
\end{lemma}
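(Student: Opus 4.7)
\textbf{Proof Proposal for Lemma \ref{lem:stud}.} The plan is to establish each of the four expansions by combining three ingredients: (i) a sample-size-level decomposition of $\mathbb{P}_{n,j}\hat f_\tau$ into an empirical-process fluctuation plus a smooth expected-value piece, (ii) the stochastic equicontinuity of Assumption \ref{ass:asym}(ii) and the uniform Jacobian consistency of Assumption \ref{ass:asym}(iii) to linearize the expected-value piece around $(\beta_n(\tau),\gamma_n(\tau),0)$, and (iii) the linear expansions of the estimators already proved in Lemmas \ref{lem:basic}, \ref{lem:expand}, \ref{lem:bootexpand}, \ref{lem:expand'} and \ref{lem:restricted}. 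The crucial structural fact that ties everything together is the Neyman-orthogonality built into Assumption \ref{ass:boot}(i): since $Q_{W,\Phi,j}(\tau)=0$, the matrix $\omega Q_{\Psi,\Psi,j}(\tau)$ equals $(0_{d_\phi\times d_w},Q_{\Phi,\Phi,j}(\tau))$, so any perturbation in the $\gamma$-direction is annihilated by $\omega$, and $\hat G^\top(\tau)\omega Q_{\Psi,X,j}(\tau)=G^\top(\tau)Q_{\Phi,X,j}(\tau)=b_j(\tau)+o_p(1)$ by Assumption \ref{ass:J}(i).

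For the first expansion I would mimic the identity \eqref{eq:Pjf} at the estimator $(\hat\beta(\tau),\hat\gamma(\tau),\hat\theta(\tau))$, writing
\begin{align*}
\mathbb{P}_{n,j}\hat f_\tau(D_{i,j},\hat\beta(\tau),\hat\gamma(\tau),0)
&=(\mathbb{P}_{n,j}-\overline{\mathbb{P}}_{n,j})f_\tau(D_{i,j},\beta_n(\tau),\gamma_n(\tau),0)+o_p(r_n^{-1})\\
&\quad-Q_{\Psi,X,j}(\tau)(\hat\beta(\tau)-\beta_n(\tau))-Q_{\Psi,\Psi,j}(\tau)\bigl(\hat\gamma(\tau)-\gamma_n(\tau);\,\hat\theta(\tau)\bigr)^\top,
\end{align*}
where the stochastic equicontinuity step replaces the empirical-process term, and the Jacobian step uses Assumption \ref{ass:asym}(iii). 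Multiplying by $\hat G^\top(\tau)\omega$, the $\gamma$-deviation contribution vanishes because of Neyman orthogonality, the $Q_{\Psi,X,j}$ coefficient collapses to $b_j(\tau)$, and the $\hat\theta(\tau)$ term together with the $\hat\beta$ term is re-expressed via the linearization from Lemma \ref{lem:expand}, namely $r_n(\hat\beta(\tau)-\beta_n(\tau))=\tilde\Gamma(\tau)r_n\mathbb{P}_n\tilde f_\tau(\cdot)+o_p(1)$. Using $\omega Q_{\Psi,\Psi}^{-1}(\tau)f_\tau=Q_{\Phi,\Phi}^{-1}(\tau)\tilde f_\tau$ from \eqref{eq:Jftilde}, the $\hat\theta(\tau)$-contribution rearranges into $G^\top(\tau)\mathbb{P}_{n,j}\tilde f_\tau-b_j(\tau)\tilde\Gamma(\tau)\mathbb{P}_n\tilde f_\tau$, yielding the claim.

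The third expansion is analogous but simpler: at $(\beta_0(\tau),\hat\gamma^r(\tau),0)$ the $\hat\theta$-term is absent, the $\gamma$-piece dies by Neyman orthogonality, and $\hat\beta-\beta_n$ is replaced by the deterministic $\beta_0(\tau)-\beta_n(\tau)=\mu_\beta(\tau)/r_n$; Lemma \ref{lem:restricted} controls the $\hat\gamma^r$-remainder. The second expansion repeats the argument at the bootstrap estimator $(\hat\beta_g^*(\tau),\hat\gamma_g^*(\tau),\hat\theta_g^*(\tau))$, with the extra contribution of $\overline a_g^*(\tau)(\beta_0(\tau)-\beta_n(\tau))$ arising from the linearization $r_n(\hat\beta_g^*-\hat\beta)=\tilde\Gamma(\tau)\sum_{\tilde j}\xi_{\tilde j}g_{\tilde j}\mathcal Z_{\tilde j}+\overline a_g^*(\tau)\mu_\beta(\tau)+o_p(1)$ from Lemma \ref{lem:bootexpand}. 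Finally, the fourth expansion follows by substituting the definition $\hat f_{\tau,g}^*(D_{i,j})=g_j\hat f_\tau(\cdot,\beta_0,\hat\gamma^r,0)+\hat f_\tau(\cdot,\hat\beta_g^*,\hat\gamma_g^*,0)-\hat f_\tau(\cdot,\hat\beta,\hat\gamma,0)$ and plugging in the three already-established expansions; the $\mathbb{P}_{n,j}\tilde f_\tau$ contributions of the first and second expansions cancel, and the drift terms combine into $-(g_j-\overline a_g^*(\tau))b_j(\tau)(\beta_0(\tau)-\beta_n(\tau))$ and $-b_j(\tau)\tilde\Gamma(\tau)\sum_{\tilde j}g_{\tilde j}\xi_{\tilde j}\mathbb{P}_{n,\tilde j}\tilde f_\tau$.

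The main obstacle is the algebraic bookkeeping required to verify that the three pieces combine cleanly in the fourth identity. In particular, one must be careful to track (a) which centering appears in the mean-value Taylor expansion (i.e., the $\delta_{i,j}$ in Assumption \ref{ass:asym}(iii) is different at each estimator), and (b) how the $g_j$-weighted restricted piece interacts with the unweighted difference of the full and bootstrap pieces so that the $Q_{\Psi,X,j}(\beta_0-\beta_n)$ contributions from all three pieces aggregate to exactly $-(g_j-\overline a_g^*(\tau))b_j(\tau)(\beta_0-\beta_n)$ rather than leaving a residual cross term. This is essentially the calculation that makes the bootstrap CRVE ``remember'' the local parameter $\mu_\beta$ even though the original CRVE does not, which is the driving force behind Remark \ref{rem: boot-CRVE} and Theorem \ref{thm:power_comparison}.
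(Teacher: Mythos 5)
Your proposal follows the paper's own proof essentially step for step: decompose $\mathbb{P}_{n,j}\hat f_\tau$ at each evaluation point into the centered empirical process plus the expectation, invoke Assumption \ref{ass:asym}(ii) to replace the fluctuation by the one at $(\beta_n(\tau),\gamma_n(\tau),0)$, linearize the expectation via Assumption \ref{ass:asym}(iii), kill the $\gamma$-direction with $\omega$ and Assumption \ref{ass:boot}(i), substitute the estimator expansions from Lemmas \ref{lem:expand}, \ref{lem:bootexpand}, and \ref{lem:restricted}, and obtain the fourth identity by combining the first three through the definition of $\hat f_{\tau,g}^*$. This is exactly the paper's route, and your identification of the cancellation pattern in the fourth identity is correct.

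The one slip is in your displayed linearization for the first expansion, where you write the perturbation vector as $\bigl(\hat\gamma(\tau)-\gamma_n(\tau);\,\hat\theta(\tau)\bigr)$. The function entering $\hat\Omega(\tau,\tau)$ is $\hat f_\tau(D_{i,j},\hat\beta(\tau),\hat\gamma(\tau),0)$, i.e.\ evaluated at third argument $0$, so the correct perturbation is $\bigl(\hat\gamma(\tau)-\gamma_n(\tau);\,0\bigr)$, as in the paper's display. Taken literally, your version leaves a residual $\hat G^\top(\tau)Q_{\Phi,\Phi,j}(\tau)\hat\theta(\tau)$ after the $\omega$-projection, which is exactly $O_p(r_n^{-1})$ (since $r_n\hat\theta(\tau)=O_p(1)$) and is \emph{not} annihilated by Neyman orthogonality — it would contaminate the first expansion, and your subsequent remark that "the $\hat\theta(\tau)$-contribution rearranges into $G^\top(\tau)\mathbb{P}_{n,j}\tilde f_\tau - b_j(\tau)\tilde\Gamma(\tau)\mathbb{P}_n\tilde f_\tau$" conflates this spurious term with the genuine contributions of $\omega\mathbb{P}_{n,j}f_\tau$ and of the $Q_{\Psi,X,j}(\tau)(\hat\beta(\tau)-\beta_n(\tau))$ term. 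Put the zero in that slot and the entire $Q_{\Psi,\Psi,j}$ block is wiped out by $\omega$ under Assumption \ref{ass:boot}(i); the rest of your argument then goes through as written.
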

\begin{proof}
	
	For the first result, we have
	\begin{align}
	& \mathbb{P}_{n,j}\hat{f}_\tau(D_{i,j}, \hat{\beta}(\tau),\hat{\gamma}(\tau),0) \notag  \\
	& =  \overline{\mathbb{P}}_{n,j}\hat{f}_\tau(D_{i,j}, \hat{\beta}(\tau),\hat{\gamma}(\tau),0) + (\mathbb{P}_{n,j} - \overline{\mathbb{P}}_{n,j})\hat{f}_\tau(D_{i,j}, \hat{\beta}(\tau),\hat{\gamma}(\tau),0) \notag \\
	& = \overline{\mathbb{P}}_{n,j}\hat{f}_\tau(D_{i,j}, \hat{\beta}(\tau),\hat{\gamma}(\tau),0) +  (\mathbb{P}_{n,j} - \overline{\mathbb{P}}_{n,j})f_\tau(D_{i,j}, \beta_n(\tau),\gamma_n(\tau),0) + o_p(r_n^{-1}) \notag \\
	& = \mathbb{P}_{n,j} f_\tau(D_{i,j}, \beta_n(\tau),\gamma_n(\tau),0) - Q_{\Psi,X,j}(\tau)(\hat{\beta}(\tau) - \beta_n(\tau)) - Q_{\Psi,\Psi,j}(\tau) \begin{pmatrix}
	\hat{\gamma}(\tau) - \gamma_n(\tau)\\
	0
	\end{pmatrix} + o_p(r_n^{-1})\notag \\
	& = \mathbb{P}_{n,j} f_\tau(D_{i,j}, \beta_n(\tau),\gamma_n(\tau),0)  - Q_{\Psi,X,j}(\tau)\tilde{\Gamma}(\tau) \mathbb{P}_{n} \tilde{f}_\tau(D_{i,j}, \beta_n(\tau),\gamma_n(\tau),0) \notag \\
	& - Q_{\Psi,\Psi,j}(\tau) \begin{pmatrix}
	\hat{\gamma}(\tau) - \gamma_n(\tau)\\
	0
	\end{pmatrix} + o_p(r_n^{-1}), 
	\label{eq:t1stu}
	\end{align}
	where the $o_p(r_n^{-1})$ term holds uniformly over $\tau \in \Upsilon$, the second equality is by Assumption \ref{ass:asym}(iii), the third equality is by Assumption \ref{ass:asym}(ii) and the fact that 
	$$\sup_{\tau \in \Upsilon}\left(||\hat{\beta}(\tau) - \beta_n(\tau)||_2 + ||\hat{\gamma}(\tau) - \gamma_n(\tau)||_2\right) = o_p(r_n^{-1}),$$ as shown in Lemma \ref{lem:expand}, and the last equality is by Lemma \ref{lem:expand} and the fact that, by Assumption \ref{ass:boot}(i),  
	\begin{align*}
	\Gamma (\tau)f_\tau(D_{i,j}, \beta_n(\tau),\gamma_n(\tau),0) = \tilde{\Gamma}(\tau)\tilde{f}_\tau(D_{i,j}, \beta_n(\tau),\gamma_n(\tau),0). 
	\end{align*}
	In addition, under Assumption \ref{ass:boot}(i), we have
	\begin{align*}
	\omega Q_{\Psi,\Psi,j}(\tau)
	\begin{pmatrix}
	\hat{\gamma}(\tau) -\gamma_n(\tau) \\
	0_{d_\phi \times 1} 
	\end{pmatrix} = 0,
	\end{align*}
	and thus, 
	\begin{align*}
	& \hat G^\top(\tau)\omega \mathbb{P}_{n,j}\hat{f}_\tau(D_{i,j}, \hat{\beta}(\tau),\hat{\gamma}(\tau),0) \\
	& =  G^\top(\tau) \omega \mathbb{P}_{n,j}  f_\tau(D_{i,j}, \beta_n(\tau),\gamma_n(\tau),0) - b_j(\tau)\tilde{\Gamma}(\tau) \mathbb{P}_{n} \tilde{f}_\tau(D_{i,j}, \beta_n(\tau),\gamma_n(\tau),0) + o_p(r_n^{-1}) \\
	& = G^\top(\tau)\mathbb{P}_{n,j} \tilde{f}_\tau(D_{i,j}, \beta_n(\tau),\gamma_n(\tau),0) - b_j(\tau)\tilde{\Gamma}(\tau) \mathbb{P}_{n} \tilde{f}_\tau(D_{i,j}, \beta_n(\tau),\gamma_n(\tau),0) + o_p(r_n^{-1}),
	\end{align*}
	where the $o_p(r_n^{-1})$ term holds uniformly over $\tau \in \Upsilon$.
	
	For the second result, we have
	\begin{align*}
	& \mathbb{P}_{n,j}\hat{f}_\tau(D_{i,j}, \hat{\beta}_g^*(\tau),\hat{\gamma}_g^*(\tau),0) \\
	& = \mathbb{P}_{n,j} f_\tau(D_{i,j}, \beta_n(\tau),\gamma_n(\tau),0) - Q_{\Psi,X,j}(\tau)(\hat{\beta}_g^*(\tau) - \beta_n(\tau)) - Q_{\Psi,\Psi,j}(\tau) \begin{pmatrix}
	\hat{\gamma}_g^*(\tau) - \gamma_n(\tau)\\
	0
	\end{pmatrix} + o_p(r_n^{-1})\\
	& = \mathbb{P}_{n,j} f_\tau(D_{i,j}, \beta_n(\tau),\gamma_n(\tau),0)  - Q_{\Psi,X,j}(\tau) \tilde \Gamma (\tau)\sum_{\tilde{j} \in J} g_{\tilde{j}} \xi_{\tilde{j}}  \mathbb{P}_{n,\tilde j} \tilde f_\tau(D_{i, \tilde j}, \beta_n(\tau),\gamma_n(\tau),0) \\
	& - Q_{\Psi,X,j}(\tau)(\hat{\beta}(\tau) - \beta_n(\tau))+ Q_{\Psi,X,j}(\tau)  \overline{a}_g^*(\tau)(\beta_0(\tau) - \beta_n(\tau)) \\
	& - Q_{\Psi,\Psi,j}(\tau) \begin{pmatrix}
	\hat{\gamma}_g^*(\tau) - \gamma_n(\tau)\\
	0
	\end{pmatrix} + o_p(r_n^{-1}), 
	\end{align*}
	where the first equality is due to the same argument in \eqref{eq:t1stu} and the second equality is due to Lemma \ref{lem:bootexpand}. Then, by Lemma \ref{lem:expand}, we have
	\begin{align*}
	& \hat G^\top(\tau)\omega \mathbb{P}_{n,j}\hat{f}_\tau(D_{i,j}, \hat{\beta}_g^*(\tau),\hat{\gamma}_g^*(\tau),0) \\
	& = G^\top(\tau)\mathbb{P}_{n,j} \tilde{f}_\tau(D_{i,j}, \beta_n(\tau),\gamma_n(\tau),0) - b_j(\tau) \tilde{\Gamma}(\tau)\mathbb{P}_n \tilde{f}_\tau(D_{i,j}, \beta_n(\tau),\gamma_n(\tau),0) \\
	& - b_j(\tau) \tilde{\Gamma}(\tau)\sum_{\tilde{j} \in J} g_{\tilde{j}} \xi_{\tilde{j}}  \mathbb{P}_{n,\tilde{j}} \tilde{f}_\tau(D_{i,\tilde{j}}, \beta_n(\tau),\gamma_n(\tau),0) + \overline{a}_g^*(\tau)  b_j(\tau) (\beta_0(\tau) - \beta_n(\tau)) + o_p(r_n^{-1}). 
	\end{align*}
	
	For the third result, by the same argument in \eqref{eq:t1stu},  we have
	\begin{align*}
	& \mathbb{P}_{n,j}\hat{f}_\tau(D_{i,j}, \beta_0(\tau),\hat{\gamma}^r(\tau),0) \\
	& = \mathbb{P}_{n,j} f_\tau(D_{i,j}, \beta_n(\tau),\gamma_n(\tau),0) - Q_{\Psi,X,j}(\tau)(\beta_0(\tau) - \beta_n(\tau)) - Q_{\Psi,\Psi,j}(\tau) \begin{pmatrix}
	\hat{\gamma}^r(\tau) - \gamma_n(\tau)\\
	0
	\end{pmatrix} + o_p(r_n^{-1}).
	\end{align*}
	Then, we have
	\begin{align*}
	& \hat G^\top(\tau) \omega \mathbb{P}_{n,j}\hat{f}_\tau(D_{i,j}, \beta_0(\tau),\hat{\gamma}^r(\tau),0) \\
	& = G^\top(\tau)\mathbb{P}_{n,j} \tilde{f}_\tau(D_{i,j}, \beta_n(\tau),\gamma_n(\tau),0)  - b_j(\tau)(\beta_0(\tau) - \beta_n(\tau)) + o_p(r_n^{-1}). 
	\end{align*}
	
	Combining the previous three results, we have
	\begin{align*}
	\hat G^\top(\tau) \omega \mathbb{P}_{n,j}\hat f_{\tau,g}^*(D_{i,j}) & = g_j G^\top(\tau)\mathbb{P}_{n,j} \tilde{f}_\tau(D_{i,j}, \beta_n(\tau),\gamma_n(\tau),0) - (g_j - \overline{a}_g^*(\tau)) b_j(\tau) (\beta_0(\tau) - \beta_n(\tau)) \\
	& - b_j(\tau)\tilde{\Gamma}(\tau)\sum_{\tilde{j} \in J} g_{\tilde{j}} \xi_{\tilde{j}}  \mathbb{P}_{n,\tilde{j}} \tilde{f}_\tau(D_{i,\tilde{j}}, \beta_n(\tau),\gamma_n(\tau),0) +  o_p(r_n^{-1}).
	\end{align*}
	All the $o_p(r_n^{-1})$ terms in this proof hold uniformly over $\tau \in \Upsilon$. 
\end{proof}

\bibliographystyle{ecta}
\bibliography{Biblio_boot_few_clusters}

\begin{thebibliography}{71}
\newcommand{\enquote}[1]{``#1''}
\expandafter\ifx\csname natexlab\endcsname\relax\def\natexlab#1{#1}\fi

\bibitem[\protect\citeauthoryear{Abadie, Athey, Imbens, and Wooldridge}{Abadie
  et~al.}{2023}]{abadie2022should}
\textsc{Abadie, A., S.~Athey, G.~W. Imbens, and J.~Wooldridge} (2023):
  \enquote{When should you adjust standard errors for clustering?} \emph{The
  Quarterly Journal of Economics}, 138, 1--35.

\bibitem[\protect\citeauthoryear{Acemoglu, Cantoni, Johnson, and
  Robinson}{Acemoglu et~al.}{2011}]{Acemoglu2011}
\textsc{Acemoglu, D., D.~Cantoni, S.~Johnson, and J.~A. Robinson} (2011):
  \enquote{The consequences of radical reform: The French revolution,}
  \emph{American Economic Review}, 101, 3286--3307.

\bibitem[\protect\citeauthoryear{Anderson and Rubin}{Anderson and
  Rubin}{1949}]{Anderson-Rubin(1949)}
\textsc{Anderson, T.~W. and H.~Rubin} (1949):
  \enquote{\hypertarget{Anderson-Rubin(1949}{Estimation of the parameters of a
  single equation in a complete system of stochastic equations},} \emph{Annals
  of Mathematical Statistics}, 20, 46--63.

\bibitem[\protect\citeauthoryear{Andrews and Cheng}{Andrews and
  Cheng}{2012}]{Andrews-Cheng(2012)}
\textsc{Andrews, D.~W. and X.~Cheng} (2012): \enquote{Estimation and inference
  with weak, semi-strong, and strong identification,} \emph{Econometrica}, 80,
  2153--2211.

\bibitem[\protect\citeauthoryear{Andrews and Guggenberger}{Andrews and
  Guggenberger}{2019}]{Andrews-Guggenberger(2019)}
\textsc{Andrews, D.~W. and P.~Guggenberger} (2019): \enquote{Identification-and
  singularity-robust inference for moment condition models,} \emph{Quantitative
  Economics}, 10, 1703--1746.

\bibitem[\protect\citeauthoryear{Andrews}{Andrews}{2016}]{Andrews(2016)}
\textsc{Andrews, I.} (2016): \enquote{\hypertarget{Andrews(2016)}{Conditional
  linear combination tests for weakly identified models},} \emph{Econometrica},
  84, 2155--2182.

\bibitem[\protect\citeauthoryear{Andrews and Mikusheva}{Andrews and
  Mikusheva}{2016}]{Andrews-Mikusheva(2016)}
\textsc{Andrews, I. and A.~Mikusheva} (2016):
  \enquote{\hypertarget{Andrews-Mikusheva(2016)}{Conditional inference with a
  functional nuisance parameter},} \emph{Econometrica}, 84, 1571--1612.

\bibitem[\protect\citeauthoryear{Andrews, Stock, and Sun}{Andrews
  et~al.}{2019}]{Andrews-Stock-Sun(2019)}
\textsc{Andrews, I., J.~H. Stock, and L.~Sun} (2019):
  \enquote{\hypertarget{Andrews-Stock-Sun(2019)}{Weak instruments in
  instrumental variables regression: Theory and practice},} \emph{Annual Review
  of Economics}, 11, 727--753.

\bibitem[\protect\citeauthoryear{Angrist and Krueger}{Angrist and
  Krueger}{1991}]{Angrist-Krueger(1991)}
\textsc{Angrist, J.~D. and A.~B. Krueger} (1991): \enquote{Does compulsory
  school attendance affect schooling and earning?} \emph{Quarterly Journal of
  Economics}, 106, 979--1014.

\bibitem[\protect\citeauthoryear{Angrist and Pischke}{Angrist and
  Pischke}{2008}]{Angrist-Pischke(2008)}
\textsc{Angrist, J.~D. and J.-S. Pischke} (2008): \emph{Mostly harmless
  econometrics: An empiricist's companion}, Princeton university press.

\bibitem[\protect\citeauthoryear{Autor, Dorn, and Hanson}{Autor
  et~al.}{2013}]{ADH2013}
\textsc{Autor, D., D.~Dorn, and G.~H. Hanson} (2013): \enquote{The China
  syndrome: Local labor market effects of import competition in the United
  States,} \emph{American Economic Review}, 103, 2121--68.

\bibitem[\protect\citeauthoryear{Bester, Conley, and Hansen}{Bester
  et~al.}{2011}]{Bester-Conley-Hansen(2011)}
\textsc{Bester, C.~A., T.~G. Conley, and C.~B. Hansen} (2011):
  \enquote{Inference with dependent data using cluster covariance estimators,}
  \emph{Journal of Econometrics}, 165, 137--151.

\bibitem[\protect\citeauthoryear{Bramoull{\'e}, Djebbari, and
  Fortin}{Bramoull{\'e} et~al.}{2009}]{BDF09}
\textsc{Bramoull{\'e}, Y., H.~Djebbari, and B.~Fortin} (2009):
  \enquote{Identification of peer effects through social networks,}
  \emph{Journal of econometrics}, 150, 41--55.

\bibitem[\protect\citeauthoryear{Cameron, Gelbach, and Miller}{Cameron
  et~al.}{2008}]{Cameron(2008)}
\textsc{Cameron, A.~C., J.~B. Gelbach, and D.~L. Miller} (2008):
  \enquote{\hypertarget{Cameron(2008)}{Bootstrap-based improvements for
  inference with clustered errors},} \emph{The Review of Economics and
  Statistics}, 90, 414--427.

\bibitem[\protect\citeauthoryear{Canay, Romano, and Shaikh}{Canay
  et~al.}{2017}]{Canay-Romano-Shaikh(2017)}
\textsc{Canay, I.~A., J.~P. Romano, and A.~M. Shaikh} (2017):
  \enquote{Randomization tests under an approximate symmetry assumption,}
  \emph{Econometrica}, 85, 1013--1030.

\bibitem[\protect\citeauthoryear{Canay, Santos, and Shaikh}{Canay
  et~al.}{2021}]{Canay-Santos-Shaikh(2020)}
\textsc{Canay, I.~A., A.~Santos, and A.~M. Shaikh} (2021): \enquote{The wild
  bootstrap with a ``small" number of ``large" clusters,} \emph{Review of
  Economics and Statistics}, 103, 346--363.

\bibitem[\protect\citeauthoryear{Chernozhukov and Hansen}{Chernozhukov and
  Hansen}{2004}]{Chernozhukov-Hansen(2004)}
\textsc{Chernozhukov, V. and C.~Hansen} (2004):
  \enquote{\hypertarget{Chernozhukov-Hansen(2004)}{The effects of 401(k)
  participation on the wealth distribution: an instrumental quantile regression
  analysis},} \emph{Review of Economics and statistics}, 86, 735--751.

\bibitem[\protect\citeauthoryear{Chernozhukov and Hansen}{Chernozhukov and
  Hansen}{2005}]{Chernozhukov-Hansen(2005)}
---\hspace{-.1pt}---\hspace{-.1pt}--- (2005):
  \enquote{\hypertarget{Chernozhukov-Hansen(2005)}{An IV model of quantile
  treatment effects},} \emph{Econometrica}, 73, 245--261.

\bibitem[\protect\citeauthoryear{Chernozhukov and Hansen}{Chernozhukov and
  Hansen}{2006}]{Chernozhukov-Hansen(2006)}
---\hspace{-.1pt}---\hspace{-.1pt}--- (2006):
  \enquote{\hypertarget{Chernozhukov-Hansen(2006)}{Instrumental quantile
  regression inference for structural and treatment effect models},}
  \emph{Journal of Econometrics}, 132, 491--525.

\bibitem[\protect\citeauthoryear{Chernozhukov and Hansen}{Chernozhukov and
  Hansen}{2008}]{Chernozhukov-Hansen(2008a)}
---\hspace{-.1pt}---\hspace{-.1pt}--- (2008):
  \enquote{\hypertarget{Chernozhukov-Hansen(2008a)}{Instrumental variable
  quantile regression: A robust inference approach},} \emph{Journal of
  Econometrics}, 142, 379--398.

\bibitem[\protect\citeauthoryear{Chernozhukov and Hansen}{Chernozhukov and
  Hansen}{2013}]{chernozhukov2013quantile}
---\hspace{-.1pt}---\hspace{-.1pt}--- (2013): \enquote{Quantile models with
  endogeneity,} \emph{Annu. Rev. Econ.}, 5, 57--81.

\bibitem[\protect\citeauthoryear{Chernozhukov, Hansen, and
  Jansson}{Chernozhukov et~al.}{2009}]{chernozhukov2009finite}
\textsc{Chernozhukov, V., C.~Hansen, and M.~Jansson} (2009): \enquote{Finite
  sample inference for quantile regression models,} \emph{Journal of
  Econometrics}, 152, 93--103.

\bibitem[\protect\citeauthoryear{Chernozhukov, Hansen, and
  W{\"u}thrich}{Chernozhukov et~al.}{2017}]{CHW20}
\textsc{Chernozhukov, V., C.~Hansen, and K.~W{\"u}thrich} (2017):
  \enquote{Instrumental variable quantile regression,} \emph{Handbook of
  quantile regression}, 119--143.

\bibitem[\protect\citeauthoryear{Chernozhukov, Hansen, and
  W{\"u}thrich}{Chernozhukov et~al.}{2020}]{chernozhukov2020}
---\hspace{-.1pt}---\hspace{-.1pt}--- (2020): \enquote{Instrumental variable
  quantile regression,} in \emph{Handbook of quantile regression}, Boca Raton,
  FL: CRC Press.

\bibitem[\protect\citeauthoryear{Chung}{Chung}{2002}]{C02}
\textsc{Chung, C.-F.} (2002): \enquote{Sample means, sample autocovariances,
  and linear regression of stationary multivariate long memory processes,}
  \emph{Econometric Theory}, 18, 51--78.

\bibitem[\protect\citeauthoryear{Conley and Taber}{Conley and
  Taber}{2011}]{Conley2011inference}
\textsc{Conley, T.~G. and C.~R. Taber} (2011): \enquote{Inference with
  “difference in differences” with a small number of policy changes,}
  \emph{The Review of Economics and Statistics}, 93, 113--125.

\bibitem[\protect\citeauthoryear{Davidson and MacKinnon}{Davidson and
  MacKinnon}{2008}]{Davidson-Mackinnon(2008)}
\textsc{Davidson, R. and J.~G. MacKinnon} (2008):
  \enquote{\hypertarget{Davidson-Mackinnon(2008)}{Bootstrap inference in a
  linear equation estimated by instrumental variables},} \emph{The Econometrics
  Journal}, 11, 443--477.

\bibitem[\protect\citeauthoryear{Davidson and MacKinnon}{Davidson and
  MacKinnon}{2010}]{Davidson-Mackinnon(2010)}
---\hspace{-.1pt}---\hspace{-.1pt}--- (2010):
  \enquote{\hypertarget{Davidson-Mackinnon(2010)}{Wild bootstrap tests for IV
  regression},} \emph{Journal of Business \& Economic Statistics}, 28,
  128--144.

\bibitem[\protect\citeauthoryear{Djogbenou, MacKinnon, and Nielsen}{Djogbenou
  et~al.}{2019}]{Djogbenou-Mackinnon-Nielsen(2019)}
\textsc{Djogbenou, A.~A., J.~G. MacKinnon, and M.~{\O}. Nielsen} (2019):
  \enquote{\hypertarget{Djogbenou-Mackinnon-Nielsen(2019)}{Asymptotic theory
  and wild bootstrap inference with clustered errors},} \emph{Journal of
  Econometrics}, 212, 393--412.

\bibitem[\protect\citeauthoryear{Ferman and Pinto}{Ferman and
  Pinto}{2019}]{Ferman2019inference}
\textsc{Ferman, B. and C.~Pinto} (2019): \enquote{Inference in
  differences-in-differences with few treated groups and heteroskedasticity,}
  \emph{Review of Economics and Statistics}, 101, 452--467.

\bibitem[\protect\citeauthoryear{Finlay and Magnusson}{Finlay and
  Magnusson}{2019}]{Finlay-Magnusson(2019)}
\textsc{Finlay, K. and L.~M. Magnusson} (2019):
  \enquote{\hypertarget{Finlay-Magnusson(2019)}{Two applications of wild
  bootstrap methods to improve inference in cluster-IV models},} \emph{Journal
  of Applied Econometrics}, 34, 911--933.

\bibitem[\protect\citeauthoryear{Galvao and Yoon}{Galvao and Yoon}{2023}]{GY23}
\textsc{Galvao, A.~F. and J.~Yoon} (2023): \enquote{Hac covariance matrix
  estimation in quantile regression,} \emph{Journal of the American Statistical
  Association}, 1--12.

\bibitem[\protect\citeauthoryear{Glitz and Meyersson}{Glitz and
  Meyersson}{2020}]{Glitz2020}
\textsc{Glitz, A. and E.~Meyersson} (2020): \enquote{Industrial espionage and
  productivity,} \emph{American Economic Review}, 110, 1055--1103.

\bibitem[\protect\citeauthoryear{Hagemann}{Hagemann}{2017}]{Hagemann(2017)}
\textsc{Hagemann, A.} (2017):
  \enquote{\hypertarget{Hagemann(2017)}{Cluster-robust bootstrap inference in
  quantile regression models},} \emph{Journal of the American Statistical
  Association}, 112, 446--456.

\bibitem[\protect\citeauthoryear{Hagemann}{Hagemann}{2019}]{Hagemann2019placebo}
---\hspace{-.1pt}---\hspace{-.1pt}--- (2019): \enquote{Placebo inference on
  treatment effects when the number of clusters is small,} \emph{Journal of
  Econometrics}, 213, 190--209.

\bibitem[\protect\citeauthoryear{Hagemann}{Hagemann}{2020}]{Hagemann2020inference}
---\hspace{-.1pt}---\hspace{-.1pt}--- (2020): \enquote{Inference with a single
  treated cluster,} \emph{arXiv preprint arXiv:2010.04076}.

\bibitem[\protect\citeauthoryear{Hagemann}{Hagemann}{2023}]{Hagemann(2019)}
---\hspace{-.1pt}---\hspace{-.1pt}--- (2023):
  \enquote{\hypertarget{Hagemann(2019)}{Permutation inference with a finite
  number of heterogeneous clusters},} \emph{The Review of Economics and
  Statistics}.

\bibitem[\protect\citeauthoryear{Hagemann}{Hagemann}{2024}]{hagemann2024}
---\hspace{-.1pt}---\hspace{-.1pt}--- (2024): \enquote{Inference on quantile
  processes with a finite number of clusters,} \emph{Journal of Econometrics},
  105672.

\bibitem[\protect\citeauthoryear{Hansen and Lee}{Hansen and
  Lee}{2019}]{Hansen-Lee(2019)}
\textsc{Hansen, B.~E. and S.~Lee} (2019):
  \enquote{\hypertarget{Hansen-Lee(2019)}{Asymptotic theory for clustered
  samples},} \emph{Journal of Econometrics}, 210, 268--290.

\bibitem[\protect\citeauthoryear{Hwang}{Hwang}{2021}]{Hwang(2020)}
\textsc{Hwang, J.} (2021): \enquote{Simple and trustworthy cluster-robust GMM
  inference,} \emph{Journal of Econometrics}, 222, 993--1023.

\bibitem[\protect\citeauthoryear{Ibragimov and M{\"u}ller}{Ibragimov and
  M{\"u}ller}{2010}]{Ibragimov-Muller(2010)}
\textsc{Ibragimov, R. and U.~K. M{\"u}ller} (2010):
  \enquote{\hypertarget{Ibragimov-Muller(2010)}{t-Statistic based correlation
  and heterogeneity robust inference},} \emph{Journal of Business \& Economic
  Statistics}, 28, 453--468.

\bibitem[\protect\citeauthoryear{Ibragimov and M{\"u}ller}{Ibragimov and
  M{\"u}ller}{2016}]{Ibragimov-Muller(2016)}
---\hspace{-.1pt}---\hspace{-.1pt}--- (2016):
  \enquote{\hypertarget{Ibragimov-Muller(2016)}{Inference with few
  heterogeneous clusters},} \emph{Review of Economics and Statistics}, 98,
  83--96.

\bibitem[\protect\citeauthoryear{Imbens and Kolesar}{Imbens and
  Kolesar}{2016}]{Imbens-Kolesar(2016)}
\textsc{Imbens, G.~W. and M.~Kolesar} (2016): \enquote{Robust standard errors
  in small samples: Some practical advice,} \emph{Review of Economics and
  Statistics}, 98, 701--712.

\bibitem[\protect\citeauthoryear{Jiang, Liu, Phillips, and Zhang}{Jiang
  et~al.}{2021}]{Jiang-Liu-Phillips-Zhang(2020)}
\textsc{Jiang, L., X.~Liu, P.~Phillips, and Y.~Zhang} (2021):
  \enquote{Bootstrap inference for quantile treatment effects in randomized
  experiments with matched pairs,} \emph{Review of Economics and Satistics},
  1--43.

\bibitem[\protect\citeauthoryear{Kaffo and Wang}{Kaffo and
  Wang}{2017}]{Kaffo-Wang(2017)}
\textsc{Kaffo, M. and W.~Wang} (2017):
  \enquote{\hypertarget{Kaffo-Wang(2017)}{On bootstrap validity for
  specification testing with many weak instruments},} \emph{Economics Letters},
  157, 107--111.

\bibitem[\protect\citeauthoryear{Kaido and W{\"u}thrich}{Kaido and
  W{\"u}thrich}{2021}]{KW18}
\textsc{Kaido, H. and K.~W{\"u}thrich} (2021): \enquote{Decentralization
  estimators for instrumental variable quantile regression models,}
  \emph{Quantitative Economics}, 12, 443--475.

\bibitem[\protect\citeauthoryear{Kaplan and Sun}{Kaplan and Sun}{2017}]{KS17}
\textsc{Kaplan, D.~M. and Y.~Sun} (2017): \enquote{Smoothed estimating
  equations for instrumental variables quantile regression,} \emph{Econometric
  Theory}, 33, 105--157.

\bibitem[\protect\citeauthoryear{Kato}{Kato}{2009}]{K09}
\textsc{Kato, K.} (2009): \enquote{Asymptotics for argmin processes: Convexity
  arguments,} \emph{Journal of Multivariate Analysis}, 100, 1816--1829.

\bibitem[\protect\citeauthoryear{Kato}{Kato}{2012}]{Kato12}
---\hspace{-.1pt}---\hspace{-.1pt}--- (2012): \enquote{Asymptotic normality of
  Powell’s kernel estimator,} \emph{Annals of the Institute of Statistical
  Mathematics}, 64, 255--273.

\bibitem[\protect\citeauthoryear{Kleibergen}{Kleibergen}{2005}]{Kleibergen(2005)}
\textsc{Kleibergen, F.} (2005): \enquote{\hypertarget{Kleibergen(2005)}{Testing
  parameters in {GMM} without assuming that they are identified},}
  \emph{Econometrica}, 73, 1103--1124.

\bibitem[\protect\citeauthoryear{Lee, McCrary, Moreira, and Porter}{Lee
  et~al.}{2022}]{lee2021}
\textsc{Lee, D.~S., J.~McCrary, M.~J. Moreira, and J.~R. Porter} (2022):
  \enquote{Valid t-ratio inference for IV,} \emph{American Economic Review},
  112, 3260--90.

\bibitem[\protect\citeauthoryear{Lehmann and Romano}{Lehmann and
  Romano}{2006}]{LR06}
\textsc{Lehmann, E.~L. and J.~P. Romano} (2006): \emph{Testing statistical
  hypotheses}, Springer Science \& Business Media.

\bibitem[\protect\citeauthoryear{Leung}{Leung}{2023}]{L22}
\textsc{Leung, M.~P.} (2023): \enquote{Network cluster-robust inference,}
  \emph{Econometrica}, 91, 641--667.

\bibitem[\protect\citeauthoryear{Li and Liao}{Li and Liao}{2020}]{LL20}
\textsc{Li, J. and Z.~Liao} (2020): \enquote{Uniform nonparametric inference
  for time series,} \emph{Journal of Econometrics}, 219, 38--51.

\bibitem[\protect\citeauthoryear{MacKinnon}{MacKinnon}{2023}]{Mackinnon2021}
\textsc{MacKinnon, J.~G.} (2023): \enquote{Fast cluster bootstrap methods for
  linear regression models,} \emph{Econometrics and Statistics}, 26, 52--71.

\bibitem[\protect\citeauthoryear{MacKinnon, Nielsen, and Webb}{MacKinnon
  et~al.}{2021}]{Mackinnon-Nielsen-Webb(2019)}
\textsc{MacKinnon, J.~G., M.~{\O}. Nielsen, and M.~D. Webb} (2021):
  \enquote{\hypertarget{Mackinnon-Nielsen-Webb(2019)}{Wild bootstrap and
  asymptotic inference with multiway clustering},} \emph{Journal of Business \&
  Economic Statistics}, 39, 505--519.

\bibitem[\protect\citeauthoryear{MacKinnon, Nielsen, and Webb}{MacKinnon
  et~al.}{2023}]{mackinnon2022cluster}
---\hspace{-.1pt}---\hspace{-.1pt}--- (2023): \enquote{Cluster-robust
  inference: A guide to empirical practice,} \emph{Journal of Econometrics},
  232, 272--299.

\bibitem[\protect\citeauthoryear{MacKinnon and Webb}{MacKinnon and
  Webb}{2017}]{Mackinnon-Webb(2017)}
\textsc{MacKinnon, J.~G. and M.~D. Webb} (2017):
  \enquote{\hypertarget{Mackinnon-Webb(2017)}{Wild bootstrap inference for
  wildly different cluster sizes},} \emph{Journal of Applied Econometrics}, 32,
  233--254.

\bibitem[\protect\citeauthoryear{Menzel}{Menzel}{2021}]{Menzel2021bootstrap}
\textsc{Menzel, K.} (2021): \enquote{Bootstrap with cluster-dependence in two
  or more dimensions,} \emph{Econometrica}, 89, 2143--2188.

\bibitem[\protect\citeauthoryear{Moreira and Moreira}{Moreira and
  Moreira}{2019}]{Moreira-Moreira(2019)}
\textsc{Moreira, H. and M.~J. Moreira} (2019):
  \enquote{\hypertarget{Moreira-Moreira(2019)}{Optimal two-sided tests for
  instrumental variables regression with heteroskedastic and autocorrelated
  errors},} \emph{Journal of Econometrics}, 213, 398--433.

\bibitem[\protect\citeauthoryear{Moreira, Porter, and Suarez}{Moreira
  et~al.}{2009}]{Moreira-Porter-Suarez(2009)}
\textsc{Moreira, M.~J., J.~Porter, and G.~Suarez} (2009):
  \enquote{\hypertarget{Moreira-Porter-Suarez(2009)}{Bootstrap validity for the
  score test when instruments may be weak},} \emph{Journal of Econometrics},
  149, 52--64.

\bibitem[\protect\citeauthoryear{Rogall}{Rogall}{2021}]{Rogall2021}
\textsc{Rogall, T.} (2021): \enquote{Mobilizing the masses for genocide,}
  \emph{American Economic Review}, 111, 41--72.

\bibitem[\protect\citeauthoryear{Roodman, Nielsen, MacKinnon, and Webb}{Roodman
  et~al.}{2019}]{Roodman-Nielsen-MacKinnon-Webb(2019)}
\textsc{Roodman, D., M.~{\O}. Nielsen, J.~G. MacKinnon, and M.~D. Webb} (2019):
  \enquote{\hypertarget{Roodman-Nielsen-MacKinnon-Webb(2019)}{Fast and wild:
  Bootstrap inference in Stata using boottest},} \emph{The Stata Journal}, 19,
  4--60.

\bibitem[\protect\citeauthoryear{Stock and Wright}{Stock and
  Wright}{2000}]{Stock-Wright(2000)}
\textsc{Stock, J.~H. and J.~H. Wright} (2000): \enquote{{GMM} with weak
  identification,} \emph{Econometrica}, 68, 1055--1096.

\bibitem[\protect\citeauthoryear{Tuvaandorj}{Tuvaandorj}{2023}]{tuvaandorj2021robust}
\textsc{Tuvaandorj, P.} (2023): \enquote{Robust permutation tests in linear
  instrumental variables regression,} \emph{arXiv preprint arXiv:2111.13774}.

\bibitem[\protect\citeauthoryear{van~der Vaart and Wellner}{van~der Vaart and
  Wellner}{1996}]{VW96}
\textsc{van~der Vaart, A. and J.~A. Wellner} (1996): \emph{Weak Convergence and
  Empirical Processes}, Springer, New York.

\bibitem[\protect\citeauthoryear{Wang and Doko~Tchatoka}{Wang and
  Doko~Tchatoka}{2018}]{Wang-Doko(2018)}
\textsc{Wang, W. and F.~Doko~Tchatoka} (2018):
  \enquote{\hypertarget{Wang-Doko(2018)}{On bootstrap inconsistency and
  Bonferroni-based size-correction for the subset Anderson--Rubin test under
  conditional homoskedasticity},} \emph{Journal of Econometrics}, 207,
  188--211.

\bibitem[\protect\citeauthoryear{Wang and Kaffo}{Wang and
  Kaffo}{2016}]{Wang-Kaffo(2016)}
\textsc{Wang, W. and M.~Kaffo} (2016): \enquote{Bootstrap inference for
  instrumental variable models with many weak instruments,} \emph{Journal of
  Econometrics}, 192, 231--268.

\bibitem[\protect\citeauthoryear{Wang and Zhang}{Wang and
  Zhang}{2024}]{Wang-Zhang(2024)}
\textsc{Wang, W. and Y.~Zhang} (2024): \enquote{Wild bootstrap inference for
  instrumental variables regressions with weak and few clusters,} \emph{Journal
  of Econometrics}, 241, 105727.

\bibitem[\protect\citeauthoryear{Yoon and Galvao}{Yoon and Galvao}{2020}]{YG20}
\textsc{Yoon, J. and A.~F. Galvao} (2020): \enquote{Cluster robust covariance
  matrix estimation in panel quantile regression with individual fixed
  effects,} \emph{Quantitative Economics}, 11, 579--608.

\bibitem[\protect\citeauthoryear{Young}{Young}{2022}]{young2022consistency}
\textsc{Young, A.} (2022): \enquote{Consistency without inference: Instrumental
  variables in practical application,} \emph{European Economic Review}, 147,
  104112.

\end{thebibliography}

\end{document}